\definecolor{shadecolor}{rgb}{0.95, 0.95, 0.86}
\definecolor{ao}{rgb}{0.0, 0.4, 0.0}
\renewcommand{\d}{{\mathrm d}}
\newcommand{\im}{\mathrm{i}}
\newcommand{\e}{\mathrm{e}}
\def\res{\mathop{\mathrm{res}}\limits}
\def\tr{\mathop{\mathrm{tr}}\limits}
\numberwithin{equation}{section}
\newtheorem{theo}{Theorem}[section]
\newtheorem{lem}[theo]{Lemma}
\newtheorem{rem}[theo]{Remark}
\newtheorem{problem}[theo]{Riemann-Hilbert Problem}
\newtheorem{prob}[theo]{Hilbert Boundary Value Problem}
\newtheorem{remark}[theo]{Remark}
\newtheorem{prop}[theo]{Proposition} 
\newtheorem{cor}[theo]{Corollary} 
\newtheorem{definition}[theo]{Definition}
\newtheorem{assu}[theo]{Assumption}
\begin{document}

\title[What is $\ldots$ a RHP?]{On the origins of Riemann-Hilbert problems in mathematics}

\author{Thomas Bothner}
\address{School of Mathematics, University of Bristol, Fry Building, Woodland Road, Bristol, BS8 1UG, United Kingdom}
\email{thomas.bothner@bristol.ac.uk}

\keywords{Riemann-Hilbert problem, Hilbert's 21st problem, Fuchsian systems, monodromy group, singular integral equations, defocusing nonlinear Schr\"odinger equation, Painlev\'e-II equation, spin-$\frac{1}{2}$XY model, orthogonal polynomials, nonlinear steepest descent method, random matrices, random permutations, KPZ equation.}

\subjclass[2010]{Primary 30E25; Secondary 45M05, 60B20.}

\thanks{The author is greatly indebted to A. Its, P. Bleher, P. Deift, A. Kuijlaars, M. Bertola, J. Baik and P. Miller for countless discussions on Riemann-Hilbert problems over the past 10 years. This article is dedicated to Boris Dubrovin (1950-2019) and his outstanding legacy in the field of mathematical physics. The author would also like to thank the referees for their constructive suggestions which improved the paper in a variety of ways.}

\begin{abstract}
This article is firstly a historic review of the theory of Riemann-Hilbert problems with particular emphasis placed on their original appearance in the context of Hilbert's 21st problem and Plemelj's work associated with it. The secondary purpose of this note is to invite a new generation of mathematicians to the fascinating world of Riemann-Hilbert techniques and their modern appearances in nonlinear mathematical physics. We set out to achieve this goal with six examples, including a new proof of the integro-differential Painlev\'e-II formula of Amir, Corwin, Quastel \cite{ACQ} that enters in the description of the KPZ crossover distribution. Parts of this text are based on the author's Szeg\H{o} prize lecture at the $15$th International Symposium on Orthogonal Polynomials, Special Functions and Applications (OPSFA) in Hagenberg, Austria.
\end{abstract}

\date{\today}

\dedicatory{Dedicated to the memory of Boris Anatolievich Dubrovin}
\maketitle

\section{Introduction}
Riemann-Hilbert problems, in short RHPs, are useful in the analysis of orthogonal polynomials (OP), special functions (SF) and thus several applications (A) in mathematics or physics. Still, these modern appearances of RHPs are quite detached from the original meaning of the term RHP and it is our primary objective to pin down the occurrence of the first problem in mathematics that was coined a RHP. Once done, we will follow Plemelj's traces whose work on the original RHP gave rise to an analytic apparatus, the \textit{Riemann-Hilbert techniques} or \textit{Riemann-Hilbert methods}, which found various applications in the OPSFA realm. We will showcase and advertise the latter modern aspects of the theory with examples from mathematical physics, in particular examples from nonlinear wave and integrable systems theory, statistical mechanics, random matrix theory and integrable probability.
\section{The original Riemann-Hilbert problem}
In order to answer the question ``What is $\ldots$ a Riemann-Hilbert problem" one should go back to 1900, the year of the second international congress of mathematics.  Held at the turn of the century in Paris, the ICM 1900 proved to be highly influential for the development of 20th century mathematics, in particular since Hilbert delivered his famous address in the section \textit{Bibliographie et Histoire. Enseignement et methodes}. During his lecture he presented (because of time constraints) ten mathematical problems ``from the discussion of which an advancement of science may be expected", see \cite[page $445$]{H}. The full, twenty-three long, list of Hilbert's problems appeared in German print even before the French proceedings of the congress. Scrolling through this list we eventually encounter problem $\#21$, the main focus point of this section, cf. \cite[page 289-290]{H0}:\bigskip
\begin{quote}
\begin{center} 
\small{\textsl{21. Beweis der Existenz linearer Differentialgleichungen mit vorgeschriebener Monodromiegruppe}}\bigskip\\
\end{center}
\noindent\footnotesize{\textsl{Aus der Theorie der linearen Differentialgleichungen mit einer unabh\"angigen Ver\"anderlichen $z$ m\"ochte ich auf ein wichtiges Problem
hinweisen, welches wohl bereits Riemann im Sinne gehabt hat, und welches darin besteht, zu zeigen, da\ss\,es stets eine lineare Differentialgleichung der Fuchsschen Klasse
mit gegebenen singul\"aren Stellen und einer gegebenen Monodromiegruppe giebt. Die Aufgabe verlangt also die Auffinding von $n$ Functionen der Variabeln $z$, die sich \"uberall in der complexen $z$-Ebene regul\"ar verhalten, au\ss er etwa in den gegebenen singul\"aren Stellen: in diesen d\"urfen sie nur von endlich hoher Ordnung unendlich werden und beim Umlauf der Variabeln $z$ um dieselben erfahren sie die gegebenen linearen Substitutionen. Die Existenz solcher Differentialgleichungen ist durch Constantenz\"ahlung wahrscheinlich gemacht worden, doch gelang der strenge Beweis bisher nur in dem besonderen Falle, wo die Wurzeln der Fundamentalgleichungen der
gegebenen Substitutionen s\"amtlich vom absoluten Betrage 1 sind. Diesen Beweis hat L. Schlesinger auf Grund der Poincar\'eschen Theorie der Fuchsschen $\zeta$-Functionen erbracht. Es w\"urde offenbar die Theorie der linearen Differentialgleichungen ein wesentlich abgeschlosseneres Bild zeigen, wenn die allgemeine
Erledigung des bezeichneten Problems gel\"ange.}}\smallskip
\end{quote}\bigskip
In this problem Hilbert asks the reader to \textit{show that there always exists a linear differential equation of the Fuchsian class, with given singular points and monodromic group}. Before reviewing some of the necessary background material in Section \ref{sec:3} below it will be important to point out the following peculiar wording in Hilbert's original formulation: \textit{The problem requires the production of $n$ functions of the variable $z$, regular throughout the complex $z$ plane except at the given singular points; at these points the functions may become infinite of only finite order}. Thus, Hilbert does not use the word pole singularity - he refers to what is nowadays called a regular singularity, cf. \cite[Definition $16.2$]{IY} - and he also never speaks of Fuchsian linear systems. These subtleties in the formulation of the problem played an important role as we will soon learn.\smallskip

Right now we follow Anosov and Bolibrukh \cite[page $7$]{AB} and coin Hilbert's 21st problem the original RHP: Hilbert formulated it and he explicitly mentions Riemann in his wording as this problem is extremely close to Riemann's ideas\footnote{Although \cite[page $7$]{AB} also argues that ``Riemann never spoke exactly of something like it."} (namely the global construction of a function from given local analytic data, i.e. the singularity locations and associated monodromy). In short,
\begin{problem}[The original RHP (\cite{H0}, 1900)]\label{RHP:1} Proof of the existence of linear differential equations having a prescriped monodromic group.
\end{problem}
This problem seems quite different from the modern RHPs discussed in Section \ref{ex:sec} below, yet it is this problem on which the modern theory is based upon. Thus, being so influential, it is only natural to ask if RHP \ref{RHP:1} has been solved $120$ years after Hilbert's address? Well, according to the Encyclopedia of Mathematics \cite{e}, the ``solution is negative or positive depending on how the problem is understood". This answer points at a curious mathematical misunderstanding which persisted from 1908 till 1983 and which relates to the aforementioned atypicalities in Hilbert's wording. Indeed, we highlight two possible sources of confusion:\bigskip

$\diamond$ \textit{equation} vs. \textit{system}: The monodromy of a $p\textnormal{th}$ order scalar linear ODE with solution $y$ matches the monodromy of the $p\times p$ linear ODE system which the vector $(y,y',\ldots,y^{p-1})$ solves. Still, at the time of the congress, cf. \cite{P}, it was known that Hilbert's question for scalar Fuchsian equations has a negative solution unless additional singularities are included. So by a $p\textnormal{th}$ order scalar Fuchsian \textit{equation}, Hilbert likely meant a generic $p\times p$ Fuchsian \textit{system}.\smallskip
	
$\diamond$ \textit{regular} singularities vs. \textit{Fuchsian} singularities: Both notions exist for $p\textnormal{th}$ order scalar linear ODEs in the complex plane. Although up front different in appearance, these notions are in fact equivalent (this is a result by Fuchs from 1868, cf. \cite[Theorem $19.20$]{IY}). However, this is no longer the case when one considers the same singularities for $p\times p$ linear ODE systems, see Section \ref{Plejcon}.\bigskip

In turn we could think of at least three different interpretations of RHP \ref{RHP:1}: Given a monodromy group with encoded singularity locations, are we asked to realize it by\smallskip
\begin{enumerate}
	\item[(1)] a Fuchisan linear $p\textnormal{th}$ order differential equation?
	\item[(2)] a linear $p\times p$ system having only regular singularities?
	\item[(3)] a Fuchsian system on the whole Riemann sphere $\mathbb{CP}^1$?\smallskip
\end{enumerate}
The answer to (1) is \textit{negative} as shown by Poincar\'e \cite{P}: the number of parameters in a $p\textnormal{th}$ order linear Fuchsian equation with singularities $a_1,\ldots,a_n$ is less than the dimension of the space $\mathcal{M}$ of monodromy representations\footnote{Thus the need for additional, so called \textit{apparent}, singular points in the construction of Fuchsian scalar equations with prescribed monodromy. At these (from $a_1,\ldots,a_n$ different) singular points the coefficients in the equation are singular but the solutions single-valued analytic.}. Number (2) was \textit{positively} solved by Plemelj in 1908 \cite{P0} and believed to settle (3) as well. This belief persisted for 75 years until Kohn \cite{K} and Arnold, Il'yashenko \cite{AI} discovered a serious gap in Plemelj's argument. As it turned out, interpretation (3) is much more subtle and a \textit{negative} answer to Hilbert's 21st problem in said context was eventually given by Bolibrukh in 1989, cf. \cite{BO1,BO2}. We will now discuss in detail interpretations (2), (3) and highlight the pioneering work of Plemelj in this context. His contributions laid the foundation for several aspects of modern Riemann-Hilbert theory which we showcase through a series of OPSFA-type examples in Sections \ref{ex:sec}, \ref{cool} and \ref{ex:6} - however only Section \ref{ex:6} contains original content. For an in-depth discussion of interpretation (1) we refer the interested reader to \cite[Chapter $7$]{AB} or more recently \cite[$\S 3$]{GP}. 

\section{Background terminology}\label{sec:3}
We are considering $p\times p$ linear ODE systems of the form
\begin{equation}\label{e:1}
	\frac{\d\Psi}{\d z}=A(z)\Psi,\ \ \ \ \ \Psi=\Psi(z)\in\mathbb{C}^{p\times p},
\end{equation}
where $A(z)\in\mathbb{C}^{p\times p}$ is analytic in a disk $\mathbb{D}_r(z_0)\subset\mathbb{C}$ of radius $r>0$ centered at some $z_0\in\mathbb{C}$. Then, by the Picard-Lindel\"of theorem \cite[Theorem $1.1$]{IY}, any fundamental solution of \eqref{e:1} is analytic in $\mathbb{D}_r(z_0)$. Now suppose $A(z)$ is analytic on the punctured Riemann sphere $S:=\mathbb{CP}^1\setminus\{a_1,\ldots,a_n\}$ where $a_1,\ldots,a_n$ are $n$ distinct points on $\mathbb{CP}^1$. Through the use of overlapping disks and the previous local existence theorem, any fundamental solution of \eqref{e:1} can be continued along any path in $S$. This statement is part of the monodromy theorem \cite[Theorem $67$]{Bal} which furthermore asserts that the continuation only depends on the homotopy class of the path. With this in mind, we now define the \textit{monodromy} of \eqref{e:1} as follows: fix $a_0\in S$ distinct from $a_1,\ldots,a_n$ and continue any fundamental solution $\Psi(z)$ of \eqref{e:1} along some $\gamma\in\pi_1(S,a_0)$ (the fundamental group of the surface $S$ with base point $a_0$). The continuation yields $\widetilde{\Psi}(z)$, another fundamental solution of \eqref{e:1}, which links to $\Psi(z)$ via
\begin{equation*}
	\Psi(z)=\widetilde{\Psi}(z)G_{\gamma},
\end{equation*}
for some invertible $p\times p$ matrix $G_{\gamma}$, i.e. $G_{\gamma}\in\textnormal{GL}(p,\mathbb{C})$. Strictly speaking, $G_{\gamma}$ only depends on the homotopy class of the loop $\gamma$. In turn, the map $[\gamma]\mapsto G_{\gamma}$ defines a representation
\begin{equation}\label{e:2}
	\chi:\pi_1(S,a_0)\rightarrow\textnormal{GL}(p,\mathbb{C}),
\end{equation}
of $\pi_1(S,a_0)$, called the \textit{monodromy} or \textit{monodromy representation} of system \eqref{e:1}. Next, fix $n$ fundamental loops $\gamma_1,\ldots,\gamma_n$ as indicated in Figure \ref{fig1} below. Each loop $\gamma_i$ wraps once around its corresponding blue dashed cut, i.e. $\gamma_i$ wraps once around the singularity $a_i$, and all loops compose to the identity, $\gamma_1\circ\ldots\circ\gamma_n=e\in\pi_1(S,a_0)$. 
\begin{figure}[tbh]
\includegraphics[width=0.65\textwidth]{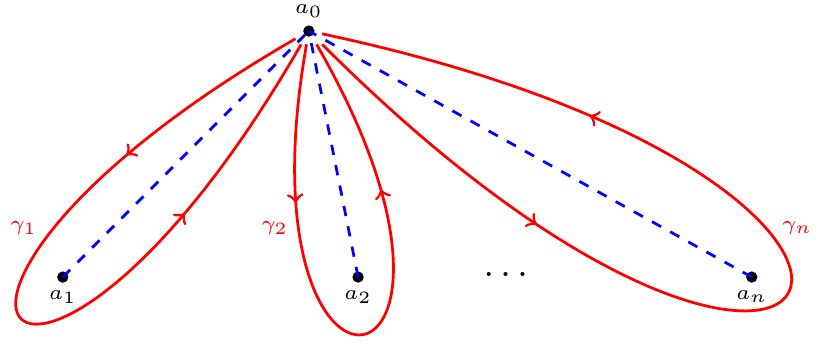}
\caption{A possible visualization of the \textcolor{red}{red} fundamental loops $\gamma_i$.}
\label{fig1}
\end{figure}

The \textit{monodromy matrix} at the singular point $a_i$ is defined as
\begin{equation*}
	G_i:=\chi([\gamma_i]),\ \ \ i=1,\ldots,n,
\end{equation*}
and we record the cyclic constraint $G_1\cdot\ldots\cdot G_n=\mathbb{I}\in\textnormal{GL}(p,\mathbb{C})$. Observe that the image of $\pi_1(S,a_0)$ under \eqref{e:2}, the \textit{monodromy group} of system \eqref{e:1}, is generated by the matrices $\{G_1,\ldots,G_n\}$.\bigskip

We are left with two obvious ambiguities in our definitions: first, if we were to start from a different fundamental solution $\Psi^{\ast}(z)$ instead of $\Psi(z)$, then $\Psi^{\ast}(z)=\Psi(z)H$ for some $H\in\textnormal{GL}(p,\mathbb{C})$ and the associated monodromy matrices change accordingly,
\begin{equation}\label{e:3}
	G_i^{\ast}=H^{-1}G_iH,\ \ i=1,\ldots,n.
\end{equation}
Second, the dependence of $\pi_1(S,a_0)$ and thus $G_i=\chi([\gamma_i])$ on the base point $a_0$: given that $S$ is path connected we know that $\pi_1(S,a_0)$ and $\pi_1(S,a_0^{\ast})$
are isomorphic for any two base points $a_0,a_0^{\ast}$ different from $a_1,\ldots,a_n$, thus under the representation \eqref{e:2} the associated monodromy matrices change also in the style of \eqref{e:3}. Hence, summarizing the two ambiguities, the monodromy of \eqref{e:1} is defined up to conjugation equivalence and so an element of the space
\begin{equation*}
	\mathcal{M}:=\textnormal{Hom}\big(\pi_1(S,a_0),\textnormal{GL}(p,\mathbb{C})\big)\big/\textnormal{GL}(p,\mathbb{C}),
\end{equation*}
of conjugacy classes of representations of $\pi_1(S,a_0)$. Alternatively, in the language of monodromy matrices, we have
\begin{equation}\label{e:4}
	\mathcal{M}\cong\big\{(G_1,\ldots,G_n):\,G_1\cdot\ldots\cdot G_n=\mathbb{I}\big\}\big/\textnormal{GL}(p,\mathbb{C}).
\end{equation}
One notion remains: we say system \eqref{e:1} is \textit{Fuchsian} if all its singular points $\{a_1,\ldots,a_n\}$ are first order poles of $A(z)$, i.e. without loss of generality (thanks to a fractional linear map) we have
\begin{equation*}
	A(z)=\sum_{i=1}^n\frac{B_i}{z-a_i},\ \ \ \ \ \ \ \ \ \ \ \ \res_{z=\infty}A(z)=-\sum_{i=1}^nB_i=0.
\end{equation*}
Using the above terminology we are now able to rephrase RHP \ref{RHP:1} in interpretation (3) as follows:
\begin{problem}[The original RHP, case (3)]\label{RHP:2} Is the monodromy map $\mu:\mathcal{M}^{\ast}\rightarrow\mathcal{M}$ from the space
\begin{equation*}
	\mathcal{M}^{\ast}:=\left\{(B_1,\ldots,B_n):\ \ B_i\in\mathbb{C}^{p\times p},\ \ \sum_{i=1}^nB_i=0\right\}\bigg/\textnormal{GL}(p,\mathbb{C})
\end{equation*}
of Fuchsian systems with fixed singularities $a_1,\ldots,a_n$ into $\mathcal{M}$ surjective?
\end{problem}
This problem concludes our content on the necessary background terminology of Hilbert's $21$st problem. We will now review some of the mathematical works devoted to its solution and which were published between $1908$ and $1989$. These works introduced several key ideas and techniques that are nowadays used in the Riemann-Hilbert analysis of OPSFA-type problems.
\section{Plemelj's contributions}\label{Plejcon}
In 1908, Plemelj \cite{P0} (upgraded to book form 55 years later in \cite{Pl}) published a solution of RHP \ref{RHP:2} that was widely accepted until the early 1980s. In his work, he reduced RHP \ref{RHP:2} to a Hilbert boundary value problem\footnote{In standard textbooks on singular integral equations, cf. \cite[$\S 39$]{M}, a \textit{Riemann-Hilbert problem}, named after the original works \cite{R} and \cite{H1,H2}, generally refers to the problem of constructing a function which is analytic in a domain $\Omega\subset\mathbb{C}$, continuous on the closure $\overline{\Omega}$ and with prescribed boundary values on $\partial\Omega$. We will not follow this tradition but instead use the term \textit{Hilbert boundary value problem} like in \cite[$\S 4$]{V} as to not confuse ourselves with the original RHPs \ref{RHP:1} and \ref{RHP:2}.} in the theory of singular integral equations and the idea goes as follows: Join all singularities $a_1,\ldots,a_n\in\mathbb{C}$ by a simple closed oriented contour $\Gamma$ as indicated in Figure \ref{fig2} below.
\begin{figure}[tbh]
\includegraphics[width=0.475\textwidth]{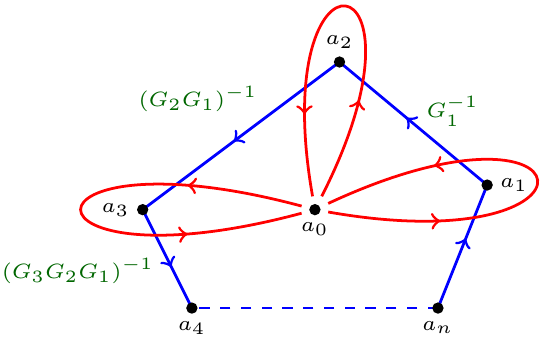}
\caption{The oriented contour $\Gamma$ in \textcolor{blue}{blue} together with the aformentioned \textcolor{red}{red} fundamental loops $\gamma_i$. Some values of $G(z)$ are indicated in \textcolor{ao}{green}.}
\label{fig2}
\end{figure}

Now define the piecewise constant, matrix-valued and invertible function
\begin{equation*}
	G(z):=(G_iG_{i-1}\cdot\ldots\cdot G_1)^{-1},\ \ \ z\in[a_i,a_{i+1}),\ \ \ \ i=1,\ldots,n;\ \ \ \ a_{n+1}:=a_1,
\end{equation*}
which, thanks to the cyclic constraint \eqref{e:4}, satisfies $G(z)=\mathbb{I}$ for $z\in[a_n,a_1)$. With $\Omega^+$ denoting the region in $\mathbb{C}$ bounded by $\Gamma$ and $\Omega^-$ the complement of the closure $\overline{\Omega^+}$ in $\mathbb{CP}^1$, we then consider the following boundary value problem:
\begin{prob}[{\cite[page $213,214,229$]{P0}}]\label{HP1} Find all vector-valued functions $X=X(z)\in\mathbb{C}^{1\times p}$ such that
\begin{enumerate}
	\item[(1)] $X(z)$ is analytic in $\Omega^{\pm}\setminus\{\infty\}$ and extends continuously from either side to the punctured contour $\Gamma\setminus\{a_1,\ldots,a_n\}$.
	\item[(2)] On the open segments $(a_i,a_{i+1}),i=1,\ldots,n$ the pointwise limits
	\begin{equation*}
		X_{\pm}(z):=\lim_{\substack{w\rightarrow z\\ w\in\Omega^{\pm}}}X(w)
	\end{equation*}
	satisfy the boundary condition $X_+(z)=X_-(z)G(z)$, compare Figure \ref{fig3} below.
	\begin{figure}[tbh]
\includegraphics[width=0.4\textwidth]{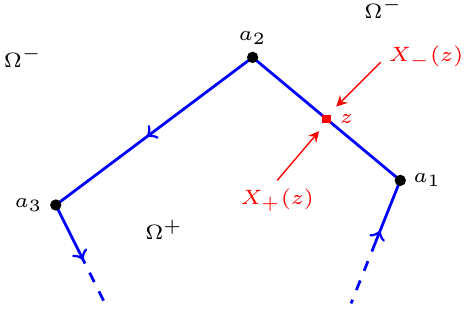}
\caption{The pointwise limits $X_{\pm}(z)$ at some $z\in(a_i,a_{i+1}),i=1,\ldots,n$ in \textcolor{red}{red}.}
\label{fig3}
\end{figure}
	\item[(3)] $X(z)$ is of finite degree at $z=\infty$, that is
	\begin{equation}\label{inf}
		X(z)=\gamma(z)+\mathcal{O}\big(z^{-1}\big),\ \ \ \ \ |z|\rightarrow\infty,
	\end{equation}
	with some vector-valued polynomial $\gamma\in\mathbb{C}^{1\times p}[z]$. Moreover, in a small neighborhood of $a_i$,
	\begin{equation*}
		\big\|X(z)\big\|\leq\frac{C}{|z-a_i|^{\alpha}},\ \ \ \ \ 0\leq\alpha<1,\ \ \ C>0.
	\end{equation*}
\end{enumerate}
\end{prob}
With Problem \ref{HP1} as starting point, Plemelj first proved its solvability through an application of Fredholm's theory of integral equations, cf. \cite{Fre,P1}, at the time a novel analytic tool. We will lay out some of his steps below subject to the following two temporary assumptions:
\begin{assu}\label{assu:0} We seek solutions of Problem \ref{HP1} which extend H\"older continuously from either side up to the punctured contour $\Gamma\setminus\{a_1,\ldots,a_n\}$.
\end{assu}
\begin{assu}\label{assu:1} The invertible jump matrix $G(z),z\in\Gamma$ in condition $(2)$ of Problem \ref{HP1} is H\"older continuous on all of $\Gamma$ and not just on each open segment $(a_i,a_{i+1})$ separately. In turn, the blow up constraint near $z=a_i$ in Problem \ref{HP1} $(3)$ becomes unnecessary.
\end{assu}
Contingent on these assumptions, Problem \ref{HP1} is equivalent to the problem of finding all H\"older continuous functions $X_-(z)\in\mathbb{C}^{1\times p}$ defined for $z\in\Gamma$ such that
%
%
\begin{equation}\label{e:5}
	\forall\,z\in\Omega^-:\ 0=\frac{1}{2\pi\im}
	\int_{\Gamma}\frac{X_-(\lambda)G(\lambda)}{\lambda-z}\,\d\lambda;\ \ \ \ \ \textnormal{and} \ \ \ \ \ 
	\forall\,z\in\Omega^+:\ 0=-\frac{1}{2\pi\im}
	\int_{\Gamma}\frac{X_-(\lambda)}{\lambda-z}\,\d\lambda+\gamma(z).
\end{equation}
Indeed, by Cauchy's theorem\footnote{In the slightly more general form $\oint_{\,\partial\Omega}f(z)\,\d z=0$ where $f:\overline{\Omega}\rightarrow\mathbb{C}$ is analytic in the region $\Omega\subset\mathbb{C}$, continuous on the closure $\overline{\Omega}$ and the boundary $\partial\Omega$ of $\Omega$ is a rectifiable Jordan curve, cf. \cite{W}.} and Plemelj's own $1908$ formul\ae\,\cite{P2} (the Plemelj-Sokhotski formul\ae\, which Plemelj rediscovered) the first constraint in \eqref{e:5} holds precisely when $X_-(z)G(z),z\in\Gamma$ is the boundary value of a function $X(z)\in\mathbb{C}^{1\times p}$ which is analytic in $\Omega^+$ and H\"older continuous on $\Omega^+\cup\Gamma$. Likewise the second constraint in \eqref{e:5} holds if and only if $X_-(z),z\in \Gamma$ is the boundary value of some $X(z)\in\mathbb{C}^{1\times p}$, analytic in $\Omega^-\setminus\{\infty\}$, H\"older continuous on $(\Omega^-\setminus\{\infty\})\cup\Gamma$ and of finite degree at $z=\infty$, in the sense of \eqref{inf}. Next, Plemelj noticed that both constraints in \eqref{e:5} are equivalent (again by Cauchy's theorem and the Plemelj-Sokhotski formul\ae) to the principal value integral equations,
\begin{equation}\label{e:6}
	X_-(z)=-\frac{1}{\pi\im}\,\textnormal{pv}
	\int_{\Gamma}\frac{X_-(\lambda)}{\lambda-z}\,\d\lambda+2\gamma(z),\ \ \  X_-(z)G(z)=\frac{1}{\pi\im}\,\textnormal{pv}
	\int_{\Gamma}\frac{X_-(\lambda)G(\lambda)}{\lambda-z}\,\d\lambda,\ \ \ \ z\in\Gamma,
\end{equation}
and from these one obtains in turn the quasi-regular integral equation 
\begin{equation}\label{e:7}
	X_-(z)-\frac{1}{\pi\im}\,\textnormal{pv}
	\int_{\Gamma}X_-(\lambda)\frac{K(z,\lambda)}{\lambda-z}\,\d\lambda
	=\gamma(z),\ \ \ z\in\Gamma,
\end{equation}
with the H\"older continuous matrix-valued kernel function
\begin{equation*}
	K(z,\lambda):=\frac{1}{2}\big(G(\lambda)-G(z)\big)(G(z))^{-1},\ \ \ (z,\lambda)\in\Gamma\times\Gamma.
\end{equation*}
\begin{remark}\label{tech} In more abstract and general terms, the left hand side in \eqref{e:7} defines a singular Fredholm operator
\begin{equation*}
	(K\phi)(z):=\phi(z)-\frac{1}{\pi\im}\,\textnormal{pv}
	\int_{\Gamma}\phi(\lambda)\frac{K(z,\lambda)}{\lambda-z}\,\d\lambda,\ \ \ z\in\Gamma,
\end{equation*}
of index zero which acts on H\"older continuous functions defined on $\Gamma$, cf. \cite[$\S 45$]{M}, and we assume that $K(z,\lambda)$ satisfies a H\"older condition on $\Gamma$ in both variables. Plemelj did not work with such singular Fredholm equations in \cite{P0} since he managed to transform the piecewise constant jumps in Problem \ref{HP1} to differentiable ones. In turn, his \cite[$(4)$]{P0} is an ordinary Fredholm equation of the second kind which can be studied by Fredholm's $1903$ theory. We will discuss a more generally applicable reduction of piecewise H\"older continuous jumps to H\"older continuous ones and en route lift Assumption \ref{assu:1} on $G(z)$. In fact, the general theory of singular Fredholm integral equations of the type \eqref{e:7} was systematically developed some $30$ years after Plemelj's paper with prominent contributions by Giraud \cite{Gi}, Muskhelishvili-Vekua \cite{MV} and Gakhov \cite{Ga2}.
\end{remark}
We now make \eqref{e:7} the starting point for the solvability analysis of Problem \ref{HP1}. Precisely, given that any continuous solution $X_-(z),z\in\Gamma$ of \eqref{e:7} will be automatically H\"older continuous as a consequence of Assumption \ref{assu:1}, we answer the following two questions:\smallskip
\begin{enumerate}
	\item[Q1.] Under which conditions is \eqref{e:7} solvable in the space of continuous functions on $\Gamma$?\smallskip
	\item[Q2.] Does each continuous solution of \eqref{e:7} yield a solution of Problem \ref{HP1}?\smallskip
\end{enumerate}
Regarding Q2, we recall that every continuous solution $X_-(z),z\in\Gamma$ of \eqref{e:7} will produce a solution of Problem \ref{HP1} with Assumption \ref{assu:1} in place if and only if $X_-(z),z\in\Gamma$ solves both equations in \eqref{e:6}. Now rephrase \eqref{e:6} in terms of the function
\begin{equation}\label{e:8}
	\widehat{X}(z):=\begin{cases}\displaystyle\frac{1}{2\pi\im}
	\int_{\Gamma}\frac{X_-(\lambda)}{\lambda-z}\,\d\lambda-\gamma(z),&z\in\Omega^+\bigskip\\
	\displaystyle\frac{1}{2\pi\im}
	\int_{\Gamma}\frac{X_-(\lambda)G(\lambda)}{\lambda-z}\,\d\lambda,&z\in\Omega^-\end{cases},
\end{equation}
which is analytic on $\Omega^{\pm}$, H\"older continuous on $\Omega^{\pm}\cup\Gamma$ and vanishes at $z=\infty$. In fact,
\begin{equation}\label{e:9}
	\eqref{e:6}\ \textnormal{holds}\ \ \ \Leftrightarrow\ \ \ \ \widehat{X}_+(z)=0=\widehat{X}_-(z),\ \ z\in\Gamma\ \ \ \Leftrightarrow\ \ \ \ \widehat{X}(z)\equiv 0\ \ \textnormal{in}\ \mathbb{C}
\end{equation}
Furthermore, our central integral equation \eqref{e:7} is equivalent to the jump condition 
\begin{equation*}
	\widehat{X}_+(z)=\widehat{X}_-(z)(G(z))^{-1},\ \ z\in\Gamma,
\end{equation*}
which motivates the introduction of the below, after Plemelj the \textit{accompanying}, Hilbert boundary value problem:
\begin{prob}[{\cite[page $215$]{P0}}]\label{HP2} Find all vector-valued functions $\widehat{X}=\widehat{X}(z)\in\mathbb{C}^p$ such that
\begin{enumerate}
	\item[(1)] $\widehat{X}(z)$ is analytic in $\Omega^{\pm}$ and extends H\"older continuously from either side up to $\Gamma$.
	\item[(2)] The pointwise limits
	\begin{equation*}
		\widehat{X}_{\pm}(z):=\lim_{\substack{w\rightarrow z\\ w\in\Omega^{\pm}}}\widehat{X}(w),\ \ \ z\in\Gamma,
	\end{equation*}
	satisfy the boundary condition $\widehat{X}_+(z)=\widehat{X}_-(z)(G(z))^{-1},z\in\Gamma$.
	\item[(3)] $\widehat{X}(z)$ vanishes at $z=\infty$, that is
	\begin{equation*}
		\widehat{X}(z)=\mathcal{O}\big(z^{-1}\big),\ \ \ \ \ |z|\rightarrow\infty.
	\end{equation*}
\end{enumerate}
\end{prob}
The special homogeneous problem \ref{HP2} is useful since it allows us to summarize our previous chain \eqref{e:9} as the following exclusive alternative: let $X_-(\lambda),\lambda\in\Gamma$ be a continuous solution of \eqref{e:7},\bigskip 
	
	$\diamond$ If $\widehat{X}(z)$ defined in terms of said $X_-(\lambda),\lambda\in\Gamma$ in \eqref{e:8} is identically zero, then $X_-(\lambda)$ solves \eqref{e:6} and thus produces a solution of the initial Problem \ref{HP1}.\smallskip
	
	$\diamond$ If $\widehat{X}(z)$ defined in terms of said $X_-(\lambda),\lambda\in\Gamma$ in \eqref{e:8} is not identically zero, then $\widehat{X}(z)$ is a non-trivial solution of the accompanying Problem \ref{HP2}.\bigskip

Most importantly we can now formulate Plemelj's first criterion as response to Q2 above:
\begin{lem}[{\cite[page $215$]{P0}}]\label{Plem1} If the accompanying Problem \ref{HP2} has no non-trivial solutions, then every continuous solution $X_-(z),z\in\Gamma$ of \eqref{e:7} yields a solution of Problem \ref{HP1}.
\end{lem}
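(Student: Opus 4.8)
The plan is to run the chain of equivalences assembled in the preceding paragraphs in reverse; the lemma is really a repackaging of the ``exclusive alternative'' displayed just above its statement. I would start from an arbitrary continuous solution $X_-(z)$, $z\in\Gamma$, of the quasi-regular integral equation \eqref{e:7}. By Assumption \ref{assu:1} such a solution is automatically H\"older continuous on $\Gamma$, so the Cauchy-type integrals in \eqref{e:8} are meaningful and, by the standard mapping properties of Cauchy transforms of H\"older densities (Plemelj--Privalov), they define a function $\widehat X(z)$ that is analytic on $\Omega^\pm$ and extends H\"older continuously up to $\Gamma$ from either side; moreover $\widehat X$ vanishes at $z=\infty$ because the $\Omega^-$-branch in \eqref{e:8} is a Cauchy transform and hence $\mathcal O(z^{-1})$ there.

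Next I would check that this $\widehat X$ is in fact a solution of the accompanying Problem \ref{HP2}. Conditions $(1)$ and $(3)$ of Problem \ref{HP2} were just recorded, so the substantive point is the boundary relation $(2)$: applying the Plemelj--Sokhotski formul\ae\ to \eqref{e:8} one computes the one-sided limits $\widehat X_\pm(z)$ on $\Gamma$ and verifies that the identity $\widehat X_+(z)=\widehat X_-(z)(G(z))^{-1}$ is, after collecting the principal-value terms, precisely the integral equation \eqref{e:7} satisfied by $X_-$ — this is the equivalence noted right before the statement of Problem \ref{HP2}. Hence $\widehat X$ solves Problem \ref{HP2}. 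Now the hypothesis of the lemma is that Problem \ref{HP2} admits only the trivial solution, so $\widehat X(z)\equiv 0$ in $\mathbb C$. Reading the equivalence chain \eqref{e:9} from right to left, $\widehat X\equiv 0$ forces $\widehat X_+(z)=0=\widehat X_-(z)$ on $\Gamma$, which by \eqref{e:8} and the Plemelj--Sokhotski formul\ae\ is exactly the pair of constraints \eqref{e:6}. As recalled in the discussion of Q2, a continuous $X_-$ satisfying both equations in \eqref{e:6} produces a solution $X(z)$ of Problem \ref{HP1}: the first relation in \eqref{e:6} (equivalently the $\Omega^+$-constraint in \eqref{e:5}) exhibits $X_-(z)G(z)$ as the boundary value of a function analytic in $\Omega^+$ and H\"older up to $\Gamma$, the second relation exhibits $X_-(z)$ as the boundary value of a function analytic in $\Omega^-\setminus\{\infty\}$ of finite degree at infinity, and gluing these two half-plane pieces along $\Gamma$ delivers the required $X$.

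I do not expect any deep obstacle here: every ingredient — H\"older regularity of continuous solutions under Assumption \ref{assu:1}, the mapping properties of Cauchy integrals, the Plemelj--Sokhotski formul\ae, and the algebraic equivalence of \eqref{e:7} with the jump condition of Problem \ref{HP2} — has already been put in place in the excerpt. The one step that demands genuine care is the bookkeeping in the middle of the argument: tracking the $\pm\tfrac12 X_-(z)$ boundary terms generated by the Sokhotski formul\ae\ and their interaction with the right-multiplications by $G(z)$ and $(G(z))^{-1}$, so that \eqref{e:7} collapses exactly to $\widehat X_+=\widehat X_-(G)^{-1}$ rather than to some off-by-a-factor variant. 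Once that identification is made, the lemma follows purely formally from the established chain of equivalences.
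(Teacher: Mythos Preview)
Your proposal is correct and follows exactly the approach of the paper: the lemma has no separate proof in the text precisely because, as you recognize, it is the immediate formal consequence of the exclusive alternative recorded just above it, together with the equivalence chain \eqref{e:9}. Your write-up simply spells out those steps, and the only ``careful'' step you flag---the Plemelj--Sokhotski bookkeeping identifying \eqref{e:7} with the jump $\widehat X_+=\widehat X_-(G)^{-1}$---is exactly what the paper has already established in the sentence preceding the definition of Problem \ref{HP2}.
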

Evidently, Lemma \ref{Plem1} does not guarantee solvability of \eqref{e:7}. Indeed, in order to answer Q1, Plemelj used Fredholm's theory of integral equations\footnote{Fredholm's theorems for singular integral equations of the form \eqref{e:7} were proven in \cite{N} and are called Noether's theorems. If however the index of the underlying Fredholm operator is zero, compare Remark \ref{tech}, then those theorems are exactly the same as for standard Fredholm integral equations of the second kind, cf. \cite{Fre,P1}.}. First, consider two additional homogeneous Hilbert boundary value problems: the \textit{associated} Hilbert boundary value problem which consists in finding $Y=Y(z)\in\mathbb{C}^{1\times p}$ that is analytic in $\Omega^{\pm}$, extends H\"older continuously up to $\Gamma$, vanishes at infinity and instead of the jump condition in Problem \ref{HP2} satisfies
\begin{equation}\label{e:10}
	Y_+(z)=Y_-(z)\big(G^{\top}(z)\big)^{-1},\ \ z\in\Gamma,
\end{equation}
with the matrix transpose $G^{\top}(z)$ of $G(z)$. Moreover, the problem which accompanies the associated problem \eqref{e:10} and which seeks $\widehat{Y}=\widehat{Y}(z)\in\mathbb{C}^{1\times p}$ as in Problem \ref{HP2} but with jump constraint
\begin{equation}\label{e:11}
	\widehat{Y}_+(z)=\widehat{Y}_-(z)G^{\top}(z),\ \ z\in\Gamma.
\end{equation}
Repeating the logic that took us from Problem \ref{HP1} (with Assumptions \ref{assu:0} and \ref{assu:1}) to equation \eqref{e:7} we easily see that \eqref{e:11} leads to the integral equation
\begin{equation}\label{e:12}
	\widehat{Y}_+(z)+\frac{1}{\pi\im}\,\textnormal{pv}
	\int_{\Gamma}\widehat{Y}_+(\lambda)\frac{K^{\top}(\lambda,z)}{\lambda-z}\,\d\lambda=0,\ \ z\in\Gamma,
\end{equation} 
and which is the adjoint equation of \eqref{e:7}. We can now derive Plemelj's second criterion:
\begin{lem}[{\cite[page $217,218$]{P0}}]\label{Plem2} If the associated problem \eqref{e:10} has no non-trivial solutions, then \eqref{e:7} is solvable in the space of continuous functions on $\Gamma$ for any given $\gamma\in\mathbb{C}^{1\times p}[z]$.
\end{lem}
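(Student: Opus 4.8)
The plan is to read \eqref{e:7} as a Fredholm integral equation of the second kind and to invoke the Fredholm alternative. By Remark \ref{tech} --- either after Plemelj's reduction of the piecewise constant jump of Problem \ref{HP1} to a differentiable one, or, in the general singular setting, by Noether's theorems --- the left hand side of \eqref{e:7} is a Fredholm operator of index zero. Hence, for a given $\gamma$, equation \eqref{e:7} is solvable in the H\"older class (equivalently, by the self-improvement noted above --- continuous solutions being automatically H\"older continuous under Assumption \ref{assu:1} --- in the space of continuous functions on $\Gamma$) precisely when $\gamma$ annihilates every solution of the homogeneous transposed equation --- which is exactly \eqref{e:12} --- with respect to the pairing $\langle f,g\rangle=\int_{\Gamma}f(\lambda)g(\lambda)\,\d\lambda$; and, the index being zero, the transposed and the original homogeneous equations have equal finite defect numbers, so \eqref{e:7} is solvable for \emph{every} $\gamma\in\mathbb{C}^{1\times p}[z]$ as soon as \eqref{e:12} admits only the trivial solution. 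It thus suffices to prove the implication: if the associated problem \eqref{e:10} has no non-trivial solution, then neither does the homogeneous equation \eqref{e:12}.

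To establish this I would argue contrapositively, mirroring on the ``transposed'' side the chain \eqref{e:5}--\eqref{e:9} that connected Problem \ref{HP1} with the accompanying Problem \ref{HP2}. By construction equation \eqref{e:12} arises from the accompanying-associated problem \eqref{e:11} in precisely the way \eqref{e:7} arose from Problem \ref{HP1}; feeding a continuous --- hence, by Assumption \ref{assu:1}, H\"older continuous --- solution $\widehat Y_+$ of \eqref{e:12} into the transposed analogue of the Cauchy transform \eqref{e:8} therefore yields a vector-valued function $\widehat Z$ that is analytic on $\Omega^{\pm}$, extends H\"older continuously up to $\Gamma$, vanishes at $z=\infty$, and satisfies the jump relation $\widehat Z_+(z)=\widehat Z_-(z)\big(G^{\top}(z)\big)^{-1}$ of \eqref{e:10} --- exactly as the $\widehat X$ of \eqref{e:8} solves Problem \ref{HP2} whenever $X_-$ solves \eqref{e:7}. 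One normalizes the transform so that $\widehat Y_+$ is recovered from the boundary values of $\widehat Z$ on $\Gamma$; the assignment $\widehat Y_+\mapsto\widehat Z$ is then injective, and a non-trivial $\widehat Y_+$ produces a non-trivial solution $\widehat Z$ of \eqref{e:10}, against the hypothesis. Hence \eqref{e:12} has only the zero solution.

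Putting the two steps together proves the lemma: if \eqref{e:10} has no non-trivial solution then \eqref{e:12} is trivial, and the Fredholm alternative of the first paragraph then delivers a continuous solution of \eqref{e:7} for every $\gamma\in\mathbb{C}^{1\times p}[z]$ (indeed for every H\"older continuous right hand side). I expect the genuine difficulty --- and the historically pioneering content --- to lie entirely in the first step: one must know that \eqref{e:7}, or its smoothed counterpart \cite[$(4)$]{P0}, is a compact perturbation of the identity (more generally, a singular operator of index zero), so that the Fredholm alternative applies and so that the homogeneous equation and its transpose \eqref{e:12} have equal, finite defect numbers. This is where Fredholm's $1903$ theory --- and, in the singular case, Noether's theorems --- genuinely enters, and it is the decisive move in Plemelj's solvability proof. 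The second step is essentially bookkeeping: tracking which of $G$, $G^{-1}$, $G^{\top}$, $(G^{\top})^{-1}$ attaches to which of the four boundary value problems, and translating between Cauchy integrals and boundary jumps via the Plemelj-Sokhotski formul\ae.
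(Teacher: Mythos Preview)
Your overall strategy --- Fredholm alternative plus control of the adjoint homogeneous equation \eqref{e:12} --- is the right one, and the first paragraph is fine. The gap is in the second paragraph: the injectivity claim for $\widehat Y_+\mapsto\widehat Z$ is not justified, and in fact fails. Tracing the exact analogue of \eqref{e:8}--\eqref{e:9} on the transposed side, the dichotomy reads: either $\widehat Z\equiv 0$, in which case $\widehat Y_+$ is the boundary value of a genuine solution of the problem \eqref{e:11}; or $\widehat Z\not\equiv 0$ and then $\widehat Z$ solves \eqref{e:10}. Under the hypothesis on \eqref{e:10} you are forced into the first alternative, but that does \emph{not} make $\widehat Y_+$ vanish --- it only says $\widehat Y_+$ extends to a solution of \eqref{e:11}, and \eqref{e:11} may well have non-trivial solutions. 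So you have not shown that \eqref{e:12} is trivial, and your parenthetical ``indeed for every H\"older continuous right hand side'' is a warning sign: that stronger statement is neither claimed nor true in general.

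The paper's proof does not attempt to kill \eqref{e:12}. It uses exactly the conclusion you do reach --- that every solution $\widehat Y_+$ of \eqref{e:12} is the $+$-boundary value of a function analytic in $\Omega^+$ --- and then observes that the Fredholm solvability condition $\int_{\Gamma}\gamma(z)\widehat Y_+^{\top}(z)\,\d z=0$ holds automatically by Cauchy's theorem, because $\gamma$ is a polynomial (any entire $\gamma$ would do). This is where the hypothesis $\gamma\in\mathbb{C}^{1\times p}[z]$ is actually used, and it is the step your argument is missing.
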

\begin{proof} Problem \eqref{e:11} accompanies the associated problem \eqref{e:10}, or equivalently, \eqref{e:10} is the accompanying problem of \eqref{e:11} (as we invert the jump matrix and homogenize at infinity in the accompanying problems). But by assumption, \eqref{e:10} has no non-trivial solutions, so by Lemma \ref{Plem1} every continuous solution $\widehat{Y}_+(z),z\in\Gamma$ of the adjoint equation \eqref{e:12} yields a solution of the Hilbert boundary value problem \eqref{e:11}. Let $\widehat{Y}_+(z),z\in\Gamma$ denote such a non-trivial continuous solution of \eqref{e:12}\footnote{If none exists, i.e. the adjoint equation \eqref{e:12} has only the trivial solution, then by \cite[$(3''),(5')$]{N}, equation \eqref{e:7} is continuously solvable for any right-hand side $\gamma(z)$.  This is the Fredholm Alternative.}, so $\widehat{Y}_+(z),z\in\Gamma$ is the boundary value of some analytic function $\widehat{Y}(z),z\in\Omega^+$. However, by \cite[$(7)$]{N}, the inhomogeneous equation \eqref{e:7} is continuously solvable if and only if 
\begin{equation*}
	\int_{\Gamma}\gamma(z)\widehat{Y}_+^{\top}(z)\,\d z=0,
\end{equation*}
and which is guaranteed by our last conclusion and the fact that $\gamma\in\mathbb{C}^{1\times p}[z]$ (any entire $\gamma(z)$ would do). This concludes the proof of the Lemma.
\end{proof}
To summarize, Lemma \ref{Plem1} and \ref{Plem2} guarantee solvability of our initial Problem \ref{HP1} (with Assumption \ref{assu:1} in place, but no longer Assumption \ref{assu:0}), provided the accompanying Problem \ref{HP2} and the associated problem \eqref{e:10} are only trivially solvable. Although these two conditions seem peculiar they are in fact easily verifiable from our assumption that $\gamma(z)$ is polynomial and thus our solutions of Problem \ref{HP1} meromorphic at $z=\infty$. Indeed, we first recall, cf. \cite[$\S 2$]{N}, that the homogeneous version of \eqref{e:7} (or any of the other homogeneous equations in this section) has finitely many, say $s\in\mathbb{Z}_{\geq 1}$, linearly independent continuous solutions\footnote{In the abstract setting $K\phi=0$ of Remark \ref{tech}, this statement is part of the Riesz-Schauder theorem, asserting that all eigenspaces of the associated Fredholm integral operator are finite-dimensional.}. Then 
\begin{prop}[{\cite[page $219$]{P0}}] Any solution $X(z)$ of Problem \ref{HP1} (with Assumption \ref{assu:1}) has a zero of order at most $s\in\mathbb{Z}_{\geq 1}$ at $z=\infty$.
\end{prop}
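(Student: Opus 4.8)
The plan is to reduce the bound to the finite-dimensionality of the solution space of the homogeneous version of \eqref{e:7}, exploiting that the three conditions defining Problem \ref{HP1} are stable under multiplication by scalar polynomials. I would first dispose of the trivial case: if a solution $X(z)$ of Problem \ref{HP1} has $\gamma\not\equiv 0$ in condition $(3)$, then $X(z)-\gamma(z)=\mathcal{O}(z^{-1})$ with $\gamma\not\equiv 0$ shows that $X$ does not tend to $0$ as $|z|\to\infty$, so $X$ has no zero at $z=\infty$ and the asserted bound holds trivially. Hence only solutions with $\gamma\equiv 0$, i.e. $X(z)=\mathcal{O}(z^{-1})$, carry content; fix such an $X\not\equiv 0$ and let $k\geq 1$ denote the order of its zero at $z=\infty$, so $X(z)=\mathcal{O}(z^{-k})$ while $z^kX(z)$ has a nonzero finite limit.

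The key observation is that for every scalar polynomial $p\in\mathbb{C}[z]$ with $\deg p\leq k-1$ the row-vector function $p(z)X(z)\in\mathbb{C}^{1\times p}$ is again a solution of Problem \ref{HP1} with $\gamma\equiv 0$. Indeed, $p$ is entire and single-valued, so the boundary relation survives: $[pX]_+(z)=p(z)X_+(z)=p(z)X_-(z)G(z)=[pX]_-(z)G(z)$ on each $(a_i,a_{i+1})$; analyticity in $\Omega^{\pm}\setminus\{\infty\}$ and continuous extension up to $\Gamma\setminus\{a_1,\dots,a_n\}$ — condition $(1)$ — persist under multiplication by the locally bounded $p$; the local estimate $\|X(z)\|\leq C\,|z-a_i|^{-\alpha}$ near $a_i$ is preserved up to a constant; and at infinity $p(z)X(z)=\mathcal{O}(z^{\deg p-k})=\mathcal{O}(z^{-1})$ because $\deg p\leq k-1$. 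Consequently $X(z),zX(z),\dots,z^{k-1}X(z)$ are $k$ solutions of Problem \ref{HP1} with $\gamma\equiv 0$, and they are linearly independent over $\mathbb{C}$: a relation $\big(\sum_{j=0}^{k-1}c_jz^j\big)X(z)\equiv 0$ with $X\not\equiv 0$ forces the scalar polynomial, hence all the $c_j$, to vanish.

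It then remains to bound by $s$ the dimension of the vector space $\mathcal V$ of solutions of Problem \ref{HP1} with $\gamma\equiv 0$. By the chain of implications set up before Lemma \ref{Plem1}, every solution of Problem \ref{HP1} gives, through its boundary limit $X_-$ on $\Gamma$, a continuous solution of \eqref{e:6} and hence of \eqref{e:7} with the same $\gamma$; specialising to $\gamma\equiv 0$ produces a continuous solution of the homogeneous version of \eqref{e:7}. The linear map $\mathcal V\ni X\mapsto X_-$ is injective: if $X_-\equiv 0$ on $\Gamma$ then also $X_+=X_-G\equiv 0$, so extending $X$ by $0$ across $\Gamma$ yields a function that is continuous on $\mathbb{CP}^1$ and analytic off the rectifiable curve $\Gamma$, hence — by a standard removability argument — analytic on all of $\mathbb{C}$ and vanishing at $z=\infty$, whence $X\equiv 0$ by Liouville's theorem. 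Since the homogeneous version of \eqref{e:7} has at most $s$ linearly independent continuous solutions, $\dim\mathcal V\leq s$, and in particular $k\leq\dim\mathcal V\leq s$, which is the claim.

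The step I expect to be the main obstacle is the first half of the last paragraph: verifying that the passage from a solution of Problem \ref{HP1} with $\gamma\equiv 0$ to a continuous solution of the homogeneous \eqref{e:7} really goes through with only Assumption \ref{assu:1} in force (Assumption \ref{assu:0} having been dropped) and that this passage is genuinely injective, so that the dimension comparison is legitimate. Everything else — the jump identity for $p(z)X(z)$, the growth control at $z=\infty$, and the local bound near the $a_i$ — is routine, resting on the manifest stability of the conditions of Problem \ref{HP1} under multiplication by an entire scalar function of sufficiently low degree.
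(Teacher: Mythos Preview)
Your proof is correct and follows essentially the same approach as the paper: multiply $X$ by $1,z,\dots,z^{k-1}$ to produce $k$ linearly independent solutions of the homogeneous problem, pass to their boundary values to obtain $k$ linearly independent continuous solutions of the homogeneous version of \eqref{e:7}, and conclude $k\leq s$. The paper's proof is terser—it asserts the linear independence of the boundary values $z^jX_-(z)$ directly rather than going through injectivity of $X\mapsto X_-$—but the content is the same; your added discussion of the trivial case $\gamma\not\equiv 0$ and of injectivity is correct but not needed, and the technical worry you flag about dropping Assumption \ref{assu:0} is a point the paper also leaves implicit.
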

\begin{proof} If $X(z)$ has a zero of order $k\in\mathbb{Z}_{\geq 1}$ at $z=\infty$, then 
\begin{equation*}
	X(z),\,zX(z),\,z^2X(z),\,\ldots,\,z^{k-1}X(z)
\end{equation*}
all solve the homogeneous version of Problem \ref{HP2} (with Assumption \ref{assu:1}) and thus $X_-(z),zX_-(z)$, $z^2X_-(z),\ldots$, $z^{k-1}X_-(z),z\in\Gamma$ all solve \eqref{e:7} with $\gamma\equiv 0$. Since these solutions are linearly independent we must have $k\leq s$ by our previous discussion.
\end{proof}
So we can always find an integer $r\in\mathbb{Z}_{\geq 0}$ such that neither the accompanying problem \ref{HP1} nor the associated problem \eqref{e:10} admit solutions with a zero of order greater than $r$ at $z=\infty$. Hence, if $X(z)$ solves Problem \ref{HP1} where $\gamma\in\mathbb{C}^{1\times p}[z]$ satisfies $\textnormal{deg}(\gamma)\leq r$, then
\begin{equation}\label{e:13}
	X^{\circ}(z):=\begin{cases}\displaystyle X(z),&z\in\Omega^+\smallskip\\
	\displaystyle\frac{X(z)}{(z-z_0)^r},&z\in\Omega^-\end{cases}\ \ \ \ \ \ \ \ \textnormal{for some arbitrary}\ \ z_0\in\Omega^+,
\end{equation}
solves Problem \ref{HP1} with jump matrix $G^{\circ}(z):=(z-z_0)^rG(z)$ and is bounded at $z=\infty$. Most importantly, the central assumptions in Lemma \ref{Plem1} and \ref{Plem2} for the accompanying and associated problem of the $X^{\circ}$-Hilbert boundary value (which is Problem \ref{HP1} with $X^{\circ}(z)=\mathcal{O}(1)$ at $z=\infty$ and H\"older continuous jumps on all of $\Gamma$) are satisfied: 
\begin{enumerate}
	\item[(A)] The accompanying problem has jump $\widehat{X}_+^{\circ}(z)=\widehat{X}^{\circ}_-(z)(z-z_0)^{-r}(G(z))^{-1}$, $z\in\Gamma$, so if this problem has a non-trivial solution (that vanishes at infinity) then
	\begin{equation*}
		\widehat{X}(z):=\begin{cases}\widehat{X}^{\circ}(z),&z\in\Omega^+\smallskip\\
		(z-z_0)^{-r}\widehat{X}^{\circ}(z),&z\in\Omega^-\end{cases}
	\end{equation*}
	solves Problem \ref{HP2} and has a zero of order greater than $r$ at $z=\infty$, contradicting our initial choice for $r$.
	\item[(B)] The associated problem has jump $Y^{\circ}_+(z)=Y^{\circ}_-(z)(z-z_0)^{-r}(G^{\top}(z))^{-1}$, $z\in\Gamma$, so if this problem has a non-trivial solution (that vanishes at infinity) then
	\begin{equation*}
		Y(z):=\begin{cases} Y^{\circ}(z),&z\in\Omega^+\\
		(z-z_0)^{-r}\,Y^{\circ}(z),&z\in\Omega^-\end{cases}
	\end{equation*}
	solves the associated problem \eqref{e:10} and vanishes to order greater than $r$ at $z=\infty$. Again a contradiction to our choice of $r$.\bigskip
\end{enumerate}
The $X^{\circ}$-Hilbert boundary value problem is thus solvable by Lemma \ref{Plem1} and \ref{Plem2} and by reversing \eqref{e:13} we arrive at
\begin{theo}[{\cite[page $222,223$]{P0}}]\label{theo:1} There exists $r\in\mathbb{Z}_{\geq 1}$ such that Problem \ref{HP1} (with Assumption \ref{assu:1}) is solvable for all $\gamma\in\mathbb{C}^{1\times p}[z]$ with $\textnormal{deg}(\gamma)\leq r$. Moreover, every solution of said problem is of the form
\begin{equation}\label{e:14}
	X(z)=\sum_{i=1}^p\gamma_iX_i(z)+\sum_{i=p+1}^m\gamma_iX_i(z),\ \ z\in\mathbb{C}\setminus\Gamma,
\end{equation}
with some constants $\gamma_1,\ldots,\gamma_m\in\mathbb{C}$ and where $\{X_i(z)\}_{j=1}^m$ are linearly independent particular solutions of Problem \ref{HP1} with
\begin{equation*}
	\lim_{|z|\rightarrow\infty}z^{-r}X^{ij}(z)=\delta_{ij},\ \ \ \ i,j=1,\ldots,p;\ \ \ \ \ \ \ \ \ \ \ \ X_i(z)=\big[X^{ij}(z)\big]_{j=1}^p\in\mathbb{C}^{1\times p},
\end{equation*}
and $\{X_i(z)\}_{i=p+1}^m$ (if there are any) have degree strictly less than $r$ at $z=\infty$.
\end{theo}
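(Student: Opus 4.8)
The plan is to treat the statement as a bookkeeping consequence of the machinery already in place: first manufacture the integer $r$, then extract solvability from Lemmas \ref{Plem1} and \ref{Plem2} applied to the shifted jump $G^{\circ}(z)=(z-z_0)^rG(z)$, and finally read off \eqref{e:14} from the linear structure of the solution set. For the first step I would record that the accompanying Problem \ref{HP2} and the associated problem \eqref{e:10} are homogeneous Hilbert boundary value problems whose solutions vanish at $z=\infty$ and, by the same reductions that produced \eqref{e:7} and \eqref{e:12}, collapse to homogeneous singular Fredholm equations of index zero, hence carry finite-dimensional solution spaces (Noether/Riesz--Schauder, cf.\ Remark \ref{tech}). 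The argument of the Proposition preceding the theorem then reruns word for word on each of them: if a solution vanishes to order $k\in\mathbb{Z}_{\geq 1}$ at $\infty$, then $X(z),zX(z),\ldots,z^{k-1}X(z)$ are $k$ linearly independent solutions of the same homogeneous problem, so $k$ is bounded. I would let $r\in\mathbb{Z}_{\geq 1}$ exceed both bounds (taking it larger only helps); this is precisely the $r$ for which items (A) and (B) above hold.

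For solvability, fix $\gamma\in\mathbb{C}^{1\times p}[z]$ with $\deg(\gamma)\leq r$ and $z_0\in\Omega^+$ and pass to the $X^{\circ}$-Hilbert boundary value problem, i.e.\ Problem \ref{HP1} with jump $G^{\circ}$ (again H\"older on all of $\Gamma$ under Assumption \ref{assu:1}), bounded at infinity. By item (A) its accompanying problem has only the trivial solution — a nontrivial one, multiplied by $(z-z_0)^{-r}$ on $\Omega^-$, would solve Problem \ref{HP2} with a zero of order $>r$ at $\infty$ — and by item (B) so does its associated problem of type \eqref{e:10}. Hence Lemma \ref{Plem2} renders the integral equation \eqref{e:7} built from $G^{\circ}$ continuously solvable for every polynomial right-hand side, and Lemma \ref{Plem1} upgrades each continuous solution to a bona fide solution $X^{\circ}$ of the $X^{\circ}$-problem; reversing \eqref{e:13} (multiply by $(z-z_0)^r$ on $\Omega^-$, leave unchanged on $\Omega^+$) hands back a solution of Problem \ref{HP1} with the original jump $G$. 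Feeding constants into \eqref{e:7} pins the coefficient of $z^r$ in the polynomial part at $\infty$, while the homogeneous solutions of \eqref{e:7} — which after the reversal are exactly the solutions of Problem \ref{HP1} of degree $<r$ — account for the lower-order coefficients; once $r$ is large enough (this is where the bound from the first step is used) these families are rich enough to realize any prescribed $\gamma$ with $\deg(\gamma)\leq r$.

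For the representation \eqref{e:14}, let $\mathcal{V}$ be the set of all solutions of Problem \ref{HP1} (with Assumption \ref{assu:1}) whose polynomial part has degree $\leq r$. As the jump condition $X_+=X_-G$ is homogeneous and degrees add under addition, $\mathcal{V}$ is a complex vector space, and via \eqref{e:13} it embeds into the solution space of the $X^{\circ}$-problem, which by the previous two steps is finite-dimensional (a particular solution of \eqref{e:7} depending linearly on the data, plus its finite-dimensional homogeneous kernel). Set $m=\dim_{\mathbb{C}}\mathcal{V}$. The linear map $\ell:\mathcal{V}\to\mathbb{C}^{1\times p}$ taking $X$ to the coefficient of $z^r$ in its polynomial part is surjective by the solvability step, and its kernel is the subspace of solutions of degree strictly less than $r$ at $\infty$, of dimension $m-p$. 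Choosing $X_1,\ldots,X_p\in\mathcal{V}$ with $\ell(X_i)$ the $i$-th standard basis row vector of $\mathbb{C}^{1\times p}$ — equivalently $\lim_{|z|\to\infty}z^{-r}X^{ij}(z)=\delta_{ij}$ — and $X_{p+1},\ldots,X_m$ a basis of $\ker\ell$ produces a basis $\{X_1,\ldots,X_m\}$ of $\mathcal{V}$, which is the claimed identity \eqref{e:14}.

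The part I expect to be delicate is not the linear algebra of the last step but the order accounting at $z=\infty$ under \eqref{e:13}: one has to verify that a \emph{single} choice of $r$ simultaneously (i) matches the solutions of Problem \ref{HP1} of degree $\leq r$ with the bounded solutions of the $G^{\circ}$-problem, (ii) forces the accompanying and associated problems of $G^{\circ}$ to be only trivially solvable, so that Lemmas \ref{Plem1} and \ref{Plem2} apply, and (iii) keeps the homogeneous family of \eqref{e:7} large enough to hit every polynomial part of degree $\leq r$ after the reversal — the last point amounting to a short induction on $\deg(\gamma)$ whose base cases rely on (iii). By contrast, the monomial-multiplication and linear-independence bookkeeping used to produce $r$ and the explicit basis in \eqref{e:14} is entirely elementary.
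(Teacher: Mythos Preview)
Your proposal is correct and follows essentially the same route as the paper: manufacture $r$ from the finite-dimensionality of the homogeneous problems (the Proposition preceding the theorem), pass to the $X^{\circ}$-problem with jump $(z-z_0)^rG$ so that items (A) and (B) trigger Lemmas \ref{Plem1} and \ref{Plem2}, solve there for the constant data $\gamma_i=[\delta_{ij}]_{j=1}^p$, reverse \eqref{e:13} to obtain $X_1,\dots,X_p$ with the stated leading behaviour, and identify the remaining $X_{p+1},\dots,X_m$ with a basis of the homogeneous solutions. Your linear-algebra packaging via the map $\ell:\mathcal V\to\mathbb C^{1\times p}$ is exactly the paper's ``first sum from particular solutions, second sum from homogeneous solutions'' said more explicitly, and your flagging of the order-accounting under \eqref{e:13} as the only delicate point matches where the paper's own argument is terse.
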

\begin{proof} Clearly, if $\{X_i\}_{j=1}^m$ are any particular solutions of the jump condition in Problem \ref{HP1}, then any polynomial linear combination of them will also solve the same jump constraint. Thus, choosing $r\in\mathbb{Z}_{\geq 1}$ sufficiently large as indicated in our previous discussion we can always ensure that Problem \ref{HP1} is solvable for the collection of constant polynomials $\gamma_i(z)=[\delta_{ij}]_{j=1}^p$ with $i=1,\ldots,p$. Reversing subsequently \eqref{e:13} and taking an arbitrary combination of the so obtained solutions yields the first half in \eqref{e:14} with the indicated asymptotic behavior at $z=\infty$. The second sum in \eqref{e:14} stems from the complete system of linearly independent solutions of the homogeneous version of Problem \ref{HP1}, modulo the inversion of \eqref{e:13}.
\end{proof}
Although Theorem \ref{theo:1} solves Problem \ref{HP1}, the implicit dependence of \eqref{e:14} on $r\in\mathbb{Z}_{\geq 1}$ is a drawback, in particular since $r$ affects $X^{\circ}(z)$ through \eqref{e:13} and it affects the parameter $m$. In order to bypass these deficiencies Plemelj  managed then to construct a \textit{fundamental system} $\{F_i\}_{i=1}^p\subset\mathbb{C}^{1\times p}$ of solutions to Problem \ref{HP1} such that every solution of the same problem is a polynomial linear combination of the form
\begin{equation*}
	X(z)=\sum_{i=1}^pq_i(z)F_i(z),\ \ z\in\mathbb{C}\setminus\Gamma;\ \ \ \ \ \ q_i\in\mathbb{C}[z].
\end{equation*}
In order to present his argument, we first note that the system $\{X_i\}_{i=1}^p$ in \eqref{e:14} is linearly independent over $\mathbb{C}[z]$ because of its asymptotic normalization. With $r\in\mathbb{Z}_{\geq 1}$ as stated in Theorem \ref{theo:1}, we then observe that among all solutions of the form \eqref{e:14} some have the lowest degree at $z=\infty$ since a solution of Problem \ref{HP1} cannot vanish to all orders at $z=\infty$. Let $F_1(z)$ denote one of the solutions of lowest degree $-\nu_1\in\mathbb{Z}$ at $z=\infty$ and $F_2(z)$ one of the solutions of Problem \ref{HP1} of lowest degree $-\nu_2\geq-\nu_1$ at $z=\infty$ such that $F_2\notin\textnormal{span}\{F_1\}$\footnote{All linear combinations are taken over $\mathbb{C}[z]$}. Next, we let $F_3(z)$ denote one of the solutions of Problem \ref{HP1} of lowest degree $-\nu_3\geq-\nu_2$ at $z=\infty$ such that $F_3\notin\textnormal{span}\{F_1,F_2\}$ and note that this process may be continued until we reach $F_p\notin\textnormal{span}\{F_1,\ldots,F_{p-1}\}$ of lowest degree $-\nu_p\geq-\nu_{p-1}$ at $z=\infty$. Indeed, if $F_1(z),\ldots,F_k(z)$ with $k<p$ have been constructed in this fashion then we can find yet another solution of Problem \ref{HP1} not contained in $\textnormal{span}\{F_1,\ldots,F_k\}$. For otherwise, the particular solutions $\{X_i\}_{i=1}^p$ in \eqref{e:14} would be contained in $\textnormal{span}\{F_1,\ldots,F_k\}$ and thus be linearly dependent over $\mathbb{C}[z]$, contradicting our previous discussion. In summary, the above procedure leads to $p$ fundamental solutions
\begin{equation}\label{e:15}
	F_1(z),F_2(z),\ldots,F_p(z),
\end{equation}
of Problem \ref{HP1} with lowest degrees $-\nu_1,-\nu_2,\ldots,-\nu_p$ at $z=\infty$ where $\nu_1\geq\nu_2\geq\ldots\geq\nu_p$ and $F_k\notin\textnormal{span}\{F_1,\ldots,F_{k-1}\}$ for $k=2,\ldots,p$. The central properties of the so constructed fundamental system \eqref{e:15} are summarized below.
\begin{theo}[{\cite[page $227$]{P0}}]\label{theo:2} There exists $r\in\mathbb{Z}_{\geq 1}$ such that Problem \ref{HP1} (with Assumption \ref{assu:1}) is solvable for all $\gamma\in\mathbb{C}^{1\times p}[z]$ with $\textnormal{deg}(\gamma)\leq r$. Moreover, every solution of said problem is of the form
\begin{equation}\label{e:16}
	X(z)=\sum_{i=1}^pq_i(z)F_i(z),\ \ \ z\in\mathbb{C}\setminus\Gamma,
\end{equation}
where $\{F_i\}_{j=1}^p$ is a fundamental system of solutions, $q_i\in\mathbb{C}[z]$ and the corresponding fundamental matrix
\begin{equation*}
	\Phi(z):=\big[F^{ij}(z)\big]_{i,j=1}^p,\ \ \ \ \ F_i(z)=\big[F^{ij}(z)\big]_{j=1}^p\in\mathbb{C}^{1\times p},\ \ \ z\in\mathbb{C}\setminus\Gamma
\end{equation*}
satisfies
\begin{itemize}
	\item[(i)] $\det\Phi(z)\neq 0$ for all $z\in\mathbb{C}$ including $z\in\Gamma$ with the appropriate limiting values $\Phi_{\pm}(z)$.\smallskip
	\item[(ii)] There exists a diagonal matrix $\Lambda\in\mathbb{Z}^{p\times p}$ such that $z^{\Lambda}\Phi(z)$ is invertible at $z=\infty$.
\end{itemize}
\end{theo}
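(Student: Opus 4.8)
The solvability statement is inherited verbatim from Theorem \ref{theo:1}, so the content of the proof lies in the representation \eqref{e:16} and in properties (i)--(ii) of the fundamental matrix $\Phi$; throughout, the only real tool is the minimality hard-wired into the construction of $\{F_i\}_{i=1}^p$, namely that $F_k$ realises the lowest degree at $z=\infty$ among solutions of Problem \ref{HP1} outside $\textnormal{span}_{\mathbb{C}[z]}\{F_1,\ldots,F_{k-1}\}$. I would prove (i), then (ii), then \eqref{e:16}, since the last two use $\det\Phi\neq 0$. For (i) on $\Omega^+\cup\Omega^-$: if $\det\Phi(z_\ast)=0$, choose a left null vector $v=(v_1,\ldots,v_p)\neq 0$ of $\Phi(z_\ast)$ and put $k:=\max\{i:v_i\neq 0\}$, so that $W(z):=\sum_{i\le k}v_iF_i(z)$ vanishes at $z_\ast$ while having degree exactly $-\nu_k$ at $z=\infty$ (should its leading coefficient drop, $W$ itself already has degree $<-\nu_k$ and the argument terminates one line sooner). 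Then $H(z):=W(z)/(z-z_\ast)$ solves Problem \ref{HP1} as well — the zero kills the pole, the jump relation and the bounds at the $a_i$ are untouched, and $\textnormal{deg}(H)=-\nu_k-1$ — while $H\notin\textnormal{span}_{\mathbb{C}[z]}\{F_1,\ldots,F_{k-1}\}$, since $W=(z-z_\ast)H$ would otherwise display $F_k$ as a $\mathbb{C}[z]$-combination of $F_1,\ldots,F_{k-1}$. For $k\ge 2$ this contradicts the defining minimality of $F_k$, for $k=1$ that of $F_1$. For $z_\ast$ on an open segment $(a_i,a_{i+1})$ one reduces to the previous case by deforming $\Gamma$ homotopically (rel $\{a_1,\ldots,a_n\}$) to a contour $\Gamma'$ missing $z_\ast$: the solution space and the degrees at infinity are invariant under such a deformation, $\det\Phi'\neq 0$ off $\Gamma'$ by the above, and $\Phi$, $\Phi'$ differ near $z_\ast$ by left multiplication with an invertible factor holomorphic there, whence $\det\Phi_\pm(z_\ast)\neq 0$.

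For (ii) write $F_i(z)=c_iz^{-\nu_i}+o(z^{-\nu_i})$ with $c_i\in\mathbb{C}^{1\times p}\setminus\{0\}$ and take $\Lambda:=\textnormal{diag}(\nu_1,\ldots,\nu_p)$, so that $z^\Lambda\Phi(z)\to C$ as $|z|\to\infty$, where $C$ is the matrix with rows $c_1,\ldots,c_p$. If $\det C=0$, pick $d=(d_1,\ldots,d_p)\neq 0$ with $\sum_id_ic_i=0$ and set $k:=\max\{i:d_i\neq 0\}$; here $k\ge 2$ (if $k=1$ then $c_1=0$, impossible). The combination $H(z):=d_kF_k(z)+\sum_{i<k}d_iz^{\nu_i-\nu_k}F_i(z)$ is legitimate because $\nu_i\ge\nu_k$ for $i<k$; it solves Problem \ref{HP1}, its $z^{-\nu_k}$-coefficient equals $\sum_{i\le k}d_ic_i=0$ so $\textnormal{deg}(H)<-\nu_k$, and $H\notin\textnormal{span}_{\mathbb{C}[z]}\{F_1,\ldots,F_{k-1}\}$ (it involves $F_k$ with the nonzero constant coefficient $d_k$) — once more contradicting the choice of $F_k$. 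Hence $\det C\neq 0$ and $z^\Lambda\Phi(z)$ is invertible at $z=\infty$.

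Finally, for \eqref{e:16}, let $X$ solve Problem \ref{HP1} and set $Y(z):=X(z)\Phi(z)^{-1}$. By (i), $Y$ is holomorphic on $\mathbb{C}\setminus\Gamma$; and $X_+=X_-G$, $\Phi_+=\Phi_-G$ give $Y_+=Y_-$ on $\Gamma$, so $Y$ continues holomorphically to $\mathbb{C}\setminus\{a_1,\ldots,a_n\}$. The bounds of Problem \ref{HP1} at the $a_i$ and at $z=\infty$, together with the power-type behaviour of $\Phi^{-1}$ at the $a_i$, show that $Y$ is a rational $\mathbb{C}^{1\times p}$-valued function with poles confined to $\{a_1,\ldots,a_n\}$. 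A pole of order $m\ge 1$ at some $a_j$ is excluded: its leading Laurent vector $Y_0$ produces the combination $Y_0\Phi=\sum_iY_0^iF_i$, which — by exactly the minimality reasoning of the first paragraph, since dividing it by $(z-a_j)$ would again yield a solution of lower degree outside the pertinent span — cannot vanish at $a_j$, so $X=Y\Phi$ would blow up like $|z-a_j|^{-m}$ and violate $\|X(z)\|\le C|z-a_j|^{-\alpha}$ with $\alpha<1$. Thus $Y$ is entire, the degree conditions at $z=\infty$ force $Y\in\mathbb{C}^{1\times p}[z]$, and $X=Y\Phi=\sum_iY^iF_i$ with $Y^i\in\mathbb{C}[z]$, which is \eqref{e:16}.

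I expect the genuine obstacle to be the local analysis at the endpoints $a_i$, where the only a priori control is $\|X(z)\|\le C|z-a_i|^{-\alpha}$ with $\alpha<1$: the division by $(z-a_j)$ in the last step, the power-type description of $\Phi^{-1}$ near the $a_i$, and making the contour deformation in (i) precise all rest on the local structure of solutions of the Hilbert boundary value problem near an endpoint, whose exponents are governed by $\log G_j$ (equivalently, by the local monodromy) — precisely the analytic core that Plemelj worked out in \cite{P0} and that was later put on systematic footing by Muskhelishvili--Vekua \cite{MV} and Gakhov \cite{Ga2}. In the original piecewise-constant formulation the jump is locally constant away from the $a_i$, so the deformation is immediate and a reflection argument across an interior point gives integer-order vanishing (legitimising the division), while the endpoints are handled through the exponents attached to the monodromy matrices $G_j$.
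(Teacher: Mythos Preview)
Your minimality contradictions for (i) off $\Gamma$ and for (ii) are exactly the paper's: pick the last nonzero coefficient, produce a solution of degree strictly below $-\nu_k$ that is not a $\mathbb{C}[z]$-combination of $F_1,\ldots,F_{k-1}$, and contradict the choice of $F_k$. So the heart of the argument is right and matches Plemelj's.

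Where you diverge from the paper is in your handling of the points $a_i$, and this stems from a misreading of Assumption~\ref{assu:1}. That assumption says $G$ is H\"older continuous on \emph{all} of $\Gamma$, and explicitly notes that the blow-up constraint near the $a_i$ then becomes unnecessary. Under it the $a_i$ are ordinary points of $\Gamma$: solutions extend H\"older continuously through them, $\Phi_\pm$ are continuous and (by your own part (i)) invertible there, and nothing singular can happen. Consequently the paper's proof of \eqref{e:16} is a one-liner: $Y=X\Phi^{-1}$ has no jump on $\Gamma$, hence is entire, hence polynomial by the degree condition at infinity. Your pole-exclusion argument at the $a_i$, and the entire closing paragraph about endpoint exponents and $\log G_j$ being the ``genuine obstacle'', are addressing difficulties that Assumption~\ref{assu:1} has already removed; that local analysis belongs to the \emph{later} step (Vekua's reduction lifting Assumption~\ref{assu:1}), not to this theorem.

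For $z_\ast\in\Gamma$ in part (i) you take a genuinely different route. The paper stays on the fixed contour: if $F_\pm(z_\ast)=0$ it sets $X(z)=F(z)/(z-z_\ast)$ and argues, using the H\"older continuity of $F_\pm$ together with the integral equation \eqref{e:7}, that $X_\pm$ are again H\"older continuous on all of $\Gamma$, so $X$ is a bona fide solution and the off-$\Gamma$ contradiction repeats. Your contour-deformation argument is attractive in the original piecewise-constant setting (where $G$ extends trivially off $\Gamma$), but under Assumption~\ref{assu:1} alone $G$ is merely H\"older on $\Gamma$ with no canonical extension, so ``deform $\Gamma$ to $\Gamma'$ missing $z_\ast$'' is not available without further input. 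Either restrict your deformation remark to the piecewise-constant case, or follow the paper and argue directly on $\Gamma$.
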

\begin{proof} The first statement of the theorem was already established in Theorem \ref{theo:1} and \eqref{e:16} follows from (i) since by construction of the fundamental matrix, $G(z)=(\Phi_-(z))^{-1}\Phi_+(z),z\in\Gamma$ and thus for any solution $X(z)$ of Problem \ref{HP1} (with Assumption \ref{assu:1}),
\begin{equation*}
	X_+(z)(\Phi_+(z))^{-1}=X_-(z)(\Phi_-(z))^{-1},\ \ z\in\Gamma,
\end{equation*}
i.e. $X(z)(\Phi(z))^{-1}$ is an entire function. Hence, by condition (3) in Problem \ref{HP1} we find that $X(z)(\Phi(z))^{-1}\in\mathbb{C}^{1\times p}[z]$, so \eqref{e:16} follows and conversely it is clear that the right hand side in \eqref{e:16} solves the Hilbert boundary value problem \ref{HP1} for any $q_i\in\mathbb{C}[z]$. In order to establish (i) we first note that if $X(z)$ is any solution of Problem \ref{HP1} of degree strictly less than $-\nu_k\in\mathbb{Z}$ at $z=\infty$ where $k\in\{2,\ldots,p\}$, then necessarily
\begin{equation}\label{e:17}
	X(z)=\sum_{i=1}^{k-1}r_i(z)F_i(z)\ \ \ \textnormal{for some}\ r_i\in\mathbb{C}[z].
\end{equation}
For if such $X$, i.e. of degree strictly less than $-\nu_k$ at $z=\infty$, were not contained in $\textnormal{span}\{F_1,\ldots,F_{k-1}\}$, then our previous selection process of the subsequent $F_k\notin\textnormal{span}\{F_1,\ldots,F_{k-1}\}$ with lowest degree $-\nu_k\geq-\nu_{k-1}$ would fail. Returning to (i) we now show that the expression
\begin{equation}\label{e:18}
	F(z):=\sum_{i=1}^pa_iF_i(z),\ \ \ \ z\in\mathbb{C}\setminus\Gamma,
\end{equation}
with $a_i\in\mathbb{C}$ not all zero, does not vanish at any point $z\in\mathbb{C}$: if $F(z_0)=0$ for some $z_0\notin\Gamma$, then
\begin{equation*}
	(z-z_0)X(z)=\sum_{i=1}^pa_iF_i(z),\ \ \ \ z\in\mathbb{C}\setminus\Gamma
\end{equation*}
where $X(z)$ solves Problem \ref{HP1}. If $a_k$ denotes the last of the coefficients $a_1,\ldots,a_p$ which is non-zero, then the degree of $X(z)$ at $z=\infty$ is strictly less than $-\nu_k$, so \eqref{e:17} must hold for the same $X(z)$, i.e. we would have
\begin{equation*}
	(z-z_0)\sum_{i=1}^{k-1}r_i(z)F_i(z)=\sum_{i=1}^ka_iF_i(z).
\end{equation*}
But this is a contradiction to the requirement $F_k\notin\textnormal{span}\{F_1,\ldots, F_{k-1}\}$ and thus $F(z)\neq 0$ for all $z\in\mathbb{C}\setminus\Gamma$. If on the other hand $F(z_0)=0$ for some $z_0\in\Gamma$, i.e. either $F_+(z_0)=0$ or equivalently $F_-(z_0)=F_+(z_0)(G(z_0))^{-1}=0$, then define
\begin{equation*}
	X(z):=\frac{F(z)}{z-z_0},\ \ \ z\in\mathbb{C}\setminus\Gamma.
\end{equation*}
This $\mathbb{C}^{1\times p}$-valued function is analytic in $\Omega^{\pm}\setminus\{\infty\}$, of finite degree at $z=\infty$ and extends H\"older continuously from either side to the punctured contour $\Gamma\setminus\{z_0\}$. Thus the above $X(z)$ solves the singular integral equation \eqref{e:7} for all $z\in\Gamma\setminus\{z_0\}$ and since $F_{\pm}(z),z\in\Gamma$ are H\"older continuous with $F_{\pm}(z_0)=0$ we conclude that $X_-(z),z\in\Gamma$ as well as $X_+(z)=X_-(z)G(z),z\in\Gamma$ are H\"older continuous on all of $\Gamma$. At this point we simply repeat the previous logic for $z_0\notin\Gamma$ and conclude $F(z)\neq 0$ for all $z\in\Gamma$. All together, \eqref{e:18} does not vanish for any $z\in\mathbb{C}$ (choosing $F_{\pm}(z)$ for $z\in\Gamma$) and thus $\det\Phi(z)\neq 0$ for all $z\in\mathbb{C}$, as claimed. We are left with (ii) and thus the  point $z=\infty$ at which $\det\Phi(z)$ might have a pole or may remain finite. However, if $-\nu_i\in\mathbb{Z}$ denotes the degree of $F_i(z)$ at $z=\infty$, then
\begin{equation*}
	\det\big[z^{\nu_i}F^{ij}(z)\big]_{i,j=1}^p
\end{equation*}
is certainly finite at $z=\infty$. If it were zero, then we could find $a_1,\ldots,a_p\in\mathbb{C}$ not all zero such that
\begin{equation*}
	a_1z^{\nu_1}F_1(z)+a_2z^{\nu_2}F_2(z)+\ldots+a_pz^{\nu_p}F_p(z)
	=\mathcal{O}\big(z^{-1}\big),\ \ \ \ \textnormal{as}\ \ \ z\rightarrow\infty.
\end{equation*}
With $a_k$ denoting the last of the coefficients $a_1,\ldots,a_p$ which is non-zero we then have
\begin{equation*}
	X(z):=a_1z^{\nu_1-\nu_k}F_1(z)+a_2z^{\nu_2-\nu_k}F_2(z)+\ldots+a_kF_k(z)=\mathcal{O}\big(z^{-1-\nu_k}\big),\ \ \ z\rightarrow\infty,
\end{equation*}
i.e. the degree of $X(z)$ at $z=\infty$ is strictly less than $-\nu_k$. But then necessarily (compare above)
\begin{equation*}
	X(z)=\sum_{i=1}^{k-1}r_i(z)F_i(z)\ \ \ \ \ \ \textnormal{for some}\ \ r_i\in\mathbb{C}[z],
\end{equation*}
and in turn $F_k\in\textnormal{span}\{F_1,\ldots,F_{k-1}\}$, a contradiction. All together,
\begin{equation*}
	 z^{\Lambda}\Phi(z)\ \ \ \ \ \textnormal{with}\ \ \Lambda:=\textnormal{diag}[\nu_1,\ldots,\nu_p]\in\mathbb{Z}^{p\times p}
\end{equation*}
is invertible at $z=\infty$. This concludes the proof of the theorem.
\end{proof}
Plemelj's central Theorem \ref{theo:2} will be the key to solve RHP \ref{RHP:1} in interpretation (2), provided we can dispose of Assumption \ref{assu:1}. In \cite[page $229-236$]{P0} this was achieved by relying on the piecewise constant nature of $G(z)$ in Problem \ref{HP1}. Here, we shall outline a procedure of Vekua \cite[Chapter $2$]{V} which is more widely applicable and which employs multi-valued analytic functions defined in cut planes, equivalently discontinuous single-valued functions. First, we denote with 
\begin{equation*}
	G(a_j\mp 0),\ \ j=1,\ldots,n
\end{equation*}
the left, resp. right limits at the point $a_j\in\Gamma$ as we approach $a_j$ along $(a_{j-1},a_j)$, resp. $(a_j,a_{j+1})$ on $\Gamma$, compare Figure \ref{fig2}. These limits exist by our definition of $G(z)$ and we call $a_{\sigma}\in\{a_1,\ldots,a_n\}$ with $G(a_{\sigma}-0)\neq G(a_{\sigma}+0)$ a \textit{point of discontinuity}. Second, we use the eigenvalues $\lambda_1^{\sigma},\ldots,\lambda_p^{\sigma}$ of the non-singular matrix $G(a_{\sigma}-0)(G(a_{\sigma}+0))^{-1}$ at a point of discontinuity $a_{\sigma}$ written as
\begin{equation*}
	j=1,\ldots,p:\ \ \ \ \lambda_j^{\sigma}=\e^{2\pi\im\gamma_j^{\sigma}},\ \ \ \ \ \ \ \ \ \ \ \ \gamma_j^{\sigma}\in\mathbb{C}:\ -\frac{1}{2}<\textnormal{Re}\,\gamma_j^{\sigma}\leq\frac{1}{2}.
\end{equation*}
Third, we fix $z_0\in\Omega^+$ and let $L^{\sigma}$ denote the straight line connecting $z_0$ with a point of discontinuity $a_{\sigma}$ and extending to $z=\infty$. With this convention any branch of
\begin{equation*}
	d_j^{\,\sigma}(z):=(z-z_0)^{\gamma_j^{\sigma}},\ \ \ z\in\mathbb{C}\setminus L^{\sigma},\ \ \ \ j=1,\ldots,p,
\end{equation*} 
is single-valued and we have
\begin{equation*}
	\forall\,w\in\Gamma\setminus\{a_{\sigma}\}:\ \ \lim_{\substack{z\rightarrow w\\ z\in\Omega^+}}d_j^{\,\sigma}(z)=\lim_{\substack{z\rightarrow w\\ z\in\Omega^-}}d_j^{\,\sigma}(z),\ \ \ \ \ \ \ \ \textnormal{as well as}\ \ \ \ \ \ \ \ 
	\frac{d_j^{\,\sigma}(a_{\sigma}-0)}{d_j^{\,\sigma}(a_{\sigma}+0)}=\e^{2\pi\im\gamma_j^{\sigma}}.
\end{equation*}
We also need an arbitrary branch of the function $(z-a_{\sigma})^{\gamma_j^{\sigma}}$ defined in $\mathbb{C}$ with a cut along $[a_{\sigma},\infty)\subset L^{\sigma}$ and the function $(\frac{z-a_{\sigma}}{z-z_0})^{\gamma_j^{\sigma}}$ uniquely defined in $\mathbb{C}$ with a cut on $[z_0,a_{\sigma}]\subset L^{\sigma}$ such that $(\frac{z-a_{\sigma}}{z-z_0})^{\gamma_j^{\sigma}}\rightarrow 1$ as $z\rightarrow\infty$. We now come to the heart of Vekua's method: define the diagonal matrices
\begin{equation*}
	A_{\sigma}(z):=\bigoplus_{j=1}^p\left(\frac{z-a_{\sigma}}{z-z_0}\right)^{\gamma_j^{\sigma}},\ \ \ \ B_{\sigma}(z):=\bigoplus_{j=1}^p(z-a_{\sigma})^{\gamma_j^{\sigma}},\ \ \ \ C_{\sigma}(z):=\bigoplus_{j=1}^pd_j^{\,\sigma}(z),
\end{equation*}
and note that $A_{\sigma}(z)$ is analytic in $\Omega^-$ whereas any branch of $B_{\sigma}(z)$ is analytic in $\Omega^+$. But if $X(z)$ solves Problem \ref{HP1} then
\begin{equation}\label{e:19}
	T(z):=X(z)\begin{cases}\displaystyle M_{\sigma}^{-1}(B_{\sigma}(z))^{-1},&z\in\Omega^+\smallskip\\
	\displaystyle N_{\sigma}^{-1}(A_{\sigma}(z))^{-1},&z\in\Omega^-\end{cases},
\end{equation}
for properly chosen $M_{\sigma},N_{\sigma}\in\textnormal{GL}(p,\mathbb{C})$, will solve a modified Problem \ref{HP1} with jump matrix which is H\"older continuous on the whole segment $(a_{\sigma-1},a_{\sigma+1})$ including the point $z=a_{\sigma}$. Once $a_{\sigma}$ has been removed in this fashion all remaining points of discontinuity can be treated similarly and we eventually arrive at a Hilbert boundary value problem that satisfies Assumption \ref{assu:1}. So how do we choose $M_{\sigma}$ and $N_{\sigma}$? With \eqref{e:19} the jump condition on $\Gamma\setminus\{a_1,\ldots,a_n\}$ in Problem \ref{HP1} gets replaced by
\begin{align*}
	T_+(z)=&\,T_-(z)A_{\sigma}(z)N_{\sigma}\,G(z)M_{\sigma}^{-1}(B_{\sigma}(z))^{-1}\\
	=&\,
	T_-(z)\Big(\mathbb{I}+B_{\sigma}(z)\Big[(C_{\sigma}(z))^{-1}N_{\sigma}\,G(z)-M_{\sigma}\Big]M_{\sigma}^{-1}(B_{\sigma}(z))^{-1}\Big),
\end{align*}
and this motivates the requirements
\begin{equation}\label{e:20}
	(C_{\sigma}(a_{\sigma}+0))^{-1}N_{\sigma}\,G(a_{\sigma}+0)=M_{\sigma}=(C_{\sigma}(a_{\sigma}-0))^{-1}N_{\sigma}\,G(a_{\sigma}-0)
\end{equation}
to ensure that the jump for $T(z)$ is H\"older continuous on $(a_{\sigma-1},a_{\sigma+1})$. In fact, with \eqref{e:20} in place, the matrix $(C_{\sigma}(z))^{-1}N_{\sigma}\,G(z)-M_{\sigma}$ will be Lipschitz continuous on $(a_{\sigma-1},a_{\sigma+1})$, see \cite[$\S 6$]{M}. In order to make \eqref{e:20} more transparent, we use $C_{\sigma}(a_{\sigma}-0)(C_{\sigma}(a_{\sigma}+0))^{-1}=\bigoplus_{j=1}^p\lambda_j^{\sigma}$ and rewrite \eqref{e:20} as
\begin{equation*}
	N_{\sigma}\,G(a_{\sigma}-0)(G(a_{\sigma}+0))^{-1}N_{\sigma}^{-1}=\bigoplus_{j=1}^p\lambda_j^{\sigma},\ \ \ \ \ \ \ \ \  M_{\sigma}=\big(C_{\sigma}(a_{\sigma}+0)\big)^{-1}N_{\sigma}\,G(a_{\sigma}+0),
\end{equation*}
so $N_{\sigma}$ and $M_{\sigma}$ in \eqref{e:20} exist if $G(a_{\sigma}-0)(G(a_{\sigma}+0))^{-1}$ is diagonalizable. If this is not the case, then one has to properly modify $M_{\sigma},N_{\sigma}$ in \eqref{e:19} and work with the Jordan normal form of $G(a_{\sigma}-0)(G(a_{\sigma}+0))^{-1}$, see \cite[page $87-93$]{V} for details. Either way, a finite number of transformations of the type \eqref{e:19} lead to a Hilbert boundary value problem that is amenable to our previous analysis, i.e. a problem whose jump matrix is H\"older continuous along all of $\Gamma$\footnote{The choice of $\gamma_j^{\sigma}$ intimately ties to the blow up constraint near $z=a_i$ in Problem \ref{HP1} (3), cf. \cite[page $95$]{V}.} and which satisfies a normalization of the form \eqref{inf}.\bigskip

At this point we can come full circle and return to the original RHP \ref{RHP:1}: each function of the fundamental system \eqref{e:15} can be analytically continued to any point in $\mathbb{C}\setminus\{a_1,\ldots,a_n\}$ via any path that does not pass through any of the singularities. Concretely, if $\tau_{\gamma_j}$ denotes the operator of analytic continuation along the fundamental loop $\gamma_j$, compare Figure \ref{fig2}, then by Problem \ref{HP1} condition (2),  any fundamental matrix $\Phi(z)$ satisfies
\begin{equation*}
	\tau_{\gamma_j}(\Phi)(z)=\Phi(z)G_j,\ \ \ \ j=1,\ldots,n.
\end{equation*}
But the same is also true for the matrix-valued function
\begin{equation}\label{e:21}
	\Psi(z):=(z-a_1)^{\Lambda}\Phi(z),\ \ \ z\in\mathbb{CP}^1\setminus\Gamma
\end{equation}
which, according to Theorem \ref{theo:2}, is invertible and bounded at $z=\infty$. Using Liouville's theorem and the piecewise constant, that is $z$-independent, form of $G(z)$ it now follows that the matrix
\begin{equation}\label{e:22}
	A(z):=\frac{\d\Psi}{\d z}(z)(\Psi(z))^{-1},\ \ \ \ z\in\mathbb{CP}^1\setminus\{a_1,\ldots,a_n\}
\end{equation}
is single-valued on $\mathbb{CP}^1$ and analytic everywhere expect at the singularities $a_1,\ldots,a_n\in\mathbb{C}$. Moreover, using Problem \ref{HP1}, condition (3), \eqref{e:21} and our discussion on Vekua's regularization, it is clear that $a_1,\ldots,a_n$ are \textit{regular singular points} of the system \eqref{e:22} in the following sense:
\begin{definition}\label{regdef} Given a system \eqref{e:1} where $A(z)\in\mathbb{C}^{p\times p}$ is analytic in a punctured disk $\mathbb{D}_r(z_0)\setminus\{z_0\}$ for some $r>0$, we say that $z_0$ is a regular singular point of \eqref{e:1} if any solution of the system has at most polynomial growth in a vicinity of $z_0$.
\end{definition}
Since the system \eqref{e:22} also has the given monodromy $(G_1,\ldots,G_n)$, the above analysis solves RHP \ref{RHP:1} in interpretation (2). Precisely
\begin{theo}[Plemelj, 1908]\label{Plecent}
Any matrix group with $n\in\mathbb{Z}_{\geq 1}$ generators $G_1,\ldots,G_n\in\textnormal{GL}(p,\mathbb{C})$ satisfying the constraint $G_1\cdot\ldots\cdot G_n=\mathbb{I}$ can be realized as the monodromy group of a $p\times p$ linear system \eqref{e:1} on $\mathbb{CP}^1$ having only regular singularities.
\end{theo}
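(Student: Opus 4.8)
The plan is to assemble the machinery built up in Sections \ref{sec:3} and \ref{Plejcon} into a single construction. Fix distinct points $a_1,\ldots,a_n\in\mathbb{C}$ and generators $G_1,\ldots,G_n\in\textnormal{GL}(p,\mathbb{C})$ with $G_1\cdot\ldots\cdot G_n=\mathbb{I}$. First I would draw the simple closed oriented contour $\Gamma$ through the $a_i$ as in Figure \ref{fig2}, set $G(z):=(G_iG_{i-1}\cdot\ldots\cdot G_1)^{-1}$ for $z\in[a_i,a_{i+1})$ with $a_{n+1}:=a_1$, and record that the cyclic relation forces $G(z)=\mathbb{I}$ on $[a_n,a_1)$; thus $G$ is a piecewise constant, invertible jump matrix, and it turns RHP \ref{RHP:1} in interpretation (2) into the Hilbert boundary value Problem \ref{HP1}.

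Second, assuming for the moment that the jump is H\"older continuous on all of $\Gamma$ (Assumption \ref{assu:1}), I would run Plemelj's solvability chain unchanged: Lemma \ref{Plem1} and Lemma \ref{Plem2}, combined with the degree shift \eqref{e:13} which renders the accompanying Problem \ref{HP2} and the associated problem \eqref{e:10} only trivially solvable, then Theorem \ref{theo:1}, and finally Theorem \ref{theo:2}. The upshot is a fundamental system $\{F_i\}_{i=1}^p$ of solutions of Problem \ref{HP1} whose fundamental matrix $\Phi(z)=[F^{ij}(z)]_{i,j=1}^p$ satisfies $\det\Phi(z)\neq 0$ for every $z\in\mathbb{C}$, including the boundary limits $\Phi_\pm$ on $\Gamma$, and for which $z^\Lambda\Phi(z)$ is invertible at $z=\infty$ for some diagonal $\Lambda\in\mathbb{Z}^{p\times p}$.

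The step I expect to be the main obstacle is precisely removing Assumption \ref{assu:1}, since the $G$ above jumps at every $a_i$. Here I would carry out Vekua's regularization at each point of discontinuity $a_\sigma$: fix $z_0\in\Omega^+$, take the eigenvalues $\lambda_j^\sigma=\e^{2\pi\im\gamma_j^\sigma}$ of the local factor $G(a_\sigma-0)(G(a_\sigma+0))^{-1}$ with $-\tfrac12<\textnormal{Re}\,\gamma_j^\sigma\leq\tfrac12$, build the diagonal matrices $A_\sigma,B_\sigma,C_\sigma$, and perform the substitution \eqref{e:19} with $M_\sigma,N_\sigma$ determined through \eqref{e:20} (replacing $\bigoplus_j\lambda_j^\sigma$ by the Jordan form of the local factor when it is not diagonalizable, following \cite[Chapter $2$]{V}). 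After finitely many such substitutions one lands on an equivalent Problem \ref{HP1} whose jump is H\"older continuous on all of $\Gamma$ and whose solutions obey a normalization of the form \eqref{inf}; only then does the previous paragraph apply, after which one reverses the substitutions back to the original piecewise constant $G$ with the same conclusions for $\Phi$. The delicate part is the bookkeeping: that the substitutions genuinely remove the discontinuities, that the choice of the $\gamma_j^\sigma$ is exactly what governs the admissible blow-up of $X$ near $a_i$ in condition (3) of Problem \ref{HP1}, and that the non-semisimple case goes through unchanged.

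Finally I would pass from $\Phi$ to a differential system. Set $\Psi(z):=(z-a_1)^\Lambda\Phi(z)$ as in \eqref{e:21}, invertible and bounded at $z=\infty$ by Theorem \ref{theo:2}, and put $A(z):=\tfrac{\d\Psi}{\d z}(z)(\Psi(z))^{-1}$ as in \eqref{e:22}. Because $G$ is piecewise constant, $\Phi_+=\Phi_-G$ gives $A_+(z)=A_-(z)$ on $\Gamma\setminus\{a_1,\ldots,a_n\}$, so $A$ has no jump across $\Gamma$ and is single-valued and analytic on $\mathbb{CP}^1\setminus\{a_1,\ldots,a_n\}$; Liouville's theorem together with the boundedness and invertibility of $\Psi$ at $z=\infty$ fixes the behaviour there. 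Since $\Lambda$ has integer entries, $(z-a_1)^\Lambda$ is single-valued, so the analytic continuation of $\Psi$ along the fundamental loop $\gamma_j$ inherits $\tau_{\gamma_j}(\Psi)(z)=\Psi(z)G_j$ from $\Phi$; hence the monodromy group of $\tfrac{\d\Psi}{\d z}=A(z)\Psi$ is exactly the group generated by $G_1,\ldots,G_n$. Combining condition (3) of Problem \ref{HP1} with \eqref{e:21} and the Vekua exponents shows that every solution grows at most polynomially near each $a_i$, so $a_1,\ldots,a_n$ are regular singular points in the sense of Definition \ref{regdef}, which finishes the construction.
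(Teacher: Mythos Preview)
Your proposal is correct and follows essentially the same route as the paper: set up Problem \ref{HP1} with the piecewise constant jump, invoke Lemmas \ref{Plem1}, \ref{Plem2} and Theorems \ref{theo:1}, \ref{theo:2} under Assumption \ref{assu:1}, remove that assumption via Vekua's regularization \eqref{e:19}--\eqref{e:20}, then pass to $\Psi(z)=(z-a_1)^{\Lambda}\Phi(z)$ and $A(z)=\Psi_z\Psi^{-1}$ to obtain a single-valued system with the prescribed monodromy and only regular singularities. The paper's argument is exactly this sequence of steps, with the same emphasis on the piecewise constancy of $G$ for single-valuedness of $A$ and on condition (3) of Problem \ref{HP1} together with the Vekua exponents for regularity at the $a_i$.
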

Plemelj's 1908 paper does not stop with Theorem \ref{Plecent}: indeed, while all singularities in a Fuchsian system are always regular singular points (this is Sauvage's 1886 theorem, cf. \cite[Theorem $2.1$]{CL} or \cite[Theorem $16.10$]{IY}), the converse statement is in general false. For instance, the non-Fuchsian system
\begin{equation}\label{e:23}
	\frac{\d\Psi}{\d z}=\left\{-\frac{3}{16z^2}\begin{bmatrix}0 & 0\\ 1 & 0\end{bmatrix} +\begin{bmatrix}0 & 1\\ 0 & 0\end{bmatrix}\right\}\Psi
\end{equation}
has a fundamental solution of the form
\begin{equation*}
	\Psi(z)=\frac{1}{4}\begin{bmatrix}4z& 4z\\ 1 & 3\end{bmatrix}z^{-\frac{1}{4}\bigl[\begin{smallmatrix}3 & 0\\ 0 & 1\end{smallmatrix}\bigr]},\ \ \ \ z\in\mathbb{C}\setminus(-\infty,0],
\end{equation*}
and thus $z=0$ is a regular singular point of system \eqref{e:23}. Plemelj was aware of the distinction between regular singular and Fuchsian singular points, thus he applied a gauge procedure that took him from \eqref{e:22} to another system with equal monodromy and same singular points. The transformed system is Fuchsian for all except perhaps one singular point. This part in Plemelj's work \cite[page $243-244$]{P0} is rigorous, unfortunately his subsequent claim that also the remaining singular point can be reduced to a Fuchsian one, and so his claim of having solved RHP \ref{RHP:1} in interpretation (3), is not. This gap went unnoticed for more than 70 years (Plemelj passed away in 1967, likely unaware of it) until it became clear that his argument for Fuchsian systems is valid only with further constraints in place:
\begin{theo}[Kohn, 1983] If at least one of the generators $G_1,\ldots,G_n\in\textnormal{GL}(p,\mathbb{C})$ is diagonalizable, then RHP \ref{RHP:1} for Fuchsian systems on $\mathbb{CP}^1$, i.e. RHP \ref{RHP:2}, has a positive solution.
\end{theo}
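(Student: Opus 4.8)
The plan is to take the system produced by Plemelj's Theorem~\ref{Plecent} as a starting point and to upgrade it to a genuinely Fuchsian one, using the diagonalizability hypothesis precisely at the single singular point where Plemelj's own reduction gets stuck. First I would relabel the singularities so that $G_n$ is the diagonalizable generator and, after a fractional linear change of variable, assume $a_1,\ldots,a_n\in\mathbb{C}$ with $\infty$ a regular point. By Theorem~\ref{Plecent} there is a system \eqref{e:1} on $\mathbb{CP}^1$ with regular singularities at the $a_i$ and monodromy $(G_1,\ldots,G_n)$; Plemelj's partial Fuchsian reduction -- the step of \cite{P0} flagged above as rigorous -- then produces, after finitely many meromorphic gauge transformations, an equivalent system whose only singularity that is possibly not Fuchsian is $a_n$. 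In modern language this says: there is a holomorphic vector bundle $E\to\mathbb{CP}^1$ of rank $p$ with a meromorphic connection $\nabla$ of monodromy $(G_1,\ldots,G_n)$ that is logarithmic (i.e. Fuchsian) at $a_1,\ldots,a_{n-1}$, with residue eigenvalues there normalized to real part in $[0,1)$, and whose only possibly higher-order pole is at $a_n$. The key reformulation is that RHP~\ref{RHP:2} is solved as soon as we exhibit a logarithmic connection with this monodromy whose underlying bundle is the \emph{trivial} bundle $\mathcal{O}^{\oplus p}$: a global holomorphic frame of $\mathcal{O}^{\oplus p}$ turns such a $\nabla$ into exactly the Fuchsian system and single-valued meromorphic fundamental matrix $\Psi$ demanded by RHP~\ref{RHP:2}.

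Next I would exploit the semisimplicity of $G_n$ to make $\nabla$ logarithmic at $a_n$ as well and, crucially, to keep a large supply of freedom. Since $G_n$ is diagonalizable one can take the local fundamental solution near $a_n$ in the form $U_n(z)(z-a_n)^{E_n}$ with $E_n$ semisimple, so there are no logarithmic terms, and with the eigenvalues of $E_n$ free to be shifted by arbitrary integers in a fixed eigenbasis without altering the monodromy. Applying the elementary-divisor (Smith) normal form to $U_n$ over the discrete valuation ring of germs holomorphic at $a_n$ one writes $U_n(z)=P(z)\,\textnormal{diag}\big((z-a_n)^{m_1},\ldots,(z-a_n)^{m_p}\big)\,Q(z)$ with $P,Q$ holomorphic and invertible at $a_n$ and $m_j\geq 0$; the point $a_n$ is Fuchsian precisely when all $m_j=0$, and a local meromorphic gauge transformation clears the $m_j$'s and turns the pole into a simple one with semisimple residue. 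Globally this is an elementary (Hecke) modification of $E$ at $a_n$, and the integer-shift freedom means we may perform such modifications at $a_n$ along each eigendirection of the residue, raising or lowering the corresponding residue eigenvalue by any integer, while staying inside the class of logarithmic connections with monodromy $(G_1,\ldots,G_n)$.

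It remains to use these modifications to trivialize $E$. By the Birkhoff--Grothendieck theorem $E\cong\mathcal{O}(k_1)\oplus\cdots\oplus\mathcal{O}(k_p)$ with $k_1\geq\cdots\geq k_p$; an elementary modification at $a_n$ directed along the $\mathcal{O}(k_j)$ summand of the splitting changes that single $k_j$ by $\pm1$, so a finite sequence of such moves (using both ``up'' and ``down'' modifications to also correct the total degree) carries $(k_1,\ldots,k_p)$ to $(0,\ldots,0)$, i.e. $E$ to $\mathcal{O}^{\oplus p}$; undoing the dictionary then produces the desired Fuchsian system and proves the theorem. The main obstacle is exactly this last step: one must show that the modifications that are \emph{genuinely available} -- directed only along the $p$ fixed eigenlines of the residue of $\nabla$ at $a_n$, which need not be aligned with the splitting of $E$ and which themselves move as $E$ is modified -- are nonetheless enough to steer the splitting type all the way to the balanced degree-zero one without ever being forced to worsen it. I would isolate as the technical heart the combinatorial lemma that unrestricted integer twists in a single eigenbasis at one point realize every splitting type, together with a general-position argument (exploiting the conjugation freedom built into $\mathcal{M}$) that reduces the general situation to it. This is precisely the place where the argument collapses in the absence of any diagonalizable $G_i$: a non-semisimple generator restricts both the admissible modification directions and the admissible integer twists at its point, and Bolibrukh's later examples show that the bundle can then be permanently trapped in a non-trivial isomorphism class.
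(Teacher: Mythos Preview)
The paper does not prove this theorem: it is stated as a cited result of Kohn \cite{K} (and appears in the paper immediately after the discussion of Plemelj's gap, with no proof environment following it), so there is no ``paper's own proof'' to compare your proposal against.

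That said, your outline follows the standard route and is coherent with the context the paper sets up: start from Plemelj's regular-singular system (Theorem~\ref{Plecent}), apply his rigorous reduction to a system Fuchsian at all but one point, and place the diagonalizable generator at that point so that the local fundamental solution carries no logarithms and the residue eigenvalues can be shifted by integers. You yourself flag the genuine gap: the claim that elementary modifications at $a_n$ directed along the fixed eigenlines of the residue suffice to drive the Birkhoff--Grothendieck splitting type to $(0,\ldots,0)$. As written this is a plan, not a proof; the ``combinatorial lemma'' you isolate is exactly the non-trivial content of Kohn's result, and your sketch does not supply it. In particular, the assertion that a modification ``directed along the $\mathcal{O}(k_j)$ summand'' changes that single $k_j$ by $\pm 1$ presupposes that the eigenlines of the residue are compatible with the Harder--Narasimhan filtration of $E$, which is precisely what you note need not hold. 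A complete argument must either establish that compatibility can always be arranged (via the integer-shift freedom and a careful induction on the defect $\sum_j(k_j-k_p)$) or bypass it; your general-position remark gestures at this but does not carry it through.
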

Still, Plemelj's \textit{almost solution} of RHP \ref{RHP:2} introduced the idea of rephrasing a dynamical system, here \eqref{e:1} having only regular singularities, as a Hilbert boundary value problem. In turn, the analysis of the problem's underlying singular integral equations led to valuable information about the dynamical system itself, here Theorem \ref{Plecent}. We will encounter both themes throughout our discussions of OPSFA-type problems after the upcoming section.

\section{Hilbert's 21st problem after 1908}
Plemelj's 1908 result became widely accepted as providing a positive answer to RHP \ref{RHP:2}. Thus, in the following 75 years, the field connected to Hilbert's 21st problem shifted its focus towards the effective construction of Fuchsian systems with prescribed monodromy groups. Here are the major results of this period:\bigskip

$\diamond$ In 1913, Birkhoff \cite{Bir} revisited Plemelj's paper \cite{P0} and set out to simplify his argument based on successive approximations while also extending RHP \ref{RHP:1} to certain difference equations.\smallskip

$\diamond$ In the late 1920s, Lappo-Danilevski\u{\i} \cite{LD1,LD2} solved RHP \ref{RHP:2} constructively, provided all generators $G_i$ are sufficiently close to the identity matrix. His method expressed solutions of a Fuchsian system and their associated monodromy via convergent series of the system's matrix coefficients. The solution of Hilbert's 21st problem subsequently boiled down to the problem of inverting the series and studying its convergence.\smallskip

$\diamond$ In 1956, Krylov \cite{Kr} explicitly solved RHP \ref{RHP:2} for all $2\times 2$ systems with $n=3$ singular points. His work made crucial use of Gauss hypergeometric functions.\smallskip

$\diamond$ In 1957, R\"ohrl \cite{Ro} introduced a novel set of algebro-geometric ideas in the analysis of Hilbert's 21st problem. This allowed for a reformulation and generalization of RHP \ref{RHP:1} to holomorphic vector bundles over Riemann surfaces, nowadays summarized under the umbrella of \textit{Riemann-Hilbert correspondences}. Important early contributions to this field were achieved by Deligne \cite{De}, Kashiwara \cite{Ka1,Ka2} and Mebkhout \cite{Meb1,Meb2}.\smallskip

$\diamond$ In 1979, Dekkers \cite{Dek} proved the solvability of RHP \ref{RHP:2} in case $p=2$ for an arbitrary number of singularities. His work did not directly address Hilbert's 21st problem (which was thought to have been solved by Plemelj), but contains its positive solution for $p=2$ as special case.\smallskip

$\diamond$ In 1982, Erugin \cite{Eru} considered RHP \ref{RHP:2} for all $2\times 2$ systems with $n=4$ singular points. His work established a remarkable connection of this specialized problem and the Painlev\'e-VI equation. Compare \cite[Chapter $2$, section $1.3$]{FIKN} for more details and also \cite{J82}.\smallskip

$\diamond$ In 1999, Deift, Its, Kapaev, Zhou \cite{DIKZ} ($p=2$) and in 2004, Korotin \cite{Ko} and Enolski, Grava \cite{EG} (general $p$) solved RHP \ref{RHP:2} for quasi-permutation monodromy matrices by means of algebro-geometric techniques. These works were published after Bolibrukh's negative solution, to be discussed in the following.\bigskip

Notice how all of the above works either yielded a positive solution to some special case of RHP \ref{RHP:2}  or were concerned with generalizations of the same problem to more general Riemann surfaces. However, as we already mentioned, RHP \ref{RHP:2} is in general unsolvable as there are monodromy representations which cannot be representations of any Fuchsian systems. This surprising fact came to light in the important works of Bolibrukh in 1989, cf. \cite{BO2,BO3}. In more detail, Bolibrukh's first counterexample concerns the following $3\times 3$ system with $n=4$ singular points:
\begin{equation}\label{e:24}
	\frac{\d\Psi}{\d z}=A(z)\Psi,\ \ \ A(z)=\left[\begin{array}{ccc}
	0 &\!\!\!\!\!\!\!a_{12}(z) &\!\!\!\!\!\!\!\!a_{13}(z)\\ 
	0 &\,\,\,\,\,\,\,\,\begin{array}{cc}
	& \\
	&B(z)
	\end{array} & \\
	0 & & \end{array}\right],
\end{equation}
where
\begin{equation*}
	a_{12}(z):=\frac{1}{z^2}+\frac{1}{z+1}-\frac{1}{z-\frac{1}{2}},\ \ \ \ \ \ \ \ a_{13}(z):=\frac{1}{z-1}-\frac{1}{z-\frac{1}{2}},
\end{equation*}
and we use the $2\times 2$ block
\begin{equation*}
	B(z):=\frac{1}{z}\begin{bmatrix}1 & 0\\ 0 & -1\end{bmatrix}+\frac{1}{6(z+1)}\begin{bmatrix}-1 & 1\\ -1 & 1\end{bmatrix}+\frac{1}{2(z-1)}\begin{bmatrix}-1 & -1\\ \,\,\,\,1 & \,\,\,\,1\end{bmatrix}+\frac{1}{3(z-\frac{1}{2})}\begin{bmatrix}-1 & 1\\ -1 & 1\end{bmatrix}.
\end{equation*}
This system has Fuchsian singularities at $a_1=-1,a_2=\frac{1}{2},a_3=1$ and a non-Fuchsian singularity at $a_4=0$. Although non-Fuchsian, Bolibrukh then showed that those singularities are regular singular points of \eqref{e:24} in the sense of Definition \ref{regdef}. Moreover he proved that system \eqref{e:24} has non-trivial monodromy, but \textit{there exists no Fuchsian system with the same mondromy group and encoded singularity locations}. Thus RHP \ref{RHP:2} is in general unsolvable. The details of Bolibrukh's argument can be found in the monograph \cite[Chapter $2$]{AB} and they are very different from Plemelj's analysis of the boundary value problem \ref{HP1}. Rather than discussing those techniques we only mention two features of Bolibrukh's counterexample \eqref{e:24}. One, the sensitive dependence on the location of the singular points: once slightly perturbed, the answer to RHP \ref{RHP:2} with the same monodromy can become positive, see \cite{BO2}. Two, the mondromy representation of system \eqref{e:24} is reducible, a fact which intimately links to the following general result by Kostov \cite{Kos} and Bolibrukh \cite{BO22,BO3}:
\begin{theo}[Kostov, Bolibrukh, 1992]\label{KoBo} For every irreducible representation \eqref{e:2}, the Riemann-Hilbert problem \ref{RHP:2} has a positive solution.
\end{theo}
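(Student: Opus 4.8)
The plan is to pass to the geometric reformulation in terms of holomorphic bundles with logarithmic connections on $\mathbb{CP}^1$ and to reduce the statement to a triviality question for a bundle. A Fuchsian system $\frac{\d\Psi}{\d z}=\left(\sum_{i=1}^n\frac{B_i}{z-a_i}\right)\Psi$ on $\mathbb{CP}^1$ with the prescribed singular points realizing the monodromy $\chi$ is the same datum as a holomorphic bundle $E$ over $\mathbb{CP}^1$ carrying a logarithmic connection $\nabla$ with poles at $a_1,\ldots,a_n$ and monodromy $\chi$, \emph{together with} a holomorphic trivialization $E\cong\mathcal{O}_{\mathbb{CP}^1}^{\oplus p}$. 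Plemelj's Theorem \ref{Plecent} already supplies the regular-singular realization \eqref{e:22}: a flat connection on the trivial bundle over $\mathbb{CP}^1\setminus\{a_1,\ldots,a_n\}$ with regular singularities at the $a_i$. Extending it across each puncture by the local Sauvage--Levelt normal form --- whose residue eigenvalues are admissible and determined modulo integer shifts once their fractional parts are matched to the eigenvalues of the local monodromies $G_i$ --- produces a logarithmic connection $(E,\nabla)$ realizing $\chi$ on \emph{some} holomorphic bundle $E$. By the Birkhoff--Grothendieck theorem $E\cong\bigoplus_{j=1}^p\mathcal{O}_{\mathbb{CP}^1}(c_j)$ with $c_1\geq\cdots\geq c_p$, so everything reduces to arranging that the integers $c_j$ all coincide --- equivalently, after adjusting the degree, that they all vanish --- so that $E$ is trivial.

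Two levers will do the job. First, \emph{elementary transformations} (Hecke modifications) at a singular point $a_i$: carried out along a line $\ell\subset E_{a_i}$ that is an eigenvector of the residue endomorphism $\mathrm{res}_{a_i}\nabla$ of the fibre, such a transformation keeps the connection logarithmic, changes $\deg E$ by $\pm1$, and shifts the associated residue eigenvalue by $\mp1$, so the datum stays admissible. Second, the \emph{rigidity of the off-diagonal blocks}: in a frame adapted to $E=\bigoplus_j\mathcal{O}_{\mathbb{CP}^1}(c_j)$ the block of the connection matrix sending $\mathcal{O}_{\mathbb{CP}^1}(c_k)$ into $\mathcal{O}_{\mathbb{CP}^1}(c_j)$ is a global section of $\mathcal{O}_{\mathbb{CP}^1}(c_j-c_k)\otimes\Omega^1_{\mathbb{CP}^1}(\log\{a_1,\ldots,a_n\})\cong\mathcal{O}_{\mathbb{CP}^1}(c_j-c_k+n-2)$ (since $\deg\Omega^1_{\mathbb{CP}^1}=-2$ and each of the $n$ logarithmic poles contributes $+1$), hence it \emph{vanishes identically} whenever $c_k-c_j\geq n-1$. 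Consequently, if the decreasing tuple $(c_j)$ has a gap of size $\geq n-1$, the corresponding top piece $\bigoplus_{c_j\geq t}\mathcal{O}_{\mathbb{CP}^1}(c_j)$ is a proper nonzero $\nabla$-invariant subbundle; over $\mathbb{CP}^1\setminus\{a_1,\ldots,a_n\}$, where $\nabla$ is flat, it is a flat subbundle, i.e. a proper nonzero $\chi$-invariant subspace of $\mathbb{C}^p$ --- which irreducibility forbids. Already this bounds the spread, $c_1-c_p\leq(p-1)(n-2)$.

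To finish I would argue by extremality: among all logarithmic connections realizing $\chi$ reachable from the Plemelj realization by admissible elementary transformations, pick one with $\sum_j c_j^2$ minimal (this minimum exists, the quantity being a nonnegative integer). I claim the minimizer is the trivial bundle. If not then, the case $c_p\leq-1$ being symmetric, we may assume $c_1\geq1$, and it suffices to exhibit a singular point $a_i$ and an eigenvector $\ell$ of $\mathrm{res}_{a_i}\nabla$ whose image in the fibre of the maximal-degree subbundle $\bigoplus_{c_j=c_1}\mathcal{O}_{\mathbb{CP}^1}(c_1)$ is nonzero: the degree-lowering elementary transformation at $a_i$ along $\ell$ then lowers the top weight $c_1$ by one and strictly decreases $\sum_j c_j^2$, contradicting minimality.

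The step I expect to be the main obstacle is exactly the existence of such an $\ell$. Its failure would mean that, at each $a_i$, every eigenvector of $\mathrm{res}_{a_i}\nabla$ lies in the complementary subbundle $\bigoplus_{c_j<c_1}\mathcal{O}_{\mathbb{CP}^1}(c_j)$; this must then be parlayed --- using once more the vanishing of off-diagonal blocks, together with the compatibility of the residues at the various $a_i$ with a single holomorphic frame on $\mathbb{CP}^1$ and with the cyclic relation $G_1\cdot\ldots\cdot G_n=\mathbb{I}$ --- into an honest proper nonzero $\nabla$-invariant, hence $\chi$-invariant, subbundle, in contradiction with irreducibility once again. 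Orchestrating the elementary transformations so that they simultaneously preserve the logarithmic form and the admissible fractional exponents while strictly reducing the imbalance of the splitting type, and pinning down exactly the configuration in which this fails, is the technical heart of Bolibrukh's argument; Kostov's route instead reaches the conclusion by a dimension count comparing the variety of Fuchsian systems with the prescribed singular points to the space $\mathcal{M}$ of \eqref{e:4}, using irreducibility to keep the monodromy map $\mu$ of Problem \ref{RHP:2} dominant and, via a properness argument, surjective onto the irreducible locus.
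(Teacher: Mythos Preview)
The paper does not prove this theorem: it is stated with attribution to Kostov \cite{Kos} and Bolibrukh \cite{BO22,BO3} and the reader is referred to the monograph \cite{AB} for details, so there is no ``paper's own proof'' to compare against. What you have written is a faithful outline of Bolibrukh's argument --- passage to a logarithmic connection on a bundle over $\mathbb{CP}^1$, Birkhoff--Grothendieck splitting, the vanishing of off-diagonal blocks forcing a $\nabla$-invariant subbundle once the gaps in $(c_j)$ are too large, and repeated elementary transformations along residue eigenvectors to level the splitting type --- together with a one-line summary of Kostov's alternative route via a dimension/dominance argument.

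Your self-identified gap is the real one. The assertion that, whenever the bundle is nontrivial, some singular point $a_i$ admits a residue eigenvector with nonzero projection onto the top-degree summand is exactly where the work lies, and the heuristic ``otherwise the complementary subbundle is $\nabla$-invariant'' is not quite right as stated: if at every $a_i$ all residue eigenvectors lie in $\bigoplus_{c_j<c_1}\mathcal{O}(c_j)$, this constrains the residues but does not by itself make that subbundle invariant under $\nabla$ (invariance of a subbundle under a logarithmic connection is a global condition on the connection matrix, not just on the residue eigenspaces). Bolibrukh's actual proof does not proceed by minimizing $\sum_j c_j^2$ directly; rather he constructs, at a chosen point, a full flag of residue-invariant subspaces adapted to the Harder--Narasimhan filtration and performs elementary transformations along that flag, using a more refined invariant (essentially the ``weight'' or ``defect'' of the bundle relative to the connection) to guarantee termination. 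The cyclic relation $G_1\cdots G_n=\mathbb{I}$ enters through the constraint $\sum_i\mathrm{tr}\,\mathrm{res}_{a_i}\nabla=-\deg E$, which controls how the degrees can be redistributed. So your plan is correct in spirit, but the extremality argument you sketch would need to be replaced by Bolibrukh's more structured induction to close the gap you flag.
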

In the years following 1992 and Theorem \ref{KoBo}, Bolibrukh sharpened his results and derived a further series of sufficient conditions for the positive solvability of RHP \ref{RHP:2}, compare \cite{BO4}, these are conditions formulated in the language of holomorphic vector bundles and we refer the reader to the excellent monograph \cite{AB} for details.\bigskip

In closing of this short section, and our content on the original RHP \ref{RHP:1}, we emphasize that Bolibrukh's application of methods from complex analytic geometry provided a definite, albeit negative, solution to Hilbert's 21st problem for Fuchsian systems on $\mathbb{CP}^1$. This problem was formulated back in 1900 and Bolibrukh lectured on his breakthroughs precisely 94 years after Hilbert at the ICM 1994 in Z\"urich, sadly a few years later he passed away.

\section{Developments tangential to Plemelj's work - five examples}\label{ex:sec}
At this point we readjust our focus and turn away from Hilbert's 21st problem, the original RHP. Instead we highlight certain developments parallel to Plemelj's 1908 work on Hilbert boundary value problems and associated singular integral equations which found multiple applications in the OPSFA realm. Those appearances gave rise to an analytic apparatus, the \textit{Riemann-Hilbert techniques}, which we are about to highlight through five examples - Section \ref{ex:6} contains another example, but it is of different technical nature and more advanced. Overall, it is important to trace several of these techniques back to their origin \cite{P0}, namely to Plemelj's work on RHP \ref{RHP:1}.
\subsection{The Wiener-Hopf method} The first member of our analytical toolbox, a.k.a. the Riemann-Hilbert techniques, is the \textit{Wiener-Hopf} method (pioneered by Wiener and Hopf in 1931 \cite{WH}, but applied implicitly by Carleman before 1931). This method is commonly used in hydrodynamics, diffraction or linear elasticity theory and was originally developed for the investigation of the Milne-Schwarzschild integral equation
\begin{equation}\label{wh:1}
	x>0:\ \ \ \phi(x)=\frac{1}{2}\int_0^{\infty}E(|x-y|)\phi(y)\,\d y,\ \ \ \ E(x):=\int_x^{\infty}\frac{\e^{-t}}{t}\,\d t,	
\end{equation}
which is used in models for radiative processes in astrophysics. Rather than discussing the Wiener-Hopf method in the more general setup
\begin{equation}\label{WHint}
	\lambda\phi(x)+\int_0^{\infty}k(x-y)\phi(y)\,\d y=f(x),\ \ \ x>0,
\end{equation}
we will simply use the original problem \eqref{wh:1}. How do we solve \eqref{wh:1} with Riemann-Hilbert techniques? According to the Wiener-Hopf recipe one first extends \eqref{wh:1} to the full real line via $\phi(x):=0$ for $x\leq 0$ and obtains
\begin{equation}\label{wh:2}
	x\in\mathbb{R}:\ \ \ \phi(x)=\int_{-\infty}^{\infty}K(x-y)\phi(y)\,\d y+\psi(x),\ \ \ \ \ \ \ \ \ \ K(x):=\frac{1}{2}E(|x|),
\end{equation}
where $\psi(x):=0$ for $x>0$ and otherwise appropriately chosen for $x\leq 0$ to achieve equality in \eqref{wh:2}. Then one takes a formal Fourier transform of \eqref{wh:2} and finds
\begin{equation}\label{wh:3}
	\Phi_+(z)=\Phi_+(z)L(z)+\Psi_-(z),\ \ \ z\in\mathbb{R},
\end{equation}
with
\begin{equation*}
	\Phi_+(z):=\int_0^{\infty}\e^{\im z x}\phi(x)\,\d x,\ \ \ \Psi_-(z):=\int_{-\infty}^0\e^{\im z x}\psi(x)\,\d x\ \ \ \ \ \textnormal{and}\ \ \ \ L(z):=\int_{-\infty}^{\infty}\e^{\im z x}K(x)\,\d x.
\end{equation*}
The subscripts $\pm$ in \eqref{wh:3} indicate our desire that $\Phi_+(z)$, resp. $\Psi_-(z)$ admit analytic continuation to some part of the upper, resp. lower $z$-plane - in fact we will seek solutions of \eqref{wh:1} which do \textit{not} exponentially grow at $x=+\infty$, and this implies that $\Phi_+(z)$ is the limiting value of a function $\Phi(z)$ analytic for $\Im z>0$ and likewise $\Psi_-(z)$ the limiting value of a function $\Psi(z)$ analytic for $\Im z<0$.
%
Next, by elementary calculus, we find
\begin{equation}\label{cute}
	L(z)=\frac{1}{z}\arctan(z)=\frac{\im}{2z}\ln\left(\frac{\im+z}{\im-z}\right),\ \ z\in\mathbb{C}\setminus\big((-\im\infty,-\im]\cup[\im,\im\infty)\big),
\end{equation}
and now Riemann-Hilbert techniques enter our calculation.
\begin{prob}\label{WHRHP} Determine $F(z)\in\mathbb{C}$ such that
\begin{enumerate}
	\item[(1)] $F(z)$ is analytic in $\mathbb{C}\setminus\mathbb{R}$ and extends continuously from either side to the real axis.
	\item[(2)] On the real axis, the pointwise limits
	\begin{equation*}
		F_{\pm}(z):=\lim_{\epsilon\downarrow 0}F(z\pm\im\epsilon)
	\end{equation*}
	satisfy the boundary condition $F_+(z)=F_-(z)G(z)$ with
	\begin{equation*}
		z\in\mathbb{R}:\ \ \ \ G(z)=(1-L(z))\left(1+\frac{1}{z^2}\right)\in\mathbb{C}.
	\end{equation*}.
	\item[(3)] Near $z=\infty$ we enforce the asymptotic behavior
	\begin{equation*}
		F(z)=1+\mathcal{O}\big(z^{-1}\big),\ \ \ \ |z|\rightarrow\infty.
	\end{equation*}
\end{enumerate}
\end{prob}
Since $G(z)$ is H\"older continuous on $\mathbb{R}$, $G(z)\neq 0$ for all $z\in\mathbb{R}$ and $G(z)=1+\mathcal{O}(z^{-1}\big)$ as $z\rightarrow\pm\infty$, the unique solution of Problem \ref{WHRHP} is readily seen to be
\begin{equation}\label{wh:4}
	F(z)=\exp\left[\frac{1}{2\pi\im}\int_{-\infty}^{\infty}\ln G(w)\frac{\d w}{w-z}\right],\ \ z\in\mathbb{R}\setminus\mathbb{C},
\end{equation}
by the Plemelj-Sokhotski formul\ae. The point of \eqref{wh:4} is that we can use the underlying \textit{Wiener-Hopf} factorization
\begin{equation*}
	\frac{F_+(z)}{F_-(z)}=\big(1-L(z)\big)\left(1+\frac{1}{z^2}\right),\ \ z\in\mathbb{R}
\end{equation*}
back in \eqref{wh:3} and obtain
\begin{equation*}
	\frac{z^2}{z+\im}\Phi_+(z)F_+(z)=(z-\im)\Psi_-(z)F_-(z),\ \ z\in\mathbb{R}.
\end{equation*}
Here, the equation's left hand side defines a function which admits analytic continuation to the upper half-plane and the right hand side a function which admits analytic continuation to the lower half-plane. Hence, there is an entire function $E(z),z\in\mathbb{C}$ such that
\begin{equation*}
	\Phi(z)=\frac{(z+\im)E(z)}{z^2F(z)},\ \ \Im z>0;\ \ \ \ \ \ \Psi(z)=\frac{E(z)}{(z-\im)F(z)},\ \ \Im z<0.
\end{equation*}
But $\Phi_+(z)$ is a Fourier integral, so we find $\Phi(z)=\mathcal{O}(z^{-1})$ as $z\rightarrow\infty, \Im z>0$ and thus with Problem \ref{WHRHP} and Liouville's theorem, $E(z)\equiv \im c\in\mathbb{C}$. All together,
\begin{equation*}
	\Phi(z)=\frac{(z+\im)\im c}{z^2F(z)},\ \ \Im z>0.
\end{equation*}
This result at hand, we would now like to return to $\phi(x)$ via an inverse Fourier transform
\begin{equation}\label{wh:5}
	\phi(x)=\frac{1}{2\pi}\int_{\Gamma}\Phi(z)\e^{-\im zx}\,\d z,
\end{equation}
but how do we choose the contour $\Gamma\subset\mathbb{C}$ in \eqref{wh:5} to achieve equality? Notice that by the Plemelj-Sokhotski formula, $F_+(0)=\frac{1}{\sqrt{3}}\neq 0$, so the continuation of $\Phi(z)$ to the lower half-plane will have a double pole at $z=0$ and a branch cut extending from $-\im$ to $-\im\infty$ along the imaginary axis, compare \eqref{cute}. Hence, if $x<0$, we choose $\Gamma$ as straight line in the upper half-plane and find that \eqref{wh:5} vanishes by residue theorem. If $x>0$, then we wrap $\Gamma$ around the negative imaginary axis and encircle the double pole at $z=0$ as well as the branch cut, compare Figure \ref{figWH1}. 
\begin{figure}[tbh]
\includegraphics[width=0.38\textwidth]{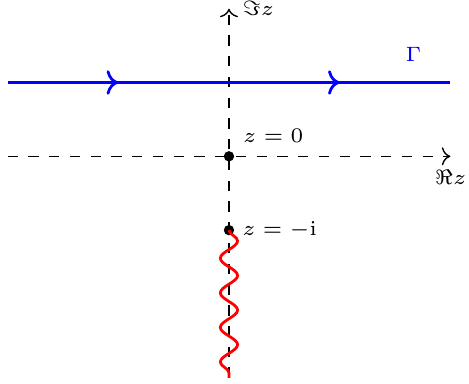}\,\,\,\,\,\,\,\,\,
\includegraphics[width=0.38\textwidth]{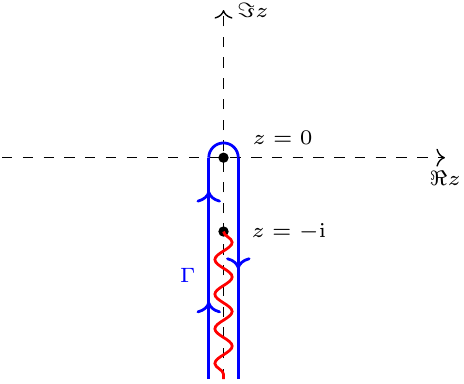}
\caption{The integration contour $\Gamma$ for \eqref{wh:5} shown in \textcolor{blue}{blue} with branch cut in \textcolor{red}{red}. On the left our choice for $x<0$ and on the right for $x>0$.}
\label{figWH1}
\end{figure}

Computing the residue and parametrizing the integral along the branch cut, one then obtains the well-known solution formula
\begin{align*}
	\phi(x)=c\Bigg(\sqrt{3}\bigg[1+x&\,-\frac{1}{\pi}\int_0^{\infty}\left\{\frac{1}{1-L(w)}-1-\frac{3}{w^2}\right\}\frac{\d w}{1+w^2}\bigg]\\
	&\,+\frac{\pi}{2}\e^{-x}\int_0^{\infty}\frac{\e^{-xw}}{F(-\im-\im w)}\frac{\d w}{(2+w-\frac{1}{2}\ln(\frac{2+w}{w}))^2+\frac{\pi^2}{4}}\Bigg),\ \ x>0.
\end{align*}
In summary of this short subsection, using scalar Hilbert boundary value problems we can solve
integral equations of the form \eqref{WHint} by the Wiener-Hopf method. This method has found countless applications and is particularly well-suited for PDE problems with semi-infinite boundaries, e.g. the Sommerfeld diffraction problem, see for instance \cite{Nob} and \cite[Chapter $16$]{Dav} for further details.

\subsection{The integrable systems revolution of the late 1960s} The classical (think of Euler, Hamilton, Jacobi, Liouville, Neumann and Kowaleskaya) field of integrable systems found renewed interest in the late 1960s when it became clear that several nonlinear PDEs in $1+1$ dimensions can be \textit{integrated} by the inverse scattering method. Examples are the Kortweg-de Vries equation, the nonlinear Schr\"odinger equation (NLS), and the sine-Gordon equation, among many others. Important early contributions to this remarkable technique are due to Gardner, Green, Kruskal, and Miura \cite{GGKM}, due to Lax \cite{Lax}, due to Faddeev, Zakharov \cite{ZF}, and due to Shabat, Zakharov \cite{ZS}. For a solid first introduction to the inverse scattering method we refer the interested reader to the monographs \cite{NMPZ,AC,BDT,FT}, here we shall discuss only one example and showcase the appearance of Riemann-Hilbert techniques: Consider the defocusing NLS
\begin{equation}\label{e:25}
	\mathrm{i}y_t+y_{xx}-2|y|^2y=0,\ \ \ \ y=y(x,t):\mathbb{R}^2\rightarrow\mathbb{C}.\smallskip
\end{equation}
How do we solve the corresponding initial value problem with Cauchy data $y(x,0)=y_0(x)\in\mathcal{S}(\mathbb{R})$ in the Schwartz space on the real line? Well, one first computes the reflection coefficient $r(z)\in\mathcal{S}(\mathbb{R})$ associated to $y_0$ through the direct scattering transform. There is not much freedom in this computation as the map $y_0\rightarrow r$ is a bijection from $\mathcal{S}(\mathbb{R})$ onto $\mathcal{S}(\mathbb{R})\cap\{r:\,\sup_{z\in\mathbb{R}}|r(z)|<1\}$, see \cite{BC}. After that, one considers the so-called inverse scattering problem which is most conveniently formulated as the following Hilbert boundary value problem - this insight grew out of the works by Manakov, Shabat and Zakharov in the mid 1970s:
\begin{prob}\label{HP3} For any $x\in\mathbb{R}$ and $t>0$, determine $X(z)=X(z;x,t)\in\mathbb{C}^{2\times 2}$ such that
\begin{enumerate}
	\item[(1)] $X(z)$ is analytic in $\mathbb{C}\setminus\mathbb{R}$ and extends continuously from either side to the real axis.
	\item[(2)] On the real axis, the pointwise limits
	\begin{equation*}
		X_{\pm}(z):=\lim_{\epsilon\downarrow 0}X(z\pm\im\epsilon)
	\end{equation*}
	satisfy the boundary condition $X_+(z)=X_-(z)G(z)$ with
	\begin{equation}\label{e:266}
		z\in\mathbb{R}:\ \ \ \ \ G(z)=\begin{bmatrix}1-|r(z)|^2 & -\bar{r}(z)\e^{-2\im(2tz^2+xz)}\smallskip\\ r(z)\e^{2\im(2tz^2+xz)} & 1\end{bmatrix}\in\textnormal{GL}(2,\mathbb{C}).
	\end{equation}
	\item[(3)] Near $z=\infty$ we enforce the asymptotic behavior
	\begin{equation*}
		X(z)=\mathbb{I}+X_1z^{-1}+X_2z^{-2}+\mathcal{O}\big(z^{-3}\big),\ \ \ \ \ |z|\rightarrow\infty;\ \ \ \ X_k=X_k(x,t)=\big[X_k^{ij}(x,t)\big]_{i,j=1}^2,
	\end{equation*}
	where $\mathbb{I}\in\textnormal{GL}(2,\mathbb{C})$ is the identity matrix.
\end{enumerate}
\end{prob}
Indeed, provided this problem is solvable, its (unique) solution solves the PDE \eqref{e:25} with initial condition $y(x,0)=y_0(x)$ via the formula
\begin{equation}\label{e:26}
	y(x,t)=2\im X_1^{12}(x,t).
\end{equation}
In order to arrive at \eqref{e:26} we use a slight modification of Plemelj's argument that lead him to the linear system \eqref{e:22}: The jump matrix $G(z),z\in\Gamma$ in Problem \ref{HP1} was piecewise constant whereas \eqref{e:266} presents us with the other extreme, namely a jump matrix which depends on $z,x$ and $t$. However, assuming that Problem \ref{HP3} is solvable, we can define
\begin{equation*}
	Y(z)\equiv Y(z;x,t):=X(z;x,t)\e^{-\im(2tz^2+xz)\sigma_3},\ \ \ \ z\in\mathbb{C}\setminus\mathbb{R},
\end{equation*}
and then conclude that both,
\begin{equation}\label{e:27}
	A(z):=\frac{\partial Y}{\partial x}(z)(Y(z))^{-1}\ \ \ \ \textnormal{and}\ \ \ \ B(z):=\frac{\partial Y}{\partial t}(z)(Y(z))^{-1},\ \ \ \ (z,x,t)\in\mathbb{C}\times\mathbb{R}\times(0,\infty)
\end{equation}
are single-valued, entire functions in $z\in\mathbb{C}$. Moreover, using Problem \ref{HP3}, condition (3) and Liouville's theorem we obtain the explicit formul\ae
\begin{align*}
	A(z)\equiv A(z;x,t)=&\,-\im z\sigma_3+2\im\begin{bmatrix}0 & X_1^{12}\smallskip\\ -X_1^{21} & 0\end{bmatrix},\\
	B(z)\equiv B(z;x,t)=&\,-2\im z^2\sigma_3+4\im z\begin{bmatrix}0 & X_1^{12} \smallskip\\ -X_1^{21} & 0\end{bmatrix}+4\im\begin{bmatrix}-X_1^{12}X_1^{21} & X_2^{12}-X_1^{12}X_1^{22}\smallskip\\
	-X_2^{12}+X_1^{21}X_1^{11} & X_1^{21}X_1^{12}\end{bmatrix},
\end{align*}
in term of the matrix coefficients $X_k^{ij}=X_k^{ij}(x,t)$ and the third Pauli matrix $\sigma_3:=\bigl[\begin{smallmatrix}1 & 0\\ 0 & -1\end{smallmatrix}\bigr]$. The overdetermined system \eqref{e:27} forms one of the celebrated Lax pairs for the PDE \eqref{e:25}, indeed writing out the Frobenius integrability condition for $\frac{\partial Y}{\partial x}=AY,\frac{\partial Y}{\partial t}=BY$, equivalently the compatibility condition
\begin{equation*}
	AB-BA=\frac{\partial B}{\partial x}-\frac{\partial A}{\partial t},
\end{equation*}
we find entrywise the coupled partial differential equations
\begin{align*}
	2\im(X_1^{12})_x=&\,\,4(X_2^{12}-X_1^{12}X_1^{22}),\ \ \ \ \ \ \ -2\im(X_1^{21})_x=4(X_2^{21}-X_1^{21}X_1^{11}),\\
	4\im(X_2^{12}-X_1^{12}X_1^{22})_x-2\im(X_1^{12})_t=&\,-16X_1^{12}X_1^{21}X_1^{12},\\ -4\im(X_2^{21}-X_1^{21}X_1^{11})_x+2\im(X_1^{21})_t=&\,-16X_1^{21}X_1^{12}X_1^{21}.
\end{align*}
However, the coefficients back in condition (3) of Problem \ref{HP3} satisfy $\tr X_1=0$ and $X_k^{11}=\overline{X_k^{22}},X_k^{21}=\overline{X_k^{12}},k\in\mathbb{Z}_{\geq 1}$. Thus the above four coupled equations boil down to
\begin{equation*}
	y_x=4(X_2^{12}-X_1^{12}X_1^{22}),\ \ \ \ \ \ \ \ \ \ 2|y|^2y=4(X_2^{12}-X_1^{12},X_1^{22})_x+\im y_t,
\end{equation*}
where we used \eqref{e:26} and so after substitution into one another
\begin{equation*}
	\im y_t+y_{xx}-2|y|^2y=0,
\end{equation*}
which is the defocusing NLS \eqref{e:25}. Summarizing this short computation, provided Problem \ref{HP3} is solvable, the linear Hilbert boundary value problem solves the nonlinear evolution equation \eqref{e:25} via the formula \eqref{e:26}. This is a far reaching generalization of Plemelj's idea to construct solutions of certain linear ODE systems in terms of solutions of Hilbert boundary value problems, nevertheless the common approach to both problems is clearly visible. In Section \ref{weakRHP} below we will address the solvability question of Problem \ref{HP3} which requires a different toolset than the one used by Plemelj in his analysis of Problem \ref{HP1}.
\subsection{Painlev\'e special function theory} Painlev\'e functions form a family of special functions which is widely regarded as the substitute for classical special functions in nonlinear mathematical physics (such Airy, Bessel or Hypergeometric functions). Although Painlev\'e transcendents are non-expressible in terms of a finite number of contour integrals over elementary, elliptic or finite genus algebraic functions, several of their key analytic and asymptotic properties can be studied with Riemann-Hilbert techniques. These techniques are therefore in a way the analogues of contour integral representations and steepest descent asymptotic methods used in the analysis of classical special functions. For a comprehensive introduction to Painlev\'e special functions and Riemann-Hilbert techniques associated with them, including several references to their applications in mathematical physics, we recommend the two monographs \cite{GLS,FIKN}. Similar to the last subsection we will showcase the Riemann-Hilbert approach to one particular Painlev\'e equation, namely
\begin{equation}\label{e:28}
	u_{xx}=xu+2u^3,\ \ \ \ u=u(x):\mathbb{R}\rightarrow\mathbb{R},
\end{equation}
the homogeneous Painlev\'e-II equation, which, to a certain extent, is a nonlinear Airy equation. How do we solve the associated initial value problem? Well, in the Riemann-Hilbert approach to \eqref{e:28} one parametrizes the solutions $u=u(x)$ of \eqref{e:28} \textit{not} in terms of Cauchy data, but in terms of the monodromy data of an associated linear system of ordinary differential equations: Precisely, define the generically two-dimensional real manifold
\begin{equation*}
	\mathcal{S}:=\big\{(s_1,s_2,s_3)\in\mathbb{C}^3:\,s_1-s_2+s_3+s_1s_2s_3=0,\ \ s_1=\bar{s}_3,\ \ s_2=\bar{s}_2\big\},
\end{equation*}
and introduce with $s_{k+3}=-s_k,k=1,2,3$ for $k=1,\ldots,6$ the triangular matrices
\begin{equation*}
	S_k:=\begin{bmatrix}1 & 0\\ s_k& 0\end{bmatrix},\ \ k\equiv 1\mod 2;\ \ \ \ \ \ \ S_k:=\begin{bmatrix}1 & s_k\\ 0 & 1\end{bmatrix},\ \ k\equiv 0\mod 2.
\end{equation*}
Now consider the below Hilbert boundary value problem.
\begin{prob}\label{HP4} For any $x\in\mathbb{R}$, determine $X(z)=X(z;x,s)\in\mathbb{C}^{2\times 2}$ with $s:=(s_1,s_2,s_3)\in\mathcal{S}$ such that
\begin{enumerate}
	\item[(1)] $X(z)$ is analytic in $\mathbb{C}\setminus\bigcup_{i=1}^6\Gamma_i$ and continuous on the closed sectors $\overline{\Omega}_i$. Compare Figure \ref{fig4} for the six rays $\Gamma_i$ and the sectors $\Omega_i$ in between them.
	\item[(2)] On the rays $\Gamma_i,i=1,\ldots,6$ the pointwise limits
	\begin{equation*}
		X_{\pm}(z):=\lim_{\substack{w\rightarrow z\\ w\in\Omega_{\pm}}}X(w),\ \ \ \Omega_+:=\Omega_{i+1},\ \ \ \Omega_-:=\Omega_i,\ \ \ \Omega_7\equiv\Omega_1,
	\end{equation*}
	satisfy the boundary condition $X_+(z)=X_-(z)G(z)$ with
	\begin{equation*}
		z\in\Gamma_i:\ \ \ \ G(z)=\e^{-\theta(z,x)\sigma_3}S_i\,\e^{\theta(z,x)\sigma_3}\in\textnormal{GL}(2,\mathbb{C});\ \ \ \ \theta(z,x):=\im\left(\frac{4}{3}z^3+xz\right).
	\end{equation*}
	\item[(3)] Near $z=\infty$ we enforce the asymptotic behavior
	\begin{equation*}
		X(z)=\mathbb{I}+X_1z^{-1}+X_2z^{-2}+\mathcal{O}\big(z^{-3}\big),\ \ \ |z|\rightarrow\infty;\ \ \ \ X_k=X_k(x,s)=\big[X_k^{ij}(x,s)\big]_{i,j=1}^2.
	\end{equation*}
\end{enumerate}
\end{prob}
\begin{figure}[tbh]
\includegraphics[width=0.55\textwidth]{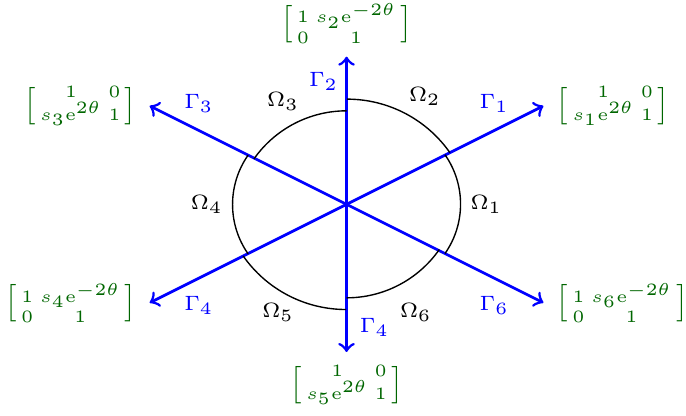}
\caption{The oriented contour $\bigcup_{i=1}^6\Gamma_i$ in \textcolor{blue}{blue} together with the sectors $\Omega_i$ in between. The six rays are $\Gamma_i=\{z\in\mathbb{C}:\,\textnormal{arg}\,z=\frac{\pi}{6}+\frac{\pi}{3}(i-1)\}$ and we indicate the values of $G(z)$ on them in \textcolor{ao}{green}.}
\label{fig4}
\end{figure}
The route between Problem \ref{HP4} and \eqref{e:28} goes as follows: for any choice of parameters $(s_1,s_2,s_3)\in\mathcal{S}$ the Hilbert boundary value problem \ref{HP4} is meromorphically with respect to $x$ solvable, cf. \cite{BIK}. In turn the unique solution to said Problem leads to a solution of \eqref{e:28} via
\begin{equation}\label{e:29}
	u(x)\equiv u(x|s)=2X_1^{12}(x,s),
\end{equation}
which is the analogue of the NLS formula \eqref{e:26}. Additionally, \eqref{e:29} satisfies $\bar{u}(x)=u(\bar{x})$, i.e. the solution is real-valued on the real axis and conversely, every on the real axis real-valued solution of \eqref{e:28} admits a unique Riemann-Hilbert representation \eqref{e:29} for suitable $s$. In short, we do parametrize solutions of \eqref{e:28} in terms of the monodromy data $(s_1,s_2,s_3)\in\mathcal{S}$ using a bijection between the initial value solution space of \eqref{e:28} and $\mathcal{S}$.\smallskip

The derivation of \eqref{e:29}, and thus the Riemann-Hilbert approach to the second Painlev\'e equation is due to Flaschka, Newell \cite{FN} and Jimbo, Miwa, Ueno \cite{JMU,JM1,JM2} in the early 1980s. In a nutshell, we use again Plemelj's basic idea: suppose Problem \ref{HP4} is solvable and set
\begin{equation*}
	Y(z)\equiv Y(z;s,s):=X(z;x,s)\e^{-\theta(z,x)\sigma_3},\ \ \ z\in\mathbb{C}\setminus\bigcup_{i=1}^6\Gamma_i.
\end{equation*}
Then both,
\begin{equation*}
	A(z):=\frac{\partial Y}{\partial z}(Y(z))^{-1}\ \ \ \ \ \ \textnormal{and}\ \ \ \ \ \ \ B(z):=\frac{\partial Y}{\partial x}(z)(Y(z))^{-1},\ \ \ \ \ (z,x)\in\mathbb{C}\times(\mathbb{R}\setminus D)
\end{equation*}
are single-valued, entire functions in $z\in\mathbb{C}$ away from a discrete set $D\subset\mathbb{R}$. Hence, using condition (3) in Problem \ref{HP4} and Liouville's theorem, we obtain the explicit formul\ae
\begin{align*}
	A(z)\equiv A(z;x,s)=&\,\,-4\im z^2\sigma_3+8\im z\begin{bmatrix}0 & X_1^{12}\smallskip\\
	-X_1^{21} & 0\end{bmatrix}+\begin{bmatrix}-\im x-8\im X_1^{12}X_1^{21} & 8\im(X_2^{12}-X_1^{12}X_1^{22})\smallskip \\ -8\im(X_2^{21}-X_1^{21}X_1^{11})& \im x+8\im X_1^{21}X_1^{12}\end{bmatrix},\\
	B(z)\equiv B(z;x,s)=&\,\,-\im z\sigma_3+2\im\begin{bmatrix}0 & X_1^{12}\smallskip\\
	-X_1^{21} & 0\end{bmatrix},
\end{align*}
in terms of the matrix coefficients $X_k$ and from those, reading the compatibility condition
\begin{equation*}
	AB-BA=\frac{\partial B}{\partial z}-\frac{\partial A}{\partial x},
\end{equation*}
entrywise, the following coupled ordinary differential equations,
\begin{align*}
	(X_1^{12})_x=&\,\,-2\im(X_2^{12}-X_1^{12}X_1^{22}),\ \ \ \ \ \ \ \ (X_1^{21})_x=2\im(X_2^{21}-X_1^{21}X_1^{11}),\\
	-2(X_2^{12}-X_1^{12}X_1^{22})_x=&\,\,X_1^{12}(-\im x-8\im X_1^{12}X_1^{21}),\\
	-2(X_2^{21}-X_1^{12}X_1^{11})_x=&\,\,X_1^{21}(\im x+8\im X_1^{21}X_1^{12}).
\end{align*}
However Problem \ref{HP4} also possesses certain implicit symmetries that yield $\tr X_1=0$, $X_k^{12}=(-1)^{k-1}X_k^{21},k\in\mathbb{Z}_{\geq 1}$ and which simplify the above four equations to
\begin{equation*}
	u_x=-4\im(X_2^{12}-X_1^{12}X_1^{22}),\ \ \ \ \ \ \ \ \ \im(xu+2u^3)=4(X_2^{12}-X_1^{12}X_1^{22})_x,
\end{equation*}
where we used \eqref{e:29}. Substituting both equations into one another we find
\begin{equation*}
	u_{xx}=xu+2u^3,
\end{equation*}
and therefore Painlev\'e-II \eqref{e:28}. In summary, the Hilbert boundary value problem \ref{HP4} linearizes the nonlinear Painlev\'e-II ODE and therefore shares common ground with \eqref{e:22} and \eqref{e:25}. Before we move to our next example it will be important to mention that the monodromy data $s=(s_1,s_2,s_3)\in\mathcal{S}$ forms a complete set of first integrals for \eqref{e:28}. Building on this feature, one can in principle use the highly transcendental equations
\begin{equation}\label{e:30}
	s_i=s_i(x,u,u_x)\equiv\textnormal{const},\ \ \ \ i=1,2,3
\end{equation}
in the further analysis of the Painlev\'e-II transcendents. This idea is at the heart of the \textit{isomonodromy method} and the corresponding direct monodromy approach to Painlev\'e equations as pioneered by Its, Novokshenov, Kapaev and Kitaev in the mid 1980s, cf. \cite{IN} or \cite[Chapter $7$]{FIKN}. Their approach relies on complex WKB techniques applied to Problem \ref{HP4} and certain a priori assumptions on the behavior of $u=u(x)$. We will later on highlight a different method, the inverse monodromy approach, a.k.a. Deift-Zhou method \cite{DZ}, which (in the extended form of Deift, Venakides, Zhou \cite{DVZ}) has become extremely popular in nonlinear mathematical physics since its discovery back in 1993. This method is suitable for the analysis of Painlev\'e functions and other problems that we are about to discuss.

\subsection{The Heisenberg antiferromagnet}\label{Hei} Our next example is concerned with a quantum mechanical model for an antiferromagnetic spin chain. It was introduced by Lieb, Schultz and Mattis \cite{LSM} in 1961 and is known as the spin-$\frac{1}{2}$XY model, equivalently the Heisenberg XX0 antiferromagnet. In this model $\frac{1}{2}$-spins are situated on a one-dimensional periodic and isotropic lattice, we allow only nearest neighbor interactions between the spins, neighboring spins tend to point in \textit{opposite} directions and a moderate external transverse magnetic field influences the spins. Subject to these four constraints the Hamiltonian of the model equals, cf. \cite[$(2.1)$]{LSM},
\begin{equation*}
	\mathscr{H}:=\sum_{j=1}^N\sigma_j^x\sigma_{j+1}^x+\sum_{j=1}^N\sigma_j^y\sigma_{j+1}^y-h\sum_{j=1}^N\sigma_j^z,
\end{equation*}
where we set the nearest neighbor coupling constant to unity, $0<h<2$ denotes the strength of the external magnetic field and
\begin{equation*}
	\sigma^x\equiv\sigma_1:=\begin{bmatrix}0 & 1\\ 1 & 0\end{bmatrix},\ \ \ \sigma^y\equiv\sigma_2:=\begin{bmatrix}0 & -\im\\ \im & 0\end{bmatrix},\ \ \ \sigma^z=\sigma_3=\begin{bmatrix}1 & 0\\ 0 & -1\end{bmatrix},
\end{equation*}
are the Pauli spin matrices. The paper \cite{LSM} computed the model's ground state, all elementary excitations and the free energy through creation-annihilation operator techniques. Here we shall focus on an important correlation function, the so-called \textit{emptiness formation probability}
\begin{equation*}
	P_n:=\textnormal{Prob}\big\{\textnormal{there are}\,\,n\in\mathbb{Z}_{\geq 1}\,\,\textnormal{adjacent parallel spins up in the ground state}\big\},
\end{equation*}
which was evaluated in the thermodynamic limit $N\rightarrow\infty$ for fixed $h$ in \cite[$(10.2),(10.3)$]{EFIK} by the quantum inverse scattering method. The result is the Fredholm determinant formula
\begin{equation}\label{e:31}
	P_n=\det\big(1-U_n\upharpoonright_{L^2(J,\d x)}\big):=\sum_{k=0}^{\infty}\frac{(-1)^k}{k!}\int_J\cdots\int_J\det\big[U_n(x_i,x_j)\big]_{i,j=1}^k\d x_1\cdots\d x_k,
\end{equation}
where $U_n:L^2(J,\d x)\rightarrow L^2(J,\d x)$ denotes the trace class operator with kernel
\begin{equation*}
	U_n(\lambda,\mu):=\frac{1}{2\pi\im}\frac{1}{\sinh(\lambda-\mu)}\left\{1-\left(\frac{\e^{2\lambda}+\im}{\e^{2\lambda}-\im}\right)^n\left(\frac{\e^{2\mu}-\im}{\e^{2\mu}+\im}\right)^n\right\},
\end{equation*}
acting on the interval $J:=(-\Lambda,\Lambda)\subset\mathbb{R}$ with $\cosh 2\Lambda=\frac{2}{h}>1$. The central analytical challenges associated with $P_n$ consist in accessing its large $n$ asymptotic behavior and in identifying an underlying integrable system for it. Both questions yield physically relevant information for the spin chain model, so how do we go about them? First, one factorizes the operator $U_n$ as
\begin{equation}\label{opfac}
	U_n=E^{-1}D_nK_n(E^{-1}D_n)^{-1},
\end{equation}
where $D_n$ denotes multiplication $(D_nf)(z):=z^{-\frac{n}{2}}\e^{\lambda(z)}(z+\im)$ with the invertible change of variables $\lambda=\lambda(z)$ defined as
\begin{equation}\label{e:32}
	z=:-\im\frac{\e^{2\lambda}-\im}{\e^{2\lambda}+\im},\ \ \ \ z(J)=\big\{z\in\mathbb{C}:\,|z|=1,\ \textnormal{arg}\,\in(\phi,2\pi-\phi)\big\},\ \ \e^{\im\phi}=z(-\Lambda),\ \phi\in\left(\frac{\pi}{2},\pi\right),
\end{equation}
$E$ is the evaluation $(Ef)(z):=f(\lambda(z))$ and $K_n:L^2(\Gamma,|\d z|)\rightarrow L^2(\Gamma,|\d z|)$ the trace class operator with kernel
\begin{equation}\label{e:33}
	K_n(z,w):=\frac{z^{\frac{n}{2}}w^{-\frac{n}{2}}-z^{-\frac{n}{2}}w^{\frac{n}{2}}}{2\pi\im(z-w)},
\end{equation}
acting on the arc $\Gamma:=z(J)\subset\mathbb{C}$ defined in \eqref{e:32} and shown in Figure \ref{fig5}. Second, one notices that \eqref{e:33} is a particular instance of an \textit{integrable} kernel, i.e. a kernel of the form
\begin{equation*}
	K(x,y)=\frac{\sum_{j=1}^Nf_j(x)g_j(y)}{x-y},\ \ \ \ x,y\in\Gamma\ \ \ \ \ \ \ \textnormal{with}\ \ \ \ \ \ \ \sum_{j=1}^Nf_j(x)g_j(x)=0,\ \ x\in\Gamma,
\end{equation*}
for some functions $f_j,g_j\in L^{\infty}(\Gamma,|\d z|),1\leq j\leq N$. The associated integral operator $K:L^2(\Gamma,|\d z|)\rightarrow L^2(\Gamma,|\d z|)$ has many remarkable properties, most importantly $(1-K)^{-1}$ (if existent) can be computed in terms of the solution of a naturally associated Hilbert boundary value problem. This insight was formalized in the important 1990 paper \cite{IIKS} by Its, Izergin, Korepin and Slavnov, see \cite{D0} for a concise review, and reads in case of \eqref{e:33} as follows:
\begin{prob}\label{HP5} For any $n\in\mathbb{Z}_{\geq 1}$ and $0<h<2$, determine $X(z)=X(z;n,h)\in\mathbb{C}^{2\times 2}$ such that
\begin{enumerate}
	\item[(1)] $X(z)$ is analytic in $\mathbb{C}\setminus\Gamma$ and extends continuously from either side to $\Gamma\setminus\{\xi,\bar{\xi}\}$. See Figure \ref{fig5} for the jump contour $\Gamma$ and its endpoints $\xi$ and $\bar{\xi}$.
	\item[(2)] On the arc $\Gamma\subset\mathbb{C}$ the pointwise limits
	\begin{equation*}
		X_{\pm}(z):=\lim_{\substack{w\rightarrow z\\ w\in\Omega_{\pm}}}X(w),
	\end{equation*}
	satisfy the boundary condition $X_+(z)=X_-(z)G(z)$ with 
	\begin{equation*}
		z\in\Gamma:\ \ \ \ \ \ G(z)=\begin{bmatrix}0 & -z^n\\ z^{-n} & 2\end{bmatrix}\in\textnormal{GL}(2,\mathbb{C}).
	\end{equation*}
	\begin{figure}[tbh]
\includegraphics[width=0.3\textwidth]{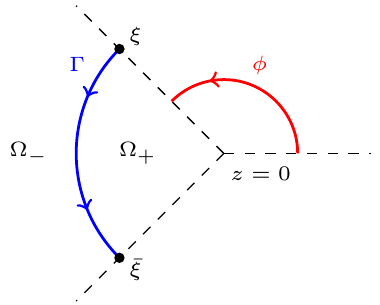}
\caption{The oriented contour $\Gamma$ in \textcolor{blue}{blue} together with the sectors $\Omega_{\pm}$. The endpoints are $\xi:=\e^{\im\phi}=z(-\Lambda), \bar{\xi}=\e^{-\im\phi}$ and the angle $\phi$ in \eqref{e:32} is shown in \textcolor{red}{red}.}
\label{fig5}
\end{figure}
	\item[(3)] In a small neighborhood of $z=\xi$ and $z=\bar{\xi}$ we impose the blow-up constraint
	\begin{equation*}
		\|X(z)\|\leq C\,\big|\ln\left|\frac{z-\xi}{z-\bar{\xi}}\right|\big|,\ \ \ \ \ \ C>0.
	\end{equation*}
	\item[(4)] Near $z=\infty$ we require the asymptotic normalization
	\begin{equation*}
		X(z)=\mathbb{I}+X_1z^{-1}+\mathcal{O}\big(z^{-2}\big),\ \ \ |z|\rightarrow\infty;\ \ \ X_k=X_k(n,h)=\big[X_k^{ij}(n,h)\big]_{i,j=1}^2.
	\end{equation*}
\end{enumerate}
\end{prob}
Problem \ref{HP5} and the emptiness formation probability \eqref{e:31} are related in the following way: as shown in \cite[Proposition $6.12$]{DIZ}, the Hilbert boundary value problem \ref{HP5} is uniquely solvable for any $n\in\mathbb{Z}_{\geq 1}, h\in(0,2)$ and its solution leads to the below recursion for $P_n$,
\begin{equation}\label{e:34}
	\frac{P_{n+1}}{P_n}=X^{11}(0;n,h),
\end{equation}
in terms of the $(11)$-entry of the solution to Problem \ref{HP5}. This identity is more complicated than the representation formul\ae\,\eqref{e:26}, \eqref{e:29} and its derivation does not use a Lax pair as in the NLS or Painlev\'e-II example. Nevertheless, \eqref{e:34} enables us to use Riemann-Hilbert techniques in the analysis of the spin chain model, i.e. Plemelj's basic idea occurs once more.\bigskip

In order to obtain \eqref{e:34}, we first use the operator factorization \eqref{opfac} and Sylvester's determinant identity,
\begin{equation*}
	P_n=\det\big(1-U_n\upharpoonright_{L^2(J,\d x)}\big)=\det\big(1-K_n\upharpoonright_{L^2(\Gamma,|\d z|)}\big).
\end{equation*}
Next we realize that
\begin{equation*}
	K_{n+1}(z,w)=K_n(z,w)+\frac{1}{2\pi\im}z^{\frac{n}{2}}w^{-\frac{n}{2}-1}=:K_n(z,w)+(\alpha_n\otimes\beta_n)(z,w),
\end{equation*}
where $\alpha_n\otimes\beta_n$ is the finite rank integral operator on $L^2(\Gamma,|\d z|)$ with kernel $(\alpha_n\otimes\beta_n)(z,w)=\alpha_n(z)\beta_n(w)$ in terms of $\alpha_n(z):=\frac{1}{2\pi\im}z^{\frac{n}{2}}$ and $\beta_n(w):=w^{-\frac{n}{2}-1}$. But \cite{IIKS} showed that Problem \ref{HP5} is solvable if and only if $(1-K_n)^{-1}$ exists, so we can further simplify \eqref{e:31},
\begin{align*}
	P_{n+1}=\det\big(1-K_n-\alpha_n\otimes\beta_n\upharpoonright_{L^2(\Gamma,|\d z|)}\big)=P_n\,\det\big(1-(1-K_n)^{-1}(\alpha_n\otimes\beta_n)\upharpoonright_{L^2(\Gamma,|\d z|)}\big),
\end{align*}
and evaluate the remaining Fredholm determinant with the general theory of finite rank operators,
\begin{equation}\label{e:35}
	\frac{P_{n+1}}{P_n}=1-\int_{\Gamma}\big((1-K_n)^{-1}\alpha_n\big)(z)z^{-\frac{n}{2}}\,\frac{\d z}{z}.
\end{equation}
At this point we use the general fact, cf. \cite{IIKS} or \eqref{e:45} below, that the solution of Problem \ref{HP5} can be expressed in integral form, similar to Plemelj's quasi-regular equation \eqref{e:7},
\begin{equation*}
	X(z)=\mathbb{I}-\int_{\Gamma}F(w)g^{\top}(w)\frac{\d w}{w-z},\ \ z\notin\Gamma;\ \ \ \ \ F_i(z):=\big((1-K_n)^{-1}f_i\big)(z),\ \ z\in\Gamma,
\end{equation*}
with $F(z):=(F_1(z),F_2(z))^{\top},g(z):=(g_1(z),g_2(z))^{\top}$ and
\begin{equation*}
	f_1(z):=\frac{1}{2\pi\im}z^{\frac{n}{2}},\,\,f_2(z):=-\frac{1}{2\pi\im}z^{-\frac{n}{2}},\,\,g_1(z):=z^{-\frac{n}{2}},\,\,g_2(z):=z^{\frac{n}{2}},\ \ \ \ \ \textnormal{arg}\,z\in(\phi,2\pi-\phi).
\end{equation*}
Setting $z=0$ in the integral formula for $X(z)$ and comparing with \eqref{e:35} yields
\begin{equation*}
	\frac{P_{n+1}}{P_n}=X^{11}(0;n,h),
\end{equation*}
i.e. the representation \eqref{e:34}. In summary, the Hilbert boundary value problem \ref{HP5} linearizes to a certain extent (note that \eqref{e:34} is a formula for the discrete logarithmic derivative of $P_n$ and not $P_n$ itself) the highly transcendental Fredholm determinant \eqref{e:31}. We will display the usefulness of \eqref{e:34} after the next subsection.
\subsection{Invariant random matrix models} Our fifth and last example in this section concerns the statistical behavior of eigenvalues of a random matrix drawn from a particular ensemble. We will focus on an especially well-studied ensemble, the so called \textit{unitary ensemble} $\mathcal{M}(n)$ of $n\times n$ Hermitian matrices equipped with the probability measure
\begin{equation}\label{e:36}
	P(M)\,\d M= \frac{1}{Z_{n,N}}\,\e^{-N\textnormal{tr}\,V(M)}\,\d M,\ \ \ \ \ \ Z_{n,N}:=\int_{\mathcal{M}(n)}\e^{-N\textnormal{tr}\,V(M)}\,\d M<\infty.
\end{equation} 
Here, $\d M$ is the Haar measure on $\mathcal{M}(n)\simeq\mathbb{R}^{n^2}$, $N\in\mathbb{Z}_{\geq 1}$ a scaling parameter and $V:\mathbb{R}\rightarrow\mathbb{R}$ assumed to be real analytic satisfying the growth condition
\begin{equation*}
	\frac{V(x)}{\ln(x^2+1)}\rightarrow\infty\ \ \ \ \textnormal{as}\ \ \ |x|\rightarrow\infty,
\end{equation*}
in order to ensure that \eqref{e:36} is a bona fide model. Measures of the form \eqref{e:36} appeared seemingly first in the work of Hurwitz in 1897 \cite{Hu}, they appeared in a paper by Wishart in 1928 \cite{Wish} and in particular in Wigner's work in the 1950s \cite{Wig} who used them to model the statistical properties of highly excited energy levels of complex nuclei, see \cite{DF} for an excellent historic review of the subject. A classical fact of the setup \eqref{e:36} is that the location of the eigenvalues of a matrix drawn from the unitary ensemble form a determinantal point process. This detail allows us to compute several of the key statistical properties of the random matrix model from the underlying Christoffel-Darboux kernel
\begin{equation}\label{e:37}
	K_{n,N}(x,y)=\e^{-\frac{N}{2}V(x)}\e^{-\frac{N}{2}V(y)}\sum_{k=0}^{N-1}\frac{1}{h_k^2}\,p_k(x)p_k(y),\ \ \ x,y\in\mathbb{R},
\end{equation}
which is defined in terms of the sequence $\{p_n(x)\}_{n=0}^{\infty}\subset\mathbb{C}[x]$ of monic orthogonal polynomials with respect to the measure $\e^{-NV(x)}\,\d x$ supported on $\mathbb{R}$, i.e.
\begin{equation*}
	\int_{-\infty}^{\infty}p_j(x)p_k(x)\e^{-NV(x)}\,\d x=h_k^2\delta_{jk},\ \ \ \ \ \ p_j(x)=x^j+\mathcal{O}(x^{j-1}).
\end{equation*}
Indeed, focusing on one important statistical property of the model \eqref{e:36}, the \textit{gap probability},
\begin{equation*}
	E_n(s):=\textnormal{Prob}\big\{M\in\mathcal{M}(n)\ \textnormal{has no eigenvalues in the interval}\,(-s,s)\subset\mathbb{R},s>0\big\},
\end{equation*}
admits the determinantal representation
\begin{equation}\label{gap}
	E_n(s)=\det\big(1-K_{n,N}\upharpoonright_{L^2((-s,s),\d x)}\big),
\end{equation}
where $K_{n,N}$ is the finite rank operator with kernel \eqref{e:37}. Formula \eqref{gap} goes back to the seminal works of Dyson, Gaudin and Mehta in the 1970s and it is the analogue of the emptiness formation probability formula \eqref{e:31} in the spin chain model. Thus, in order to compute the likelihood of large eigenvalue gaps from \eqref{gap}, we need to carefully analyze the asymptotic behavior of the kernel \eqref{e:37}. However, in all but a few cases, the orthogonal polynomials that built the kernel are not known explicitly, so how can we effectively use \eqref{e:37}? Well, as it happens, the polynomials $\{p_n(x)\}_{n=0}^{\infty}$ and the kernel $K_{n,N}(x,y)$ itself can be characterized through a Hilbert boundary value problem. This remarkable fact was discovered by Fokas, Its and Kitaev \cite{FIK1,FIK2} in the early 1990s and the details are as follows:
\begin{prob}\label{HP6} For any $n\in\mathbb{Z}_{\geq 0}$ and $N\in\mathbb{Z}_{\geq 1}$, determine $X(z)=X(z;n,N)\in\mathbb{C}^{2\times 2}$ such that
\begin{enumerate}
	\item[(1)] $X(z)$ is analytic in $\mathbb{C}\setminus\mathbb{R}$ and extends continuously from either side to the real axis.
	\item[(2)] On the real axis, the pointwise limits
	\begin{equation*}
		X_{\pm}(z):=\lim_{\epsilon\downarrow 0}X(z\pm\im\epsilon)
	\end{equation*}
	satisfy the boundary condition $X_+(z)=X_-(z)G(z)$ with
	\begin{equation}\label{e:38}
		z\in\mathbb{R}:\ \ \ \ \ \ G(z)=\begin{bmatrix}1 & \e^{-NV(z)}\\ 0 & 1\end{bmatrix}\in\textnormal{GL}(2,\mathbb{C}).
	\end{equation}
	\item[(3)] Near $z=\infty$ we enforce the asymptotic behavior
	\begin{equation*}
		X(z)=\Big(\mathbb{I}+X_1z^{-1}+\mathcal{O}\big(z^{-2}\big)\Big)z^{n\sigma_3}, \ \ |z|\rightarrow\infty.
	\end{equation*}
\end{enumerate}
\end{prob}
Fokas, Its and Kitaev showed that this problem is uniquely solvable for a given $n\in\mathbb{Z}_{\geq 0}$ if and only if the $n$th monic orthogonal polynomial $p_n(x)$ exists. Moreover, in case when its solvability is ensured, then we have for any $x,y\in\mathbb{R}$,
\begin{equation}\label{e:400}
	p_n(x)=X^{11}(x;n,N)\ \ \ \ \textnormal{and}\ \ \ \ K_{n,N}(x,y)=\e^{-\frac{N}{2}V(x)}\left[\frac{(X_+(y))^{-1}X_+(x)}{2\pi\im(x-y)}\right]^{21}\e^{-\frac{N}{2}V(y)},
\end{equation}
in terms of the $(11)$-entry of $X(z)$ for the polynomial and the $(21)$-entry of the matrix product for the kernel. Here are the details: assuming the solvability of Problem \ref{HP6}, the jump constraint \eqref{e:38} states that the first column of $X(z)$ is an entire function,
\begin{align}
	X_+^{11}(z)=X_-^{11}(z),\ \ \ X_+^{21}(z)=X_-^{21}(z),\ \ \ \ X_+^{12}(z)=&\,\,X_-^{12}(z)+X_-^{11}(z)\e^{-NV(z)}\label{e:39}\\
	X_+^{22}(z)=&\,\,X_-^{22}(z)+X_-^{21}(z)\e^{-NV(z)},\ \ z\in\mathbb{R}.\nonumber
\end{align}
But the asymptotic normalization at $z=\infty$ enforces at the same time that
\begin{equation*}
	X^{11}(z)=z^n+X_1^{11}z^{n-1}+\mathcal{O}\big(z^{n-2}\big),\ \ \ \ \ X^{21}(z)=X_1^{21}z^{n-1}+\mathcal{O}\big(z^{n-2}\big),
\end{equation*}
and
\begin{equation}\label{e:40}
	X^{12}(z)=X_1^{12}z^{-n-1}+\mathcal{O}\big(z^{-n-2}\big),\ \ \ \ \ \ X^{22}(z)=z^{-n}+X_1^{22}z^{-n-1}+\mathcal{O}\big(z^{-n-2}\big).
\end{equation}
Hence, $X^{11}(z)$ must be a monic polynomial of degree $n$ and $X^{21}(z)$ a polynomial of degree at most $n-1$ by Liouville's theorem. Returning with this information to \eqref{e:39} and using the Plemelj-Sokhotski formula we then find for $z\in\mathbb{C}\setminus\mathbb{R}$,
\begin{equation}\label{e:4000}
	X^{12}(z)=\frac{1}{2\pi\im}\int_{-\infty}^{\infty}X^{11}(w)\e^{-NV(w)}\frac{\d w}{w-z}\ \ \ \textnormal{and}\ \ \ X^{22}(z)=\frac{1}{2\pi\im}\int_{-\infty}^{\infty}X^{21}(w)\e^{-NV(w)}\frac{\d w}{w-z}.
\end{equation}
But geometric progression yields for $z\notin\mathbb{R}$,
\begin{equation*}
	X^{12}(z)=-\frac{1}{2\pi\im}\sum_{k=0}^{n-1}\frac{1}{z^{k+1}}\int_{-\infty}^{\infty}X^{11}(w)w^k\e^{-NV(w)}\,\d w+\frac{z^{-n}}{2\pi\im}\int_{-\infty}^{\infty}X^{11}(w)\e^{-NV(w)}\frac{w^n}{w-z}\,\d w
\end{equation*}
and thus after comparison with \eqref{e:40}
\begin{equation*}
	\int_{-\infty}^{\infty}X^{11}(w)w^k\e^{-NV(w)}\,\d w=0,\ \ \ k=0,\ldots,n-1.
\end{equation*}
Since $X^{11}(z)$ has already been found to be a monic polynomial, it must therefore be the $n$th monic orthogonal polynomial $p_n(x)$ for the measure $\e^{-NV(x)}\,\d x$ on $\mathbb{R}$, so the first identity in \eqref{e:400} follows. The kernel identity is slightly more complicated and involves $X^{21}(z)$: From the integral formula of $X^{22}(z)$ in \eqref{e:4000}, by geometric progression for $z\in\mathbb{C}\setminus\mathbb{R}$,
\begin{equation*}
	X^{22}(z)=-\frac{1}{2\pi\im}\sum_{k=0}^{n-2}\frac{1}{z^{k+1}}\int_{-\infty}^{\infty}X^{21}(w)w^k\e^{-NV(w)}\,\d w+\frac{z^{-n+1}}{2\pi\im}\int_{-\infty}^{\infty}X^{21}(w)\e^{-NV(w)}\frac{w^{n-1}}{w-z}\,\d w,
\end{equation*}
so after comparison with \eqref{e:40},
\begin{equation*}
	\int_{-\infty}^{\infty}X^{21}(w)w^k\e^{-NV(w)}\,\d w=0,\ \ \ k=0,\ldots,n-2;\ \ \ \ \frac{\im}{2\pi}\int_{-\infty}^{\infty}X^{21}(w)w^{n-1}\e^{-NV(w)}\,\d w=1.
\end{equation*}
This shows that $X^{21}(z)$ must be proportional to $p_{n-1}(z)$, i.e. we have for some $\gamma_{n-1}\in\mathbb{C}\setminus\{0\}$,
\begin{equation*}
	X^{21}(z)=\gamma_{n-1}p_{n-1}(z),\ \ \ z\in\mathbb{C},
\end{equation*}
and thus by orthogonality and the very definition of the sequence $\{p_n(x)\}_{n=0}^{\infty}$,
\begin{equation*}
	-2\pi\im=\int_{-\infty}^{\infty}X^{21}(w)w^{n-1}\e^{-NV(w)}\,\d w=\gamma_{n-1}\int_{-\infty}^{\infty}p_{n-1}^2(w)\e^{-NV(w)}\,\d w=\gamma_{n-1}h_{n-1}^2.
\end{equation*}
Next we note that any solution of Problem \ref{HP6} must be unimodular in the entire plane, i.e. $\det X(z)\equiv 1$ for all $z\in\mathbb{C}$. Indeed, the function $\det X(z)$ is analytic for $z\in\mathbb{C}\setminus\mathbb{R}$ and continuous on the closed upper and lower half-planes with
\begin{equation*}
	\det X_+(z)=\det X_-(z),\ \ \ z\in\mathbb{R}
\end{equation*}
since $\det G(z)\equiv 1$. Hence, $\det X(z)$ is entire and with $\det X(z)\rightarrow 1$ from Problem \ref{HP6}, condition (3) we indeed arrive at $\det X(z)\equiv 1$. This in mind, we now use the Christoffel-Darboux identity in \eqref{e:37},
\begin{equation}\label{e:41}
	K_{n,N}(x,y)=\e^{-\frac{N}{2}V(x)}\e^{-\frac{N}{2}V(y)}\frac{1}{h_{n-1}^2}\frac{p_n(x)p_{n-1}(y)-p_{n-1}(x)p_n(y)}{x-y}
\end{equation}
and our formul\ae\,$X^{11}(z)=p_n(z),X^{21}(z)=\gamma_{n-1}p_{n-1}(z)$ with $\gamma_{n-1}h_{n-1}^2=-2\pi\im$. A simple matrix multiplication and comparison with \eqref{e:41} leads to the second identity in \eqref{e:400}. Once more we summarize our discussion: the Hilbert boundary value problem \eqref{HP6} allows us to access both, orthogonal polynomials and their Christoffel-Darboux kernel. In turn, the Riemann-Hilbert characterization allows us to rigorously analyze relevant statistical quantities in the random matrix model \eqref{e:36}, provided we can efficiently derive a large $n$ asymptotic expansion for $X(z;n,N)$. The development of such an efficient scheme has been one of the many highlights over the past 30 years in the Riemann-Hilbert toolbox and we shall return to this important advancement after the next section.
\section{Hilbert boundary value problems in $L^2$-spaces}\label{weakRHP}
The boundary values problems encountered in the last section are different from Plemelj's initial problem \ref{HP1} in that they include jump contours extending to infinity, jump contours with open ends and jump contours that self-intersect. It is therefore not obvious how Plemelj's solvability proof can be lifted to the problems of Section \ref{ex:sec}. For this reason we now give a short overview of the relevant theory in which Plemelj's analysis of \eqref{e:7} in a space of continuous functions is extended to spaces of integrable functions - for a more thorough discussion we direct the interested reader to the articles \cite{Z,DZ0} by Zhou and Deift, Zhou. Here are the basic two assumptions of this section:

\begin{assu}\label{assu:2} Let $\Sigma$ be a contour consisting of a finite union of smooth oriented curves in $\mathbb{CP}^1$ with finitely many self-intersections. 
\end{assu}
\begin{assu}\label{assu:3} Let $G:\Sigma\rightarrow\textnormal{GL}(p,\mathbb{C})$ be a map such that $G,G^{-1}\in L^{\infty}(\Sigma,|\d z|), G-\mathbb{I}\in L^2(\Sigma,|\d z|)$ and $G$ has zero winding.
\end{assu}
Now consider the following extended $L^2$-Hilbert boundary value problem.
\begin{prob}[The $L^2$-problem]\label{HP7} Given a pair $(\Sigma,G)$ subject to Assumptions \ref{assu:2} and \ref{assu:3}, determine two $\mathbb{C}^{p\times p}$-valued functions $X_{\pm}\in L^2(\Sigma,|\d z|)$ such that
\begin{enumerate}
	\item[(1)] There exists a $\mathbb{C}^{p\times p}$-valued function $H\in L^2(\Sigma,|\d z|)$ so that
	\begin{equation*}
		X_{\pm}(z)-\mathbb{I}=\frac{1}{2\pi\im}\int_{\Sigma}H(w)\frac{\d w}{w-z_{\pm}},
	\end{equation*}
	where
	\begin{equation*}
		\int_{\Sigma}H(w)\frac{\d w}{w-z_{\pm}}:=\lim_{\lambda\rightarrow z_{\pm}}\int_{\Sigma}H(w)\frac{\d w}{w-\lambda}
	\end{equation*}
	denotes the a.e. existent non-tangential limits from the $\pm$ side of $\Sigma$. 
	\item[(2)] We have $X_+(z)=X_-(z)G(z)$ a.e. on $\Sigma$.
\end{enumerate}
\end{prob}
Note how the abstract Problem \ref{HP7} captures all our examples in Section \ref{ex:sec} and Plemelj's problem \ref{HP1}: by condition (1), and the Plemelj-Sokhotski formula in $L^2(\Sigma,|\d z|)$, see \cite[Chapter II]{St}, the functions $X_{\pm}$ are the $L^2$-boundary values of the \textit{extension}
\begin{equation}\label{e:42}
	X(z):=\mathbb{I}+\frac{1}{2\pi\im}\int_{\Sigma}H(w)\frac{\d w}{w-z},\ \ \ z\in\mathbb{C}\setminus\Gamma.
\end{equation}
But \eqref{e:42} is analytic in $\mathbb{C}\setminus\Gamma$, has non-tangential limits $X_{\pm}(z)\in L^2(\Sigma,|\d z|)$ a.e. on $\Gamma$, those satisfy $X_+(z)=X_-(z)G(z)$ a.e. on $\Gamma$ and we have $X(z)\rightarrow\mathbb{I}$ as $z\rightarrow\infty$. 
\begin{rem}\label{L2bound} If $f\in L^q(\Sigma,|\d z|)$ for some $1\leq q<\infty$, then the Cauchy operators
\begin{equation*}
	(C_{\pm}f)(z):=\int_{\Sigma}f(w)\frac{\d w}{w-z_{\pm}},\ \ \ z\in\Sigma
\end{equation*}
exist a.e. for $z\in\Sigma$ and if $1<q<\infty$,
\begin{equation}\label{crucialbound}
	\|C_{\pm}f\|_{L^q(\Sigma,|\d z|)}\leq c\|f\|_{L^q(\Sigma,|\d z|)},\ \ \ c=c(q)>0.
\end{equation}
This says that the Cauchy operators are, in particular, bounded linear operators on $L^2(\Sigma,|\d z|)$.
\end{rem}
Next, we derive the analogue of Plemelj's singular integral equation \eqref{e:7} for Problem \ref{HP7}: if $X_{\pm}(z)$ solve Problem \ref{HP7}, then by the Plemelj-Sokhotski formula and condition (2),
\begin{equation}\label{e:43}
	H(z)=X_+(z)-X_-(z)=X_-(z)\big(G(z)-\mathbb{I}\big)\ \ \ \ \ \textnormal{a.e. on}\,\,\Sigma.
\end{equation}
On the other hand
\begin{equation*}
	K(z):=\frac{1}{2\pi\im}\int_{\Sigma}X_-(w)\big(G(w)-\mathbb{I}\big)\frac{\d w}{w-z},\ \ z\in\mathbb{C}\setminus\Sigma
\end{equation*}
satisfies, again by Plemelj-Sokhotski,
\begin{equation*}
	K_{\pm}(z)=-X_-(z)+\frac{1}{2\pi\im}\int_{\Sigma}X_-(w)\big(G(w)-\mathbb{I}\big)\frac{\d w}{w-z_-}\pm X_{\pm}(z)\ \ \ \ \ \textnormal{a.e. on}\,\,\Sigma,
\end{equation*}
so by property (1) and \eqref{e:43}, $\rho(z):=X_-(z)$ solves the singular integral equation
\begin{equation}\label{e:44}
	\rho(z)=\mathbb{I}+\frac{1}{2\pi\im}\int_{\Sigma}\rho(w)\big(G(w)-\mathbb{I}\big)\frac{\d w}{w-z_-}\ \ \ \ \ \ \textnormal{a.e. on}\,\,\Sigma.
\end{equation}
Moreover, starting from \eqref{e:44} we can construct a solution of Problem \ref{HP7}:
\begin{prop}\label{van} Every solution pair $X_{\pm}\in L^2(\Sigma,|\d z|)$ of Problem \ref{HP7} leads to a solution of the singular integral equation
\begin{equation*}
	X_-(z)=\mathbb{I}+\frac{1}{2\pi\im}\int_{\Sigma}X_-(w)\big(G(w)-\mathbb{I}\big)\frac{\d w}{w-z_-}.
\end{equation*}
Conversely, if $\rho(z)$ solves \eqref{e:44} with $\rho-\mathbb{I}\in L^2(\Sigma,|\d z|)$, then the non-tangential limits $X_{\pm}(z)$ of
\begin{equation}\label{e:45}
	X(z):=\mathbb{I}+\frac{1}{2\pi\im}\int_{\Sigma}\rho(w)\big(G(w)-\mathbb{I}\big)\frac{\d w}{w-z},\ \ \ z\in\mathbb{C}\setminus\Sigma
\end{equation}
solve Problem \ref{HP7}.
\end{prop}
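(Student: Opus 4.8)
The proof splits into the two stated implications, and the forward one has essentially already been carried out. For the forward direction I would note that if $X_\pm \in L^2(\Sigma,|\d z|)$ solve Problem \ref{HP7}, then the extension \eqref{e:42} is the Cauchy transform of $H = X_+ - X_-$, the $L^2$ Plemelj--Sokhotski formula (Remark \ref{L2bound}) gives $H = X_-(G - \mathbb{I})$ a.e.\ via condition (2), and taking the non-tangential limit of \eqref{e:42} from the $-$ side then yields exactly $X_-(z) = \mathbb{I} + \frac{1}{2\pi\im}\int_\Sigma X_-(w)(G(w) - \mathbb{I})\frac{\d w}{w - z_-}$. This is the chain of identities displayed just above \eqref{e:44}, so nothing new is required.

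For the converse, suppose $\rho$ solves \eqref{e:44} with $\rho - \mathbb{I} \in L^2(\Sigma,|\d z|)$ and set $H := \rho(G - \mathbb{I})$. The first step is the bookkeeping that makes the Cauchy machinery applicable: writing $H = (G - \mathbb{I}) + (\rho - \mathbb{I})(G - \mathbb{I})$ and invoking $G - \mathbb{I} \in L^2(\Sigma,|\d z|) \cap L^\infty(\Sigma,|\d z|)$ from Assumption \ref{assu:3} together with $\rho - \mathbb{I} \in L^2(\Sigma,|\d z|)$, the first summand lies in $L^2$ and the second is an $L^2$ function times an $L^\infty$ function, hence also in $L^2$; thus $H \in L^2(\Sigma,|\d z|)$. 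Therefore $X(z)$ in \eqref{e:45} equals $\mathbb{I} + \frac{1}{2\pi\im}\int_\Sigma H(w)\frac{\d w}{w - z}$ with an $L^2$ density, so by Remark \ref{L2bound} its non-tangential limits $X_\pm$ exist a.e.\ on $\Sigma$, satisfy $X_\pm - \mathbb{I} \in L^2(\Sigma,|\d z|)$, and obey condition (1) of Problem \ref{HP7} with this choice of $H$.

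It remains to verify the jump relation, and the decisive observation is that $X_-$ reproduces $\rho$. Indeed, by the $L^2$ Plemelj--Sokhotski formula the $-$ boundary value of \eqref{e:45} is $X_-(z) = \mathbb{I} + \frac{1}{2\pi\im}\int_\Sigma \rho(w)(G(w) - \mathbb{I})\frac{\d w}{w - z_-}$, which is precisely the right-hand side of \eqref{e:44}; hence $X_-(z) = \rho(z)$ a.e.\ on $\Sigma$. Combining this with the jump $X_+ - X_- = H = \rho(G - \mathbb{I})$ (again Plemelj--Sokhotski) gives $X_+(z) = \rho(z) + \rho(z)(G(z) - \mathbb{I}) = \rho(z)G(z) = X_-(z)G(z)$ a.e.\ on $\Sigma$, which is condition (2). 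The only genuinely delicate point anywhere in the argument is the $L^2$-membership check just described; once it is in place everything else is an algebraic manipulation of the Plemelj--Sokhotski identities, so I expect no serious obstacle.
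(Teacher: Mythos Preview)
Your proof is correct and follows essentially the same route as the paper's own argument: the forward direction is indeed the chain of identities derived just before \eqref{e:44}, and for the converse both you and the paper verify $\rho(G-\mathbb{I})\in L^2$, deduce condition (1), identify $X_-=\rho$ via Plemelj--Sokhotski applied to \eqref{e:44}, and then read off the jump $X_+-X_-=\rho(G-\mathbb{I})=X_-(G-\mathbb{I})$. Your decomposition $H=(G-\mathbb{I})+(\rho-\mathbb{I})(G-\mathbb{I})$ to check $L^2$-membership is a touch more explicit than the paper's one-line appeal to the assumptions, but the logic is identical.
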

\begin{proof} By Assumption \ref{assu:2}, $\rho(G-\mathbb{I})\in L^2(\Sigma,|\d z|)$ if $\rho-\mathbb{I}\in L^2(\Sigma,|\d z|)$, thus the non-tangential limits $X_{\pm}(z)$ of \eqref{e:45} are easily seen to satisfy condition (1) in Problem \ref{HP7}. Moreover, by Plemelj-Sokhotski, we find from \eqref{e:44} that $\rho(z)=Y_-(z)$ a.e. on $\Sigma$ and so, also by Plemelj-Sokhotski, from \eqref{e:45},
\begin{equation*}
	X_+(z)-X_-(z)=\rho(z)\big(G(z)-\mathbb{I}\big)=X_-(z)\big(G(z)-\mathbb{I}\big)\ \ \ \textnormal{a.e. on}\ \ \Sigma
\end{equation*}
which yields condition (2) in Problem \ref{HP7}. This concludes the proof.
\end{proof}
Having established in \eqref{e:44} the analogue of Plemelj's \eqref{e:7}, how do we solve this equation? In short, we also rely on a Fredholm Alternative argument, similar to Lemma \ref{Plem1} and \ref{Plem2}: let 
\begin{equation}\label{e:46}
	G(z)=\big(\mathbb{I}-V^-(z)\big)^{-1}\big(\mathbb{I}+V^+(z)\big)\ \ \textnormal{a.e. on}\ \Sigma,
\end{equation}
denote a pointwise factorization of $G$ with functions $(\mathbb{I}\pm V^{\pm})^{\pm  1}-\mathbb{I}\in L^2(\Sigma,|\d z|)\cap L^{\infty}(\Sigma,|\d z|)$. This factorization allows us to rewrite the central integral equation \eqref{e:44} in the compact form
\begin{equation*}
	\textnormal{a.e. on}\ \ \Sigma:\ \ \ \ \rho(z)=\mathbb{I}+\big(C_w\rho\big)(z),\ \ \ \ \ (C_wf)(z):=\big(C_+(fV^-)\big)(z)+\big(C_-(fV^+)\big)(z)
\end{equation*}
with the Cauchy operators $C_{\pm}$ of Remark \ref{L2bound}. In turn we have the following central result which states that the operator $1-C_w$ is injective if and only if a certain homogeneous version of Problem \ref{HP7} is only trivially solvable.
\begin{prop}[{\cite[Proposition $3.3$]{Z}}]\label{vanprop} $\rho\in L^2(\Sigma,|\d z|)$ solves the homogeneous equation
\begin{equation*}
	\rho(z)=\big(C_w\rho\big)(z)\ \ \ \textnormal{a.e. on}\ \ \Sigma
\end{equation*}
if and only if $X_{\pm}(z):=\rho(z)(\mathbb{I}\pm V^{\pm}(z)),z\in\Sigma$ solves the homogeneous version of Problem \ref{HP7}, i.e. the problem of determining $X_{\pm}\in L^2(\Sigma,|\d z|)$ so that\\

\noindent\,\,$(1)$ There exists a $\mathbb{C}^{p\times p}$-valued function $H\in L^2(\Sigma,|\d z|)$ such that
	\begin{equation*}
		X_{\pm}(z)=\frac{1}{2\pi\im}\int_{\Sigma}H(w)\frac{\d w}{w-z_{\pm}}.
	\end{equation*}
\noindent\,\,$(2)$ We have $X_+(z)=X_-(z)G(z)$ a.e. on $\Sigma$.
\end{prop}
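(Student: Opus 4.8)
The plan is to establish both implications by identifying the pair $X_{\pm}$ with the non-tangential boundary values of a single Cauchy transform and then invoking the $L^2$ Plemelj--Sokhotski calculus. Throughout I write $C_{\pm}$ for the Cauchy operators of Remark \ref{L2bound} (with the customary factor $\tfrac{1}{2\pi\im}$, so that the $L^2$ Plemelj--Sokhotski identity of \cite[Chapter II]{St} reads $C_{+}-C_{-}=\mathrm{Id}$ on $L^2(\Sigma,|\d z|)$), and I set $Cf(z):=\tfrac{1}{2\pi\im}\int_{\Sigma}f(w)\tfrac{\d w}{w-z}$ for $z\notin\Sigma$, with a.e.\ non-tangential limits $C_{\pm}f$. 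Since $\rho\in L^2(\Sigma,|\d z|)$ while $V^{\pm}$ and $(\mathbb{I}\pm V^{\pm})^{\pm1}-\mathbb{I}$ lie in $L^{\infty}(\Sigma,|\d z|)$ by hypothesis, all of $\rho V^{\pm}$, $\rho(\mathbb{I}\pm V^{\pm})$ and $\rho(V^{+}+V^{-})$ belong to $L^2(\Sigma,|\d z|)$, so every Cauchy transform below is legitimate and the bound \eqref{crucialbound} applies.

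The first observation I would record is that, for the specific ansatz $X_{\pm}:=\rho(\mathbb{I}\pm V^{\pm})$, condition $(2)$ of the homogeneous problem is automatic: by the factorization \eqref{e:46},
\begin{equation*}
	X_{-}(z)G(z)=\rho(z)\big(\mathbb{I}-V^{-}(z)\big)\big(\mathbb{I}-V^{-}(z)\big)^{-1}\big(\mathbb{I}+V^{+}(z)\big)=\rho(z)\big(\mathbb{I}+V^{+}(z)\big)=X_{+}(z)
\end{equation*}
a.e.\ on $\Sigma$, regardless of $\rho$. Hence the entire content of ``$X_{\pm}$ solves the homogeneous version of Problem \ref{HP7}'' is condition $(1)$, the existence of $H\in L^2(\Sigma,|\d z|)$ with $X_{\pm}=C_{\pm}H$. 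If such an $H$ exists, then $C_{+}-C_{-}=\mathrm{Id}$ forces $H=X_{+}-X_{-}=\rho(V^{+}+V^{-})$; conversely if $H:=\rho(V^{+}+V^{-})$ reproduces $X_{\pm}$ then $(1)$ holds with this choice. So condition $(1)$ is \emph{equivalent} to the single pair of a.e.\ identities $C_{\pm}\big(\rho(V^{+}+V^{-})\big)=\rho(\mathbb{I}\pm V^{\pm})$, and the whole proposition reduces to matching this pair with $\rho=C_{w}\rho$.

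To carry out that match, I split the argument and use $C_{+}=C_{-}+\mathrm{Id}$ twice. On the one hand,
\begin{equation*}
	C_{+}\big(\rho(V^{+}+V^{-})\big)=C_{+}(\rho V^{+})+C_{+}(\rho V^{-})=\rho V^{+}+C_{-}(\rho V^{+})+C_{+}(\rho V^{-})=\rho V^{+}+(C_{w}\rho),
\end{equation*}
and on the other hand, writing $C_{-}=C_{+}-\mathrm{Id}$,
\begin{equation*}
	C_{-}\big(\rho(V^{+}+V^{-})\big)=C_{-}(\rho V^{+})+C_{-}(\rho V^{-})=C_{-}(\rho V^{+})+C_{+}(\rho V^{-})-\rho V^{-}=(C_{w}\rho)-\rho V^{-},
\end{equation*}
where in both lines I used the definition $C_{w}f=C_{+}(fV^{-})+C_{-}(fV^{+})$. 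Subtracting $\rho(\mathbb{I}\pm V^{\pm})=\rho\pm\rho V^{\pm}$ from these two displays shows at once that $C_{\pm}(\rho(V^{+}+V^{-}))=\rho(\mathbb{I}\pm V^{\pm})$ holds if and only if $C_{w}\rho=\rho$. Combined with the reduction of the previous paragraph this gives the biconditional: for ($\Rightarrow$) one reads off that $H:=\rho(V^{+}+V^{-})$ witnesses condition $(1)$ for $X:=C(H)$ (condition $(2)$ being free), and for ($\Leftarrow$) the representation $X_{\pm}=C_{\pm}H$ forces $H=X_{+}-X_{-}=\rho(V^{+}+V^{-})$ and the very same displays then collapse to $\rho=C_{w}\rho$.

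The manipulations above are purely formal, so the one step I would treat carefully — the genuine input rather than bookkeeping — is the $L^2$ Cauchy framework underpinning them on the possibly self-intersecting contour permitted by Assumption \ref{assu:2}: one must know that for $f\in L^2(\Sigma,|\d z|)$ the non-tangential limits $C_{\pm}f$ exist a.e., define bounded operators on $L^2$, satisfy $C_{+}-C_{-}=\mathrm{Id}$, and that a pair $(X_{+},X_{-})\in L^2\times L^2$ arises as $(C_{+}H,C_{-}H)$ for some $H\in L^2$ precisely when $X_{\pm}=C_{\pm}(X_{+}-X_{-})$. All of this is the Cauchy-operator theory on Lipschitz curves quoted in \cite[Chapter II]{St} and already invoked in Remark \ref{L2bound} and Proposition \ref{van}; granted it, Proposition \ref{vanprop} is exactly the identity displayed above between the factorization \eqref{e:46} and the operator $C_{w}$. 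The only residual point is the behaviour at the finitely many self-intersections of $\Sigma$, where — as in the statement — everything is read in the a.e.\ sense with Cauchy limits taken non-tangentially.
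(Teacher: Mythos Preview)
Your argument is correct and is essentially the standard proof of this fact. Note, however, that the paper does not actually prove Proposition \ref{vanprop}: it merely quotes the result from \cite[Proposition $3.3$]{Z} and moves on, so there is no ``paper's own proof'' to compare against. Your reduction---observing that condition $(2)$ is automatic from the factorization \eqref{e:46}, that condition $(1)$ forces $H=\rho(V^{+}+V^{-})$ via $C_{+}-C_{-}=\mathrm{Id}$, and then checking by a one-line Plemelj--Sokhotski manipulation that $C_{\pm}\big(\rho(V^{+}+V^{-})\big)=\rho(\mathbb{I}\pm V^{\pm})$ is equivalent to $\rho=C_{w}\rho$---is exactly the argument one finds in Zhou's paper, and your caveats about the $L^2$ Cauchy framework on contours with self-intersections are the right ones to flag.
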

In order to state the Fredholm Alternative needed in the solvability analysis of Problem \ref{HP7} we note that all our jump matrices encountered in this note possess much more regularity than what is assumed in Assumption \ref{assu:3}, in fact the matrices are piecewise smooth, admit local analytical continuations and at possible intersection points satisfy a cyclic constraint in the style of \eqref{e:4} - for instance in Problem \ref{HP4} the total monodromy at $z=0$ is trivial\footnote{The importance of such cyclic constraints, viewed as formal power series identities, cannot be overemphasized, see \cite[$\S$25]{BDT} for further details and context.}. These properties (see \cite[page $103-104$]{FIKN} for a formalization) together with Assumptions \ref{assu:2} and \ref{assu:3} ensure that the following far reaching generalizations of Lemma \ref{Plem1} and \ref{Plem2} hold true, at least for all matrix-valued Hilbert boundary value problems in this article.
\begin{prop}[{\cite[Proposition $4.3$]{Z}}]\label{anaFred} If the factors $V^{\pm}(z)=V^{\pm}(z;\zeta)$ in \eqref{e:46} depend on a parameter $\zeta$ analytically, then either $(1-C_w)^{-1}$ is meromorphic in $\zeta$ or $1-C_w$ is invertible for no $\zeta$.
\end{prop}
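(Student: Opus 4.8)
The plan is to derive the dichotomy from the \emph{analytic Fredholm theorem}: if $\zeta\mapsto T(\zeta)$ is an operator-valued holomorphic map on a connected domain, taking values in the compact operators on a Hilbert space, then $(1-T(\zeta))^{-1}$ is either nowhere defined or meromorphic in $\zeta$. The obstacle is that $C_w$ itself is not compact — the Cauchy operators $C_{\pm}$ of Remark \ref{L2bound} are merely bounded, cf. \eqref{crucialbound} — so the real content is to regularize $1-C_w$ modulo compacts by an operator family that still depends holomorphically on $\zeta$.

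First I would record two routine facts. One: for each value of the parameter, $C_w$ is a bounded operator on $L^2(\Sigma,|\d z|)$, immediately from \eqref{crucialbound} and $V^{\pm}\in L^{\infty}(\Sigma,|\d z|)$. Two: since $V^{\pm}(z)=V^{\pm}(z;\zeta)$ depends holomorphically on $\zeta$ with values in $L^{\infty}\cap L^2$ while the $C_{\pm}$ are fixed, the maps $\zeta\mapsto C_{w(\zeta)}$ and $\zeta\mapsto C_{w'(\zeta)}$ — the latter built from the factorization $G^{-1}=(\mathbb{I}+V^{+})^{-1}(\mathbb{I}-V^{-})$ — are holomorphic into $\mathcal{B}\big(L^2(\Sigma,|\d z|)\big)$. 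Now comes the heart of the argument, the algebraic identity going back to Beals--Coifman and used in the present generality by Zhou \cite{Z} (see also \cite{BDT}): exploiting the projection identity $C_+-C_-=1$ one obtains
\[
(1-C_{w(\zeta)})(1-C_{w'(\zeta)})=1+T_1(\zeta),\qquad (1-C_{w'(\zeta)})(1-C_{w(\zeta)})=1+T_2(\zeta),
\]
where each $T_i(\zeta)$ is a finite sum of terms of the form $C_{\pm}$ applied to products $V^{+}V^{-}$ together with commutators $[C_{\pm},M_{V^{\mp}}]M_{V^{\pm}}$ of the Cauchy operators with operators $M_V$ of multiplication by $V$. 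Under the extra regularity of the jump matrix recalled just before the statement — piecewise smoothness with local analytic continuation, the cyclic constraint at self-intersection points, and the decay of $G-\mathbb{I}$ near $z=\infty$ and near the endpoints of the arcs of $\Sigma$ — each such term is compact: the commutators by Calderón-type estimates (the commutator of a Cauchy singular integral with multiplication by a piecewise-smooth function is compact), the self-intersection contributions are controlled by the cyclic constraint exactly as in \eqref{e:4}, and the leftover pieces are Hilbert--Schmidt by the $L^2$-decay. Hence $\zeta\mapsto T_i(\zeta)$ is a holomorphic family of compact operators, so $1-C_{w(\zeta)}$ and $1-C_{w'(\zeta)}$ are Fredholm for every $\zeta$; deforming $V^{\pm}$ to $0$ — admissible because $G$ has zero winding, which is precisely what connects $G$ to $\mathbb{I}$ through admissible jump matrices — keeps them Fredholm while at the endpoint they equal $1$, whence both have index $0$. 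Thus $1-C_{w(\zeta)}$ is invertible iff it is injective, i.e. (Proposition \ref{vanprop}) iff the homogeneous $L^2$-problem for $G(\cdot;\zeta)$ has only the trivial solution, and the same comparison applied to $G^{-1}$ shows $1-C_{w(\zeta_0)}$ is invertible iff $1-C_{w'(\zeta_0)}$ is.

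To conclude the dichotomy: if $1-C_{w(\zeta)}$ is invertible for no $\zeta$ we are in the first alternative. Otherwise fix $\zeta_0$ with $1-C_{w(\zeta_0)}$ invertible; then $1-C_{w'(\zeta_0)}$ is invertible too, so $1+T_1(\zeta_0)=(1-C_{w(\zeta_0)})(1-C_{w'(\zeta_0)})$ is invertible. Applying the analytic Fredholm theorem to the holomorphic compact family $T_1(\zeta)$ on the connected parameter domain shows $(1+T_1(\zeta))^{-1}$ is meromorphic, and then
\[
(1-C_{w(\zeta)})^{-1}=(1-C_{w'(\zeta)})\,(1+T_1(\zeta))^{-1}
\]
exhibits $(1-C_w)^{-1}$ as the product of a holomorphic and a meromorphic operator family, hence meromorphic — the second alternative.

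The main obstacle is the middle step: verifying that the error operators $T_i(\zeta)$ are genuinely compact. This fails under Assumptions \ref{assu:2}--\ref{assu:3} alone (boundedness of $G^{\pm1}$ and $L^2$-smallness of $G-\mathbb{I}$ do not control the commutators), and it is exactly here that the piecewise smoothness, the cyclic constraints at self-intersections, and the decay at the endpoints and at infinity — the additional structure enjoyed by every jump matrix in this article and made precise in the discussion preceding the statement — are indispensable. The Calderón commutator bounds and the bookkeeping at the singular points of $\Sigma$ are the technical crux, and are carried out in \cite{Z}.
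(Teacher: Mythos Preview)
The paper does not supply its own proof of this proposition: it is stated as a direct citation of \cite[Proposition~4.3]{Z}, with the surrounding text only pointing to the extra regularity hypotheses (piecewise smoothness, local analytic continuation, cyclic constraints at self-intersections) that make the Fredholm machinery applicable. There is therefore no in-paper argument to compare against.

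That said, your sketch is faithful to Zhou's actual proof in \cite{Z}. The key steps---regularizing $1-C_w$ via the Beals--Coifman parametrix identity $(1-C_w)(1-C_{w'})=1+T$ with $T$ compact, reading off index zero by homotoping $V^{\pm}\to 0$ (using the zero-winding hypothesis), and then invoking the analytic Fredholm alternative on the holomorphic compact family $T(\zeta)$---are exactly the ingredients Zhou assembles. Your honest flagging of the compactness of $T_i(\zeta)$ as the technical crux, and that it genuinely requires the extra regularity beyond Assumptions~\ref{assu:2}--\ref{assu:3}, is also accurate and matches the paper's own caveat in the paragraph preceding the proposition.
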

Proposition \ref{anaFred} is an analytic Fredholm theorem and together with Proposition \ref{van} yields two central results, given that $1-C_w$ is a Fredholm operator of index zero, compare Remark \ref{tech}.
\begin{theo}[{\cite[Proposition $4.2$]{Z}}] If the factors $V^{\pm}(z)=V^{\pm}(z;\zeta)$ in \eqref{e:46} depend on a parameter $\zeta$ analytically, then either $X(z;\zeta)$ in \eqref{e:45} is meromorphic in $\zeta$, or the associated $\zeta$-modified Problem \ref{HP7} is not solvable for any $\zeta$.
\end{theo}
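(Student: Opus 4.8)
The plan is to combine the two tools the paper has just assembled: the equivalence (Proposition~\ref{van}) between Problem~\ref{HP7} and the singular integral equation \eqref{e:44}, and the analytic Fredholm theorem (Proposition~\ref{anaFred}). First I would rewrite \eqref{e:44} in the operator form $(1-C_w(\zeta))(\rho-\mathbb{I})=C_w(\zeta)\mathbb{I}$ on $L^2(\Sigma,|\d z|)$, where $C_w(\zeta)f=C_+(fV^-(\cdot;\zeta))+C_-(fV^+(\cdot;\zeta))$ and $C_w(\zeta)\mathbb{I}=C_+V^-(\cdot;\zeta)+C_-V^+(\cdot;\zeta)$, the latter lying in $L^2(\Sigma,|\d z|)$ by \eqref{crucialbound} and the factorization \eqref{e:46}. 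Since $C_{\pm}$ are fixed bounded operators and the factors $V^{\pm}(\cdot;\zeta)$ depend on $\zeta$ analytically, $\zeta\mapsto C_w(\zeta)$ is a norm-analytic family of bounded operators and $\zeta\mapsto C_w(\zeta)\mathbb{I}$ is an analytic $L^2$-valued function; moreover $1-C_w$ is Fredholm of index zero (cf. Remark~\ref{tech}). Hence Proposition~\ref{anaFred} applies and yields the dichotomy: either $(1-C_w(\zeta))^{-1}$ is meromorphic in $\zeta$, or $1-C_w(\zeta)$ is invertible for no $\zeta$.

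Next I would treat the first alternative. Away from the discrete polar set of $(1-C_w(\zeta))^{-1}$ one has $\rho(\cdot;\zeta)-\mathbb{I}=(1-C_w(\zeta))^{-1}C_w(\zeta)\mathbb{I}$, which is therefore a meromorphic $L^2(\Sigma,|\d z|)$-valued function of $\zeta$. Writing $\rho(G-\mathbb{I})=(G-\mathbb{I})+(\rho-\mathbb{I})(G-\mathbb{I})$, the first summand is $\zeta$-independent and lies in $L^2(\Sigma,|\d z|)$, while the second is a product of two $L^2$ factors, hence a meromorphic $L^1(\Sigma,|\d z|)$-valued function of $\zeta$. For each fixed $z\in\mathbb{C}\setminus\Sigma$ the kernel $w\mapsto(w-z)^{-1}$ belongs to $L^{\infty}(\Sigma,|\d z|)\cap L^2(\Sigma,|\d z|)$, so $\phi\mapsto\int_{\Sigma}\phi(w)(w-z)^{-1}\,\d w$ is a bounded linear functional on $L^1+L^2$; applied entrywise to $\rho(G-\mathbb{I})$ it shows that $X(z;\zeta)$ in \eqref{e:45} is meromorphic in $\zeta$ for every fixed $z\notin\Sigma$. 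By the converse part of Proposition~\ref{van}, at each non-polar $\zeta$ this $X$ actually solves the $\zeta$-modified Problem~\ref{HP7}, so the first alternative of the theorem holds.

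Finally I would handle the second alternative. If $1-C_w(\zeta)$ is invertible for no $\zeta$, then — being Fredholm of index zero — it fails to be injective for every $\zeta$, so by Proposition~\ref{vanprop} the homogeneous version of Problem~\ref{HP7} has a non-trivial solution for every $\zeta$, and, by the same index-zero bookkeeping together with Proposition~\ref{van}, the inhomogeneous Problem~\ref{HP7} is then unsolvable for all $\zeta$. The delicate point, which I expect to be the main obstacle, is precisely this last step: one must upgrade ``the singular integral equation \eqref{e:44} is solvable for the specific right-hand side $C_w\mathbb{I}$'' to ``$1-C_w$ is invertible'', and it is here that the index-zero property and the extra regularity of the jump matrices (the cyclic constraints at self-intersections, the piecewise-smooth structure, local analyticity) enter, via a vanishing/uniqueness argument that is the matrix analogue of Plemelj's Lemmas~\ref{Plem1} and~\ref{Plem2}. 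Everything else reduces to the routine propagation of meromorphy from the resolvent $(1-C_w(\zeta))^{-1}$ through the Cauchy transform \eqref{e:45}.
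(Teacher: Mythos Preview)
Your approach is precisely the one the paper indicates: it does not spell out a proof but states that the result follows from Proposition~\ref{anaFred} together with Proposition~\ref{van}, using that $1-C_w$ is Fredholm of index zero. Your outline implements exactly this, and your identification of the delicate step---passing from ``Problem~\ref{HP7} solvable for some $\zeta_0$'' to ``$1-C_w(\zeta_0)$ invertible''---is accurate; this is where the extra regularity hypotheses and the vanishing-lemma machinery (Proposition~\ref{vanprop}, Theorem~\ref{Zhou}) are genuinely needed, and the paper defers the details to \cite{Z}.

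One slip to correct: in the first alternative you write that $G-\mathbb{I}$ is ``$\zeta$-independent''. It is not---$G$ is built from the $\zeta$-dependent factors $V^{\pm}$ via \eqref{e:46}. What you need (and what holds) is that $\zeta\mapsto G(\cdot;\zeta)-\mathbb{I}$ is \emph{analytic} as an $L^2\cap L^{\infty}$-valued map, which follows from the assumed analyticity of $V^{\pm}$. With this correction the propagation of meromorphy from $(1-C_w(\zeta))^{-1}$ through $\rho-\mathbb{I}$ to $\rho(G-\mathbb{I})$ and finally to $X(z;\zeta)$ via the Cauchy transform goes through exactly as you describe.
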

\begin{theo}[Zhou's vanishing lemma]\label{Zhou} The $L^2$-Hilbert boundary value problem \ref{HP7} is solvable if and only if the corresponding homogeneous version of the problem, see Proposition \ref{vanprop}, is only trivially solvable.
\end{theo}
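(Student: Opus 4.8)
The plan is to transport Problem \ref{HP7} to the singular integral equation \eqref{e:44} and then run a Fredholm argument for the operator $1-C_w$. Using the pointwise factorization \eqref{e:46} one rewrites \eqref{e:44} as $(1-C_w)\rho=\mathbb{I}$, and with $\rho=\mathbb{I}+\sigma$ this becomes $(1-C_w)\sigma=C_w\mathbb{I}$ on $L^2(\Sigma,|\d z|)$, where the right-hand side $C_w\mathbb{I}=C_+(V^-)+C_-(V^+)$ indeed lies in $L^2(\Sigma,|\d z|)$ by Remark \ref{L2bound} together with $V^{\pm}\in L^2(\Sigma,|\d z|)\cap L^{\infty}(\Sigma,|\d z|)$. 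By Proposition \ref{van}, Problem \ref{HP7} is solvable exactly when $C_w\mathbb{I}\in\operatorname{ran}(1-C_w)$, while by Proposition \ref{vanprop} its homogeneous version is only trivially solvable exactly when $\ker(1-C_w)=\{0\}$. Since $1-C_w$ is a Fredholm operator of index zero on $L^2(\Sigma,|\d z|)$ (compare Remark \ref{tech} and the regularity plus cyclic-constraint hypotheses stated before Proposition \ref{anaFred}), injectivity and invertibility of $1-C_w$ are equivalent.

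The ``if'' direction is then immediate: if the homogeneous problem has only the trivial solution, then $\ker(1-C_w)=\{0\}$, so $1-C_w$ is invertible, so $(1-C_w)\sigma=C_w\mathbb{I}$ has the solution $\sigma=(1-C_w)^{-1}C_w\mathbb{I}$, and by Proposition \ref{van} the Cauchy integral \eqref{e:45} produces a (unique) solution of Problem \ref{HP7}.

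For the ``only if'' direction I would argue directly at the level of sectionally analytic functions. Let $X$ solve Problem \ref{HP7} and let $Z$ solve its homogeneous version. The first step is to show $\det X(z)\neq 0$ for every $z$: the scalar $\det X$ satisfies $\det X_+=(\det G)\det X_-$ on $\Sigma$ with $\det X\to 1$ at $\infty$, and because $G$ has zero winding one may factor $\det G=\delta_+/\delta_-$ through a nonvanishing Cauchy exponential $\delta$ with $\delta\to 1$ at $\infty$, so that $\det X/\delta$ is entire and tends to $1$, hence is identically $1$; thus $X$ is invertible throughout $\mathbb{CP}^1$, with $X^{-1}$ again in the appropriate Smirnov class. (Equivalently one invokes solvability of the companion left problem $W_+=G^{-1}W_-,\ W\to\mathbb{I}$ and notes $XW$ is entire and tends to $\mathbb{I}$, so $W=X^{-1}$.) The second step: the product $ZX^{-1}$ has no jump, since $(ZX^{-1})_+=Z_-G(X_-G)^{-1}=Z_-X_-^{-1}=(ZX^{-1})_-$, so it extends to an entire function which tends to $0$ at $\infty$ (as $Z\to 0$ and $X^{-1}\to\mathbb{I}$); by Liouville's theorem $ZX^{-1}\equiv 0$, and invertibility of $X$ forces $Z\equiv 0$. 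Hence solvability of Problem \ref{HP7} forces its homogeneous version to be only trivially solvable, which with the previous paragraph gives the stated equivalence. A second route to this implication is the Fredholm alternative: solvability of $(1-C_w)\sigma=C_w\mathbb{I}$ requires $C_w\mathbb{I}$ to annihilate $\ker(1-C_w)^{\mathrm t}$; one identifies the latter with solutions of the transposed homogeneous problem, which --- because all jump matrices in this article obey a conjugation symmetry rendering the transposed problem equivalent to the original one --- must pair nontrivially with the normalization source $C_w\mathbb{I}$ unless it vanishes.

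The main obstacle is the function-theoretic bookkeeping in the $L^2$/Smirnov setting of the third paragraph: justifying that $\det X$ is a nonvanishing Cauchy exponential and that the entire function $ZX^{-1}$ genuinely belongs to a Hardy-type space near $\Sigma$ so that Liouville's theorem applies. These are the $L^2$ analogues of Plemelj's H\"older-continuity manipulations, and carrying them out requires the Plemelj--Sokhotski formul\ae\ and the Cauchy-operator bounds of Remark \ref{L2bound} on a contour with self-intersections. Establishing that $1-C_w$ is Fredholm of index zero on such a $\Sigma$ --- which underpins the ``if'' direction and the whole framework --- is itself delicate and rests on the cyclic/total-monodromy constraints at the intersection points indicated before Proposition \ref{anaFred}.
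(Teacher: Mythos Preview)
The paper does not actually supply its own proof of this theorem: it is quoted from Zhou's paper \cite{Z} with only the one-line indication that the result follows once one knows $1-C_w$ is Fredholm of index zero on $L^2(\Sigma,|\d z|)$, combined with Propositions \ref{van} and \ref{vanprop}. Your proposal is therefore not competing with a proof in the text but rather filling in what the paper leaves to the reference.

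That said, your sketch is the standard route and matches what is done in \cite{Z}. The ``if'' direction is exactly what the paper's one-line remark encodes: Fredholm of index zero plus Proposition \ref{vanprop} gives injectivity $\Rightarrow$ invertibility of $1-C_w$, and Proposition \ref{van} then converts the resulting $\rho$ into a solution of Problem \ref{HP7}. For the ``only if'' direction your direct Liouville argument ($ZX^{-1}$ is entire and vanishes at $\infty$, hence $Z\equiv 0$) is precisely the mechanism used in Zhou's original paper; your observation that the zero-winding hypothesis in Assumption \ref{assu:3} is what makes $\det X$ identically nonvanishing is the right ingredient. Your alternative route via the transposed kernel is less clean here, since the paper does not assume any conjugation symmetry on $G$ in Problem \ref{HP7}; that symmetry is a feature of the specific examples (e.g.\ Problem \ref{HP3}), not of the abstract setting, so your first argument is the one to keep.

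Your closing paragraph correctly flags the genuine work: showing that $1-C_w$ is Fredholm of index zero on a self-intersecting $\Sigma$ (this is where the cyclic constraints enter), and carrying the determinant and ratio arguments through in the $L^2$/Smirnov-class setting rather than the H\"older setting of Section \ref{Plejcon}. These are exactly the technical points the paper outsources to \cite{Z} and \cite{DZ0}, so your honest accounting of them is appropriate.
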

In the usual, smooth, setting of a Hilbert boundary value problem, the homogeneous version of it has the same analyticity and jump behavior, but the asymptotic normalization at $z=\infty$ is replaced by
\begin{equation*}
	X(z)=\mathcal{O}\big(z^{-1}\big),\ \ \ \ \ |z|\rightarrow\infty.
\end{equation*}
This requirement is completely analogous to the behaviors which Plemelj enforced in his accompanying and associated problems, compare for instance Problem \ref{HP2}.\smallskip

Before moving to the next section, we will give one application of Theorem \ref{Zhou} to Problem \ref{HP3} and en route verify that the defocusing NLS with Cauchy data $y_0(x)\in\mathcal{S}(\mathbb{R})$ is solvable: Suppose $X(z)$ solves Problem \ref{HP3} but with condition (3) replaced by
\begin{equation}\label{e:47}
	X(z)=\mathcal{O}\big(z^{-1}\big),\ \ \ |z|\rightarrow\infty.
\end{equation}
Define $Y(z):=X(z)X^{\dagger}(\bar{z})$ for $z\in\mathbb{C}\setminus\mathbb{R}$ with $X^{\dagger}$ denoting the conjugate transpose matrix of $X$. By property (1) in Problem \ref{HP3} we find that $Y$ is analytic in the upper $z$-plane, continuous down to the real line and by \eqref{e:47} decays of $\mathcal{O}(z^{-2})$ as $z\rightarrow\infty$ in the upper $z$-plane. Hence, by Cauchy's theorem, we have $\int_{\mathbb{R}}Y_+(z)\,\d z=0$, and, adding to this its conjugate transpose, find in turn
\begin{align*}
	0=&\,\,\int_{-\infty}^{\infty}\Big(X_+(z)X_-^{\dagger}(z)+X_-(z)X_+^{\dagger}(z)\Big)\,\d z=\int_{-\infty}^{\infty}X_-(z)\Big(G(z)+G^{\dagger}(z)\Big)X_-^{\dagger}(z)\,\d z\\
	=&\,\,2\int_{-\infty}^{\infty}X_-(z)\begin{bmatrix}1-|r(z)|^2 & 0\\ 0 & 1\end{bmatrix}X_-^{\dagger}(z)\,\d z.
\end{align*}
Now read off the diagonal entries in the last matrix equation and conclude that
\begin{equation*}
	0=\int_{-\infty}^{\infty}\Big(\big|X_-^{k1}(z)\big|^2\big(1-|r(z)|^2\big)+\big|X_-^{k2}(z)\big|^2\Big)\,\d z,\ \ \ \ \ k=1,2,
\end{equation*}
so by the fact that $r\in\mathcal{S}(\mathbb{R})\cap\{r:\,\sup_{z\in\mathbb{R}}|r(z)|<1\}$, cf. \cite{BC}, we have $X_-(z)\equiv 0$ on $\mathbb{R}$ by continuity of $X_-(z)$. In turn, from condition (2) in Problem \ref{HP3} also $X_+(z)\equiv 0$ on $\mathbb{R}$ and thus all together $X(z)$ is analytic in the upper $z$-plane, continuous in the closed upper $z$-plane and
\begin{equation*}
	\|X(z)\|\leq C,\ \ \ \Im z>0,\ \ \ \ \ \ \ \sup_{z\in\mathbb{R}}\|X_+(z)\|=0.
\end{equation*}
Using Carlson's theorem \cite[Theorem $5.1.2$ and Corollary $5.1.3$]{Sim1}, this allows us to conclude that $X(z)\equiv 0$ for $\Im z\geq 0$ and with the same logic for $\Im z\leq 0$ all together $X(z)\equiv 0$. By Zhou's vanishing lemma \ref{Zhou} we then deduce that Problem \ref{HP3} is solvable for any $x\in\mathbb{R},t>0$ and so the solution of the initial value problem \eqref{e:25} with $y_0(x)\in\mathcal{S}(\mathbb{R})$ exists.\bigskip

A similar application of Theorem \ref{Zhou} to Problem \ref{HP4} does not yield the desired solvability result for general choices of $s\in\mathcal{S}$. Indeed, for some choices $s\in\mathcal{S}$, the solution $X(z)=X(z;x,s)$ of Problem \ref{HP4} ceases to exist for certain $x\in\mathbb{R}$ and this is because the corresponding Painlev\'e-II transcendent $u=u(x|s)$ has poles on the real axis, compare Subsection \ref{PIIconn}. On the other hand, the solvability of Problems \ref{HP5} and \ref{HP6} is always guaranteed and can be established without using Theorem \ref{Zhou}.

\section{A Hilbert boundary value problem as Swiss army knife}\label{cool}
By now we have seen several Hilbert boundary value problems in OPSFA related problems and we have acquainted ourselves to a certain extent with their abstract solvability theory in Sections \ref{Plejcon} and \ref{weakRHP}. Still, besides providing a positive solution to RHP \ref{RHP:1} in interpretation (2), a reader unaccustomed to Hilbert boundary value problems might not yet grasp their usefulness: after all, although they seem to linearize nonlinear dynamical systems such as \eqref{e:25} and \eqref{e:28} or rephrase certain Fredholm determinants and orthogonal polynomials, those underlying Problems (i.e. Problems \ref{HP3}, \ref{HP4}, \ref{HP5} and \ref{HP6}) are in all, but a few trivial, cases \textit{not explicitly solvable}. So what is the whole point of a Hilbert boundary value problem? Well, the last 30 years have clearly shown that Hilbert boundary value problems really possess all the fundamental properties of a contour integral formula known, and appreciated, in classical special function theory. To the point, Hilbert boundary value problems underlie a large class of integrable models and as such allow us to\bigskip

$\diamond$ systematically derive dynamical systems (continuous ones, discrete ones or hybrids) for the quantities under consideration. This feature was first used by Plemelj in his work \cite{P0} on Hilbert's 21st problem. We have seen the same approach in action while discussing \eqref{e:25} and \eqref{e:28} and will further discuss it in the upcoming subsections. In these modern applications one must mention the pioneering roles of Shabat and Zakharov in the derivation of nonlinear partial differential equations from Hilbert boundary value problems, the role of Krein \cite{Kre1} in obtaining differential equations from integral equations and last, but not least, the role of Its  \cite[page $377,417$]{KBI} and his emphasis on studying correlation functions in statistical mechanics and quantum field theories via Hilbert boundary value problems and their associated differential equations.\smallskip

$\diamond$ analyze the models asymptotically in their thermodynamical limits. It is this feature which puts Hilbert boundary value problems on the same ground as their linear counterparts, i.e. contour integral formul\ae, and which has turned the \textit{Riemann-Hilbert approach} to OPSFA problems into an unprecedented success story over the past 30 years. However the relevant asymptotic techniques did not grow over night: early progress - using Gelfand-Levitan type integral equations - on the asymptotic analysis of nonlinear wave equations that are solvable by the inverse scattering method was achieved in the early 1970s, namely by Shabat \cite{Sh}, Manakov \cite{Man} and Ablowitz, Newell \cite{AN}. These works were not always fully rigorous but the gaps were covered in the 1980s by Buslaev, Sukhanov, by Novokshenov and by Novokshenov, Sukhanov, see \cite[page $182$]{DIZ0} for references. The first step of using directly a Hilbert boundary value problem for asymptotic questions seems to have originated in the works of Manakov \cite{Man} and Its \cite{Its} on the NLS equation. Their techniques were subsequently extended by Its, Petrov to the sine-Gordon equation, by Bikbaev, Its to the Landau-Lifshitz equation and by Its, Novokshenov to the modified KdV equation, see again \cite[page $182$]{DIZ0} for references to the relevant papers published in the mid 1980s. However, aside from the NLS case, these works ``only" managed to asymptotically localize the initial Hilbert boundary value problems around certain special points. These points are the analogues of stationary phase points in the classical steepest descent method, but the model problems near them could not be explicitly solved, unfortunately. As it turned out, those local Hilbert boundary value problems are precisely the ones one faces in the isomonodromy deformation approach to the Painlev\'e transcendents as pioneered by Its, Novokshenov, Kapaev and Kitaev, cf. \cite{IN}. Thus, up to the early 1990s, one was able to use Hilbert boundary value problems in the asymptotic analysis of nonlinear wave and Painlev\'e equations, however one needed certain a priori information about the solution's behavior. This pitfall was then bypassed in the groundbreaking method of Deift and Zhou \cite{DZ} in 1993, the \textit{nonlinear steepest descent method}. This method, in the 1997 extended version of Deift, Venakides, Zhou \cite{DVZ}, has become the standard tool in the asymptotic analysis of Hilbert boundary value problems and all upcoming asymptotic results in this section have been originally derived from it or re-derived with it.\bigskip

We will now summarize the technical essence of the Deift-Zhou nonlinear steepest descent method: in complete analogy to the classical steepest descent method, the nonlinear steepest descent method exploits the analytic and asymptotic properties of a Hilbert boundary value problem's jump matrix $G(z)$. For instance, in case of Problem \ref{HP3} the jump matrix in \eqref{e:266} is highly oscillatory when $x$ or $t$ tend to infinity. Thus one would like to transform these oscillations to exponentially small contributions through the use of a contour deformation argument. And indeed, noticing that \eqref{e:266} admits the factorizations
\begin{align*}
	z\in\mathbb{R}:\ \ G(z)=&\,\begin{bmatrix}1 & -\bar{r}(z)\e^{-2\im(2tz^2+xz)}\\ 0  & 1\end{bmatrix}\begin{bmatrix}1 & 0\\ r(z)\e^{2\im(2tz^2+xz)} & 1\end{bmatrix}\\
	=&\,\begin{bmatrix}1 & 0\\ \frac{r(z)}{1-|r(z)|^2}\,\e^{2\im(2tz^2+xz)} & 1\end{bmatrix}\begin{bmatrix}1-|r(z)|^2 & 0\smallskip\\ 0 & \frac{1}{1-|r(z)|^2}\end{bmatrix}\begin{bmatrix}1 & -\frac{\bar{r}(z)}{1-|r(z)|^2}\,\e^{-2\im(2tz^2+xz)} \\ 0 & 1\end{bmatrix}
\end{align*}
one may now deform the jump contour in Problem \ref{HP3} (according to the asymptotic regime at hand while taking into account the analytic properties of the reflection coefficient) and thus localize the problem near the relevant stationary point $z_0=-\frac{x}{4t}$. Away from that point, the deformed Hilbert boundary value problem will be asymptotically well-behaved in the sense that its jump matrix converges to the identity in $L^1(\Sigma,|\d z|)\cap L^2(\Sigma,|\d z|)\cap L^{\infty}(\Sigma,|\d z|)$ as the parameters become large. We are thus facing a \textit{small norm problem} away from the stationary point and such a problem is amenable to an iterative solution method. Here is the technical core of the argument:
\begin{theo}[\cite{DZ}]\label{essence} Assume that the jump matrix $G(z)=G(z;t)$ in Problem \ref{HP7} depends on a parameter $t$ and that there exist $\epsilon,t_0,c>0$ such that
\begin{equation}\label{e:48}
	\|G(\cdot;t)-\mathbb{I}\|_{L^1(\Sigma,|\d z|)\cap L^2(\Sigma,|\d z|)\cap L^{\infty}(\Sigma,|\d z|)}\leq \frac{c}{t^{\epsilon}}\ \ \ \forall\,t\geq t_0.
\end{equation}
Then Problem \ref{HP7} is uniquely solvable for $t$ sufficiently large, i.e. there exists $t_1>0$ such that Problem \ref{HP7} is uniquely solvable for all $t\geq t_1$. Moreover, the extension $X=X(z;t)$ \eqref{e:45} defined in terms of the problem's solution satisfies
\begin{equation*}
	\|X(z;t)-\mathbb{I}\|\leq\frac{C}{(1+|z|)t^{\epsilon}}\ \ \ \ \ \forall\,t\geq t_1
\end{equation*}
uniformly on compact subsets of $\mathbb{C}\setminus\Sigma$ with $C>0$.
\end{theo}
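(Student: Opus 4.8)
The statement is the small-norm lemma at the heart of the Deift--Zhou method, and the plan is to carry Plemelj's passage to a singular integral equation through the $L^2$-machinery of Section~\ref{weakRHP} and then invert the relevant operator by a convergent Neumann series, with the smallness~\eqref{e:48} doing all the work. By Proposition~\ref{van}, solving Problem~\ref{HP7} is equivalent to producing a function $\rho$ with $\rho-\mathbb{I}\in L^2(\Sigma,|\d z|)$ solving the singular integral equation~\eqref{e:44}; and once such a $\rho$ is found, the non-tangential limits of the extension~\eqref{e:45} built from it furnish a solution of Problem~\ref{HP7}. In the notation of the factorization~\eqref{e:46} this corresponds to the trivial choice $V^-\equiv 0$, $V^+=G-\mathbb{I}$, so that~\eqref{e:44} reads $\rho=\mathbb{I}+C_w\rho$ with $(C_wf)(z)=\tfrac{1}{2\pi\im}\int_\Sigma f(w)\big(G(w;t)-\mathbb{I}\big)\tfrac{\d w}{w-z_-}$.

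The key point is that $1-C_w$ is invertible for large $t$. Indeed, by Remark~\ref{L2bound} the Cauchy operator is bounded on $L^2(\Sigma,|\d z|)$ with a norm $c_\Sigma$ depending only on $\Sigma$, so
\begin{equation*}
	\|C_wf\|_{L^2(\Sigma,|\d z|)}\le c_\Sigma\,\|G(\cdot;t)-\mathbb{I}\|_{L^\infty(\Sigma,|\d z|)}\,\|f\|_{L^2(\Sigma,|\d z|)}\le\frac{c_\Sigma c}{t^{\epsilon}}\,\|f\|_{L^2(\Sigma,|\d z|)}
\end{equation*}
by~\eqref{e:48}. Fixing $t_1\ge t_0$ with $c_\Sigma c\,t_1^{-\epsilon}\le\tfrac12$, we obtain $\|C_w\|_{L^2\to L^2}\le\tfrac12$ and hence $\|(1-C_w)^{-1}\|\le 2$ for all $t\ge t_1$; the unique solution of~\eqref{e:44} is $\rho-\mathbb{I}=(1-C_w)^{-1}(C_w\mathbb{I})$, and
\begin{equation*}
	\|\rho-\mathbb{I}\|_{L^2(\Sigma,|\d z|)}\le 2\,\|C_w\mathbb{I}\|_{L^2(\Sigma,|\d z|)}\le 2c_\Sigma\,\|G(\cdot;t)-\mathbb{I}\|_{L^2(\Sigma,|\d z|)}\le\frac{2c_\Sigma c}{t^{\epsilon}}.
\end{equation*}
By Proposition~\ref{van} this gives solvability of Problem~\ref{HP7} for $t\ge t_1$; uniqueness follows because any two solution densities solve~\eqref{e:44}, so their difference is a fixed point of the contraction $C_w$ and therefore zero, alternatively one invokes Zhou's vanishing lemma, Theorem~\ref{Zhou}, together with Proposition~\ref{vanprop}.

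It remains to estimate $X(z;t)-\mathbb{I}$ pointwise. Set $H:=\rho\,(G-\mathbb{I})=(G-\mathbb{I})+(\rho-\mathbb{I})(G-\mathbb{I})$, so that $H\in L^1(\Sigma,|\d z|)\cap L^2(\Sigma,|\d z|)$ with $\|H\|_{L^1(\Sigma,|\d z|)}\le C t^{-\epsilon}$ from~\eqref{e:48} and the previous display (the product term estimated by Cauchy--Schwarz). For $z$ in a compact $\mathcal{K}\subset\mathbb{C}\setminus\Sigma$, writing $\delta:=\mathrm{dist}(\mathcal{K},\Sigma)>0$ and using~\eqref{e:45},
\begin{equation*}
	\|X(z;t)-\mathbb{I}\|=\Big\|\frac{1}{2\pi\im}\int_\Sigma\frac{H(w)\,\d w}{w-z}\Big\|\le\frac{1}{2\pi\delta}\,\|H\|_{L^1(\Sigma,|\d z|)}\le\frac{C'}{t^{\epsilon}},
\end{equation*}
which on $\mathcal{K}$, where $1+|z|$ is bounded above and below, is precisely the asserted bound $\|X(z;t)-\mathbb{I}\|\le C(1+|z|)^{-1}t^{-\epsilon}$; the explicit $(1+|z|)^{-1}$ decay, relevant when one later wants uniformity over unbounded regions, comes from the $|z|^{-1}$ decay of the Cauchy kernel once the additional decay of $G-\mathbb{I}$ along the unbounded parts of $\Sigma$ present in the applications is taken into account, by splitting $\Sigma$ into $\{|w|\le|z|/2\}$ and its complement.

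Finally, the one genuinely nontrivial ingredient is the $L^2$-boundedness of the Cauchy projections on a self-intersecting, partly unbounded contour, which is exactly Remark~\ref{L2bound}; granted that, there is no real obstacle and the proof is a soft contraction-mapping argument. The only care required is bookkeeping which of the three norms in~\eqref{e:48} enters where: the $L^\infty$-smallness forces $\|C_w\|<1$ and hence invertibility and uniqueness, the $L^2$-smallness controls $\rho-\mathbb{I}$, and the $L^1$- together with $L^2$-smallness feeds into the final pointwise estimate.
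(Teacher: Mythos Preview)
Your proof is correct and follows essentially the same route as the paper's: both reduce Problem~\ref{HP7} to the singular integral equation~\eqref{e:44} via Proposition~\ref{van}, bound the operator norm of $C_w$ by the $L^\infty$-smallness to invert $1-C_w$ by Neumann series, use the $L^2$-smallness to control $\rho-\mathbb{I}$, and then combine the $L^1$- and $L^2$-bounds with the trivial $1/\mathrm{dist}(z,\Sigma)$ estimate on the Cauchy kernel for the pointwise bound on $X-\mathbb{I}$. Your closing remark on the role of each of the three norms is exactly how the paper organizes the estimates as well.
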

\begin{proof} Rewrite the integral equation \eqref{e:44} as 
\begin{equation*}
	\rho(z;t)-\mathbb{I}=\frac{1}{2\pi\im}\int_{\Sigma}\big(G(w;t)-\mathbb{I}\big)\frac{\d w}{w-z_-}+\frac{1}{2\pi\im}\int_{\Sigma}\big(\rho(w;t)-\mathbb{I}\big)\big(G(w;t)-\mathbb{I}\big)\frac{\d w}{w-z_-}\ \ \ \textnormal{a.e. on}\ \ \Sigma
\end{equation*}
and abbreviate (recall Remark \ref{L2bound}) the operator
\begin{equation*}
	(Tf)(z;t):=\frac{1}{2\pi\im}\int_{\Sigma}f(w;t)\big(G(w;t)-\mathbb{I}\big)\frac{\d w}{w-z_-}\equiv \big(C_-f(G(\cdot;t)-\mathbb{I})\big)(z;t).
\end{equation*}
However by assumption \eqref{e:48} and \eqref{crucialbound}, in $L^2(\Sigma,|\d z|)$ operator norm,
\begin{equation*}
	\|T\|\leq \|C_-\|\,\|G(\cdot;t)-\mathbb{I}\|_{L^{\infty}(\Sigma,|\d z|)}\leq\frac{c}{t^{\epsilon}}\ \ \ \ \ \ \ \forall\,t\geq t_0,
\end{equation*}
which shows that that the integral equation
\begin{equation*}
	f(z;t)-(Tf)(z;t)=\frac{1}{2\pi\im}\int_{\Sigma}\big(G(w;t)-\mathbb{I}\big)\frac{\d w}{w-z_-}=:R(z;t)\equiv\big(C_-(G(\cdot;t)-\mathbb{I}\big)(z;t)\ \ \ \textnormal{a.e. on}\ \ \Sigma
\end{equation*}
is solvable in $L^2(\Sigma,|\d z|)$ for sufficiently large $t$ by the Neumann series. Moreover, since its right hand side $R(z;t)$ satisfies
\begin{equation*}
	\|R(\cdot;t)\|_{L^2(\Sigma,|\d z|)}\leq \|C_-\|\,\|G(\cdot;t)-\mathbb{I}\|_{L^2(\Sigma,|\d z|)}\leq\frac{c}{t^{\epsilon}}\ \ \ \ \ \ \ \forall\,t\geq t_0,
\end{equation*}
the unique solution $\rho(\cdot;t)-\mathbb{I}\in L^2(\Sigma,|\d z|)$ of \eqref{e:44} also satisfies 
\begin{equation*}
	\|\rho(\cdot;t)-\mathbb{I}\|_{L^2(\Sigma,|\d z|)}\leq \frac{c}{t^{\epsilon}}\ \ \ \ \ \forall\,t\geq t_0.
\end{equation*}
In summary, using Proposition \ref{van}, the Hilbert boundary value problem \ref{HP7} is uniquely solvable for sufficiently large $t$ and we find from \eqref{e:45} in turn (this is the only time we use the $L^1(\Sigma,|\d z|)$ bound in \eqref{e:48}),
\begin{align*}
	\|X(z;t)-\mathbb{I}\|\leq&\,\frac{1}{2\pi}\left\|\int_{\Sigma}\big(G(w;t)-\mathbb{I}\big)\frac{\d w}{w-z}\right\|+\frac{1}{2\pi}\left\|\int_{\Sigma}\big(\rho(w;t)-\mathbb{I}\big)\big(G(w;t)-\mathbb{I}\big)\frac{\d w}{w-z}\right\|\\
	\leq&\,\frac{1}{2\pi}\frac{1}{\textnormal{dist}(z,\Sigma)}\Big(\big\|G(\cdot;t)-\mathbb{I}\big\|_{L^1(\Sigma,|\d z|)}+\big\|\rho(\cdot;t)-\mathbb{I}\big\|_{L^2(\Sigma,|\d z|)}\,\big\|G(\cdot;t)-\mathbb{I}\big\|_{L^2(\Sigma,|\d z|)}\Big)\\
	\leq&\,\frac{C}{(1+|z|)t^{\epsilon}}\ \ \ \ \ \forall\,t\geq t_1\geq t_0,
\end{align*}
uniformly on compact subsets of $\mathbb{C}\setminus\Sigma$. This concludes the proof of Theorem \ref{essence}.
\end{proof}
Near the stationary point(s) non-trivial technicalities of the nonlinear steepest descent method appear: one has to, either explicitly or implicitly, construct appropriate model functions in terms of classical or Painlev\'e special functions or, more generally, in terms of solutions to other dynamical systems that approximate the Hilbert boundary value problem asymptotically near the stationary point(s). Combining such local constructions with the aforementioned estimates away from the stationary point(s) one then typically arrives at a \textit{global small norm problem} which is amenable to Theorem \ref{essence}
\begin{rem} The above outlined Riemann-Hilbert nonlinear steepest descent techniques have been successfully applied in high performance numerical simulations, see the recent monograph \cite{TO} for a summary of research efforts in this direction. 
\end{rem}

Theory aside, we now discuss a series of impressive results which have been derived with Riemann-Hilbert techniques.
\subsection{Connection formul\ae\,for Painlev\'e-II transcendents}\label{PIIconn} Let us return to \eqref{e:28} and focus on the solution family parametrized by the monodromy data
\begin{equation*}
	s\equiv (s_1,s_2,s_3)=(-\im\sqrt{\gamma},0,\im\sqrt{\gamma}),\ \ \ \gamma\geq 0.
\end{equation*}
This one-parameter family is known as the Ablowitz-Segur family and it plays a dominant role in random matrix theory and integrable probability, see Subsection \ref{Ulam} below and Section \ref{ex:6}. Depending on the values of $\gamma$, the solution $u=u(x|s)$ of \eqref{e:28} has different analytic and asymptotic properties, see Figure \ref{fig6}. 
\begin{figure}[tbh]
\includegraphics[width=0.45\textwidth]{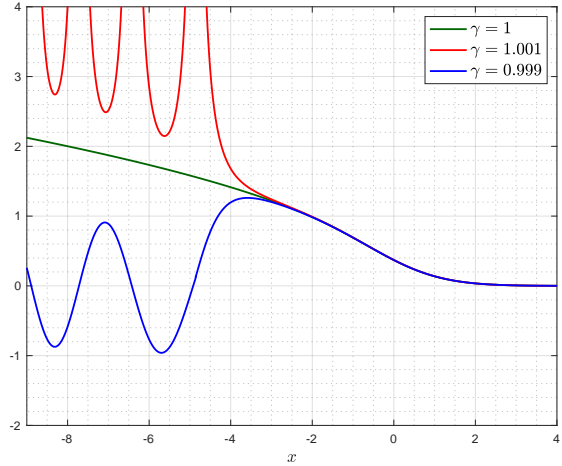}
\caption{The sensitive dependence of $u=u(x|s)$ for $x\leq 0$ on the monodromy data $s=(-\im\sqrt{\gamma},0,\im\sqrt{\gamma})$ for three values of $\gamma$. In \textcolor{ao}{green} we show $\gamma=1$, in \textcolor{red}{red} $\gamma=1.001$ and in \textcolor{blue}{blue} $\gamma=0.999$.}
\label{fig6}
\end{figure}

For one, the Ablowitz-Segur solution $u=u(x|s)$ is smooth and bounded for $x>0$ and if $x\rightarrow+\infty$, then
\begin{equation}\label{e:49}
	u(x|s)\sim\sqrt{\gamma}\,\frac{x^{-\frac{1}{4}}}{2\sqrt{\pi}}\e^{-\frac{2}{3}x^{\frac{3}{2}}},
\end{equation}
which is proportional to the super exponential decay behavior of the Airy function on the positive real axis (recall that \eqref{e:28} can be viewed as a nonlinear Airy equation). However, things change dramatically on the half ray $x\leq 0$: if $\gamma\in[0,1)$, then $u(x|s)$ is smooth, bounded and with the following oscillatory behavior as $x\rightarrow-\infty$,
\begin{align}\label{e:50}
	u(x|s)\sim(-x)^{-\frac{1}{4}}&\,\sqrt{-2\beta}\,\cos\left(\frac{2}{3}(-x)^{\frac{3}{2}}+\beta\ln\big(8(-x)^{\frac{3}{2}}\big)+\phi\right)\\
	&\,\beta=\frac{1}{2\pi}\ln(1-\gamma),\ \ \ \ \ \phi=\frac{\pi}{4}-\textnormal{arg}\,\Gamma(\im\beta).\nonumber
\end{align}
Expansion \eqref{e:50} is due to Ablowitz and Segur \cite{AS1,AS2} and was partially proven in \cite{HM,CM} by Hastings, McLeod and Clarkson, using Gelfand-Levitan type integral equations.
The first complete proof of the leading order in \eqref{e:50} is due to Its, Kapaev, Suleimanov and Kitaev \cite{IK,Su,Ki} via the isomonodromy method. The first nonlinear steepest descent based proof of \eqref{e:50} is in the paper \cite{DZ2} by Deift and Zhou. Next, for $\gamma=1$, the solution $u(x|s)$ is still smooth for $x\leq 0$ but now unbounded, in fact as $x\rightarrow-\infty$
\begin{equation}\label{e:51}
	u(x|s)\sim\sqrt{-\frac{x}{2}},
\end{equation}
which, to leading order, can be found in the work of Hastings and McLeod \cite{HM} and to all orders in \cite{DZ2}, using once more the nonlinear steepest descent method. Finally, if $\gamma>1$, the qualitative behavior of $u(x|s)$ changes completely: its smoothness is destroyed at finite $x<0$ and solutions blow up, compare the below $x\rightarrow-\infty$ asymptotic expansion,
\begin{align}\label{e:52}
	u(x|s)\sim&\,\frac{\sqrt{-x}}{\sin(\frac{2}{3}(-x)^{\frac{3}{2}}+\hat{\beta}\ln(8(-x)^{\frac{3}{2}})+\varphi)}\\
	&\,\,\,\,\,\hat{\beta}=\frac{1}{2\pi}\ln(\gamma-1),\ \ \ \ \ \varphi=\frac{\pi}{2}-\textnormal{arg}\,\Gamma\left(\frac{1}{2}+\im\hat{\beta}\right),\nonumber
\end{align}
which is uniform away from the zeros of the trigonometric function appearing in the denominator of \eqref{e:52}. The singular expansion \eqref{e:52} is originally due to Kapaev \cite{Kap} via isomonodromy techniques and was re-derived in \cite{BoI} by nonlinear steepest descent techniques. Note that formul\ae\,\eqref{e:49}, \eqref{e:50}, \eqref{e:51}, \eqref{e:52} solve the connection problem for the underlying Painlev\'e-II transcendent, namely the problem of completely determining the asymptotic behavior of $u(x|s)$ as $x\rightarrow-\infty$, given the same description of $u(x|s)$ as $x\rightarrow+\infty$, or vice versa. Such connection problems are standard exercises in classical special function theory and commonly solved based on contour integral formul\ae. However, the Ablowitz-Segur transcendent $u(x|s)$ does \textit{not} admit any such explicit formul\ae\, so it is a great achievement of the Riemann-Hilbert method to be able to solve this genuinely nonlinear connection problem in terms of classical special functions. Even better, one can uniformize the qualitatively different asymptotic behaviors \eqref{e:50}, \eqref{e:51}, \eqref{e:52} at $x=-\infty$ through the use of Jacobi elliptic functions. This recent breakthrough is contained in the paper \cite{Bo} and was made possible by nonlinear steepest descent techniques.

\subsection{The Heisenberg antiferromagnet} We now return to \eqref{e:31} and discuss two central results for $P_n$ proven in \cite[Section $6$]{DIZ}. First, the leading order asymptotic behavior,
\begin{equation*}
	\ln P_n=n^2\ln\sin\frac{\phi}{2}+o(n),\ \ \ n\rightarrow\infty,
\end{equation*}
which holds uniformly on compact subsets of $(\frac{\pi}{2},\pi)\ni\phi$. Hence, an overall Gaussian decay of $P_n$ with a decay rate that decreases with increasing external magnetic field. Second, the emptiness formation probability \eqref{e:31} really depends on two variables, $n$ and $h$, equivalently $n$ and $\xi=\e^{\im\phi}$, compare \eqref{e:32}. As it turns out $P_n$ shares an intimate relation with the Painlev\'e-VI equation, more precisely $P_n$ is the Jimbo-Miwa-Ueno $\tau$-function \cite[Section $5$]{JMU} of a particular Painlev\'e-VI equation when viewed in those variables. Here are the details, see \cite[Theorem $6.48$]{DIZ}: Set $t:=\bar{\xi}/\xi$, then
\begin{equation*}\label{e:53}
	t\frac{\d}{\d t}\ln P_n=\frac{t-1}{4y(y-1)(y-t)}\nu^2+\frac{n}{2y}\nu;\ \ \ \ \ \ 
	\nu:=(y-1)\frac{(1-n)y+nt}{t-1}-t\frac{\d y}{\d t},
\end{equation*}
where $y=y(t;n)$ satisfies the Painlev\'e-VI equation
\begin{align*}\label{e:54}
	\frac{\d^2y}{\d t^2}=\frac{1}{2}\left(\frac{1}{y}+\frac{1}{y-1}+\frac{1}{y-t}\right)&\,\left(\frac{\d y}{\d t}\right)^2-\left(\frac{1}{t}+\frac{1}{t-1}+\frac{1}{y-t}\right)\frac{\d y}{\d t}\\
	&\,\,+\frac{y(y-1)(y-t)}{t^2(t-1)^2}\left(\frac{1}{2}(n-1)^2-\frac{tn^2}{2y^2}+\frac{t(t-1)}{2(y-t)^2}\right).\nonumber
\end{align*}
This impressive connection between $P_n$ and Painlev\'e special function theory puts the spin chain model firmly on integrable systems ground and as such should be compared with the famous Jimbo, Miwa 1980 result \cite{JM} on the diagonal Ising correlation functions and their relation to a $\sigma$-version of Painlev\'e-VI.

\subsection{Ulam's problem}\label{Ulam} In this section we discuss one of the most beautiful and influential results in mathematical physics that was derived by Riemann-Hilbert techniques over the past 20 years, see \cite{Cor} for a taste of the following discovery's impact: Choose a permutation $\pi\in S_n$. For $1\leq i_1<i_2<\ldots<i_k\leq n$ we say that $(\pi(i_1),\ldots,\pi(i_k))$ is an \textit{increasing subsequence} of $\pi$ \textit{of length} $k$ if
\begin{equation*}
	\pi(i_1)<\pi(i_2)<\ldots<\pi(i_k).
\end{equation*}
For instance, if $\pi=(4,2,1,3,5)\in S_5$, then $(4,5),(2,3),(2,5),(1,3),(1,5),(3,5)$ are subsequences of length $2$ and $(2,3,5),(1,3,5)$ are subsequences of length $3$. In turn, the \textit{maximal length} $\ell_n(\pi)$ of all increasing subsequences of $\pi$ equals $\ell_5(\pi)=3$ in this example. Now randomize the setup and pick $\pi\in S_n$ with uniform distribution, i.e. we have
\begin{equation*}
	\textnormal{Prob}\big\{\ell_n\leq N\big\}=\frac{1}{n!}\,\#\{\pi\in S_n:\,\ell_n(\pi)\leq N\}.
\end{equation*}
How does $\ell_n$ behave statistically for large $n$? This question was raised by Ulam in 1961 \cite{U} and he himself conjectured the first moment convergence
\begin{equation}\label{e:55}
	\lim_{n\rightarrow\infty}\frac{\mathbb{E}(\ell_n)}{\sqrt{n}}=c
\end{equation}
for some constant $c$. Ulam's problem became known as the problem of rigorously verifying the convergence \eqref{e:55} and in turn computing the constant. Many people with different methods contributed to this challenge including Erd\H{o}s, Hammersley, Logan, Shepp and Vershik, Kerov, ultimately settling on $c=2$, see the recent monograph \cite[page $3-4$]{BDS} or \cite[Section $1.1$]{Rom} for references. The next statistical quantity of $\ell_n$ to be considered was its variance. High precision numerical simulations by Odlyzko in the early 1990s indicated that
\begin{equation*}
	\lim_{n\rightarrow\infty}\frac{\textnormal{Var}(\ell_n)}{\sqrt[3]{n}}=c_0\approx 0.819\ldots,
\end{equation*}
and from simulations by Odlyzko and Rains
\begin{equation}\label{e:56}
	\lim_{n\rightarrow\infty}\frac{\mathbb{E}(\ell_n)-2\sqrt{n}}{\sqrt[6]{n}}=c_1\approx-1.758\ldots,
\end{equation}
which constituted a higher order simulation of $\mathbb{E}(\ell_n)$ than \eqref{e:55}. At this point of the story, in 1999, the Riemann-Hilbert/integrable systems community in the form of Baik, Deift and Johansson \cite{BDJ} picked up the loose ends and proved the following spectacular result, settling en route \eqref{e:55}, \eqref{e:56} and then some. 
\begin{theo}[{\cite[Theorem $1.1$ and $1.2$]{BDJ}}]\label{Jinho} Let $u(x|s)$ denote the unique solution of \eqref{e:28} with the monodromy data $s=(-\im,0,\im)$, compare Subsection \ref{PIIconn}. Define the distribution function
\begin{equation}\label{e:57}
	F(x):=\exp\left[-\int_x^{\infty}(y-x)u^2(y|s)\,\d y\right],\ \ \ x\in\mathbb{R}.
\end{equation}
Then, for any $x\in\mathbb{R}$,
\begin{equation*}
	\lim_{n\rightarrow\infty}\textnormal{Prob}\left\{\frac{\ell_n-2\sqrt{n}}{\sqrt[6]{n}}\leq x\right\}=F(x)\equiv\textnormal{Prob}\big\{\chi\leq x\big\},
\end{equation*}
and for any $k\in\mathbb{Z}_{\geq 1}$,
\begin{equation*}
	\lim_{n\rightarrow\infty}\mathbb{E}(\chi_n^k)=\mathbb{E}(\chi^k),\ \ \ \ \ \ \ \chi_n:=\frac{\ell_n-2\sqrt{n}}{\sqrt[6]{n}}.
\end{equation*}
\end{theo}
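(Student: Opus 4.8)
The plan is to follow the four-step strategy of Baik, Deift and Johansson \cite{BDJ}: Poissonize, rewrite the Poissonized distribution as a Toeplitz determinant via Gessel's identity, analyse that determinant in the edge double-scaling limit by the Deift--Zhou nonlinear steepest descent method, and de-Poissonize. First I would replace the fixed size $n$ by a Poisson random variable $N_\mu$ of mean $\mu$, so that, writing $\mu=\lambda^2$, the quantity $\phi_N(\lambda):=\textnormal{Prob}\{\ell_{N_\mu}\le N\}=e^{-\lambda^2}\sum_{n\ge 0}\frac{\lambda^{2n}}{n!}\,\textnormal{Prob}\{\ell_n\le N\}$ becomes tractable: by the Robinson--Schensted correspondence the Poissonized Plancherel measure is a determinantal point process, and Gessel's identity gives the exact closed form $\phi_N(\lambda)=e^{-\lambda^2}D_N(\lambda)$ with $D_N(\lambda)=\det\big[I_{j-k}(2\lambda)\big]_{j,k=0}^{N-1}$ the $N\times N$ Toeplitz determinant whose symbol on the unit circle is $f(z)=e^{\lambda(z+z^{-1})}$. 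The determinant $D_N$ and its ratios $D_{N\pm1}D_N^{-1}$ are then captured --- in complete analogy with Problem \ref{HP6}, but on $|z|=1$ --- by a $2\times 2$ Riemann--Hilbert problem for the monic polynomials orthogonal with respect to $f$, with jump $\bigl[\begin{smallmatrix}1 & z^{-N}f(z)\\ 0 & 1\end{smallmatrix}\bigr]$ on the circle and normalisation $z^{-N\sigma_3}$ at infinity; equivalently $\phi_N(\lambda)$ is the Fredholm determinant of the discrete Bessel kernel restricted to $\{N,N+1,\dots\}$.

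Next I would carry out the Deift--Zhou nonlinear steepest descent on this Riemann--Hilbert problem in the regime $N=N(\lambda)=\lfloor 2\lambda+x\lambda^{1/3}\rfloor$, $\lambda\to\infty$. The phase $\lambda(z+z^{-1})-N\log z$ has two saddle points that collide at $z=1$ precisely at the critical ratio $N=2\lambda$, and the endpoints of the support arc of the associated equilibrium measure coalesce there. After the standard normalisation by a $g$-function and the opening of lenses one needs: a global parametrix away from $z=1$; Airy parametrices at the arc endpoints while they stay separated (for $x$ very negative); and, at the coalescing point $z=1$, a local parametrix built from a Painlev\'e-II $\Psi$-function --- namely the solution of the model problem \ref{HP4} at $s=(-\im,0,\im)$. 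Patching these produces a small-norm Riemann--Hilbert problem to which Theorem \ref{essence} applies with $\epsilon=\tfrac13$, and reading off the $z^{-1}$-coefficient of the resulting solution expresses $\partial_x\log\phi_{N(\lambda)}(\lambda)$ asymptotically in terms of the Hastings--McLeod transcendent $u(\,\cdot\,|s)$; integrating via $(\log F)''=-u^2$ --- exactly the content of \eqref{e:57} --- identifies $\lim_{\lambda\to\infty}\phi_{N(\lambda)}(\lambda)$ with $F(x)$, uniformly for $x$ in compacts. (Equivalently, the discrete Bessel kernel converges in the edge scaling to the Airy kernel, and the Tracy--Widom formula identifies its Fredholm determinant on $(x,\infty)$ with $F$.)

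Then I would de-Poissonize. Since $n\mapsto\textnormal{Prob}\{\ell_n\le N\}$ is non-increasing (by the Plancherel growth coupling) and $N_\mu$ concentrates on an interval of half-width $O(\sqrt{\mu\log\mu})$ about $\mu$, the de-Poissonization lemma of \cite{BDJ} sandwiches $\textnormal{Prob}\{\ell_n\le N\}$ between values of $\phi_N(\cdot)$ at Poisson parameters $n\pm O(\sqrt{n\log n})$. Crucially, after a square root this perturbs the centering $2\sqrt n$ only by $O(\sqrt{\log n})=o(n^{1/6})$, so with $N=\lfloor 2\sqrt n+x n^{1/6}\rfloor$ both bounding parameters correspond to rescaled variables tending to $x$; continuity of $F$ and the uniformity supplied by Theorem \ref{essence} then give $\lim_{n\to\infty}\textnormal{Prob}\{(\ell_n-2\sqrt n)/n^{1/6}\le x\}=F(x)$. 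Convergence of all moments $\mathbb{E}(\chi_n^k)\to\mathbb{E}(\chi^k)$ follows from this distributional convergence together with uniform-in-$n$ tail bounds on $\ell_n$ --- of the type $\textnormal{Prob}\{|\chi_n|\ge t\}\le Ce^{-ct}$ for large $t$, established in \cite{BDJ} by the same Riemann--Hilbert estimates and consistent with the super-exponential decay of $F$ at $\pm\infty$ --- which yield uniform integrability of every power $\chi_n^k$.

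The hard part is the steepest descent analysis, and within it the construction and matching of the Painlev\'e-II local parametrix at the coalescing saddle $z=1$: one must build it from the model problem \ref{HP4}, show it is matched to the global parametrix to relative order $\lambda^{-1/3}$ on a small circle about $z=1$, and --- the most delicate point --- keep all error estimates uniform in $x$ over the range that the de-Poissonization squeeze demands. Once that uniform edge asymptotics for $\phi_N$ is in hand, the de-Poissonization and the passage to moments are comparatively routine.
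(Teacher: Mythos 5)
Your outline is correct and matches exactly the strategy the paper attributes to \cite{BDJ}: the paper does not actually supply a proof of this theorem but only summarizes it in one sentence (``a tour de force journey through (de)-Poissonization arguments, Toeplitz determinants and finally the asymptotic nonlinear steepest descent analysis of orthogonal polynomials on the unit circle'') and defers to \cite{BDS,Rom}. Your four-step plan --- Poissonization, Gessel's Toeplitz identity with symbol $e^{\lambda(z+z^{-1})}$, a Deift--Zhou analysis of the OPUC Riemann--Hilbert problem with a Painlev\'e-II parametrix at the coalescing saddle $z=1$ under the scaling $N=2\lambda+x\lambda^{1/3}$, followed by de-Poissonization via monotonicity and Poisson concentration, with moment convergence from uniform integrability/tail bounds --- is precisely the BDJ argument the paper references, so the proposal takes essentially the same approach, just fleshed out to a level of detail the paper itself chooses not to reproduce.
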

The distribution function $F(x)$ in \eqref{e:57}, and it really is a distribution function, see our discussion on the analytic and asymptotic properties of $u(x|s)$ in Subsection \ref{PIIconn}, had appeared a few years prior to the Baik-Deift-Johansson theorem. Namely in the 1994 work of Tracy and Widom \cite{TW} where it was shown to capture the fluctuations of the largest eigenvalue of a matrix drawn from the ensemble \eqref{e:36} with quadratic external field, a.k.a.  the \textit{Gaussian unitary ensemble} (GUE), in the large $n$ limit. The appearance of a random matrix theory distribution function in a problem of combinatorics was a big surprise at the time, however it turned out not to be an isolated curiosity. The Tracy-Widom distribution $F(x)$ \eqref{e:57} (that name stuck after \cite{BDJ}) has become an ubiquitous limiting distribution that made its way to the financial markets, to wireless communication networks, stochastic growth processes and statistical mechanics, just to name a few areas. See \cite{Q} for an engaging outreach article on the Tracy-Widom law. In fact, twenty years after \cite{BDJ} it has become clear that the Tracy-Widom distribution is a modern day bell curve in that it describes the random fluctuations of a large class of integrable probabilistic models, see Figure \ref{fig7} for a graph of $F(x)$ and $F'(x)$.
\begin{figure}[tbh]
\includegraphics[width=0.465\textwidth]{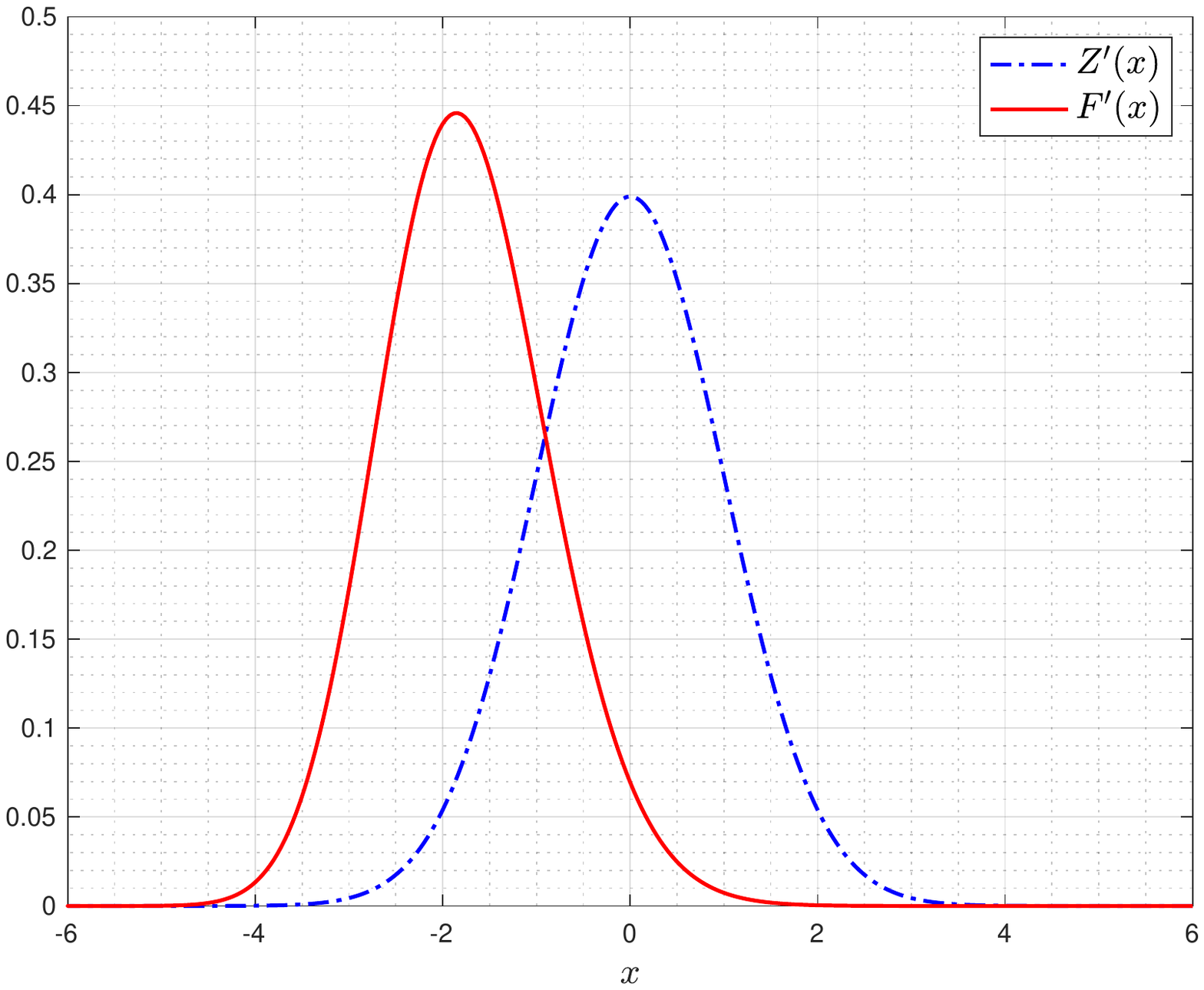}\,\,\,\,\,\,\,\,\,\,\,\,
\includegraphics[width=0.46\textwidth]{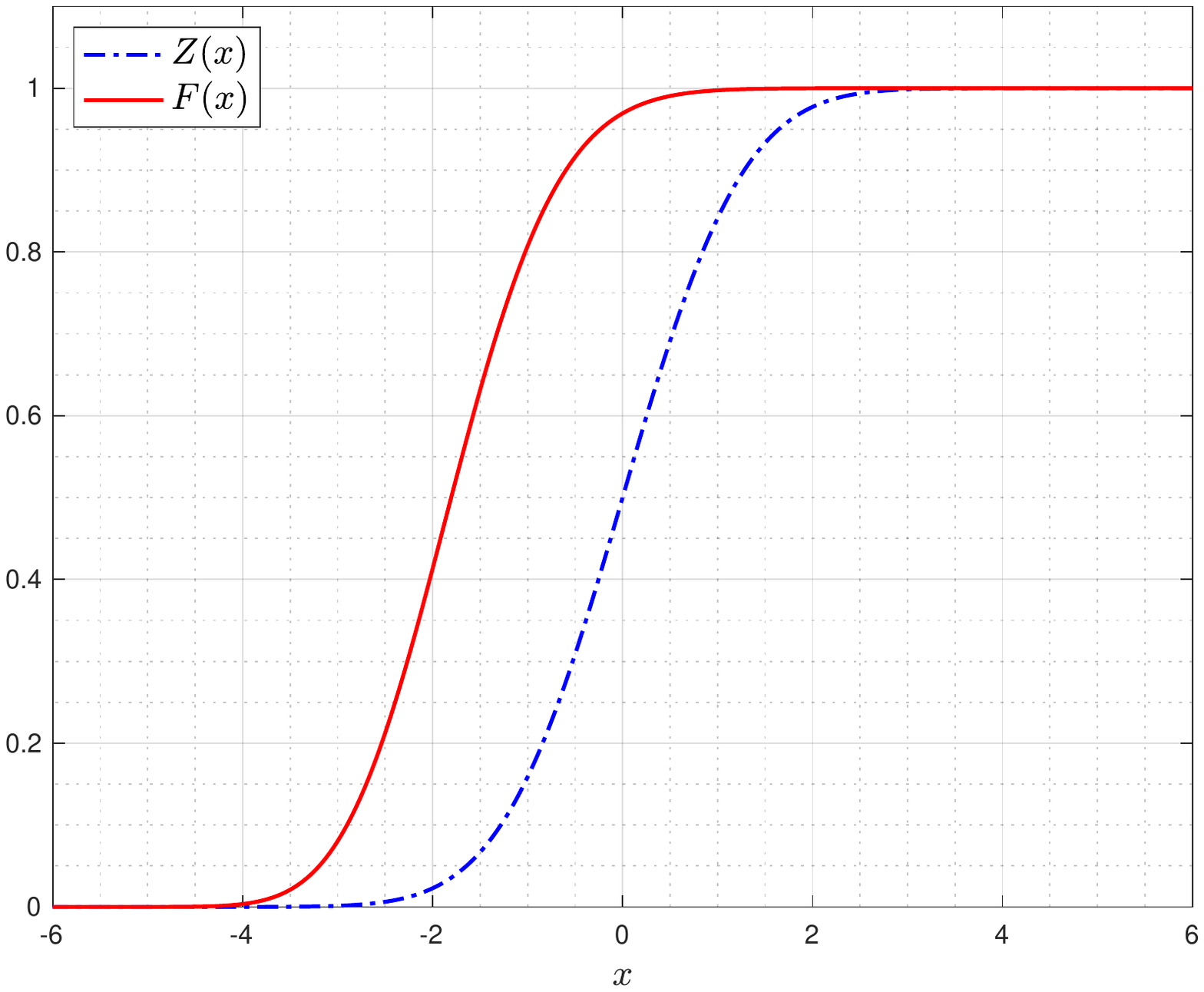}
\caption{Comparing the Tracy-Widom distribution $F(x)$ in \textcolor{red}{red} to the standard normal distribution $Z(x)$ in \textcolor{blue}{blue}. On the left, probability density functions, on the right cumulative distribution functions. The asymmetry of the Tracy-Widom distribution is clearly visible.}
\label{fig7}
\end{figure}

The proof of Theorem \ref{Jinho} is beyond the material covered in this short note, it is a tour de force journey through (de)-Poissonization arguments, Toeplitz determinants and finally the asymptotic nonlinear steepest descent analysis of orthogonal polynomials on the unit circle. These polynomials can be characterized via a Hilbert boundary value problem in the style of Problem \ref{HP6}, we refer the interested reader to the monographs \cite{BDS,Rom}.

\subsection{Universality in invariant random matrix models} Our last discussion in this section concerns the global and local statistical eigenvalue behavior in the random matrix model \eqref{e:36}. First, through our assumptions placed on the external field $V:\mathbb{R}\rightarrow\mathbb{R}$, the model's mean eigenvalue density $\frac{1}{n}K_{n,N}(x,x)$, compare \eqref{e:37}, has a limit
\begin{equation}\label{e:58}
	\lim_{\substack{n,N\rightarrow\infty\\ \frac{n}{N}\rightarrow 1}}\frac{1}{n}K_{n,N}(x,x)=\rho_V(x)\geq 0,
\end{equation}
whose support $\Sigma_V:=\overline{\{x\in\mathbb{R}:\,\rho_V(x)>0\}}$ is a finite union of intervals. The \textit{global} limiting behavior \eqref{e:58} contains the most basic result in random matrix theory as special case, namely the Wigner semicircular law which asserts the almost sure convergence of the rescaled empirical spectral distribution in the GUE to the limiting distribution with density
\begin{equation*}
	\rho_W(x)=\frac{1}{\pi}\sqrt{(2-x^2)_+}.
\end{equation*}
\begin{figure}[tbh]
\includegraphics[width=0.453\textwidth]{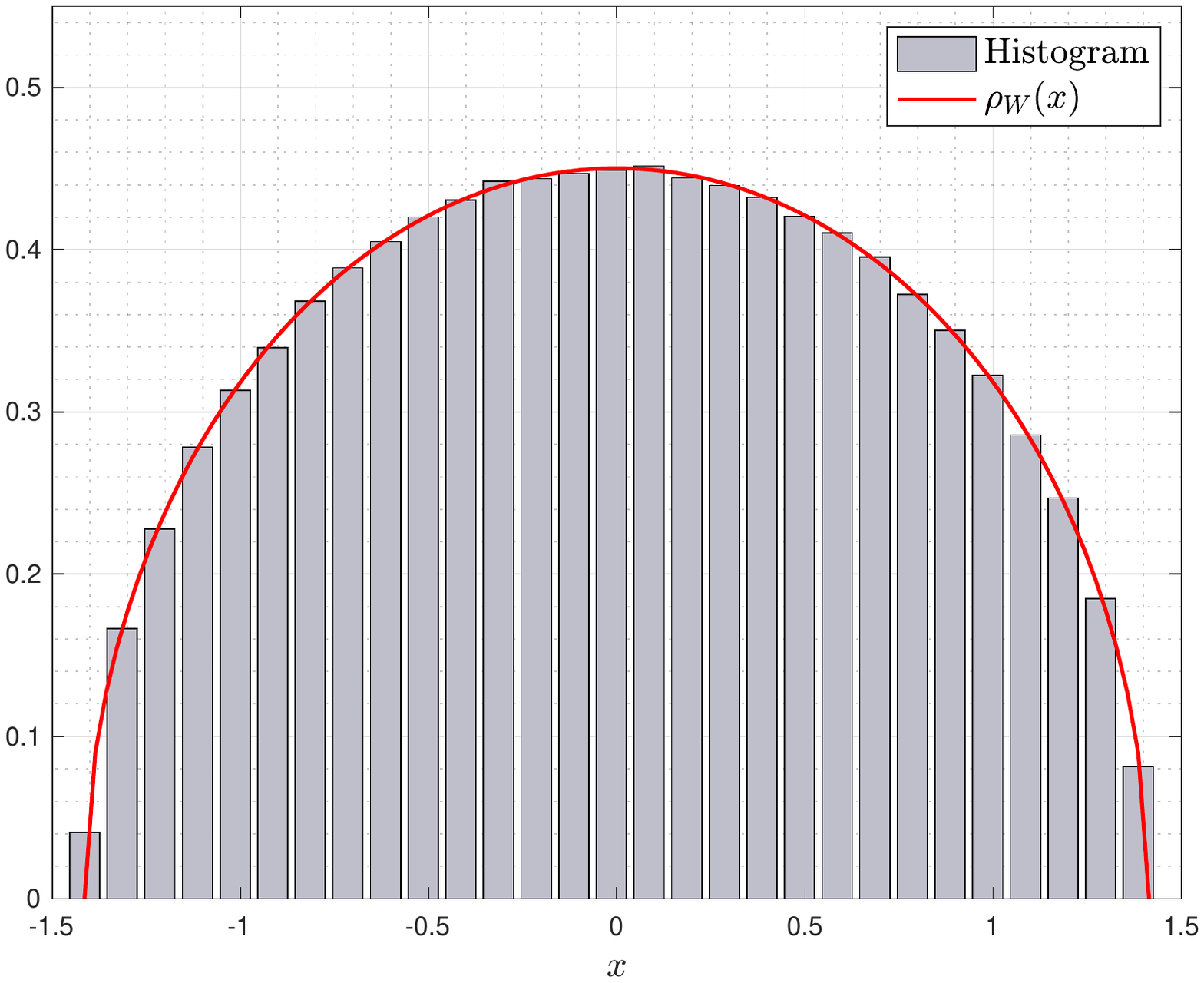}\,\,\,\,\,\,\,\,\,\,\,\,
\includegraphics[width=0.45\textwidth]{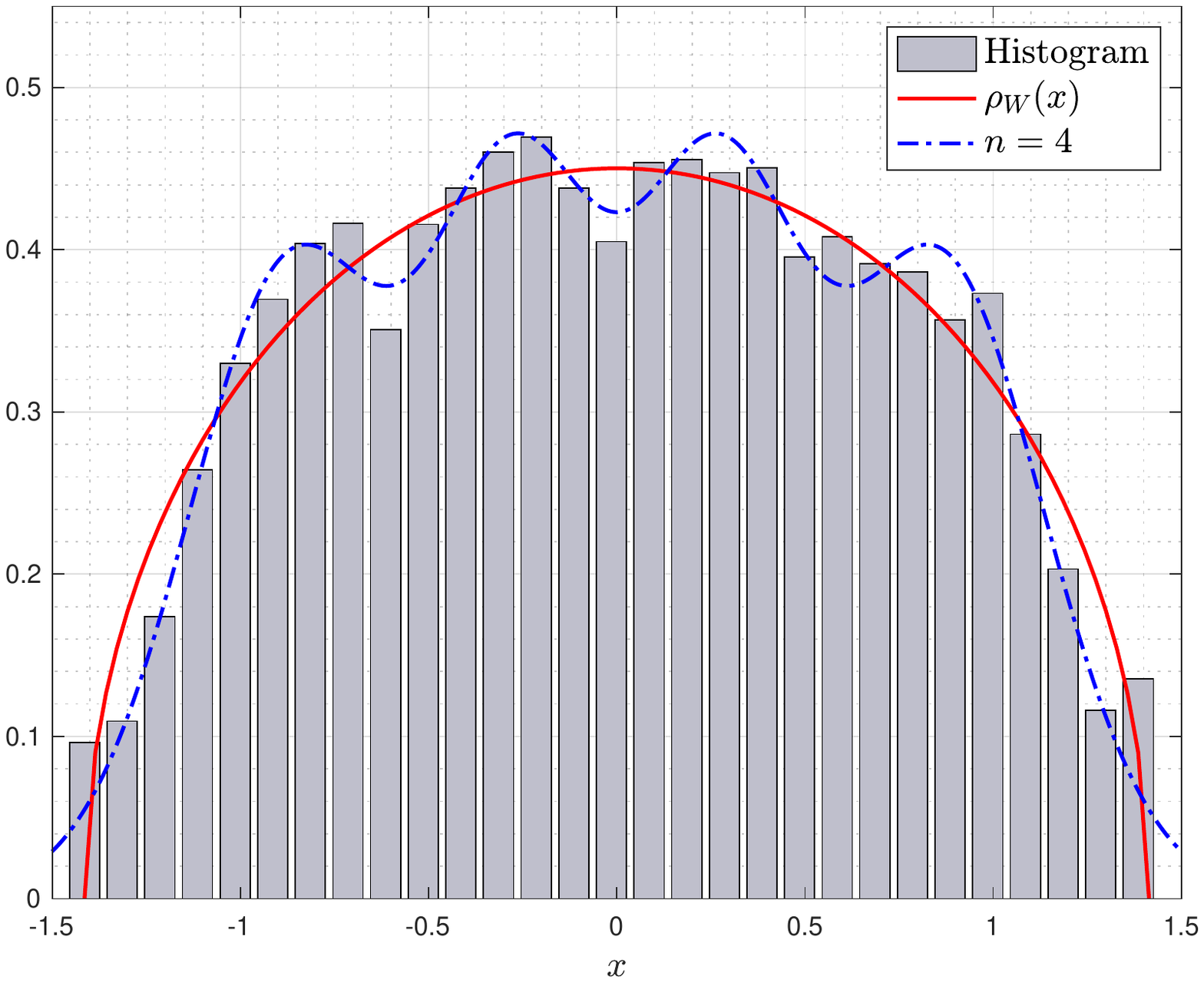}
\caption{Wigner's semicircular law for a rescaled $500\times 500$ GUE matrix on the left. Plotted is the rescaled histogram of its $500$ eigenvalues, sampled $100$ times, against the semicircular density in \textcolor{red}{red}. On the right we compare Wigner's law to the exact eigenvalue density for $n=4$ in \textcolor{blue}{blue} and to the associated eigenvalue histogram, again sampled $100$ times.}
\label{fig8}
\end{figure}

While the limiting density $\rho_V$ in \eqref{e:58} is determined by the potential $V$, the \textit{local} eigenvalue statistics are to a large extent independent of $V$ in the large $n,N$ limit. Precisely, and this is the celebrated \textit{universality property}, the local scaling behavior of $K_{n,N}(x,y)$ is completely determined by the local characteristica of $\rho_V$. For instance, in case of \eqref{e:36}, only two qualitatively different scenarios arise generically, compare Figure \ref{fig8} for the GUE: if $x^{\ast}\in\Sigma_V$ is such that $\rho_V(x^{\ast})>0$, then
\begin{equation}\label{e:59}
	\lim_{n\rightarrow\infty}\frac{1}{n\rho_V(x^{\ast})}K_{n,n}\left(x^{\ast}+\frac{\lambda}{n\rho_V(x^{\ast})},x^{\ast}+\frac{\mu}{n\rho_V(x^{\ast})}\right)=\frac{\sin\pi(\lambda-\mu)}{\pi(\lambda-\mu)}
\end{equation}
uniformly on compact subsets of $\mathbb{R}\ni\lambda,\mu$. This is the 1996 bulk universality result of Pastur, Shcherbina \cite{PS} which was proven with the resolvent method. Soon after that Bleher, Its \cite{BI} and Deift, Kriecherbauer, McLaughlin, Venakides, Zhou \cite{DKMVZ} fine tuned the Deift-Zhou nonlinear steepest descent method of Section \ref{cool} to Problem \ref{HP6} and arrived at the same result. Especially \cite{DKMVZ} has become the standard reference for nonlinear steepest descent techniques used in the asymptotic analysis of orthogonal polynomials, and it has been extended in various directions. Those directions lead in turn to generalizations of the bulk universality class \eqref{e:59} and extensions to other types of orthogonal polynomial random matrix models, see Section \ref{read} below for a few references. The second case concerns the scaling limit of $K_{n,n}(x,y)$ at a boundary point $x^{\ast}\in\partial\Sigma_V$ where $\rho_V$ typically vanishes square root like, certainly in the GUE. In this case \eqref{e:59} is replaced by the limit
\begin{equation}\label{e:60}
	\lim_{n\rightarrow\infty}\frac{1}{(cn)^{\frac{2}{3}}}K_{n,n}\left(x^{\ast}+\frac{\lambda}{(cn)^{\frac{2}{3}}},x^{\ast}+\frac{\mu}{(cn)^{\frac{2}{3}}}\right)=\frac{\textnormal{Ai}(\lambda)\textnormal{Ai}'(\mu)-\textnormal{Ai}'(\lambda)\textnormal{Ai}(\mu)}{\lambda-\mu},
\end{equation}
if $x^{\ast}$ is a right boundary point and with a sign change $(\lambda,\mu)\mapsto(-\lambda,-\mu)$ in the left hand side of \eqref{e:60} for a left boundary point. The limit is again uniform on compact subsets of $\mathbb{R}\ni\lambda,\mu$, it is expressed in terms of the Airy function $w=\textnormal{Ai}(z)$ and $c>0$ denotes some constant.
The first proof of the soft edge universality \eqref{e:60} is due to Deift, Kriecherbauer, McLaughlin, Venakides, Zhou \cite{DKMVZ} and in more explicit form due to Deift, Gioev \cite{DG}, in both cases based on nonlinear steepest descent techniques.\smallskip

In summary of this short subsection, the earliest universality results in random matrix theory were almost exclusively derived with Riemann-Hilbert nonlinear steepest descent techniques and thus paved the way for subsequent discoveries. Nowadays a wide array of different techniques is available in the derivation of RMT universality classes, some more suitable to certain ensembles than others. Most recently, the work of Erd\H{o}s and Yau \cite{EY} put forth a toolbox useful in the universality analysis of Wigner random matrices, a broader class of symmetric matrices with independent entries. It is an open question whether the Wigner ensemble is accessible via Riemann-Hilbert techniques.

\section{One more example}\label{ex:6}
In this section we will venture into the world of \textit{operator-valued} Hilbert boundary value problems, a topic which has received far less attention so far, especially from a mathematically rigorous perspective. The only exception is the work of Its, Kozlowski \cite{IKo} and to a certain extent the paper \cite{IS} by Its, Slavnov which we will discuss towards the end. Here is our motivation to think about such problems: consider the stochastic heat equation (SHE)
\begin{equation*}
	\mathcal{Z}_T=\frac{1}{2}\mathcal{Z}_{XX}-\mathcal{Z}\dot{\mathscr{W}},\ \ \ \ \ \ \ \ \mathcal{Z}=\mathcal{Z}(X,T):\mathbb{R}\times(0,\infty)\rightarrow\mathbb{R}
\end{equation*}
with distributional initial data $\mathcal{Z}(X,0)=\delta(X)$ and a Gaussian white noise $\dot{\mathscr{W}}=\dot{\mathscr{W}}(X,T)$. Using the Hopf-Cole transformation $\mathcal{H}(X,T):=-\ln \mathcal{Z}(X,T)$ the SHE is mapped to the celebrated Kardar-Parisi-Zhang (KPZ) equation 
\begin{equation}\label{SH:0}
	\mathcal{H}_T=\frac{1}{2}\mathcal{H}_{ZZ}-\frac{1}{2}(\mathcal{H}_Z)^2+\dot{\mathscr{W}}
\end{equation}
which has become the quintessential model for random surface growth processes with several remarkable connections to a number of different physical phenomena, see \cite{Cor0} for a detailed summary. One of the long-standing goals in the analysis of the KPZ equation was achieved in 2011 when Amir, Corwin and Quastel \cite{ACQ} derived the exact probability distribution of its solution with narrow wedge initial data. This allowed in turn to finally verify the celebrated KPZ scaling hypothesis for the KPZ equation itself. Here are the relevant details:
\begin{theo}[{\cite[Theorem $1.1$]{ACQ}}]\label{hot} For any $X\in\mathbb{R}$ and $T>0$ the Hopf-Cole solution $\mathcal{H}(X,T)=-\ln \mathcal{Z}(X,T)$ to the KPZ equation \eqref{SH:0} with initial data $\mathcal{Z}(X,0)=\delta(X)$ has the probability distribution
\begin{equation*}
	F_T(s):=\textnormal{Prob}\bigg\{\mathcal{H}(X,T)-\frac{X^2}{2T}-\frac{T}{24}\geq -s\bigg\},\ \ s\in\mathbb{R},
\end{equation*}
which is independent of $X\in\mathbb{R}$ and which equals
\begin{equation}\label{SH:1}
	F_T(s)=\int_{\Gamma}\det\big(1-K_{\sigma_{T,z}}\upharpoonright_{L^2(\kappa_T^{-1}s,\infty)}\big)\e^{-z}\frac{\d z}{z}.
\end{equation}
In \eqref{SH:1} we choose a Hankel contour $\Gamma$ around $\mathbb{R}_{\geq 0}$, we abbreviate $\kappa_T:=(T/2)^{1/3}$ and $K_{\sigma}$ is the integral operator with kernel
\begin{equation*}
	K_{\sigma}(x,y):=\int_{-\infty}^{\infty}\sigma(t)\textnormal{Ai}(x+t)\textnormal{Ai}(y+t)\,\d t\ \ \ \ \ \ \textnormal{and}\ \ \ \ \ \ \sigma_{T,z}(t):=\frac{z}{z-\e^{-\kappa_Tt}}.
\end{equation*}
\end{theo}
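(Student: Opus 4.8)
The plan is to reconstruct the argument of Amir, Corwin and Quastel, which realizes \eqref{SH:1} as the weakly asymmetric scaling limit of an exact formula for the asymmetric simple exclusion process (ASEP). The first ingredient is the G\"artner--Bertini--Giacomin theorem: writing $N_\varepsilon$ for the height function of step-initial-data ASEP with asymmetry tuned to the intermediate-disorder regime, the microscopic Hopf--Cole field $\mathcal{Z}_\varepsilon(X,T):=\exp\bigl(-\lambda_\varepsilon N_\varepsilon(\varepsilon^{-2}T,\varepsilon^{-1}X)+c_\varepsilon(X,T)\bigr)$ converges in distribution, as $\varepsilon\downarrow 0$, to the solution $\mathcal{Z}(X,T)$ of the SHE with $\delta$-initial data. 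Consequently $F_T(s)=\lim_{\varepsilon\downarrow 0}\mathbb{P}\bigl(\mathcal{Z}_\varepsilon(X,T)\le \e^{-\kappa_T^{-1}s+\cdots}\bigr)$, and everything reduces to taking the limit of the corresponding ASEP probability.

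The second ingredient is Tracy and Widom's exact formula for $\mathbb{P}(x_m(t)\ge a)$ for step-initial-data ASEP, which has the shape $\int_{\mathcal C}\det\bigl(1+\mu\,J_\mu\bigr)_{L^2}\,\frac{\d\mu}{\mu}$ with $J_\mu$ an explicit kernel built from $q$-deformed special functions, $q$ the ratio of hopping rates. Inserting the intermediate-disorder scaling makes $q\to 1$; the $q$-factorials degenerate to exponentials, the discrete structures become contour integrals, and after the change of variables $\mu\mapsto z$ the measure $\frac{\d\mu}{\mu}$ on $\mathcal C$ becomes $\e^{-z}\frac{\d z}{z}$ on the Hankel contour $\Gamma$ of Theorem \ref{hot}, while $J_\mu$ converges to the Airy-type kernel $K_{\sigma_{T,z}}$ with the Fermi-type weight $\sigma_{T,z}(t)=z/(z-\e^{-\kappa_Tt})$. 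The independence of $X$, after subtracting $X^2/2T$, then falls out of the Gaussian structure of the Hopf--Cole field.

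The genuine difficulty is the rigorous interchange of limits. One must (i) control the $q\to1$ asymptotics of the $q$-kernels uniformly in trace norm along suitably deformed (steepest-descent) contours, so that the limit passes through both the infinite Fredholm expansion and the $\mu$-integral; (ii) avoid the non-integrability obstruction that defeats the naive replica route --- the moments $\mathbb{E}[\mathcal{Z}^n]$ computed from the attractive $\delta$-Bose gas grow like $\e^{cn^3}$ and do not determine the law, so one cannot simply resum $\sum_n \tfrac{(-\zeta)^n}{n!}\mathbb{E}[\mathcal{Z}^n]$ without appealing to a Fredholm-determinant identity; using the ASEP formula sidesteps this but transfers the cost to the delicate $q$-analysis; and (iii) upgrade the convergence $\mathcal{Z}_\varepsilon\Rightarrow\mathcal{Z}$ to convergence of the one-point distribution function, which needs tightness together with continuity of the limit law. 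Once these estimates are secured, identifying the limiting kernel and contour with those in \eqref{SH:1} is pure bookkeeping.
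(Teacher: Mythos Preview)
The paper does not give its own proof of this statement. Theorem~\ref{hot} is quoted verbatim from \cite[Theorem~1.1]{ACQ} as background and motivation; the paper's original contribution in Section~\ref{ex:6} begins only afterward, where it studies the Fredholm determinant $F_\sigma(s)$ via an operator-valued Hilbert boundary value problem and rederives the integro-differential Painlev\'e-II formula (Theorem~\ref{tb:2}, i.e.\ \cite[Proposition~5.2]{ACQ}) through a Lax pair. So there is nothing in the paper to compare your proposal against.

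That said, your sketch is a fair high-level summary of the actual Amir--Corwin--Quastel argument: the Bertini--Giacomin/G\"artner convergence of weakly asymmetric ASEP to the SHE, the Tracy--Widom contour-integral Fredholm formula for step-initial ASEP, and the $q\to 1$ degeneration of the kernel to the Airy-type kernel with Fermi weight. Your list of difficulties (uniform trace-norm control along deformed contours, the replica-moment obstruction, and upgrading process-level convergence to one-point distribution convergence) also matches the genuine technical content of \cite{ACQ}. As written it is of course only an outline---each of your items (i)--(iii) is a substantial piece of analysis in the original paper---but the strategy is correct.
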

The explicit formula \eqref{SH:1} makes it possible to analyze the statistics of the KPZ equation in the two extreme cases $T\downarrow 0$ and $T\rightarrow\infty$. First, under the KPZ scaling $T^{1/3}$, the statistics converges to the Tracy-Widom distribution \eqref{e:57} for large $T$, cf. \cite[Corollary $1.3$]{ACQ},
\begin{equation}\label{SH:2}
	\lim_{T\rightarrow\infty}F_T\big((T/2)^{1/3}x\big)=F(x),\ \ \ \ \ x\in\mathbb{R}.
\end{equation}
Second, under the scaling $T^{1/4}$, the statistics converges to the standard normal distribution for small $T$, cf. \cite[Proposition $1.8$]{ACQ},
\begin{equation}\label{SH:3}
	\lim_{T\downarrow 0}F_T\big(2^{-1/2}(\pi T)^{1/4}(x-\ln\sqrt{2\pi T})\big)=Z(x),\ \ \ \ \ \ x\in\mathbb{R}.
\end{equation}
Hence, the KPZ equation interpolates between two universality classes and \eqref{SH:1} is therefore a \textit{crossover distribution}.\bigskip

From a Riemann-Hilbert/integrable systems viewpoint we are interested in the Fredholm determinant entering the Amir-Corwin-Quastel formula \eqref{SH:1}. Integrating by parts we can rewrite its kernel in the form
\begin{equation}\label{SH:4}
	(x-y)K_{\sigma}(x,y)=\int_{-\infty}^{\infty}\big(\textnormal{Ai}(x+t)\textnormal{Ai}'(y+t)-\textnormal{Ai}'(x+t)\textnormal{Ai}(y+t)\big)\d\sigma(t)
\end{equation}
with the measure $\d\sigma(t):=\sigma'(t)\,\d t$. Besides the particular choice of $\sigma=\sigma_{T,z}$ in \eqref{SH:1}, equality in \eqref{SH:4} holds as soon as $\sigma:\mathbb{R}\rightarrow\mathbb{C}$ is smooth and 
\begin{equation*}
	\lim_{t\rightarrow-\infty}\sigma(t)=0,\ \ \ \ \ \ \ \ \ \ \ \ \lim_{t\rightarrow+\infty}\sigma(t)=1\ \ \ \ \textnormal{exponentially fast}.
\end{equation*}
Note that \eqref{SH:4} includes the Fermi factor choice 
\begin{equation*}
	\sigma(t)=\frac{1}{1+\e^{-\alpha t}},\,\,\,\alpha>0,
\end{equation*}
which plays an important role in the analysis of the Moshe-Neuberger-Shapiro random matrix ensemble \cite[$(14)$]{Joh}, \cite[$(23)$]{LW} and in the analysis of non-interacting quantum many-body systems \cite[$(158)$]{DMS}. Now given the formal - think of $\sigma(t)=\chi_{[0,\infty)}(t)$ - similarities between \eqref{SH:4} and the standard Airy kernel \eqref{e:60}
one might ask for an analogue of the Tracy-Widom Painlev\'e-II formula \eqref{e:57} for the Fredholm determinant of \eqref{SH:4} on $L^2(s,\infty)$, perhaps even an underlying Lax pair which would further strengthen the connection between Theorem \ref{hot} and integrable systems theory. The task of finding an analogue of \eqref{e:57} has been completed in \cite[Section $5$]{ACQ} using the techniques of \cite{TW}, see Theorem \ref{tb:2} below. Here, we will employ Riemann-Hilbert methods and en route reproduce the Amir-Corwin-Quastel formula \cite[Proposition $5.2$]{ACQ} while identifying an underlying Lax pair. Our approach is motivated by the duality between the kernels involved, namely
\begin{equation*}
	K_{\textnormal{Ai}}(x,y)=\frac{\sum_{i=1}^nf_i(x)g_i(y)}{x-y}\ \ \ \ \ \ \ \leftrightarrow\ \ \ \ \ \ \ K_{\sigma}(x,y)=\frac{\int f(x,t)g(y,t)\d\sigma(t)}{x-y},
\end{equation*}
i.e. the matrix-valued methods of Subsection \ref{Hei} will be lifted to operator-valued ones.
\subsection{Technicalities up front} Throughout we are considering kernels $K_{\sigma}$ of the form \eqref{SH:4} where $\d\sigma$ is a positive Borel probability measure on $\mathbb{R}$ for which all moments are finite, i.e.
\begin{equation*}
	\int_{\mathbb{R}}|t|^k\d\sigma(t)<\infty,\ \ \  k\in\mathbb{Z}_{\geq 0},
\end{equation*}
and which is absolutely continuous with respect to the Lebesgue measure.
\begin{definition} Let $p\in\mathbb{Z}_{\geq 1}$. We require the following abbreviations:
\begin{enumerate}
	\item[(1)] The direct sum Hilbert space
	\begin{equation*}
		\mathcal{H}_p:=\bigoplus_{j=1}^pL^2(\mathbb{R},\d\sigma)=\Big\{f=(f_1,\ldots,f_p)^{\top}\in\mathbb{C}^{p\times 1}:\,f_j\in L^2(\mathbb{R},\d\sigma)\Big\},
	\end{equation*}
	equipped with the standard inner product $\langle f,g\rangle_{\mathcal{H}_p}:=\sum_{j=1}^p\int_{\mathbb{R}}f_j(t)\overline{g_j(t)}\d\sigma(t)$ and induced norm $\|f\|_{\mathcal{H}_p}:=\sqrt{\langle f,f\rangle_{\mathcal{H}_p}}$.
	\item[(2)] The space $L^2(\mathbb{R},\d\sigma;\mathbb{C}^{p\times p})$ of $p\times p$ matrix-valued functions with entries in $L^2(\mathbb{R},\d\sigma)$, equipped with the norm
	\begin{equation*}
		\|X\|_{L^2(\mathbb{R},\d\sigma;\mathbb{C}^{p\times p})}:=\left[\int_{\mathbb{R}}\textnormal{tr}\big(X(t)X^{\dagger}(t)\big)\d\sigma(t)\right]^{1/2},\ \ \ \ X=\big[X^{ij}\big]_{i,j=1}^p.
	\end{equation*}
	\item[(3)] The space $\mathcal{I}(\mathcal{H}_p)$ of Hilbert-Schmidt integral operators on $\mathcal{H}_p$ of the form
	\begin{equation*}
		\big(Kf\big)(x)=\int_{\mathbb{R}}K(x,y)f(y)\,\d\sigma(y),
	\end{equation*}
	with kernel $K\in L^2(\mathbb{R}^2,\d\sigma\otimes\d\sigma;\mathbb{C}^{p\times p})$.
	\item[(4)] The matrix identity operator $\mathbb{I}_p$ on $\mathcal{H}_p$.
\end{enumerate}
\end{definition}
Next, fix some region $\Omega\subset\mathbb{C}$ and consider $K=K(z)\in\mathcal{I}(\mathcal{H}_p)$ which depends on an auxiliary variable $z\in\Omega$. We now define the notion of an \textit{analytic integral operator}, cf. \cite[page $1781$]{IKo}.
\begin{definition}\label{anadef} We say that $K(z)\in\mathcal{I}(\mathcal{H}_p)$ with kernel $K(z|x,y)$ is analytic in $z\in\Omega$, if
\begin{enumerate}
	\item[(1)] for any $(x,y)\in\mathbb{R}^2$, the map $z\mapsto K(z|x,y)\in\mathbb{C}^{p\times p}$ is analytic in $\Omega$.
	\item[(2)] for any $z\in\Omega$, the map $(x,y)\mapsto K(z|x,y)$ is in $L^2(\mathbb{R}^2,\d\sigma\otimes\d\sigma;\mathbb{C}^{p\times p})$.
\end{enumerate}
\end{definition}
Furthermore, if $\Sigma\subset\Omega\subset\mathbb{C}$ is an oriented contour which consists of a finite union of smooth oriented curves in $\mathbb{CP}^1$ with finitely many self-intersections, we will need boundary values of $K(z)\in\mathcal{I}(\mathcal{H}_p)$, cf. \cite[page $1782$]{IKo}.
\begin{definition}\label{condef} We say that an analytic in $z\in\Omega\setminus\Sigma$ operator $K(z)\in\mathcal{I}(\mathcal{H}_p)$ admits continuous boundary values $K_{\pm}(z)\in\mathcal{I}(\mathcal{H}_p)$ on $\Sigma'\subset\Sigma$ with kernels $K_{\pm}(z|x,y)$ if
\begin{enumerate}
	\item[(1)] for any $(x,y)\in\mathbb{R}^2$, the map $z\mapsto K_{\pm}(z|x,y)\in\mathbb{C}^{p\times p}$ is continuous on $\Sigma'$.
	\item[(2)] for any $(x,y)\in\mathbb{R}^2$, the non-tangental limits
	\begin{equation*}
		\lim_{\lambda\rightarrow z}K(\lambda|x,y)=K_{\pm}(z|x,y),\ \ \ \ \lambda\in\,\pm\,\textnormal{side of}\,\,\Sigma'\ \textnormal{at}\,z
	\end{equation*}
	exist.
\end{enumerate}
\end{definition}
At this point we record two basic results about the kernel \eqref{SH:4}.
\begin{lem}\label{L:1} Let $K_{\sigma}:L^2(s,\infty)\rightarrow L^2(s,\infty)$ be given by
\begin{equation}\label{SH:5}
	(K_{\sigma}f)(x):=\int_s^{\infty}K_{\sigma}(x,y)f(y)\,\d y,
\end{equation}
in terms of the kernel \eqref{SH:4}. Then $K_{\sigma}$ is trace class for any $s\in\mathbb{R}$.
\end{lem}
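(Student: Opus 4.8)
The plan is to exhibit the operator $K_\sigma$ of \eqref{SH:5} as a product of two Hilbert--Schmidt operators; since such a product is automatically of trace class, this proves the lemma for every $s\in\mathbb{R}$. Concretely, factor $\sigma(t)=u(t)\,|\sigma(t)|$ with $u$ measurable, $|u|\equiv1$ (and $u\equiv1$ where $\sigma$ vanishes), and introduce the two integral operators $A,B\colon L^2(s,\infty)\rightarrow L^2(\mathbb{R},\d t)$ whose kernels are
\[
a(t,x):=\overline{u(t)}\,|\sigma(t)|^{1/2}\,\textnormal{Ai}(x+t),\qquad b(t,x):=|\sigma(t)|^{1/2}\,\textnormal{Ai}(x+t),\qquad x>s,\ t\in\mathbb{R}.
\]
A short computation with Fubini's theorem (justified once the bounds below are in hand) identifies the kernel of $A^{*}B$ with $\int_{\mathbb{R}}u(t)|\sigma(t)|\,\textnormal{Ai}(x+t)\textnormal{Ai}(y+t)\,\d t=\int_{\mathbb{R}}\sigma(t)\,\textnormal{Ai}(x+t)\textnormal{Ai}(y+t)\,\d t$, i.e. $A^{*}B=K_\sigma$ as operators on $L^2(s,\infty)$. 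So it remains to check that $A$ and $B$ are Hilbert--Schmidt; because $|\overline{u}|\equiv1$ their Hilbert--Schmidt norms agree, and after the substitution $v=x+t$,
\[
\|A\|_{\mathrm{HS}}^{2}=\|B\|_{\mathrm{HS}}^{2}=\int_{\mathbb{R}}|\sigma(t)|\,\Phi_s(t)\,\d t,\qquad \Phi_s(t):=\int_{s+t}^{\infty}\textnormal{Ai}(v)^{2}\,\d v .
\]

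The work, then, goes into showing that this last integral is finite, and here the two asymptotic regimes of the Airy function are used. From $\textnormal{Ai}(v)\sim\tfrac{1}{2}\pi^{-1/2}v^{-1/4}\e^{-\frac{2}{3}v^{3/2}}$ as $v\to+\infty$ one reads off that $\Phi_s(t)$ decays faster than any power of $t$ (in particular $\Phi_s(t)\le C_s\,\e^{-t}$ for $t\ge t_0(s)$), while from $\textnormal{Ai}(v)=O(|v|^{-1/4})$ as $v\to-\infty$ one gets $\Phi_s(t)=O(1+|t|^{1/2})$ as $t\to-\infty$; combined with the continuity and monotonicity of $t\mapsto\Phi_s(t)$ this yields a global bound $\Phi_s(t)\le C_s(1+|t|)$ on all of $\mathbb{R}$. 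Splitting $\int_{\mathbb{R}}|\sigma(t)|\Phi_s(t)\,\d t$ at $t_0$: on $[t_0,\infty)$ the integrand is dominated by $\|\sigma\|_{L^{\infty}(\mathbb{R})}\,C_s\,\e^{-t}$, and on $(-\infty,t_0]$ by $C_s(1+|t|)|\sigma(t)|$, which is integrable thanks to the running assumptions on $\d\sigma$ --- finiteness of its second moment (equivalently, the exponential decay of $\sigma$ at $-\infty$) gives precisely $(1+|t|)|\sigma(t)|\in L^{1}((-\infty,t_0])$. Hence $\|A\|_{\mathrm{HS}}=\|B\|_{\mathrm{HS}}<\infty$.

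It then follows that $K_\sigma=A^{*}B$ is trace class on $L^{2}(s,\infty)$, with $\|K_\sigma\|_{1}\le\|A\|_{\mathrm{HS}}\|B\|_{\mathrm{HS}}$, and since $s\in\mathbb{R}$ was arbitrary the lemma is proved. (When $\d\sigma\ge0$ one may simply take $u\equiv1$, so that $A=B$ and $K_\sigma=B^{*}B\ge0$.) I expect the only genuinely delicate point to be the regime $t\to-\infty$, where $\Phi_s(t)$ grows like $|t|^{1/2}$ and one must spend the decay of $\sigma$ to absorb it --- this is exactly where the hypotheses on $\d\sigma$ enter. The contribution near $t=+\infty$ is harmless because of the super-exponential decay of the Airy function, and the rest is a routine matter of the classical Airy asymptotics together with the ideal properties of Hilbert--Schmidt operators.
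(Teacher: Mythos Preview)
Your argument is correct and takes a genuinely different route from the paper's. The paper observes that, since $\d\sigma$ is a \emph{positive} measure, the kernel $K_\sigma(x,y)$ is Hermitian positive definite; it then appeals to Simon's criterion \cite[Theorem~2.12]{Sim}, reducing the trace-class property to the finiteness of $\int_s^\infty K_\sigma(x,x)\,\d x$, which it verifies via the closed form $\int_y^\infty\textnormal{Ai}^2(x)\,\d x=(\textnormal{Ai}'(y))^2-y\,\textnormal{Ai}^2(y)$ and the moment hypothesis on $\d\sigma$. You instead factor $K_\sigma=A^\ast B$ with $A,B$ Hilbert--Schmidt and estimate their norms directly. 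Your approach is more elementary in that it avoids the positive-kernel criterion and works verbatim for complex-valued $\sigma$ (this is the point of your phase $u$); the paper's approach, on the other hand, yields positivity of $K_\sigma$ as a byproduct, which you only recover in the special case $\sigma\ge0$ via $A=B$. The analytic input is essentially the same in both arguments---growth of $\int_{s+t}^\infty\textnormal{Ai}^2$ as $t\to-\infty$ absorbed by moments of $\d\sigma$---though the paper uses a $|t|^{3/2}$ bound after one more integration while you use $|t|^{1/2}$ (hence only the second moment). One small quibble: the parenthetical ``equivalently, the exponential decay of $\sigma$ at $-\infty$'' is not an equivalence---finite second moment of $\d\sigma$ does not force exponential decay of $\sigma$---but you invoke the correct hypothesis (the moment condition) in the actual estimate, so this does not affect the proof.
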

\begin{proof} If $\sigma(t)$ is the aforementioned Fermi factor, then the claim is proven in \cite[Proposition $1.1$]{Joh}. For general $\sigma$ we argue as follows: write
\begin{equation*}
	K_{\sigma}(x,y)=\int_{-\infty}^{\infty}\left[\int_0^{\infty}\textnormal{Ai}(x+t+u)\textnormal{Ai}(y+t+u)\,\d u\right]\d\sigma(t),
\end{equation*}
and obtain that for any $x_1,\ldots,x_n\in(s,\infty)$ and $z_1,\ldots,z_n\in\mathbb{C}$,
\begin{equation*}
	\sum_{i,j=1}^nz_i\overline{z_j}K_{\sigma}(x_i,x_j)=\int_{-\infty}^{\infty}\bigg[\int_0^{\infty}\bigg|\sum_{i=1}^nz_i\textnormal{Ai}(x_i+t+u)\bigg|^2\d u\bigg]\d\sigma(t)\geq 0.
\end{equation*}
Thus, with $K_{\sigma}(x,y)=K_{\sigma}(y,x)$, the function $K_{\sigma}(x,y)$ is Hermitian positive definite, so the claim follows provided we can show that
\begin{equation}\label{SH:6}
	\int_s^{\infty}K_{\sigma}(x,x)\,\d x<\infty,
\end{equation}
see \cite[Theorem $2.12$]{Sim}. In order to get to \eqref{SH:6} we compute the elementary integral,
\begin{equation*}
	\int_s^{\infty}\textnormal{Ai}^2(x+u)\,\d u=\big(\textnormal{Ai}'(x+s)\big)^2-(x+s)\textnormal{Ai}^2(x+s),\ \ \ \ x\in\mathbb{R},
\end{equation*}
apply Fubini's theorem,
\begin{equation*}
	\int_s^{\infty}K_{\sigma}(x,x)\,\d x=\int_{-\infty}^{\infty}\left[\int_{s+t}^{\infty}\Big\{\big(\textnormal{Ai}'(x)\big)^2-x\textnormal{Ai}^2(x)\Big\}\d x\right]\d\sigma(t),
\end{equation*}
and estimate for any $y\in\mathbb{R}$, cf. \cite[$\S9.7$]{NIST},
\begin{equation*}
	\int_y^{\infty}\Big|\big(\textnormal{Ai}'(x)\big)^2-x\textnormal{Ai}^2(x)\Big|\d x\leq c|y|^{\frac{3}{2}},\ \ \ c>0.
\end{equation*}
Hence \eqref{SH:6} follows from our finite moment assumption placed on the Borel measure $\d\sigma$.
\end{proof}
Next to the trace class property of $K_{\sigma}$ we will also need the following result which is well-known at least for the Airy operator induced by the kernel \eqref{e:60}, i.e. for $\sigma(t)=\chi_{[0,\infty)}(t)$ in \eqref{SH:4}.
\begin{prop}\label{L:2} For every $s\in\mathbb{R}$, the self-adjoint operator $K_{\sigma}$ on $L^2(s,\infty)$ satisfies $0\leq K_{\sigma}\leq 1$. Moreover, $1-K_{\sigma}$ is invertible on $L^2(s,\infty)$.
\end{prop}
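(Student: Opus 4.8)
The strategy is to exhibit $K_\sigma$ on $L^2(s,\infty)$ as an average, against the probability measure $\d\sigma$, of \emph{orthogonal projections} on $L^2(\mathbb{R})$, and then to restrict to the half-line. For $t\in\mathbb{R}$ let $K^{(t)}$ be the operator on $L^2(\mathbb{R})$ with kernel $K^{(t)}(x,y):=\int_0^\infty\textnormal{Ai}(x+t+u)\textnormal{Ai}(y+t+u)\,\d u$, so that, by the representation of $K_\sigma$ derived in the proof of Lemma \ref{L:1}, $K_\sigma(x,y)=\int_{\mathbb{R}}K^{(t)}(x,y)\,\d\sigma(t)$ for all $x,y\in\mathbb{R}$. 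First I would record that each $K^{(t)}$ is an orthogonal projection: with $\mathcal{A}^{(t)}$ the bounded self-adjoint operator on $L^2(\mathbb{R})$ of kernel $\textnormal{Ai}(x+y+t)$, the classical identity $\int_{\mathbb{R}}\textnormal{Ai}(a+u)\textnormal{Ai}(u+b)\,\d u=\delta(a-b)$ gives $(\mathcal{A}^{(t)})^2=\mathbb{I}$, so $\mathcal{A}^{(t)}$ is a self-adjoint unitary involution and $K^{(t)}=\mathcal{A}^{(t)}\chi_{(0,\infty)}\mathcal{A}^{(t)}$ is a self-adjoint idempotent. (Equivalently $K^{(t)}=\chi_{(-\infty,-t)}(-\partial_x^2+x)$ is a spectral projection of the Airy Schr\"odinger operator, since $\textnormal{Ai}(x-E)$ solves $(-\partial_x^2+x)\psi=E\psi$.) In particular $0\le K^{(t)}\le 1$ on $L^2(\mathbb{R})$, and every element of $\textnormal{ran}\,K^{(t)}$ has the form $x\mapsto\int_0^\infty\textnormal{Ai}(x+t+u)\psi(u)\,\d u$ with $\psi\in L^2(0,\infty)$, hence extends to an entire function of $x$ (the integral converges locally uniformly in $x\in\mathbb{C}$ by the super-exponential decay of $\textnormal{Ai}$ at $+\infty$).

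For the bound $0\le K_\sigma\le 1$ on $L^2(s,\infty)$, given $f\in L^2(s,\infty)$ I would extend it by zero to $\tilde f\in L^2(\mathbb{R})$ and interchange the $t$ and $(x,y)$ integrations to write $\langle K_\sigma f,f\rangle_{L^2(s,\infty)}=\int_{\mathbb{R}}\langle K^{(t)}\tilde f,\tilde f\rangle_{L^2(\mathbb{R})}\,\d\sigma(t)$; the interchange is legitimate because $|K^{(t)}(x,y)|\le K^{(t)}(x,x)^{1/2}K^{(t)}(y,y)^{1/2}$ and $\int_{\mathbb{R}}\int_s^\infty K^{(t)}(x,x)\,\d x\,\d\sigma(t)=\textnormal{tr}\,K_\sigma<\infty$ by Lemma \ref{L:1}. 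Since $0\le\langle K^{(t)}\tilde f,\tilde f\rangle\le\|\tilde f\|^2=\|f\|^2$ and $\sigma$ is a probability measure, integrating gives $0\le\langle K_\sigma f,f\rangle\le\|f\|^2$.

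For the invertibility of $1-K_\sigma$ I would use that $K_\sigma$ is trace class by Lemma \ref{L:1}, hence compact, so with $0\le K_\sigma\le 1$ it suffices to show that $1$ is not an eigenvalue. If $K_\sigma f=f$ with $f\in L^2(s,\infty)$, then equality throughout $\|f\|^2=\langle K_\sigma f,f\rangle=\int_{\mathbb{R}}\langle K^{(t)}\tilde f,\tilde f\rangle\,\d\sigma(t)\le\|f\|^2$ forces $\langle K^{(t)}\tilde f,\tilde f\rangle=\|\tilde f\|^2$ for $\sigma$-a.e.\ $t$; fixing one such $t_0$ and using $0\le K^{(t_0)}\le 1$ gives $K^{(t_0)}\tilde f=\tilde f$ on $L^2(\mathbb{R})$. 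Hence $\tilde f\in\textnormal{ran}\,K^{(t_0)}$, so $\tilde f$ coincides a.e.\ with an entire function $F$; but $\tilde f$ vanishes on $(-\infty,s]$, so $F\equiv 0$ by analytic continuation and therefore $f=0$.

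The routine ingredients — boundedness of $\mathcal{A}^{(t)}$ on $L^2(\mathbb{R})$ (it is, up to a unimodular Fourier multiplier and a reflection, the Fourier transform), the Fubini bookkeeping, and the estimate $\int_y^\infty\big|(\textnormal{Ai}'(x))^2-x\,\textnormal{Ai}^2(x)\big|\,\d x\le c|y|^{3/2}$ already used in Lemma \ref{L:1} — should present no difficulty. The one structural step, which I regard as the crux, is the identity $(\mathcal{A}^{(t)})^2=\mathbb{I}$ and the ensuing projection picture for $K^{(t)}$: it is exactly this that turns ``$\d\sigma$-average of orthogonal projections'' into an honest proof of $K_\sigma\le 1$, and that makes every function in $\textnormal{ran}\,K^{(t)}$ entire, which in turn excludes a $1$-eigenvector of $K_\sigma$ once one restricts to the half-line $(s,\infty)$. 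Everything else is organized around that fact.
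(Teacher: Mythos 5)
Your proof is correct, and it reorganizes the argument in a genuinely different way from the paper, even though both ultimately rest on the same Fourier-analytic fact. The paper proves $0\le K_\sigma\le 1$ by a direct computation: it inserts the Fourier representation $\textnormal{Ai}(x)=\frac{1}{2\pi}\int\e^{\im(\frac{1}{3}y^3+xy)}\d y$ into $\langle f,K_\sigma f\rangle$, rewrites the inner double integral as $\int_t^\infty|\check g(-u)|^2\d u$ with $g(y)=\e^{\im y^3/3}\check f_s(-y)$, and bounds this by $\|g\|^2=\|f\|^2$ via Plancherel. For invertibility, it deduces from equality in the chain that $\int_{-\infty}^t|\check g(-u)|^2\d u=0$ for $\sigma$-a.e.\ $t$, invokes absolute continuity of $\d\sigma$ together with the analyticity of $u\mapsto\int_s^\infty\textnormal{Ai}(x+u)f(x)\d x$, and concludes $f=0$. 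What you have done instead is to make the hidden operator-theoretic structure explicit: the inner quantity $\int_t^\infty|\check g(-u)|^2\d u$ is precisely $\langle K^{(t)}\tilde f,\tilde f\rangle$ where $K^{(t)}=\mathcal{A}^{(t)}\chi_{(0,\infty)}\mathcal{A}^{(t)}$, and $\mathcal{A}^{(t)}$ being a self-adjoint unitary (which is exactly the Plancherel step, repackaged as the Airy involution identity) makes each $K^{(t)}$ an orthogonal projection. This buys two things. First, $0\le K_\sigma\le 1$ becomes the transparent statement that a $\d\sigma$-average of orthogonal projections lies between $0$ and $1$. Second, and more substantively, the injectivity argument becomes cleaner: equality forces $K^{(t_0)}\tilde f=\tilde f$ for some $t_0$ in a full-$\sigma$-measure (hence nonempty) set, so $\tilde f\in\textnormal{ran}\,K^{(t_0)}$ consists of entire functions, and vanishing on $(-\infty,s]$ kills it. Notably this route does not invoke absolute continuity of $\d\sigma$; only the probability-measure property is used to pick $t_0$. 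Your identification of the projection picture — equivalently of $K^{(t)}$ as the spectral projection $\chi_{(-\infty,-t)}(-\partial_x^2+x)$ — is a worthwhile conceptual addition, and the Fubini bookkeeping you supply (via $|K^{(t)}(x,y)|\le K^{(t)}(x,x)^{1/2}K^{(t)}(y,y)^{1/2}$ and $\int\int K^{(t)}(x,x)\d x\,\d\sigma(t)=\tr K_\sigma<\infty$) closes all the gaps.
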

\begin{proof} We employ the standard Fourier transform trick, cf. \cite[Lemma $6.15$]{BDS}, and therefore use
\begin{equation*}
	\textnormal{Ai}(x)=\frac{1}{2\pi}\int_{-\infty}^{\infty}\e^{\im(\frac{1}{3}y^3+xy)}\,\d y,\ \ \ \ x\in\mathbb{R}.
\end{equation*}
First compute for any $f\in L^2(s,\infty)$,
\begin{equation}\label{SH:7}
	\langle f,K_{\sigma}f\rangle_{L^2(s,\infty)}=\int_{-\infty}^{\infty}\bigg[\int_t^{\infty}\left|\int_{-\infty}^{\infty}\textnormal{Ai}(x+u)f_s(x)\,\d x\right|^2\d u\bigg]\d\sigma(t)\geq 0,
\end{equation}
with $f_s(x):=f(x)\chi_{(s,\infty)}(x)$ using the characteristic function of $(s,\infty)\subset\mathbb{R}$. But from the Fourier representation, for any $u\in\mathbb{R}$,
\begin{equation*}
	\int_{-\infty}^{\infty}\textnormal{Ai}(x+u)f_s(x)\,\d x=\frac{1}{\sqrt{2\pi}}\int_{-\infty}^{\infty}\e^{\im(\frac{1}{3}y^3+uy)}\check{f}_s(-y)\,\d y=\check{g}(-u),
\end{equation*} 
where $\check{f}_s(y):=\frac{1}{\sqrt{2\pi}}\int_{-\infty}^{\infty}f_s(x)\e^{-\im xy}\d x$ and we set $g(y):=\e^{\frac{\im}{3}y^3}\check{f}_s(-y)$. Hence, back in \eqref{SH:7}, we find
\begin{align*}
	0\leq\langle f,K_{\sigma}f\rangle_{L^2(s,\infty)}=\int_{-\infty}^{\infty}\bigg[\int_t^{\infty}|\check{g}(-u)|^2\,\d u\bigg]\d\sigma(t)\leq\int_{-\infty}^{\infty}\bigg[\int_{-\infty}^{\infty}|\check{g}(-u)|^2\,\d u\bigg]\d\sigma(t)
\end{align*}
and thus, since $\d\sigma$ is a probability measure,
\begin{equation*}
	0\leq\langle f,K_{\sigma}f\rangle_{L^2(s,\infty)}\leq\int_{-\infty}^{\infty}|\check{g}(-u)|^2\,\d u=\int_{-\infty}^{\infty}|g(u)|^2\,\d u=\int_{-\infty}^{\infty}|f_s(y)|^2\,\d y=\langle f,f\rangle_{L^2(s,\infty)},
\end{equation*}
where we used Plancherel's theorem in the first and second equality. This shows all together that $0\leq K_{\sigma}\leq 1$ and so, by self-adjointness, also $\|K_{\sigma}\|\leq 1$ in operator norm. Finally, since $K_{\sigma}$ is compact, it is sufficient to establish that $1-K_{\sigma}$ is injective in order to conclude that $1-K_{\sigma}$ is invertible on $L^2(s,\infty)$. So assume there is $f\in L^2(s,\infty)\setminus\{0\}$ such that $K_{\sigma}f=f$. Then all inequalities above must be equalities for this $f$, i.e. in particular
\begin{equation*}
	\langle f,K_{\sigma}f\rangle=\int_{-\infty}^{\infty}\bigg[\int_t^{\infty}|\check{g}(-u)|^2\,\d u\bigg]\d\sigma(t)=\int_{-\infty}^{\infty}|\check{g}(-u)|^2\,\d u,
\end{equation*}
and which implies that
\begin{equation*}
	\int_{-\infty}^{\infty}\bigg[\int_{-\infty}^t|\check{g}(-u)|^2\,\d u\bigg]\d\sigma(t)=0.
\end{equation*}
But $\d\sigma$ is a positive Borel probability measure which is absolutely continuous with respect to the Lebesgue measure, so we must have
\begin{equation*}
	\sigma\textnormal{-}\textnormal{a.e.}:\ \ \ \int_{-\infty}^t|\check{g}(-u)|^2\,\d u=0\ \ \ \ \ \ \Rightarrow\ \ \ \ \ \ \textnormal{a.e.}:\ \ \ \check{g}(-t)=\int_s^{\infty}\textnormal{Ai}(x+t)f(x)\,\d x=0,
\end{equation*}
and thus by the analytic properties of the Airy function, $f=0\in L^2(s,\infty)$, a contradiction.
\end{proof}
We now combine Lemma \ref{L:1} with Proposition \ref{L:2} and conclude that the Fredholm determinant
\begin{equation}\label{SH:8}
	F_{\sigma}(s):=\det\big(1-K_{\sigma}\upharpoonright_{L^2(s,\infty)}\big)=\exp\left[-\sum_{n=1}^{\infty}\frac{1}{n}\tr_{L^2(s,\infty)}K_{\sigma}^n\right],\ \ \ \ s\in\mathbb{R}
\end{equation}
is well-defined on one hand and on the other hand satisfies $0<F(s)<1$ for all $s\in\mathbb{R}$. Even better, since
\begin{equation*}
	\int_s^{\infty}\Big\{\big(\textnormal{Ai}'(x)\big)^2-x\textnormal{Ai}^2(x)\Big\}\d x\,\,>\int_{s'}^{\infty}\Big\{\big(\textnormal{Ai}'(x)\big)^2-x\textnormal{Ai}^2(x)\Big\}\d x
\end{equation*}
whenever $s<s'$, we also have that $F_{\sigma}(s)$ is strictly increasing in $s$, so $F_{\sigma}(s)$ is a distribution function of some random variable (one also needs Deift's Lemma \cite[Lemma $2.20$]{ACQ} for the regularity of $F_{\sigma}(s)$ in $s$). Additionally, see \cite[Theorem $3$]{Sosh}, there is a determinantal process with correlation kernel $K_{\sigma}$. These features of \eqref{SH:8} and \eqref{SH:4} are not surprising in light of Johansson's discussion of the interpolating process in \cite[Section $1.4$]{Joh}. However we emphasize that \cite{Joh} uses the Fermi factor choice for $\sigma$, here we are in a more general setup.
\subsection{Solvability of the operator-valued problem} We now proceed to characterize \eqref{SH:8} through a naturally associated Hilbert boundary value problem - very much as in Subsection \ref{Hei}, but the problem is going to be operator-valued from the start. First, write
\begin{equation*}
	F_{\sigma}(s)=\det\big(1-K_{\sigma,s}\upharpoonright_{L^2(0,\infty)}\big),
\end{equation*}
where $K_{\sigma,s}$ has kernel $K_{\sigma,s}(x,y):=K_{\sigma}(x+s,y+s)$. Then consider the following operators (we suppress the explicit $s$-dependence in our notation).
\begin{definition}\label{frank} Let $M_i(z)\otimes K_j(z)\in\mathcal{I}(\mathcal{H}_1),i.j=1,2$ denote the $(0,\infty)\ni z$-parametric family of rank one integral operators with kernels
\begin{equation*}
	\big(M_i(z)\otimes K_j(z)\big)(x,y):=m_i(z|x)k_j(z|y),\ \ \ \ x,y\in\mathbb{R},
\end{equation*}
defined in terms of the $(0,\infty)\ni z$-parametric family of functions
\begin{equation}\label{opdef}
	m_1(z|x)=k_2(z|x):=\textnormal{Ai}'(z+s+x),\ \ \ \ \ \ \ m_2(z|x)=k_1(z|x):=\textnormal{Ai}(z+s+x).
\end{equation}
Equivalently, $M_i(z)$ are the operators on $\mathcal{H}_1$ which multiply by the functions $m_i(z|x)$,
\begin{equation*}
	\big(M_i(z)f\big)(x):=m_i(z|x)f(x),
\end{equation*}
and $K_j(z)$ are the integral operators on $\mathcal{H}_1$ with kernel $k_j(z|y)$,
\begin{equation*}
	\big(K_j(z)f\big)(x):=\int_{-\infty}^{\infty}k_j(z|y)f(y)\,\d\sigma(y).
\end{equation*}
\end{definition}
Here is the relevant operator-valued Hilbert boundary value problem.

\begin{prob}\label{HP8} For any $s\in\mathbb{R}$, determine $X(z)=X(z;s)\in\mathcal{I}(\mathcal{H}_2)$ such that
\begin{enumerate}
	\item[(1)] $X(z)=\mathbb{I}_2+X_0(z)$ and $X_0(z)\in\mathcal{I}(\mathcal{H}_2)$ with kernel $X_0(z|x,y)$ is analytic in $\mathbb{C}\setminus[0,\infty)$.
	\item[(2)] $X(z)$ admits continuous boundary values $X_{\pm}(z)\in\mathcal{I}(\mathcal{H}_2)$ on $(0,\infty)\subset\mathbb{R}$, oriented from $-\infty$ to $+\infty$, which satisfy $X_+(z)=X_-(z)G(z)$ with
	\begin{equation}\label{SH:9}
		G(z)=\mathbb{I}_2+2\pi\im\begin{bmatrix}M_1(z)\otimes K_1(z) & -M_1(z)\otimes K_2(z)\smallskip\\
		M_2(z)\otimes K_1(z) & -M_2(z)\otimes K_2(z)\end{bmatrix}.
	\end{equation}
	\item[(3)] There exist operators $F_i$ on $\mathcal{H}_1$ which multiply by $z$-independent functions $f_i\in L^1(\mathbb{R},\d\sigma)\cap L^{\infty}(\mathbb{R},\d\sigma)$ such that for $|z|<\epsilon$, in terms of any branch for $\ln:\mathbb{C}\setminus[0,\infty)\rightarrow\mathbb{C}$,
	\begin{equation}\label{SH:10}
		X(z)=\mathbb{I}_2-\begin{bmatrix}F_1\otimes K_1(0) & -F_1\otimes K_2(0)\smallskip\\
		F_2\otimes K_1(0) & -F_2\otimes K_2(0)\end{bmatrix}\ln z+X_{\textnormal{reg}}(z),
	\end{equation}
	where $X_{\textnormal{reg}}(z)\in\mathcal{I}(\mathcal{H}_2)$ with kernel $X_{\textnormal{reg}}(z|x,y)$ is analytic at $z=0$ and satisfies
	\begin{equation}\label{SH:11}
		 \|X_{\textnormal{reg}}(z|x,y)\|\leq c|1+s+x|^{\frac{3}{4}}|1+s+y|^{\frac{3}{4}}\ \ \ \textnormal{as}\ \ z\rightarrow 0
	\end{equation}
	for all $(x,y)\in\mathbb{R}^2$ with $c>0$.
	\item[(4)] For any compact subset $V\subset\mathbb{C}$ with $\textnormal{dist}(0,V)>0$ there exists $c=c(V)>0$ such that for $z\in\mathbb{C}\setminus V$,
	\begin{equation}\label{SH:12}
		\|X_0(z|x,y)\|\leq\frac{c}{1+|z|}|s+x|^{\frac{3}{4}}|s+y|^{\frac{3}{4}}
	\end{equation}
	uniformly in $(x,y)\in\mathbb{R}^2$.
\end{enumerate}
\end{prob}
Our first result concerns the unique solvability of Problem \ref{HP8}
\begin{theo}\label{tb:1} The Hilbert boundary value problem \ref{HP8} is uniquely solvable for any $s\in\mathbb{R}$ and its solution takes the form
\begin{equation}\label{SH:13}
	X(z)=\mathbb{I}_2+\int_0^{\infty}\begin{bmatrix}N_1(w)\otimes K_1(w) & -N_1(w)\otimes K_2(w)\smallskip\\
	N_2(w)\otimes K_1(w) & -N_2(w)\otimes K_2(w)\end{bmatrix}\frac{\d w}{w-z},\ \ \ z\in\mathbb{C}\setminus[0,\infty),
\end{equation}
where $N_i(w)$ are the operators on $\mathcal{H}_1$ which multiply by the functions $n_i(w|x)$ determined from the integral equation
\begin{equation}\label{SH:14}
	(1-K_{\sigma,s}\upharpoonright_{L^2(0,\infty)})n_i(\cdot|x)=m_i(\cdot|x),\ \ \ i=1,2,
\end{equation}
with $x\in\mathbb{R}$.
\end{theo}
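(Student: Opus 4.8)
The plan is to treat Problem~\ref{HP8} as an \emph{operator-valued integrable} Hilbert boundary value problem in the spirit of Its and Kozlowski~\cite{IKo}, the operator analogue of the matrix IIKS construction~\cite{IIKS} already met in Subsection~\ref{Hei}, and to show that it is \emph{equivalent} to the linear integral equation~\eqref{SH:14}. Because $1-K_{\sigma,s}$ is invertible on $L^2(0,\infty)$ by Proposition~\ref{L:2}, this equivalence yields existence and uniqueness in one stroke, together with the closed form~\eqref{SH:13}. Throughout, $X=[X^{ij}]_{i,j=1}^2$ denotes the block decomposition induced by $\mathcal{H}_2=\mathcal{H}_1\oplus\mathcal{H}_1$.

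First I would run the implication ``$X$ solves Problem~\ref{HP8} $\Rightarrow$ $X$ has the form~\eqref{SH:13} with $n_i$ solving~\eqref{SH:14}''. Given such an $X$, conditions $(1)$, $(2)$ and the decay bound~\eqref{SH:12} identify $X_0=X-\mathbb{I}_2$ with the Cauchy transform of its own jump; the logarithm in~\eqref{SH:10} is integrable at the endpoint $z=0$ and the Airy functions building $G(z)-\mathbb{I}_2$ decay super-exponentially as $w\to+\infty$, so one obtains the $\mathcal{I}(\mathcal{H}_2)$-convergent representation $X(z)=\mathbb{I}_2+\tfrac{1}{2\pi\im}\int_0^\infty X_-(w)\big(G(w)-\mathbb{I}_2\big)\tfrac{\d w}{w-z}$. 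Inserting the block form~\eqref{SH:9} and abbreviating $N_i(w):=X_-^{i1}(w)M_1(w)+X_-^{i2}(w)M_2(w)$, the rank-one structure of $G(w)-\mathbb{I}_2$ collapses the integrand to exactly the matrix in~\eqref{SH:13}; in particular $N_i(w)$ turns out to be multiplication by $n_i(w|x):=\big(X_-^{i1}(w)m_1(w|\cdot)\big)(x)+\big(X_-^{i2}(w)m_2(w|\cdot)\big)(x)$. Taking the $-$ boundary value on $(0,\infty)$, composing on the right with $M_j(w)$, and using $m_1(w|x)=k_2(w|x)=\textnormal{Ai}'(w+s+x)$, $m_2(w|x)=k_1(w|x)=\textnormal{Ai}(w+s+x)$ together with the bilinear identity~\eqref{SH:4}, one finds that the factor $w'-w$ produced by~\eqref{SH:4} cancels the Cauchy singularity $1/(w'-w_-)$ and leaves the smooth kernel $K_{\sigma,s}$; the $2\times2$ matrix structure then splits into the two scalar equations $(1-K_{\sigma,s})n_i(\cdot|x)=m_i(\cdot|x)$, $i=1,2$, which is~\eqref{SH:14}.

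For the converse I would define $n_i(\cdot|x):=(1-K_{\sigma,s})^{-1}m_i(\cdot|x)\in L^2(0,\infty)$, which is well posed for each $x$ by Proposition~\ref{L:2} since $m_i(\cdot|x)\in L^2(0,\infty)$ by the Airy decay, then let $N_i(w)$ be multiplication by $n_i(w|\cdot)$ and \emph{declare} $X(z)$ via~\eqref{SH:13}. Analyticity off $[0,\infty)$ together with $X(z)\to\mathbb{I}_2$ gives condition $(1)$; running the previous computation backwards shows that the Plemelj--Sokhotski jump of this Cauchy transform equals $X_-(z)\big(G(z)-\mathbb{I}_2\big)$, i.e.\ condition $(2)$. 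For $(3)$ one splits $\int_0^\infty\tfrac{n_i(w|x)k_j(w|y)}{w-z}\,\d w$ near $w=0$ to peel off the term $-n_i(0|x)k_j(0|y)\ln z$, matching~\eqref{SH:10} with $F_i$ multiplication by $f_i(x):=n_i(0|x)$ and the analytic remainder $X_{\textnormal{reg}}(z)$; pointwise Airy asymptotics, cf.~\cite[$\S 9.7$]{NIST}, bound $|n_i(w|x)|\lesssim|1+s+x|^{3/4}$ and $|k_j(w|y)|\lesssim|1+s+y|^{3/4}$ near $z=0$, yielding~\eqref{SH:11}. The global bound~\eqref{SH:12} follows from $\|X_0(z|x,y)\|\le\tfrac{1}{2\pi\,\textnormal{dist}(z,[0,\infty))}\sum_{i,j}\|n_i(\cdot|x)\|_{L^2}\|k_j(\cdot|y)\|_{L^2}$ and $\|n_i(\cdot|x)\|_{L^2}\le\|(1-K_{\sigma,s})^{-1}\|\,\|m_i(\cdot|x)\|_{L^2}\lesssim|s+x|^{3/4}$; the finite-moment hypothesis on $\d\sigma$ then promotes $X_0(z)$ to $\mathcal{I}(\mathcal{H}_2)$ and places $f_i$ in $L^1(\mathbb{R},\d\sigma)\cap L^\infty(\mathbb{R},\d\sigma)$. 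Uniqueness is now free: any solution of Problem~\ref{HP8} produces, via the first step, a solution of~\eqref{SH:14}, and~\eqref{SH:14} has a unique solution.

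The hard part is the operator-valued integrable reduction in the middle step: one must check carefully that feeding the $2\times2$, rank-one jump~\eqref{SH:9} into the Cauchy operator on $(0,\infty)$ genuinely \emph{decouples} into two independent scalar equations, both governed by the single operator $K_{\sigma,s}$ of Lemma~\ref{L:1}, which rests entirely on the exact cancellation forced by~\eqref{SH:4}. The remaining, essentially technical, difficulty is the endpoint analysis at $z=0$ (extracting the logarithm cleanly, controlling $X_{\textnormal{reg}}$ as in~\eqref{SH:11}, and verifying the stated function class of $f_i$), together with the uniform-in-$(x,y)$ Airy and finite-moment estimates needed to keep every object inside $\mathcal{I}(\mathcal{H}_p)$ and $L^2(\mathbb{R},\d\sigma;\mathbb{C}^{p\times p})$.
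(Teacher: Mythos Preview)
Your plan is sound, and the existence half mirrors the paper's: define $n_i$ via \eqref{SH:14} using Proposition~\ref{L:2}, build $X$ by \eqref{SH:13}, and verify conditions (1)--(4) of Problem~\ref{HP8} through the Airy/resolvent estimates you sketch. The genuine difference is in uniqueness. You run the IIKS-type equivalence: any solution $X$ of Problem~\ref{HP8} is forced, via the Cauchy representation of $X_0$ and the rank-one block structure of $G-\mathbb{I}_2$, to have the form \eqref{SH:13}; the cancellation $(w-z)K_{\sigma,s}(w,z)$ produced by \eqref{SH:4} then collapses the self-consistency of $X_-$ into the scalar equation \eqref{SH:14}, whose unique solvability (Proposition~\ref{L:2}) pins down $X$. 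The paper never derives this forward implication; instead it proves uniqueness by a Fredholm-determinant/Liouville argument: one checks that $e(z):=\det(\mathbb{I}_2+X_0(z)\!\upharpoonright_{\mathcal{H}_2})$ has no jump on $(0,\infty)$ (since $\det G\equiv 1$ by Plemelj--Smithies and nilpotency of $G_0$), has at worst a logarithmic---hence removable---singularity at $z=0$, and tends to $1$ at infinity, so $e\equiv 1$; then any two solutions have ratio $X^{(1)}(X^{(2)})^{-1}$ which is entire, bounded, normalized, hence $\mathbb{I}_2$. Your route is cleaner conceptually and ties Problem~\ref{HP8} to \eqref{SH:14} in one stroke, but the paper's route has a payoff you forgo: invertibility of $X(z)$ for every $z\notin[0,\infty)$ drops out as a \emph{byproduct} of $e\equiv 1$, and this is precisely what is used downstream in Corollary~\ref{coooo:1} and in the Lax-pair construction \eqref{SH:300}. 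In your framework you would still need to establish invertibility separately---most naturally by writing down the explicit candidate inverse \eqref{SH:24} and checking $X(z)Y(z)=\mathbb{I}_2$ via the same $(w-\lambda)K_{\sigma,s}(w,\lambda)$ cancellation.
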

\begin{proof} Suppose first that Problem \ref{HP8} is solvable. By condition (1), estimate \eqref{SH:12} and the moment assumption on $\d\sigma$, the Fredholm determinant
\begin{equation*}
	e(z):=\det\big(\mathbb{I}_2+X_0(z)\upharpoonright_{\mathcal{H}_2}\big),\ \ \ z\in\mathbb{C}\setminus[0,\infty)
\end{equation*}
is well-defined in the indicated $z$-domain by Hadamard's inequality (by condition (1) only, $X_0(z)$ is a Hilbert-Schmidt operator, but (4) ensures that the ordinary Fredholm determinant on $\mathcal{H}_2$ converges). Moreover, by Morera's and Fubini's theorem, analyticity of $X_0(z)$ in $\mathbb{C}\setminus[0,\infty)$ implies analyticity of $e(z)$ in $\mathbb{C}\setminus[0,\infty)$. Furthermore, using \eqref{SH:12} again and the dominated convergence theorem, the boundary values $e_{\pm}(z),z\in(0,\infty)$ satisfy
\begin{equation}\label{SH:15}
	e_{\pm}(z)=\det\big(X_{\pm}\upharpoonright_{\mathcal{H}_2}\big),\ \ \ z\in(0,\infty).
\end{equation}
Our immediate goal is to derive a scalar Hilbert boundary value problem for $e(z)$: from condition (2) we have $G(z)=\mathbb{I}_2+G_0(z)$ where $G_0(z)\in\mathcal{I}(\mathcal{H}_2)$ is trace class on $\mathcal{H}_2$ and satisfies the estimate, compare Definition \ref{frank} and \cite[$\S 9.7$]{NIST},
\begin{equation*}
	\|G_0(z|x,y)\|\leq c|z+s+x|^{\frac{1}{4}}|z+s+y|^{\frac{1}{4}},\ \ \ \ c>0,
\end{equation*}
uniformly in $(z,x,y)\in\mathbb{R}^3$. This ensures existence of the determinant $\det(\mathbb{I}_2+G_0(z)\upharpoonright_{\mathcal{H}_2})$ for $z\in\mathbb{R}$ and since
\begin{equation*}
	\tr_{\mathcal{H}_2}G_0(z)=0,\ \ \ \ \ \big(G_0(z)\big)^2=0,\ \ \ z\in\mathbb{R},
\end{equation*}
we obtain from the Plemelj-Smithies formula \cite[Chapter II, Theorem $3.1$]{GGK} that $\det(G(z)\upharpoonright_{\mathcal{H}_2})=1$ for all $z\in\mathbb{R}$. With this in hand we return to \eqref{SH:15}, apply condition (2) and the multiplicativity property of Fredholm determinants, in turn
\begin{equation*}
	e_+(z)=e_-(z),\ \ z\in(0,\infty).
\end{equation*}
Moreover, estimate \eqref{SH:12} yields $e(z)\rightarrow 1$ as $z\rightarrow\infty$, so we are left to address the behavior of $e(z)$ in a vicinity of $z=0$. By \eqref{SH:10} and the estimate \eqref{SH:11} we see that the operator
\begin{equation*}
	P(z):=\det\big(\mathbb{I}_2+X_{\textnormal{reg}}(z)\upharpoonright_{\mathcal{H}_2}\big)\big(\mathbb{I}_2+X_{\textnormal{reg}}(z)\big)^{-1},\ \ \ |z|<\epsilon
\end{equation*}
on $\mathcal{H}_2$ is well-defined even if $\det(\mathbb{I}_2+X_{\textnormal{reg}}(z)\upharpoonright_{\mathcal{H}_2})$ vanishes at some points in the disk $|z|<\epsilon$, cf. \cite[Chapter VIII, Theorem $1.1$]{GGK}. Hence from \eqref{SH:10}, as $z\rightarrow 0$ and $z\notin[0,\infty)$,
\begin{equation*}
	e(z)=\det\big(\mathbb{I}_2+X_{\textnormal{reg}}(z)\upharpoonright_{\mathcal{H}_2}\big)-\big\langle P(z)F,K(0)\big\rangle_{\mathcal{H}_2}\ln z,
\end{equation*}
with $F=[F_1,F_2]^{\top},K(0)=\big[K_1(0),K_2(0)\big]^{\top}$. This shows that $|e(z)|\leq c|\ln z|$ when $z\rightarrow 0$ and so all together, $e$ is analytic in $\mathbb{C}\setminus\{0\}$ with a possible blow up at the origin that is logarithmic at worst. Hence, $e(z)$ must be entire and with its limiting behavior at $z=\infty$, thus $e(z)\equiv 1$ for all $z\in\mathbb{C}$. As a consequence of this, we now know that any solution $X(z)\in\mathcal{I}(\mathcal{H}_2)$ of Problem \ref{HP8} is invertible for all $z\in\mathbb{C}\setminus[0,\infty)$ and so are its continuous boundary values $X_{\pm}(z),z\in(0,\infty)$. Thus, if $X^{(1)}(z),X^{(2)}(z)$ are two solutions of Problem \ref{HP8}, then
\begin{equation*}
	\mathcal{I}(\mathcal{H}_2)\ni Y(z):=X^{(1)}(z)\big(X^{(2)}(z)\big)^{-1}=\mathbb{I}_2+Y_0(z),\ \ \ \ z\in\mathbb{C}\setminus[0,\infty)
\end{equation*}
is analytic in $\mathbb{C}\setminus[0,\infty)$, has continuous boundary values $Y_{\pm}(z)$ on $(0,\infty)$ and it obeys a Problem in the style of Problem \ref{HP8} with
\begin{equation*}
	Y_+(z)=Y_-(z),\ \ z\in(0,\infty).
\end{equation*}
Moreover, the kernel $Y_0(z|x,y)$ of $Y_0(z)$ has at worst a square logarithmic singularity at $z=0$, so the map $z\mapsto Y_0(z|x,y)$ is entire $(\sigma\otimes\sigma)$-almost everywhere. But $Y_0(z|x,y)\rightarrow 0$ as $z\rightarrow\infty$ also $(\sigma\otimes\sigma)$-almost everywhere, so by Liouville's theorem we conclude $Y(z)=\mathbb{I}_2$ and which proves that Problem \ref{HP8}, if solvable, is uniquely solvable.\smallskip

Second, we prove solvability of Problem \ref{HP8} and recall to this end that $1-K_{\sigma,s}$ is invertible on $L^2(0,\infty)$ for any $s\in\mathbb{R}$, see Proposition \ref{L:2}. Thus the integral equation \eqref{SH:14} is uniquely solvable and by the boundedness of the resolvent we find for its solution, uniformly in $x\in\mathbb{R}$,
\begin{equation}\label{SH:16}
	\|n_i(\cdot|x)\|_{L^2(0,\infty)}\leq c\|m_i(\cdot|x)\|_{L^2(0,\infty)},\ \ \ \ c>0, \ \ \ i=1,2.
\end{equation}
Now consider $X(z)$ as in \eqref{SH:13} and note that the kernel of $X_0(z)$ (the integral piece) is built out of the functions
\begin{equation*}
	X_0^{ij}(z|x,y)=\int_0^{\infty}n_i(w|x)k_j(w|y)\frac{\d w}{w-z},\ \ \ \ z\notin[0,\infty),\ \ \ (x,y)\in\mathbb{R}^2.
\end{equation*}
However, using \eqref{SH:16} and standard estimates for Airy functions, cf. \cite[$\S 9.7$]{NIST} we obtain by Cauchy-Schwarz
\begin{equation}\label{SH:17}
	\big|X_0^{ij}(z|x,y)\big|\leq\frac{c}{\textnormal{dist}(z,\mathbb{R}_{\geq 0})}\,|s+x|^{\frac{3}{4}}|s+y|^{\frac{3}{4}},
\end{equation}
uniformly in $(x,y)\in\mathbb{R}^2$. Thus $(x,y)\mapsto X_0(z|x,y)$ is in $L^2(\mathbb{R}^2,\d\sigma\otimes\d\sigma;\mathbb{C}^{2\times 2})$ for $z\in\mathbb{C}\setminus[0,\infty)$. Moreover, by the standard regularity properties of the resolvent operator, we also find from \eqref{SH:14} that $z\mapsto X_0^{ij}(z|x,y)$ is H\"older continuous on $\mathbb{C}\setminus[0,\infty)$ for any $(x,y)\in\mathbb{R}^2$ and $i,j=1,2$. Thus, by the Plemelj-Sokhostki formul\ae, $z\mapsto X_0(z|x,y)$ is analytic in $\mathbb{C}\setminus[0,\infty)$ for any $(x,y)\in\mathbb{R}^2$ and so the right hand side in \eqref{SH:13} analytic in $z\in\mathbb{C}\setminus[0,\infty)$ according to Definition \ref{anadef}. Having just established condition (1) in Problem \ref{HP8}, we now turn to (2), (3) and (4): First, estimate \eqref{SH:12} follows from \eqref{SH:17} and the Plemelj-Sokhotski formul\ae. Second, $z\mapsto X_0^{ij}(z|x,y)$ is H\"older continuous on $[0,\infty)$ for any $(x,y)\in\mathbb{R}^2$, thus $X(z)$ admits continuous boundary values $X_{\pm}(z)\in\mathcal{I}(\mathcal{H}_2)$ on $(0,\infty)$ according to Definition \ref{condef} by the Plemelj-Sokhotski formul\ae. Now verify \eqref{SH:9}, so first compute from \eqref{SH:13},
\begin{equation}\label{SH:18}
	X_+(z)-X_-(z)=2\pi\im\begin{bmatrix}N_1(z)\otimes K_1(z) & -N_1(z)\otimes K_2(z)\smallskip\\
	N_2(z)\otimes K_1(z) & -N_2(z)\otimes K_2(z)\end{bmatrix},\ \ \ \ \ z\in(0,\infty).
\end{equation}
On the other hand, by \eqref{SH:10} and \eqref{SH:13},
\begin{align}\label{SH:19}
	X_-(z)G(z)=&\,X_-(z)\left\{\mathbb{I}_2+2\pi\im\begin{bmatrix}M_1(z)\otimes K_1(z) & -M_1(z)\otimes K_2(z)\smallskip\\
	M_2(z)\otimes K_1(z) & -M_2(z)\otimes K_2(z)\end{bmatrix}\right\}\nonumber\\
	=&\,X_-(z)+2\pi\im\left\{\mathbb{I}_2+\int_0^{\infty}\begin{bmatrix}N_1(w)\otimes K_1(w) & -N_1(w)\otimes K_2(w)\smallskip\\
	N_2(w)\otimes K_1(w) & -N_2(w)\otimes K_2(w)\end{bmatrix}\frac{\d w}{w-z_-}\right\}\nonumber\\
	&\ \ \ \ \ \ \ \ \ \,\circ\begin{bmatrix}M_1(z)\otimes K_1(z) & -M_1(z)\otimes K_2(z)\smallskip\\
	M_2(z)\otimes K_1(z) & -M_2(z)\otimes K_2(z)\end{bmatrix},\ \ \ \ z\in(0,\infty),
\end{align}
and since, using here the general $(N_i\otimes K_j)(M_j\otimes K_{\ell})=\langle K_j,M_j\rangle_{\mathcal{H}_1}(N_i\otimes K_{\ell})$ fact,
\begin{align*}
	&\begin{bmatrix}N_1(w)\otimes K_1(w) & -N_1(w)\otimes K_2(w)\smallskip\\
	N_2(w)\otimes K_1(w) & -N_2(w)\otimes K_2(w)\end{bmatrix}\begin{bmatrix}M_1(z)\otimes K_1(z) & -M_1(z)\otimes K_2(z)\smallskip\\
	M_2(z)\otimes K_1(z) & -M_2(z)\otimes K_2(z)\end{bmatrix}\\
	&\ \ \ \ \ \ \ \ \,=(w-z)K_{\sigma,s}(w,z)\begin{bmatrix}N_1(w)\otimes K_1(z) & -N_1(w)\otimes K_2(z)\smallskip\\
	N_2(w)\otimes K_1(z) & -N_2(w)\otimes K_2(z)\end{bmatrix},\ \ \ (w,z)\in(0,\infty),\times(0,\infty),
\end{align*}
the chain of equalities \eqref{SH:19} simplifies further
\begin{align}\label{SH:20}
	X_-(z)G(z)=&\,X_-(z)+2\pi\im\begin{bmatrix}M_1(z)\otimes K_1(z) & -M_1(z)\otimes K_2(z)\smallskip\\
	M_2(z)\otimes K_1(z) & -M_2(z)\otimes K_2(z)\end{bmatrix}\\
	&\ \ \ \ \ \ \,+2\pi\im\int_0^{\infty}K_{\sigma,s}(w,z)\begin{bmatrix}N_1(w)\otimes K_1(z) & -N_1(w)\otimes K_2(z)\smallskip\\
	N_2(w)\otimes K_1(z) & -N_2(w)\otimes K_2(z)\end{bmatrix}\d w,\ \ z\in(0,\infty).\nonumber
\end{align}
But \eqref{SH:14} is equivalent to the operator equation
\begin{equation*}
	N_i(z)-\int_0^{\infty}K_{\sigma,s}(z,w)N_i(w)\,\d w=M_i(z),\ \ \ z\in(0,\infty),
\end{equation*}
which, by symmetry of $K_{\sigma,s}$, simplifies \eqref{SH:20} to
\begin{align*}
	X_-(z)G(z)=X_-(z)+2\pi\im\begin{bmatrix}N_1(z)\otimes K_1(z) & -N_1(z)\otimes K_2(z)\smallskip\\
	N_2(z)\otimes K_1(z) & -N_2(z)\otimes K_2(z)\end{bmatrix}\stackrel{\eqref{SH:18}}{=}X_+(z),\ \ \ z\in(0,\infty).
\end{align*}
Condition (2) in Problem \ref{HP8} has therefore been established and we are left with the singular behavior \eqref{SH:10} and \eqref{SH:11} near $z=0$. Fix $z\notin[0,\infty),0<\epsilon<1$ and rewrite \eqref{SH:13} as
\begin{align}\label{SH:200}
	X(z)&\,=\mathbb{I}_2+\ln\left(\frac{z-2\epsilon}{z}\right)\begin{bmatrix}N_1(z)\otimes K_1(z) & -N_1(z)\otimes K_2(z)\smallskip\\
	N_2(z)\otimes K_1(z) & -N_2(z)\otimes K_2(z)\end{bmatrix}\\
	&\,+\int_0^{2\epsilon}\left\{\begin{bmatrix}N_1(w)\otimes K_1(w) & -N_1(w)\otimes K_2(w)\smallskip\\
	N_2(w)\otimes K_1(w) & -N_2(w)\otimes K_2(w)\end{bmatrix}-\begin{bmatrix}N_1(z)\otimes K_1(z) & -N_1(z)\otimes K_2(z)\smallskip\\
	N_2(z)\otimes K_1(z) & -N_2(z)\otimes K_2(z)\end{bmatrix}\right\}\frac{\d w}{w-z}\nonumber\\
	&\,+\int_{2\epsilon}^{\infty}\begin{bmatrix}N_1(w)\otimes K_1(w) & -N_1(w)\otimes K_2(w)\smallskip\\
	N_2(w)\otimes K_1(w) & -N_2(w)\otimes K_2(w)\end{bmatrix}\frac{\d w}{w-z}.\nonumber
\end{align}
The remaining two integrals are operators in $\mathcal{I}(\mathcal{H}_2)$ which are analytic in the disk $|z|<\epsilon$ and the same is true for the operator multiplied by the logarithm: indeed we already showed that
\begin{equation}\label{SH:21}
	X_-(z)G(z)=X_-(z)+2\pi\im\begin{bmatrix}N_1(z)\otimes K_1(z) & -N_1(z)\otimes K_2(z)\smallskip\\
	N_2(z)\otimes K_1(z) & -N_2(z)\otimes K_2(z)\end{bmatrix},\ \ \ z\in(0,\infty),
\end{equation}
and since the trace class $G(z)\in\mathcal{I}(\mathcal{H}_2)$ is invertible with
\begin{equation*}
	\big(G(z)\big)^{-1}=\mathbb{I}_2-2\pi\im\begin{bmatrix}M_1(z)\otimes K_1(z) & -M_1(z)\otimes K_2(z)\smallskip\\
		M_2(z)\otimes K_1(z) & -M_2(z)\otimes K_2(z)\end{bmatrix},\ \ \ z\in\mathbb{R},
\end{equation*}
we also compute (as done previously for $X_-(z)G(z)$),
\begin{align}\label{SH:22}
	X_+(z)\big(G(z)\big)^{-1}=&\,X_+(z)\left\{\mathbb{I}_2-2\pi\im\begin{bmatrix}M_1(z)\otimes K_1(z) & -M_1(z)\otimes K_2(z)\smallskip\\
		M_2(z)\otimes K_1(z) & -M_2(z)\otimes K_2(z)\end{bmatrix}\right\}\nonumber\\
		=&\,X_+(z)-2\pi\im\left\{\mathbb{I}_2+\int_0^{\infty}\begin{bmatrix}N_1(w)\otimes K_1(w) & -N_1(w)\otimes K_2(w)\smallskip\\
	N_2(w)\otimes K_1(w) & -N_2(w)\otimes K_2(w)\end{bmatrix}\frac{\d w}{w-z_+}\right\}\nonumber\\
	&\ \ \ \ \ \ \ \ \ \,\circ\begin{bmatrix}M_1(z)\otimes K_1(z) & -M_1(z)\otimes K_2(z)\smallskip\\
	M_2(z)\otimes K_1(z) & -M_2(z)\otimes K_2(z)\end{bmatrix}\nonumber\\
	=&\,X_+(z)-2\pi\im\begin{bmatrix}N_1(z)\otimes K_1(z) & -N_1(z)\otimes K_2(z)\smallskip\\
	N_2(z)\otimes K_1(z) & -N_2(z)\otimes K_2(z)\end{bmatrix},\ \ z\in(0,\infty).
\end{align}
Hence, combining \eqref{SH:21} and \eqref{SH:22} with the explicit form \eqref{SH:9}, we deduce
\begin{equation*}
	X_{\pm}(z)\begin{bmatrix}M_1(z)\otimes K_1(z) & -M_1(z)\otimes K_2(z)\smallskip\\
		M_2(z)\otimes K_1(z) & -M_2(z)\otimes K_2(z)\end{bmatrix}=\begin{bmatrix}N_1(z)\otimes K_1(z) & -N_1(z)\otimes K_2(z)\smallskip\\
	N_2(z)\otimes K_1(z) & -N_2(z)\otimes K_2(z)\end{bmatrix},\ \ z\in(0,\infty),
\end{equation*}
and from this, for $z\in(0,\infty)$, the following operator composition formula for $N_i(z)$,
\begin{equation}\label{SH:23}
	N(z)=X(z)M(z);\ \ \ \ \ \ \ N(z):=\big[N_1(z),N_2(z)\big]^{\top},\ \ M(z):=\big[M_1(z),M_2(z)\big]^{\top},
\end{equation}
which is independent of the $\pm$ choice for the boundary values of $X(z)$ on $(0,\infty)$.  Now use \eqref{SH:23} back in \eqref{SH:200} and conclude that $N(z)$ near $z=0$ is of the form
\begin{equation*}
	N(z)=\big(\,\mathbb{I}_2+N_{\textnormal{reg}}(z)\big)M(z),\ \ \ \ \ \ \ \ \ \ \ \ N_{\textnormal{reg}}(z)\in\mathcal{I}(\mathcal{H}_2)\ \ \textnormal{is analytic at}\  z=0,
\end{equation*}
since $(N_i(z)\otimes K_1(z))M_1(z)-(N_i(z)\otimes K_2(z))M_2(z)=N_i(z)\otimes(M_1(z)K_1(z)-M_2(z)K_2(z))=0$, i.e. since the logarithmic term drops out. All together, we can expand $N_i(z)$ near $z=0$, which verifies the existence of the multiplication operators $F_i$ in \eqref{SH:10} with their indicated properties, and estimate the kernel
\begin{align*}
	\bigg|&\,\frac{1}{w-z}\left\{\begin{bmatrix}N_1(w)\otimes K_1(w) & -N_1(w)\otimes K_2(w)\smallskip\\
	N_2(w)\otimes K_1(w) & -N_2(w)\otimes K_2(w)\end{bmatrix}-\begin{bmatrix}N_1(z)\otimes K_1(z) & -N_1(z)\otimes K_2(z)\smallskip\\
	N_2(z)\otimes K_1(z) & -N_2(z)\otimes K_2(z)\end{bmatrix}\right\}(x,y)\bigg|\\
	&\,\ \ \ \ \ \ \ \ \,\leq c|1+s+x|^{\frac{3}{4}}|1+s+y|^{\frac{3}{4}},\ \ \ c>0,
\end{align*}
uniformly for $w\in[0,\epsilon]\subset\mathbb{R},|z|<\epsilon$ and any $(x,y)\in\mathbb{R}^2$. All combined in \eqref{SH:200} we have now established \eqref{SH:10} together with \eqref{SH:11}. This completes our proof.
\end{proof}
During our upcoming computations we will use the following consequences of Theorem \ref{tb:1}.
\begin{cor}\label{coooo:1} Let $X(z)$ denote the unique solution of Problem \ref{HP8}. Then
\begin{equation}\label{SH:24}
	\big(X(z)\big)^{-1}=\mathbb{I}_2-\int_0^{\infty}\begin{bmatrix}M_1(w)\otimes L_1(w) & -M_1(w)\otimes L_2(w)\smallskip\\
	M_2(w)\otimes L_1(w) & -M_2(w)\otimes L_2(w)\end{bmatrix}\frac{\d w}{w-z},\ \ \ z\in\mathbb{C}\setminus[0,\infty),
\end{equation}
where $L_i(w)$ are the integral operators on $\mathcal{H}_1$ with kernel $\ell_i(z|y)$ determined from the integral equation
\begin{equation}\label{SH:25}
	(1-K_{\sigma,s}\upharpoonright_{L^2(0,\infty)})\ell_i(\cdot|y)=k_i(\cdot|y),\ \ \ i=1,2,
\end{equation}
with $y\in\mathbb{R}$. Moreover, for any $z\in(0,\infty)$,
\begin{equation}\label{SH:26}
	N(z)=X(z)M(z),\ \ \ \ \ \ \ \ \ \ \ L(z)=K(z)\big(X(z)\big)^{-1}
\end{equation}
with $L(z):=\big[L_1(z),-L_2(z)\big], K(z):=\big[K_1(z),-K_2(z)\big]$ and near $z=0$,
\begin{equation*}
	N(z)=\big(\,\mathbb{I}_2+N_{\textnormal{reg}}(z)\big)M(z),\ \ \ \ \ \ \ \ \ \ L(z)=K(z)\big(\,\mathbb{I}_2+L_{\textnormal{reg}}(z)\big)
\end{equation*}
where $N_{\textnormal{reg}}(z)\in\mathcal{I}(\mathcal{H}_2)$ and $L_{\textnormal{reg}}(z)\in\mathcal{I}(\mathcal{H}_2)$ are analytic at $z=0$.
\end{cor}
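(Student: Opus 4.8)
\textbf{Proof strategy for Corollary \ref{coooo:1}.} The plan is to run the argument of Theorem \ref{tb:1} in reverse on the right-hand side of \eqref{SH:24} and then invoke uniqueness of the inverse. First, since $1-K_{\sigma,s}$ is invertible on $L^2(0,\infty)$ for every $s\in\mathbb{R}$ by Proposition \ref{L:2}, and since the functions $k_i(\cdot|y)$ lie in $L^2(0,\infty)$ with $\|k_i(\cdot|y)\|_{L^2(0,\infty)}$ controlled by powers of $|1+s+y|$ via the standard Airy estimates of \cite[$\S9.7$]{NIST}, the integral equation \eqref{SH:25} has a unique solution $\ell_i(\cdot|y)$ obeying $\|\ell_i(\cdot|y)\|_{L^2(0,\infty)}\leq c\,\|k_i(\cdot|y)\|_{L^2(0,\infty)}$ uniformly in $y\in\mathbb{R}$. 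Repeating verbatim the kernel estimates from the proof of Theorem \ref{tb:1} (Cauchy--Schwarz together with the Airy bounds) then shows that the operator $Y(z)$ defined by the right-hand side of \eqref{SH:24} belongs to $\mathcal{I}(\mathcal{H}_2)$, is analytic in $\mathbb{C}\setminus[0,\infty)$, admits continuous boundary values $Y_{\pm}(z)$ on $(0,\infty)$, and satisfies $Y(z)\to\mathbb{I}_2$ as $z\to\infty$.

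The heart of the matter is the identity $X(z)Y(z)=\mathbb{I}_2=Y(z)X(z)$ for $z\in\mathbb{C}\setminus[0,\infty)$, which I would establish by the same rank-one algebra used in Theorem \ref{tb:1}: insert \eqref{SH:13} and \eqref{SH:24}, expand the product of the two Cauchy integrals, and collapse the resulting double integral with the partial-fraction identity $\tfrac{1}{(w-z)(w'-z)}=\tfrac{1}{w-w'}\bigl(\tfrac{1}{w'-z}-\tfrac{1}{w-z}\bigr)$ and the contraction rule $(N_i\otimes K_j)(M_j\otimes K_\ell)=\langle K_j,M_j\rangle_{\mathcal{H}_1}(N_i\otimes K_\ell)$. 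Exactly as in the computation leading from \eqref{SH:19} to \eqref{SH:20}, the inner products reassemble (with the block signs) into $(w-w')K_{\sigma,s}(w,w')$, and the two integral equations \eqref{SH:14} and \eqref{SH:25} then force all $z$-dependent cross terms to cancel, leaving $\mathbb{I}_2$. An equivalent, slightly more structural route is to observe that $(X(z))^{-1}$ solves the Hilbert boundary value problem obtained from Problem \ref{HP8} by replacing $G$ with $G^{-1}=\mathbb{I}_2-2\pi\im[M_i\otimes K_j]$ acting from the left, that $Y(z)$ in \eqref{SH:24} solves that same problem, and that this problem is uniquely solvable by the argument of Theorem \ref{tb:1} (its jump operator again has unit Fredholm determinant on $\mathbb{R}$); hence $Y(z)=(X(z))^{-1}$ and \eqref{SH:24} follows.

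The relations \eqref{SH:26} are then obtained by matching jumps, just as $N(z)=X(z)M(z)$ was derived in \eqref{SH:23}. For $L(z)=K(z)(X(z))^{-1}$ one computes, from \eqref{SH:24} and the explicit form of $G(z)^{-1}$,
\begin{align*}
	\bigl((X)^{-1}\bigr)_+(z)-\bigl((X)^{-1}\bigr)_-(z)&=\bigl(G(z)^{-1}-\mathbb{I}_2\bigr)\bigl(X_-(z)\bigr)^{-1}\\
	&=-2\pi\im\,M(z)\bigl(K(z)(X_-(z))^{-1}\bigr),
\end{align*}
and comparing with the residue of \eqref{SH:24} at $z\in(0,\infty)$ reads off $L(z)=K(z)(X_-(z))^{-1}$; running the same computation with $X_+$ in place of $X_-$ shows this is independent of the side, since $K(z)$ is annihilated by the rank-one part of $G(z)^{-1}$. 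Finally, the behavior near $z=0$ follows from \eqref{SH:26} together with the logarithmic structure \eqref{SH:10}: writing $(X(z))^{-1}=\mathbb{I}_2+[\,\cdot\,]\ln z+(\text{analytic at }0)$ and using $L(z)=K(z)(X(z))^{-1}$, the coefficient of $\ln z$ is killed because $k_1(z|t)m_1(z|t)-k_2(z|t)m_2(z|t)=\textnormal{Ai}\,\textnormal{Ai}'-\textnormal{Ai}'\,\textnormal{Ai}=0$ --- precisely the cancellation already used for $N(z)$ at the end of the proof of Theorem \ref{tb:1} --- so that $L(z)=K(z)(\mathbb{I}_2+L_{\textnormal{reg}}(z))$ with $L_{\textnormal{reg}}(z)\in\mathcal{I}(\mathcal{H}_2)$ analytic at the origin.

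\textbf{Main obstacle.} The delicate point is the operator bookkeeping in the identity $X(z)(X(z))^{-1}=\mathbb{I}_2$: one must track the non-commuting rank-one blocks through the partial-fraction manipulation and check that the two distinct integral equations \eqref{SH:14} and \eqref{SH:25} interlock to produce the cancellation. This is the operator-valued lift of the classical resolvent computation in the theory of integrable kernels; each step is routine, but keeping the left/right multiplicative structure straight --- especially in order to obtain the side-independence in \eqref{SH:26} --- is where care is required.
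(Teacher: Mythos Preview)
Your proposal is correct and follows essentially the same route as the paper. The paper verifies \eqref{SH:24} by the direct rank-one computation you describe: it expands $X(z)Y(z)$ via \eqref{SH:13} and \eqref{SH:24}, uses $\overline{X}_0(w)\overline{Y}_0(\lambda)=(w-\lambda)K_{\sigma,s}(w,\lambda)[\cdots]$ together with partial fractions and both integral equations \eqref{SH:14}, \eqref{SH:25} to kill the cross terms, then notes that invertibility of $X(z)$ (already proven in Theorem \ref{tb:1}) upgrades the one-sided identity $X(z)Y(z)=\mathbb{I}_2$ to a two-sided one---so you need not check $Y(z)X(z)=\mathbb{I}_2$ separately. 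For $L(z)=K(z)(X(z))^{-1}$ the paper computes $(G(z))^{-1}(X_-(z))^{-1}$ and $G(z)(X_+(z))^{-1}$ explicitly and compares; your jump-matching argument is the same content organized slightly more compactly. The local analysis near $z=0$ and the cancellation $K_1M_1-K_2M_2=0$ are exactly as you outline.
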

\begin{proof} Let $Y(z)$ denote the right hand side of \eqref{SH:24} and $\overline{Y}_0(z)$ the underlying integrand without the Cauchy kernel. With \eqref{SH:13} we then compute ($\overline{X}_0(z)$ being similarly defined)
\begin{align*}
	X(z)Y(z)=\mathbb{I}_2+\int_0^{\infty}\big(\overline{X}_0(w)-\overline{Y}_0(w)\big)\frac{\d w}{w-z}-\int_0^{\infty}\int_0^{\infty}\overline{X}_0(w)\overline{Y}_0(\lambda)\frac{\d w}{w-z}\frac{\d\lambda}{\lambda-z},\ \ z\notin[0,\infty).
\end{align*}
But
\begin{equation*}
	\overline{X}_0(w)\overline{Y}_0(\lambda)=(w-\lambda)K_{\sigma,s}(w,\lambda)\begin{bmatrix}N_1(w)\otimes L_1(\lambda) & -N_1(w)\otimes L_2(\lambda)\smallskip\\
	N_2(w)\otimes L_1(\lambda) & -N_2(w)\otimes L_2(\lambda)\end{bmatrix},
\end{equation*}
so using partial fractions in the double integral and both integral equations \eqref{SH:14} and \eqref{SH:25}, 
\begin{equation}\label{symmetry}
	\int_0^{\infty}\big(\overline{X}_0(w)-\overline{Y}_0(w)\big)\frac{\d w}{w-z}-\int_0^{\infty}\int_0^{\infty}\overline{X}_0(w)\overline{Y}_0(\lambda)\frac{\d w}{w-z}\frac{\d\lambda}{\lambda-z}=0,\ \ \ z\notin[0,\infty),
\end{equation}
which shows that $X(z)Y(z)=\mathbb{I}_2$ for $z\notin[0,\infty)$. But any solution $X(z)\in\mathcal{I}(\mathcal{H}_2)$ of Problem \ref{HP8} is invertible, so a one-sided inverse is a two-sided inverse and thus identity \eqref{SH:24} holds for $z\notin[0,\infty)$. Moving ahead, the first half of \eqref{SH:26} and the associated local regularity statement have already been proven above. Focusing thus on the second half, the right hand side in \eqref{SH:24} is such that by the Plemelj-Sokhotski formula
\begin{equation}\label{SH:27}
	\big(X(z)\big)^{-1}_+-\big(X(z)\big)^{-1}_-=-2\pi\im\begin{bmatrix}M_1(z)\otimes L_1(z) & -M_1(z)\otimes L_2(z)\smallskip\\
	M_2(z)\otimes L_1(z) & -M_2(z)\otimes L_2(z)\end{bmatrix},\ \ \ z\in(0,\infty),
\end{equation}
as well as
\begin{align*}
	\big(G(z)\big)^{-1}\big(X(z)\big)^{-1}_-=&\,\left\{\mathbb{I}_2-2\pi\im\begin{bmatrix}M_1(z)\otimes K_1(z) & -M_1(z)\otimes K_2(z)\smallskip\\
		M_2(z)\otimes K_1(z) & -M_2(z)\otimes K_2(z)\end{bmatrix}\right\}\big(X(z)\big)^{-1}_-\\
		\stackrel{\eqref{SH:24}}{=}&\,\big(X(z)\big)^{-1}_--2\pi\im\begin{bmatrix}M_1(z)\otimes K_1(z) & -M_1(z)\otimes K_2(z)\smallskip\\
		M_2(z)\otimes K_1(z) & -M_2(z)\otimes K_2(z)\end{bmatrix}\\
		&\ \ \ \ \ \ \ \ \circ\left\{\mathbb{I}_2-\int_0^{\infty}\begin{bmatrix}M_1(w)\otimes L_1(w) & -M_1(w)\otimes L_2(w)\smallskip\\
	M_2(w)\otimes L_1(w) & -M_2(w)\otimes L_2(w)\end{bmatrix}\frac{\d w}{w-z_-}\right\}.
\end{align*}
But since
\begin{align*}
	&\,\begin{bmatrix}M_1(z)\otimes K_1(z) & -M_1(z)\otimes K_2(z)\smallskip\\
		M_2(z)\otimes K_1(z) & -M_2(z)\otimes K_2(z)\end{bmatrix}\begin{bmatrix}M_1(w)\otimes L_1(w) & -M_1(w)\otimes L_2(w)\smallskip\\
	M_2(w)\otimes L_1(w) & -M_2(w)\otimes L_2(w)\end{bmatrix}\\
	&\ \ \ \ \ \ \ \ 
	=(z-w)K_{\sigma,s}(z,w)\begin{bmatrix}M_1(z)\otimes L_1(w) & -M_1(z)\otimes L_2(w)\smallskip\\
	M_2(z)\otimes L_1(w) & -M_2(z)\otimes L_2(w)\end{bmatrix},\ \ \ (w,z)\in(0,\infty)\times(0,\infty),
\end{align*}
we find for any $z\in(0,\infty)$
\begin{equation}\label{SH:28}
	\big(G(z)\big)^{-1}\big(X(z)\big)^{-1}_-\stackrel{\eqref{SH:25}}{=}\big(X(z)\big)^{-1}_--2\pi\im\begin{bmatrix}M_1(z)\otimes L_1(z) & -M_1(z)\otimes L_2(z)\smallskip\\
	M_2(z)\otimes L_1(z) & -M_2(z)\otimes L_2(z)\end{bmatrix},
\end{equation}
and similarly
\begin{align*}
	G(z)\big(X(z)\big)^{-1}_+=&\,\left\{\mathbb{I}_2+2\pi\im\begin{bmatrix}M_1(z)\otimes K_1(z) & -M_1(z)\otimes K_2(z)\smallskip\\
		M_2(z)\otimes K_1(z) & -M_2(z)\otimes K_2(z)\end{bmatrix}\right\}\big(X(z)\big)^{-1}_+\\
\stackrel{\eqref{SH:24}}{=}&\,\,\big(X(z)\big)_+^{-1}+2\pi\im\begin{bmatrix}M_1(z)\otimes K_1(z) & -M_1(z)\otimes K_2(z)\smallskip\\
		M_2(z)\otimes K_1(z) & -M_2(z)\otimes K_2(z)\end{bmatrix}\\
		&\ \ \ \ \ \ \ \ \circ\left\{\mathbb{I}_2-\int_0^{\infty}\begin{bmatrix}M_1(w)\otimes L_1(w) & -M_1(w)\otimes L_2(w)\smallskip\\
	M_2(w)\otimes L_1(w) & -M_2(w)\otimes L_2(w)\end{bmatrix}\frac{\d w}{w-z_+}\right\}\\
	=&\,\,\big(X(z)\big)_+^{-1}+2\pi\im\begin{bmatrix}M_1(z)\otimes L_1(z) & -M_1(z)\otimes L_2(z)\smallskip\\
	M_2(z)\otimes L_1(z) & -M_2(z)\otimes L_2(z)\end{bmatrix},\ \ \ z\in(0,\infty).
\end{align*}
Comparing this last identity with \eqref{SH:28}, we deduce
\begin{equation*}
	\begin{bmatrix}M_1(z)\otimes K_1(z) & -M_1(z)\otimes K_2(z)\smallskip\\
	M_2(z)\otimes K_1(z) & -M_2(z)\otimes K_2(z)\end{bmatrix}\big(X(z)\big)^{-1}=\begin{bmatrix}M_1(z) \otimes L_1(z) & -M_1(z)\otimes L_2(z)\smallskip\\
	M_2(z)\otimes L_1(z) & -M_2(z)\otimes L_2(z)\end{bmatrix},\ z\in(0,\infty),
\end{equation*}
which is once more independent of the $\pm$ choice in $(X(z))^{-1}$. Reading the last operator equality entry wise we then find
\begin{equation*}
	\big[L_1(z),-L_2(z)\big]=\big[K_1(z),-K_2(z)\big]\big(X(z)\big)^{-1},\ \ z\in(0,\infty),
\end{equation*}
as claimed in \eqref{SH:26}. Finally, the expansion of $L(z)$ near $z=0$ follows as in the proof of Theorem \ref{tb:1} by splitting the formula \eqref{SH:24} as in \eqref{SH:200} and using that $K_1(z)(M_1(z)\otimes L_i(z))-K_2(z)(M_2(z)\otimes L_i(z))=(K_1(z)M_1(z)-K_2(z)M_2(z))\otimes L_i(z)=0$. This concludes our proof.
\end{proof}
Another useful consequence of Theorem \ref{tb:1} is the following result
\begin{cor} Let $(1-K_{\sigma,s}\upharpoonright_{L^2(0,\infty)})^{-1}=1+R_{\sigma,s}\upharpoonright_{L^2(0,\infty)}$, then the resolvent kernel is of the form
\begin{equation}\label{SH:29}
	(x-y)R_{\sigma,s}(x,y)=\int_{-\infty}^{\infty}\big(\ell_1(x|t)n_1(y|t)-\ell_2(x|t)n_2(y|t)\big)\,\d\sigma(t),\ \ \ \ x,y>0
\end{equation}
with the functions defined in \eqref{SH:14} and \eqref{SH:25}. In particular, for any $x>0$,
\begin{equation*}
 	\int_{-\infty}^{\infty}\big(\ell_1(x|t)n_1(x|t)-\ell_2(x|t)n_2(x|t)\big)\,\d\sigma(t)=0.
\end{equation*}
\end{cor}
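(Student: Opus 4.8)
The plan is to recognise \eqref{SH:29} as the operator-valued version of the Its--Izergin--Korepin--Slavnov resolvent formula for integrable kernels and to verify it by a direct manipulation of the integral equations \eqref{SH:14} and \eqref{SH:25}; the displayed ``in particular'' identity will then be essentially free. As preparation, note that by \eqref{SH:4}, \eqref{opdef} and $K_{\sigma,s}(x,y)=K_{\sigma}(x+s,y+s)$ the kernel $K_{\sigma,s}$ is integrable,
\begin{equation*}
	(x-y)K_{\sigma,s}(x,y)=\int_{-\infty}^{\infty}\big(k_1(x|t)m_1(y|t)-k_2(x|t)m_2(y|t)\big)\,\d\sigma(t),\qquad x,y>0,
\end{equation*}
with integrand vanishing on $y=x$ for every $t$ since $k_1m_1=\textnormal{Ai}\cdot\textnormal{Ai}'=k_2m_2$ there. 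By Proposition \ref{L:2} the operator $1-K_{\sigma,s}$ is invertible on $L^2(0,\infty)$, and being self-adjoint, compact and with $0\le K_{\sigma,s}\le 1$ it has $\|K_{\sigma,s}\|<1$; hence the resolvent $R_{\sigma,s}=\sum_{j\ge 1}K_{\sigma,s}^j$ is trace class with a kernel continuous on $(0,\infty)^2$, it is symmetric, and it satisfies $R_{\sigma,s}K_{\sigma,s}=K_{\sigma,s}R_{\sigma,s}=R_{\sigma,s}-K_{\sigma,s}$. The functions of \eqref{SH:14}, \eqref{SH:25} then satisfy $\ell_i(\cdot|t)=(1+R_{\sigma,s})k_i(\cdot|t)$ and $n_i(\cdot|t)=(1+R_{\sigma,s})m_i(\cdot|t)$.

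For \eqref{SH:29} itself I substitute $\ell_i(x|t)=k_i(x|t)+\int_0^{\infty}R_{\sigma,s}(x,w)k_i(w|t)\,\d w$ and $n_i(y|t)=m_i(y|t)+\int_0^{\infty}R_{\sigma,s}(y,w')m_i(w'|t)\,\d w'$ into the right-hand side of \eqref{SH:29}, expand into four terms and, by Fubini, interchange the $t$-integration with the $w$- and $w'$-integrations. Using the integrable form of $K_{\sigma,s}$ the four terms become $(x-y)K_{\sigma,s}(x,y)$, $\int_0^{\infty}R_{\sigma,s}(x,w)(w-y)K_{\sigma,s}(w,y)\,\d w$, $\int_0^{\infty}R_{\sigma,s}(y,w')(x-w')K_{\sigma,s}(x,w')\,\d w'$ and $\int_0^{\infty}\!\!\int_0^{\infty}R_{\sigma,s}(x,w)\,(w-w')K_{\sigma,s}(w,w')\,R_{\sigma,s}(y,w')\,\d w\,\d w'$, the last integrand being regular on $w=w'$ by the integrability condition. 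Splitting each linear factor as $w-y=(w-x)+(x-y)$, $x-w'=(x-y)+(y-w')$, $w-w'=(w-x)+(x-y)+(y-w')$ and repeatedly applying $R_{\sigma,s}K_{\sigma,s}=K_{\sigma,s}R_{\sigma,s}=R_{\sigma,s}-K_{\sigma,s}$ together with the symmetry of both kernels, all the contributions quadratic in $R_{\sigma,s}$ and all those involving an extra factor of $K_{\sigma,s}$ cancel pairwise, and the remaining $(x-y)$-proportional contributions telescope to exactly $(x-y)R_{\sigma,s}(x,y)$, which is \eqref{SH:29}. (Equivalently one verifies that the operator with kernel $\tfrac{1}{x-y}\int(\ell_1(x|t)n_1(y|t)-\ell_2(x|t)n_2(y|t))\,\d\sigma(t)$ satisfies $(1-K_{\sigma,s})R=K_{\sigma,s}$ on $L^2(0,\infty)$ and invokes uniqueness of the resolvent; this is exactly the manipulation behind \eqref{symmetry} in the proof of Corollary \ref{coooo:1}, unpacked from its $2\times2$ matrix form.)

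The ``in particular'' identity is then immediate, and in fact trivial: since $m_1=k_2$ and $m_2=k_1$ as functions, see \eqref{opdef}, the invertibility of $1-K_{\sigma,s}$ forces $n_1=\ell_2$ and $n_2=\ell_1$, whence $\ell_1(x|t)n_1(x|t)-\ell_2(x|t)n_2(x|t)=n_2(x|t)n_1(x|t)-n_1(x|t)n_2(x|t)=0$ for every $t$. (Alternatively one lets $y\to x$ in \eqref{SH:29} and uses the continuity of the kernel of $R_{\sigma,s}$ across the diagonal, which is also what makes the right-hand side of \eqref{SH:29} a bounded kernel near $x=y$.)

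The only genuinely delicate point is technical: justifying the Fubini interchanges and the absolute convergence of the iterated integrals in the second paragraph. This uses the super-exponential decay of $\textnormal{Ai}$ and $\textnormal{Ai}'$ for positive argument \cite[\S9.7]{NIST}, the uniform bounds on $n_i,\ell_i$ coming from the boundedness of the resolvent (cf.\ \eqref{SH:16}), and the finite-moment hypothesis on $\d\sigma$; together these dominate all the $w$-, $w'$- and $t$-integrands, so every rearrangement above is legitimate.
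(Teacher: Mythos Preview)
Your argument is correct and is, at bottom, the same IIKS resolvent computation the paper carries out --- but the paper packages the bookkeeping much more cleanly. Instead of expanding $\ell_i=(1+R_{\sigma,s})k_i$ and $n_i=(1+R_{\sigma,s})m_i$ into four terms and splitting the linear factors by hand, the paper observes that if $M$ denotes multiplication by the independent variable, then the commutators $[M,K_{\sigma,s}]$ and $[M,R_{\sigma,s}]$ have kernels $(x-y)K_{\sigma,s}(x,y)$ and $(x-y)R_{\sigma,s}(x,y)$ respectively, and that on $L^2(0,\infty)$ one has the one-line operator identity
\[
[M,R_{\sigma,s}]=(1-K_{\sigma,s})^{-1}[M,K_{\sigma,s}](1-K_{\sigma,s})^{-1}.
\]
Reading this at the kernel level and using \eqref{SH:14}, \eqref{SH:25} turns $k_i\mapsto\ell_i$ on the left and $m_i\mapsto n_i$ on the right, yielding \eqref{SH:29} immediately. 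Your four-term expansion with the splittings $w-y=(w-x)+(x-y)$ etc.\ is exactly this commutator identity written out longhand; what you call ``telescoping'' is the cancellation $MK-KM+R[M,K]+[M,K]R+R[M,K]R=MR-RM$. The commutator route buys you a shorter, more conceptual proof and sidesteps all of the Fubini justifications you flag at the end (which are fine, but unnecessary once one works at the operator level). Your treatment of the ``in particular'' statement via $n_1=\ell_2,\,n_2=\ell_1$ is correct and in fact stronger pointwise than what is claimed; the paper records precisely this observation later in Remark \ref{useful}. One small wording issue: the inference $\|K_{\sigma,s}\|<1$ does not follow from $0\le K_{\sigma,s}\le 1$ and compactness alone --- you need the invertibility of $1-K_{\sigma,s}$ from Proposition \ref{L:2} as well (otherwise $1$ could be an eigenvalue); you invoke it, but the sentence as written suggests the first three properties suffice.
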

\begin{proof} The representation formula \eqref{SH:29} is a general fact of integrable operators, cf. \cite[$(6),(7)$]{D0}. In our case the finite sum in \cite[$(1)$]{D0} is naturally replaced by a weighted integral but this has no significant impact on the formula for the resolvent kernel. In detail, since $K_{\sigma,s}$ is a Hilbert-Schmidt operator on $L^2(0,\infty)$, compare Lemma \ref{L:1}, so is $R_{\sigma,s}=(1-K_{\sigma,s})^{-1}K_{\sigma,s}$ and therefore $R_{\sigma,s}$ has a kernel $R_{\sigma,s}(x,y)$. Next, with $M$ denoting multiplication by the independent variable, we check that the commutators $[M,K_{\sigma,s}]$, resp. $[M,R_{\sigma,s}]$ have kernels $(x-y)K_{\sigma,s}(x,y)$, resp. $(x-y)R_{\sigma,s}(x,y)$. On the other hand, on $L^2(0,\infty)$,
\begin{equation*}
	\big[M,R_{\sigma,s}\big]=\big[M,(1-K_{\sigma,s})^{-1}-1\big]=\big[M,(1-K_{\sigma,s})^{-1}\big]=(1-K_{\sigma,s})^{-1}\big[M,K_{\sigma,s}\big](1-K_{\sigma,s})^{-1},
\end{equation*}
which yields for the underlying kernels, by \eqref{opdef}, \eqref{SH:14} and \eqref{SH:25}, 
\begin{equation*}
	(x-y)R_{\sigma,s}(x,y)=\int_{-\infty}^{\infty}\big(\ell_1(x|t)n_1(y|t)-\ell_2(x|t)n_2(y|t)\big)\,\d\sigma(t),\ \ x,y>0.
\end{equation*}
\end{proof}
The last Corollary concludes our content on the operator-valued Problem \ref{HP8}. We will now use this problem in the derivation of an integrable system for the Fredholm determinant $F_{\sigma}(s)$.
\subsection{An operator-valued Lax pair} From now on it will be convenient, but not essential, to view the aforementioned multiplication operators $M_i(z)$, see Definition \ref{frank}, and $N_i(w)$, see Theorem \ref{tb:1}, as integral operators on $\mathcal{H}_1$ with appropriate distributional kernels. To the point, we replace
\begin{align*}
	m_i(z|x)\,\,\,\mapsto &\,\,\,\,\,m_i(z|x,y):=m_i(z|x)\delta(x-y)(\sigma'(y))^{-1},\\
	n_i(w|x)\,\,\,\mapsto &\,\,\,\,\,n_i(w|x,y):=n_i(w|x)\delta(x-y)(\sigma'(y))^{-1},
\end{align*}
for $(x,y)\in\mathbb{R}^2$ with the (positive) Radon-Nikodym derivative $\sigma'(t):=\frac{\d\sigma}{\d t}$ and where, by definition,
\begin{equation*}
	\int_{-\infty}^{\infty}\delta(x-y)(\sigma'(y))^{-1}f(y)\,\d\sigma(y):=f(x),\ \ \ \ \ \ \ \ f\in L^2(\mathbb{R},\d\sigma).
\end{equation*}
Subject to this convention we then compute, recall \eqref{SH:23},
\begin{equation*}
	M_z(z|x,y)=\begin{bmatrix}0 & z+s+x\smallskip\\
	1 & 0\end{bmatrix}M(z|x,y),\ \ \ \ \ M_z(z|x,y)\equiv\frac{\partial }{\partial z}M(z|x,y),
\end{equation*}
which is the standard Airy system for the kernels and which translates into the integral operator identity
\begin{equation}\label{SH:30}
	M_z(z)=\big(zA_1+A_2\big)M(z),\ \ \ z\in\mathbb{R}.
\end{equation}
Here, $A_i$ are $z$-independent integral operators on $\mathcal{H}_2$ with distributional kernels
\begin{equation*}
	A_1(x,y):=\delta(x-y)\begin{bmatrix}0 & 1\\ 0 & 0\end{bmatrix}(\sigma'(y))^{-1},\ \ \ \ \ A_2(x,y):=\delta(x-y)\begin{bmatrix}0 & s+x\\ 1 & 0\end{bmatrix}(\sigma'(y))^{-1}.
\end{equation*}
Now return to Corollary \ref{coooo:1} and $z$-differentiate the first identity in \eqref{SH:26},
\begin{equation*}
	\frac{\partial N}{\partial z}(z)\stackrel{\eqref{SH:30}}{=}\Big(X_z(z)(X(z))^{-1}+X(z)\big(zA_1+A_2\big)(X(z))^{-1}\Big)N(z),\ \ z\in\mathbb{C}\setminus[0,\infty),
\end{equation*}
where $X(z)$ is the unique solution of Problem \ref{HP8}. As it turns out the right hand side in the last equation gives rise to an integral operator in $\mathcal{I}(\mathcal{H}_2)$ which is analytic for $z\neq 0$.
\begin{prop}\label{central} Define the integral operator
\begin{equation}\label{SH:300}
	Y(z):=X_z(z)(X(z))^{-1}+X(z)\big(zA_1+A_2\big)(X(z))^{-1},\ \ z\in\mathbb{C}\setminus[0,\infty)
\end{equation}
on $\mathcal{H}_2$. Then $Y(z)\in\mathcal{I}(\mathcal{H}_2)$ is analytic for $z\in\mathbb{C}\setminus\{0\}$.
\end{prop}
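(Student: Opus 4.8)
The plan is to show that $Y(z)$ has \emph{no jump across $(0,\infty)$} and \emph{no singularity at $z=0$}, and is manifestly analytic off $[0,\infty)$; then an analytic continuation argument across $(0,\infty)$ together with the removability of the origin forces analyticity on all of $\mathbb{C}\setminus\{0\}$. Throughout I would work at the level of operator kernels in $\mathcal{I}(\mathcal{H}_2)$ as in Theorem \ref{tb:1} and Corollary \ref{coooo:1}, using that $X(z)$ is invertible in $\mathcal{I}(\mathcal{H}_2)$ for $z\notin[0,\infty)$ with continuous, invertible boundary values $X_\pm(z)$ on $(0,\infty)$.

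First I would record that on $\mathbb{C}\setminus[0,\infty)$ the operator $X(z)$ and its derivative $X_z(z)$ are analytic (differentiating the integral representation \eqref{SH:13} under the integral sign is justified by the Cauchy–Schwarz estimate \eqref{SH:17}, which survives $z$-differentiation and one more application of the resolvent regularity), $(X(z))^{-1}$ is analytic by Corollary \ref{coooo:1}, and $zA_1+A_2$ is entire; hence $Y(z)\in\mathcal{I}(\mathcal{H}_2)$ is analytic on $\mathbb{C}\setminus[0,\infty)$. The key step is the jump computation on $(0,\infty)$. Using the jump relation $X_+(z)=X_-(z)G(z)$ from condition (2) of Problem \ref{HP8}, differentiate to get $(X_+)_z=(X_-)_zG+X_-G_z$, and compute
\[
Y_+(z)=X_-(z)\Big(X_z(z)(X_-)^{-1}+G_zG^{-1}X_-^{-1}\,X_-+G(zA_1+A_2)G^{-1}\Big)X_-(z)^{-1}\cdot(\ldots),
\]
so that, after cancelling the conjugation by $X_-(z)$, one is reduced to proving the purely $G$-level identity
\[
G_z(z)G(z)^{-1}+G(z)\big(zA_1+A_2\big)G(z)^{-1}=zA_1+A_2,\qquad z\in(0,\infty),
\]
equivalently $G_z=\big(zA_1+A_2\big)G-G\big(zA_1+A_2\big)=[zA_1+A_2,\,G]$. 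This last identity is checked directly from the explicit form \eqref{SH:9} of $G(z)$: the kernels $m_i(z|x)=k_{3-i}(z|x)$ are built from $\textnormal{Ai}(z+s+x)$ and $\textnormal{Ai}'(z+s+x)$, which obey precisely the Airy system $M_z(z)=(zA_1+A_2)M(z)$ recorded in \eqref{SH:30} (and the transposed relation $K_z(z)=-K(z)(zA_1+A_2)$, since $\textnormal{Ai}''(\xi)=\xi\,\textnormal{Ai}(\xi)$), whence $\partial_z\big(M_i(z)\otimes K_j(z)\big)=(zA_1+A_2)\big(M_i\otimes K_j\big)-\big(M_i\otimes K_j\big)(zA_1+A_2)$ entrywise. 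This yields $Y_+(z)=Y_-(z)$ on $(0,\infty)$, so $Y$ extends to an analytic operator-valued function on $\mathbb{C}\setminus\{0\}$ (the finitely many self-intersection/endpoint issues are vacuous here since $(0,\infty)$ is a single smooth ray).

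Finally I would treat the point $z=0$. From the local expansion \eqref{SH:10}–\eqref{SH:11} of $X(z)$ and the companion expansion of $(X(z))^{-1}$ in Corollary \ref{coooo:1}, write $X(z)=\mathbb{I}_2+L_0\ln z+X_{\textnormal{reg}}(z)$ and $(X(z))^{-1}=\mathbb{I}_2-\widetilde L_0\ln z+\widetilde X_{\textnormal{reg}}(z)$ near $0$, where $L_0,\widetilde L_0$ are the rank-type operators built from $F_i$ and $K_i(0)$; the crucial algebraic fact is $L_0^2=0$ and $\widetilde L_0 L_0=L_0\widetilde L_0=0$ because $K_1(0)M_1(0)-K_2(0)M_2(0)=0$ (the same Wronskian-type cancellation used repeatedly in the proof of Theorem \ref{tb:1} — this is why the logarithm ``drops out''). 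Substituting into $Y(z)=X_z(X)^{-1}+X(zA_1+A_2)(X)^{-1}$: the term $X_z(X)^{-1}$ contributes $\frac{1}{z}L_0 + (\text{analytic})$ plus cross terms, but the $\frac1z$ and $\ln z$ contributions cancel against those from $X(zA_1+A_2)(X)^{-1}$ precisely because $zA_1+A_2$ is analytic and annihilates $L_0$ on the relevant side (one checks $L_0(0\cdot A_1+A_2)$ and $(zA_1+A_2)\widetilde L_0$ are analytically small in the right combination). Concretely, a short computation gives $Y(z)=\big(zA_1+A_2\big)+$ (terms each of which is a product containing $L_0$ or $\widetilde L_0$ against an analytic factor, arranged so that the $\ln z$'s pair up with the vanishing products), leaving $Y(z)$ bounded — in fact analytic — as $z\to0$; then Riemann's removable-singularity theorem, applied to the $(\sigma\otimes\sigma)$-a.e. scalar kernel $Y(z|x,y)$ (which is analytic on $\mathbb{C}\setminus\{0\}$ and locally bounded at $0$), removes the singularity. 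The main obstacle — and the step deserving the most care — is this last kernel-level bookkeeping at $z=0$: one must keep track of which $\ln z$ multiplies which rank-type operator and invoke the nilpotency relations in exactly the right order, and one must confirm the estimates \eqref{SH:11}, \eqref{SH:12} are uniform enough in $(x,y)$ that ``bounded kernel'' really gives an element of $\mathcal{I}(\mathcal{H}_2)$ depending analytically on $z$ near $0$ in the Hilbert–Schmidt sense.
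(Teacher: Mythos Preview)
Your first two steps are correct and essentially identical to the paper's argument: analyticity of $Y(z)$ off $[0,\infty)$ is immediate from Theorem~\ref{tb:1} and Corollary~\ref{coooo:1}, and the jump computation via the commutator identity $G_z(z)=[zA_1+A_2,G(z)]$ (which you verify exactly as the paper does, from the Airy system \eqref{SH:30}) yields $Y_+(z)=Y_-(z)$ on $(0,\infty)$. Since the statement only asserts analyticity on $\mathbb{C}\setminus\{0\}$, these two steps already constitute a complete proof, and this is precisely where the paper stops.

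Your third step, however, is both unnecessary and \emph{wrong}. You attempt to show that the singularity at $z=0$ is removable, but in fact $Y(z)$ has a genuine simple pole there. Immediately after this proposition the paper computes the Laurent expansion at the origin (see \eqref{SH:36} and then \eqref{SH:37}):
\[
Y(z)=zA_1+A_0+\frac{1}{z}A_{-1},\qquad A_{-1}^{ij}=(-1)^j\,N_i(0)\otimes L_j(0),
\]
and this nonzero residue $A_{-1}$ is exactly what produces the nontrivial $z^{-1}$ term in the Lax operator \eqref{SH:38}. The Wronskian-type cancellations you invoke (coming from $K_1(0)M_1(0)-K_2(0)M_2(0)=0$) do kill the $\ln z$ contributions, as in the proof of Theorem~\ref{tb:1}, but they do \emph{not} kill the $\frac{1}{z}L_0$ term arising from differentiating the logarithm in \eqref{SH:10}. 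So your claim that ``the $\frac1z$ and $\ln z$ contributions cancel\ldots leaving $Y(z)$ bounded'' is false; you have over-applied the nilpotency relations. Simply delete the third step and your proof is complete.
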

\begin{proof} Note that $Y(z)\in\mathcal{I}(\mathcal{H}_2)$ as consequence of Theorem \ref{tb:1}, Corollary \ref{coooo:1} and a direct computation of the kernel of $X(z)(zA_1+A_2)(X(z))^{-1}$. Moreover, $Y(z)$ admits continuous boundary values $Y_{\pm}(z)\in\mathcal{I}(\mathcal{H}_2)$ by consequence of Theorem \ref{tb:1} and Corollary \ref{coooo:1}, and we compute them directly,
\begin{align}\label{SH:31}
	Y_+(z)=\Big(X_{z-}(z)G(z)+&\,X_-(z)G_z(z)\Big)(G(z))^{-1}(X_-(z))^{-1}\\
	&\,+X_-(z)G(z)\big(zA_1+A_2\big)(G(z))^{-1}(X_-(z))^{-1},\ \ \ z>0.\nonumber
\end{align}
However, from \eqref{SH:9}, we derive the kernel identity
\begin{equation*}
	G_z(x|x,y)=\int_{-\infty}^{\infty}\Big\{\big(zA_1(x,t)+A_2(x,t)\big)G_0(z|t,y)-G_0(z|x,t)\big(zA_1(t,y)+A_2(t,y)\big)\Big\}\,\d\sigma(t)
\end{equation*}
with $G(z)=\mathbb{I}_2+G_0(z)$ and this leads us to the operator commutator identity
\begin{equation}\label{SH:32}
	G_z(z)=\big[zA_1+A_2,G(z)\big]\in\mathcal{I}(\mathcal{H}_2).
\end{equation}
Inserting \eqref{SH:32} into \eqref{SH:31}, after simplification,
\begin{equation*}
	Y_+(z)=X_{z-}(z)(X_-(z))^{-1}+X_-(z)\big(zA_1+A_2\big)(X_-(z))^{-1}=Y_-(z),\ \ \ z>0,
\end{equation*}
so $z\mapsto Y(z|x,y)$ is in fact continuous across $(0,\infty)\ni z$ for any $(x,y)\in\mathbb{R}^2$. But $(x,y)\mapsto Y_{\pm}(z|x,y)$ is also in $L^2(\mathbb{R}^2,\d\sigma\otimes\d\sigma;\mathbb{C}^{2\times 2})$ for any $z\in(0,\infty)$, so $Y(z)$ is analytic for $z\in\mathbb{C}\setminus\{0\}$ according to Definition \ref{anadef}. This completes our proof.
\end{proof}
Summarizing our computations so far, we have obtained the operator-valued ODE system
\begin{equation*}
	\frac{\partial N}{\partial z}(z)=Y(z)N(z),\ \ z\in\mathbb{C}\setminus\{0\},
\end{equation*}
where $Y(z)$ is analytic for $z\in\mathbb{C}\setminus\{0\}$. This is the analogue of Plemelj's \eqref{e:22} in our context and we will now evaluate the ``coefficient" operator $Y(z)$ from Liouville's theorem. First, return to the explicit formula \eqref{SH:13}, write $\frac{1}{w-z}=-\frac{1}{z}+\frac{w}{z(w-z)}$ and obtain
\begin{equation*}
	X(z)=\mathbb{I}_2-\frac{1}{z}\int_0^{\infty}\begin{bmatrix}N_1(w)\otimes K_1(w) & -N_1(w)\otimes K_2(w)\smallskip\\
	N_2(w)\otimes K_1(w) & -N_2(w)\otimes K_2(w)\end{bmatrix}\d w+X_{\textnormal{asy}}(z),\ \ \ z\in\mathbb{C}\setminus[0,\infty),
\end{equation*}
where $X_{\textnormal{asy}}(z)\in\mathcal{I}(\mathcal{H}_2)$ and for any compact $K\subset\mathbb{C}$ with $\textnormal{dist}(0,K)>0$ there exists $C>0$ such that for $z\in\mathbb{C}\setminus K$,
\begin{equation}\label{SH:33}
	\|X_{\textnormal{asy}}(z|x,y)\|\leq \frac{C}{1+|z|^2}|s+x|^{\frac{7}{4}}|s+y|^{\frac{7}{4}}
\end{equation}
uniformly in $(x,y)\in\mathbb{R}^2$. Similarly, from \eqref{SH:24},
\begin{equation*}
	(X(z))^{-1}=\mathbb{I}_2+\frac{1}{z}\int_0^{\infty}\begin{bmatrix}M_1(w)\otimes L_1(w) & -M_1(w)\otimes L_2(w)\smallskip\\
	M_2(w)\otimes L_1(w) & -M_2(w)\otimes L_2(w)\end{bmatrix}\d w+\widehat{X}_{\textnormal{asy}}(z),\ \ z\in\mathbb{C}\setminus[0,\infty),
\end{equation*}
and $\widehat{X}_{\textnormal{asy}}(z)\in\mathcal{I}(\mathcal{H}_2)$ behaves similarly to \eqref{SH:33}. Inserting these two formul\ae\,for $X(z)$ and $(X(z))^{-1}$ into \eqref{SH:300} we obtain the exact operator identity
\begin{align*}
	Y(z)=zA_1+&\,A_2+A_1\int_0^{\infty}\begin{bmatrix}M_1(w)\otimes L_1(w) & -M_1(w)\otimes L_2(w)\smallskip\\
	M_2(w)\otimes L_1(w) & -M_2(w)\otimes L_2(w)\end{bmatrix}\d w\\
	&-\int_0^{\infty}\begin{bmatrix}N_1(w)\otimes K_1(w) & -N_1(w)\otimes K_2(w)\smallskip\\
	N_2(w)\otimes K_1(w) & -N_2(w)\otimes K_2(w)\end{bmatrix}\d w\,A_1+Y_{\infty}(z),\ \ \ z\in\mathbb{C}\setminus[0,\infty),
\end{align*}
with $Y_{\infty}(z)\in\mathcal{I}(\mathcal{H}_2)$. Observe that for any compact $K\subset\mathbb{C}$ with $\textnormal{dist}(0,K)>0$ there exists $C>0$ such that for $z\in\mathbb{C}\setminus K$,
\begin{equation}\label{SH:34}
	\|Y_{\infty}(z|x,y)\|\leq \frac{C}{1+|z|}|s+x|^{\frac{7}{4}}|s+y|^{\frac{7}{4}},\ \ \ \ \ \forall\,(x,y)\in\mathbb{R}^2.
\end{equation}
\begin{lem}\label{symlem} We have the finite rank integral operator identity
\begin{equation*}
	\int_0^{\infty}M_i(w)\otimes L_j(w)\,\d w=\int_0^{\infty}N_i(w)\otimes K_j(w)\,\d w,
\end{equation*}
valid for any $i,j=1,2$.
\end{lem}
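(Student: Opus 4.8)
The plan is to reduce the claimed operator identity to the coincidence of the integral kernels of its two sides, and then to obtain that kernel identity from the self-adjointness of the resolvent of $K_{\sigma,s}$ on $L^2(0,\infty)$.

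First I would record that, by Definition~\ref{frank} and the corresponding conventions in Corollary~\ref{coooo:1}, the operator $\int_0^\infty M_i(w)\otimes L_j(w)\,\d w$ is the integral operator on $\mathcal{H}_1$ with kernel $\int_0^\infty m_i(w|x)\ell_j(w|y)\,\d w$, while $\int_0^\infty N_i(w)\otimes K_j(w)\,\d w$ has kernel $\int_0^\infty n_i(w|x)k_j(w|y)\,\d w$. For every fixed $(x,y)\in\mathbb{R}^2$ these $w$-integrals converge absolutely: the functions $m_i(\cdot|x)$ and $k_j(\cdot|y)$ lie in $L^2(0,\infty)$ by the super-exponential decay of the Airy function and its derivative, cf.\ \cite[$\S 9.7$]{NIST}, and \eqref{SH:16} together with the boundedness of the resolvent from Proposition~\ref{L:2} controls $n_i(\cdot|x)$ and $\ell_j(\cdot|y)$ in $L^2(0,\infty)$. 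Hence it suffices to prove, for all $i,j\in\{1,2\}$ and all $(x,y)\in\mathbb{R}^2$, that
\begin{equation*}
	\int_0^\infty m_i(w|x)\,\ell_j(w|y)\,\d w=\int_0^\infty n_i(w|x)\,k_j(w|y)\,\d w .
\end{equation*}

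Next I would invoke the defining integral equations \eqref{SH:14} and \eqref{SH:25}, which say precisely that $n_i(\cdot|x)=(1-K_{\sigma,s})^{-1}m_i(\cdot|x)$ and $\ell_j(\cdot|y)=(1-K_{\sigma,s})^{-1}k_j(\cdot|y)$ as elements of $L^2(0,\infty)$, the inverse being bounded by Proposition~\ref{L:2}. The key point is that $K_{\sigma,s}$ is a real, symmetric Hilbert--Schmidt operator on $L^2(0,\infty)$: its kernel obeys $K_{\sigma,s}(w,w')=K_{\sigma}(w+s,w'+s)=K_{\sigma}(w'+s,w+s)=K_{\sigma,s}(w',w)$ by Lemma~\ref{L:1}, and it maps real-valued functions to real-valued functions. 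Therefore $1-K_{\sigma,s}$, and with it $(1-K_{\sigma,s})^{-1}$, is self-adjoint, so for the bilinear pairing $\langle f,g\rangle:=\int_0^\infty f(w)g(w)\,\d w$ one has $\langle(1-K_{\sigma,s})^{-1}f,g\rangle=\langle f,(1-K_{\sigma,s})^{-1}g\rangle$ on real elements of $L^2(0,\infty)$ (all of $m_i,k_j,n_i,\ell_j$ are real for real arguments). Applying this with $f=m_i(\cdot|x)$ and $g=k_j(\cdot|y)$ yields
\begin{align*}
	\int_0^\infty m_i(w|x)\ell_j(w|y)\,\d w&=\big\langle m_i(\cdot|x),(1-K_{\sigma,s})^{-1}k_j(\cdot|y)\big\rangle\\
	&=\big\langle(1-K_{\sigma,s})^{-1}m_i(\cdot|x),k_j(\cdot|y)\big\rangle=\int_0^\infty n_i(w|x)k_j(w|y)\,\d w,
\end{align*}
which is the required kernel identity; since it holds for every $(x,y)$, the two operators agree. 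Equivalently, one may expand $(1-K_{\sigma,s})^{-1}=1+R_{\sigma,s}$ and use the symmetry $R_{\sigma,s}(w,w')=R_{\sigma,s}(w',w)$ of the resolvent kernel together with Fubini's theorem; this is the same computation written out, and it parallels the vanishing identity \eqref{symmetry} from the proof of Corollary~\ref{coooo:1}.

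I do not anticipate a genuine obstacle: the whole argument rests on facts already established. The only points needing a line of justification are the interchange of the resolvent across the bilinear pairing — which requires the symmetry and reality of the kernel of $K_{\sigma,s}$ so that $(1-K_{\sigma,s})^{-1}$ is self-adjoint — and the applicability of Fubini's theorem to the double integral $\langle K_{\sigma,s}m_i(\cdot|x),k_j(\cdot|y)\rangle$, which is immediate from the Hilbert--Schmidt property of $K_{\sigma,s}$ (Lemma~\ref{L:1}) together with $m_i(\cdot|x),k_j(\cdot|y)\in L^2(0,\infty)$. With these in hand the lemma follows.
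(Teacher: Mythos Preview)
Your argument is correct, but it proceeds along a different line from the paper. The paper derives the lemma by returning to the vanishing identity \eqref{symmetry} obtained while proving $X(z)Y(z)=\mathbb{I}_2$ in Corollary~\ref{coooo:1}: multiplying \eqref{symmetry} through by $z$ and sending $z\to\infty$ with $\Im z>0$, the double integral term drops out by dominated convergence (each Cauchy factor contributes a $z^{-1}$), and the single integral converges to $-\int_0^\infty(\overline{X}_0(w)-\overline{Y}_0(w))\,\d w$, giving the operator identity entrywise. Your route instead bypasses the complex-analytic machinery and reads the identity off directly at the kernel level from the self-adjointness of $(1-K_{\sigma,s})^{-1}$; this is exactly the mechanism the paper invokes later for Proposition~\ref{flip}. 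Your approach is more elementary and self-contained, while the paper's approach has the virtue of tying the lemma back to the Riemann--Hilbert structure (it is literally the $z^{-1}$ coefficient of the relation $X(z)(X(z))^{-1}=\mathbb{I}_2$).
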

\begin{proof} This follows from \eqref{symmetry} by multiplying through with $z$ and sending $z\rightarrow\infty$ with $\Im z>0$, say. The double integral term vanishes in this limit by the dominated convergence theorem and the first term yields the stated identity by the same reasoning.
\end{proof}
The symmetry Lemma \ref{symlem} allows us to simplify our expression for $Y(z)$, we have now
\begin{equation}\label{SH:35}
	Y(z)=zA_1+A_0+Y_{\infty}(z),\ \ \ \ z\in\mathbb{C}\setminus[0,\infty),
\end{equation}
where $A_0$ is the integral operator on $\mathcal{H}_2$ with kernel $A_0(x,y)=\big[A_0^{ij}(x,y)\big]_{i,j=1}^2$ and
\begin{align*}
	A_0^{11}(&\,x,y)=\int_0^{\infty}\big(N_2(w)\otimes K_1(w)\big)(x,y)\,\d w=-A_0^{22}(x,y),\ \ \ \ A_0^{21}(x,y)=\delta(x-y)(\sigma'(y))^{-1},\\
	&A_0^{12}(x,y)=\delta(x-y)(s+x)(\sigma'(y))^{-1}-\int_0^{\infty}\big(N_2(w)\otimes K_2(w)+N_1(w)\otimes K_1(w)\big)(x,y)\,\d w.
\end{align*}
The representation \eqref{SH:35} mimics a Poincar\'e asymptotic expansion of $Y(z)$ at $z=\infty$. We will now combine it with a corresponding representation at $z=0$ which is computed from \eqref{SH:200} and its counterpart for \eqref{SH:24} when inserted into \eqref{SH:300}. The result equals
\begin{equation}\label{SH:36}
	Y(z)=-\frac{1}{z}\begin{bmatrix}N_1(0)\otimes L_1(0) & -N_1(0)\otimes L_2(0)\smallskip\\
	N_2(0) \otimes L_1(0) & -N_2(0)\otimes L_2(0)\end{bmatrix}+Y_{\textnormal{ori}}(z),\ \ \ 0<|z|<\epsilon,
\end{equation}
with $Y_{\textnormal{ori}}(z)\in\mathcal{I}(\mathcal{H}_2)$ analytic at $z=0$ and where we have used the identity
\begin{align*}
	& \begin{bmatrix} N_1(w)\otimes K_1(w) & -N_1(w)\otimes K_2(w)\smallskip\\
	 N_2(w)\otimes K_1(w) & -N_2(w)\otimes K_2(w)\end{bmatrix}\begin{bmatrix}M_1(\lambda)\otimes L_1(\lambda) & -M_1(\lambda)\otimes L_2(\lambda)\smallskip\\
	 M_2(\lambda)\otimes L_1(\lambda) & -M_2(\lambda)\otimes L_2(\lambda)\end{bmatrix}\\
	 &\ \ \ \ \ =(w-\lambda)K_{\sigma,s}(w,\lambda)\begin{bmatrix}N_1(w)\otimes L_1(\lambda) & -N_1(w)\otimes L_2(\lambda)\smallskip\\
	 N_2(w)\otimes L_1(\lambda) & -N_2(w)\otimes L_2(\lambda)\end{bmatrix},\ \ \ w,\lambda>0,
\end{align*}
to ensure that all logarithmic terms cancel out in our computation of \eqref{SH:36}. At this point we combine Proposition \ref{central} with the representations \eqref{SH:35}, \eqref{SH:36} and deduce from Liouville's theorem that
\begin{equation}\label{SH:37}
	Y(z)=zA_1+A_0+\frac{1}{z}A_{-1},\ \ \ \ z\in\mathbb{C}\setminus\{0\},
\end{equation}
where $A_{-1}\in\mathcal{I}(\mathcal{H}_2)$ has kernel $A_{-1}(x,y)=\big[A_{-1}^{ij}(x,y)\big]_{i,j=1}^2$ with
\begin{equation*}
	A_{-1}^{ij}(x,y)=(-1)^j\big(N_i(0)\otimes L_j(0)\big)(x,y),\ \ \ (x,y)\in\mathbb{R}^2.
\end{equation*}
\begin{rem}\label{useful} The operator representation \eqref{SH:37} is the analogue of \cite[$(9.4.28)$]{Irev} in the derivation of the Tracy-Widom Painlev\'e-II formula \cite[$(1.17)$]{TW} through matrix-valued Riemann-Hilbert techniques. And, as in \cite[$(9.4.28)$]{Irev}, the operator $A_{-1}$ is nilpotent as a consequence of \eqref{SH:29},
\begin{equation*}
	(A_{-1})^2=0,\ \ \ \ \tr_{\mathcal{H}_2}A_{-1}=0.
\end{equation*}
Furthermore by \eqref{SH:25}, \eqref{SH:14}, for any $z>0$,
\begin{align*}
	L_2(z)=&\,\big(1-K_{\sigma,s}\upharpoonright_{L^2(0,\infty)}\big)^{-1}K_2(z)\stackrel{\eqref{opdef}}{=}\big(1-K_{\sigma,s}\upharpoonright_{L^2(0,\infty)}\big)^{-1}M_1(z)=N_1(z),\\
	N_2(z)=&\,\big(1-K_{\sigma,s}\upharpoonright_{L^2(0,\infty)}\big)^{-1}M_2(z)\stackrel{\eqref{opdef}}{=}\big(1-K_{\sigma,s}\upharpoonright_{L^2(0,\infty)}\big)^{-1}K_1(z)=L_1(z),
\end{align*}
and which should be compared with the parametrization of the nilpotent matrix in \cite[$(9.4.30)$]{Irev}.
\end{rem}
Before continuing, let us summarize our computations as they will form the first half of the anticipated Lax pair.
\begin{prop}\label{Laxp:1} Let $N(z),z\in(0,\infty)$ denote the integral operator on $\mathcal{H}_2$ as in \eqref{SH:26}. Then 
\begin{equation}\label{SH:38}
	\frac{\partial N}{\partial z}(z)=A(z)N(z),\ \ \ \ A(z)\equiv A(z;s)=zA_1+A_0+\frac{1}{z}A_{-1},\ \ z\in\mathbb{C}\setminus\{0\}
\end{equation}
where the integral operators $A_k=\big[A_k^{ij}\big]_{i,j=1}^2$ on $\mathcal{H}_2$ have the kernels $A_k^{ij}(x,y)$ stated above.
\end{prop}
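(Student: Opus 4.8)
The proof is essentially a matter of assembling the ingredients already prepared in this subsection, so the plan is to organise them in the right order. First I would start from the composition formula $N(z)=X(z)M(z)$ for $z\in(0,\infty)$ (the first identity in \eqref{SH:26}), which by Theorem \ref{tb:1} and Corollary \ref{coooo:1} extends to an identity of analytic $\mathcal{I}(\mathcal{H}_2)$-valued functions on $\mathbb{C}\setminus[0,\infty)$, and differentiate it in $z$. Using the Airy relation \eqref{SH:30} for $M_z(z)$ this gives $\frac{\partial N}{\partial z}(z)=\bigl(X_z(z)(X(z))^{-1}+X(z)(zA_1+A_2)(X(z))^{-1}\bigr)N(z)=Y(z)N(z)$, with $Y(z)$ the operator defined in \eqref{SH:300}. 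This is the operator-valued analogue of Plemelj's passage \eqref{e:22} from a boundary value problem to a differential system, so the whole content of the proposition is the identification of the ``coefficient operator'' $Y(z)$.

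Next I would invoke Proposition \ref{central}, which says $Y(z)\in\mathcal{I}(\mathcal{H}_2)$ is analytic on $\mathbb{C}\setminus\{0\}$; thus the only possible singularity of the kernel $z\mapsto Y(z|x,y)$ is at $z=0$. To control that singularity and the growth at infinity I would insert the explicit formulae \eqref{SH:13} and \eqref{SH:24} for $X(z)$ and $(X(z))^{-1}$ into \eqref{SH:300} after splitting off their $\frac{1}{z}$-leading terms; using the symmetry Lemma \ref{symlem} to cancel the two $\mathcal{O}(1)$ cross terms, this produces the Poincar\'e-type representation \eqref{SH:35}, $Y(z)=zA_1+A_0+Y_{\infty}(z)$, with the kernel bound \eqref{SH:34} showing $Y_{\infty}(z|x,y)\to 0$ as $z\to\infty$ away from the origin. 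Running the same computation starting instead from the local expansion \eqref{SH:200} of $X(z)$ and its counterpart for $(X(z))^{-1}$ near $z=0$, and cancelling the logarithmic contributions via the $(N_i\otimes K_j)(M_j\otimes L_{\ell})$ product identity recorded just above, I would obtain \eqref{SH:36}, $Y(z)=-\frac{1}{z}\bigl[\,(-1)^{j}N_i(0)\otimes L_j(0)\,\bigr]_{i,j=1}^2+Y_{\textnormal{ori}}(z)$ with $Y_{\textnormal{ori}}(z)$ analytic at $z=0$.

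Finally I would apply Liouville's theorem at the level of kernels: $z\mapsto Y(z|x,y)-zA_1(x,y)-A_0(x,y)-\frac{1}{z}A_{-1}(x,y)$ is entire --- the polar part at $0$ is exactly $A_{-1}$ by \eqref{SH:36}, and the affine part at $\infty$ is removed by \eqref{SH:35} --- and it tends to $0$ as $z\to\infty$ by \eqref{SH:34}, hence it vanishes identically $(\sigma\otimes\sigma)$-a.e., which is \eqref{SH:37} and therefore the claimed \eqref{SH:38} with $A(z)=Y(z)$. The main obstacle is precisely the bookkeeping that makes this last step legitimate: one has to check that the $\frac{1}{z}$-residue in \eqref{SH:36} is literally the operator $A_{-1}$ with kernel $A_{-1}^{ij}(x,y)=(-1)^j(N_i(0)\otimes L_j(0))(x,y)$ appearing in \eqref{SH:37}, that subtracting the three explicit terms removes both the pole at $0$ and the linear growth at $\infty$, and that the remaining kernel decays uniformly in $(x,y)$ on the weighted $L^2$-scale dictated by \eqref{SH:34}, so that the classical scalar Liouville theorem can be applied pointwise in $(x,y)$ and the conclusion recombined into an identity in $\mathcal{I}(\mathcal{H}_2)$. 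All the remaining manipulations are the routine kernel algebra already rehearsed in the lead-up to the statement.
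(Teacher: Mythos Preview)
Your proposal is correct and follows essentially the same route as the paper: differentiate $N(z)=X(z)M(z)$ using the Airy system \eqref{SH:30} to obtain $N_z=Y(z)N$, invoke Proposition~\ref{central} for analyticity of $Y$ on $\mathbb{C}\setminus\{0\}$, extract the large-$z$ representation \eqref{SH:35} via Lemma~\ref{symlem} and the small-$z$ representation \eqref{SH:36} from \eqref{SH:200}, and finish with a pointwise Liouville argument on kernels to conclude \eqref{SH:37}. The bookkeeping concerns you flag are exactly those the paper handles in the discussion preceding the proposition.
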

The second half of the Lax pair
\begin{equation*}
	\frac{\partial N}{\partial s}(z)=B(z)N(z)
\end{equation*}
is computed in the same fashion, throughout replacing $z$-derivatives with $s$-derivatives. The main change occurs near $z=0$ since the analogue of \eqref{SH:300} with $s$-partial derivatives is analytic at $z=0$. We thus have the simpler result
\begin{prop}\label{Laxp:2} Let $N(z),z\in(0,\infty)$ denote the integral operator on $\mathcal{H}_2$ as in \eqref{SH:26}. Then 
\begin{equation}\label{SH:39}
	\frac{\partial N}{\partial s}(z)=B(z)N(z),\ \ \ \ B(z)\equiv B(z;s)=zB_1+B_0,\ \ z\in\mathbb{C}
\end{equation}
where the integral operators $B_k$ on $\mathcal{H}_2$ equal $B_1:=A_1$ and $B_0:=A_0$.
\end{prop}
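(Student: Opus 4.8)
The plan is to run the derivation of Proposition~\ref{Laxp:1} a second time, replacing every $z$-derivative by an $s$-derivative. The structural fact that makes this work is that the building blocks $m_i(z|x)$ and $k_j(z|x)$ in \eqref{opdef} depend on $z$ and $s$ only through the combination $z+s$; hence $\partial_s m_i=\partial_z m_i$ and $\partial_s k_j=\partial_z k_j$, so that the Airy system \eqref{SH:30} upgrades verbatim to $M_s(z)=(zA_1+A_2)M(z)$, and the commutator identity \eqref{SH:32} upgrades to $G_s(z)=[zA_1+A_2,G(z)]$ by the same kernel computation. Using $N(z)=X(z)M(z)$ from \eqref{SH:26}, hence $M(z)=(X(z))^{-1}N(z)$, I would set $B(z):=X_s(z)(X(z))^{-1}+X(z)(zA_1+A_2)(X(z))^{-1}$ for $z\in\mathbb{C}\setminus[0,\infty)$, so that $\partial_s N(z)=B(z)N(z)$ holds by construction.

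Next I would establish the analogue of Proposition~\ref{central}: that $B(z)\in\mathcal{I}(\mathcal{H}_2)$ and that it extends analytically across $(0,\infty)$. Computing the boundary values $B_\pm(z)$ from the jump $X_+(z)=X_-(z)G(z)$ and substituting $G_s(z)=[zA_1+A_2,G(z)]$ makes the $G$-dependent terms cancel, leaving $B_+(z)=X_{s-}(z)(X_-(z))^{-1}+X_-(z)(zA_1+A_2)(X_-(z))^{-1}=B_-(z)$; the Hilbert--Schmidt property and the decay estimates transfer from Theorem~\ref{tb:1} and Corollary~\ref{coooo:1} once one differentiates the integral equations \eqref{SH:14} and \eqref{SH:25} in $s$ and uses boundedness of the resolvent of $K_{\sigma,s}$, which yields estimates for $\partial_s n_i$ and $\partial_s\ell_i$ of the form \eqref{SH:16}.

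The one genuinely new point, and the step I expect to be the main obstacle, is to show that $B(z)$ is moreover analytic \emph{at} $z=0$, in contrast to $A(z)$ whose expansion \eqref{SH:37} carries the pole $\frac{1}{z}A_{-1}$. That pole arose in the computation of \eqref{SH:36} purely from differentiating in $z$ the $\ln z$ term in the local expansion \eqref{SH:10} of $X(z)$; an $s$-derivative of $(\ln z)C(s)$ equals $(\ln z)C'(s)$, which is only logarithmically singular, so no $\frac{1}{z}$ term is produced. One must still check that the residual logarithmic contributions cancel in the combination $X_s(z)(X(z))^{-1}+X(z)(zA_1+A_2)(X(z))^{-1}$, exactly as the logarithms cancelled in \eqref{SH:36} by means of the rank-one identities used in the proofs of Theorem~\ref{tb:1} and Corollary~\ref{coooo:1} (so that $N(z)$ and $L(z)$ near $z=0$ are of the form $(\mathbb{I}_2+N_{\textnormal{reg}}(z))M(z)$ and $K(z)(\mathbb{I}_2+L_{\textnormal{reg}}(z))$ with $N_{\textnormal{reg}},L_{\textnormal{reg}}$ analytic at $z=0$). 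Granting this, $B(z)$ extends to an operator analytic on all of $\mathbb{C}$.

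Finally I would pin down $B(z)$ by Liouville's theorem together with the behavior at $z=\infty$: writing $\frac{1}{w-z}=-\frac{1}{z}+\frac{w}{z(w-z)}$ in \eqref{SH:13} and \eqref{SH:24} and substituting into the definition of $B(z)$ reproduces the same expansion $B(z)=zA_1+A_0+B_\infty(z)$ as for $A(z)$ in \eqref{SH:35}, with $B_\infty(z)$ decaying in the weighted sense of \eqref{SH:34}; here the term $X_s(z)(X(z))^{-1}$ contributes only $\mathcal{O}(1/z)$ since $X(z)-\mathbb{I}_2=\mathcal{O}(1/z)$ and this is preserved under $\partial_s$, so the leading part $zA_1+A_0$ comes entirely from $X(z)(zA_1+A_2)(X(z))^{-1}$ and collapses to the same $A_0$ via the symmetry Lemma~\ref{symlem}. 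Consequently $B(z)-zA_1-A_0$ is an entire $\mathcal{I}(\mathcal{H}_2)$-valued function vanishing at infinity, hence identically zero, which gives $B(z)=zA_1+A_0$, i.e. $B_1=A_1$ and $B_0=A_0$.
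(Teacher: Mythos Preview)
Your proposal is correct and follows essentially the same approach as the paper, which merely says that the second half of the Lax pair ``is computed in the same fashion, throughout replacing $z$-derivatives with $s$-derivatives,'' with the only change being that the analogue of \eqref{SH:300} with $s$-partial derivatives is analytic at $z=0$. Your identification of the reason for this analyticity---that the $s$-derivative of the $\ln z$ term in \eqref{SH:10} does not produce a $\frac{1}{z}$ pole---is exactly the mechanism the paper has in mind, and your Liouville argument at $z=\infty$ reproduces the paper's route to $B(z)=zA_1+A_0$.
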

\begin{rem} Our derivation of \eqref{SH:38} and \eqref{SH:39} is inspired by the non-rigorous computations in \cite[Section $5$]{KKS} and in \cite[Chapter XV.$6$]{KBI}.
\end{rem}
Propositions \ref{Laxp:1} and \ref{Laxp:2} complete our derivation of the operator-valued Lax pair. We now proceed and show that this pair indeed produces an integrable system for the Fredholm determinant $F_{\sigma}(s)$, exactly the integro-differential Painlev\'e-II equation of Amir-Corwin-Quastel.
\subsection{Derivation of the integro-differential equation} We first establish a connection between $F_{\sigma}(s)$ and Problem \ref{HP8}.
\begin{prop} For any $s\in\mathbb{R}$,
\begin{equation}\label{SH:40}
	\frac{\d}{\d s}\ln F_{\sigma}(s)=\tr_{\mathcal{H}_1}\int_0^{\infty}N_2(w)\otimes K_1(w)\,\d w=-\tr_{\mathcal{H}_1}A_0^{22}
\end{equation}
in terms of the integral operator $A_0$ back in \eqref{SH:35}.
\end{prop}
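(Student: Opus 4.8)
The plan is to differentiate the Fredholm determinant $F_{\sigma}(s)=\det\big(1-K_{\sigma,s}\upharpoonright_{L^2(0,\infty)}\big)$ directly in $s$ and then to identify the outcome, via Fubini's theorem, with a trace over $\mathcal{H}_1$. By Lemma \ref{L:1} the operator $K_{\sigma,s}$ is trace class, by Proposition \ref{L:2} the operator $1-K_{\sigma,s}$ is invertible on $L^2(0,\infty)$, and $s\mapsto K_{\sigma,s}$ is differentiable in trace norm (as one checks from the explicit kernel together with the super-exponential decay of $\textnormal{Ai}$ and the finite-moment hypothesis on $\d\sigma$ --- the argument is the one used in the proof of Lemma \ref{L:1}). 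Hence the classical identity
\begin{equation*}
	\frac{\d}{\d s}\ln F_{\sigma}(s)=-\tr_{L^2(0,\infty)}\Big((1-K_{\sigma,s})^{-1}\,\partial_s K_{\sigma,s}\Big)
\end{equation*}
is available, and the problem reduces to evaluating $\partial_s K_{\sigma,s}$.

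I would compute this $s$-derivative from the representation used in the proof of Lemma \ref{L:1}, namely, after the substitution $v=t+u$ in its inner integral and then $w=s+v$,
\begin{equation*}
	K_{\sigma,s}(x,y)=\int_{\mathbb{R}}\bigg[\int_t^{\infty}\textnormal{Ai}(x+s+v)\textnormal{Ai}(y+s+v)\,\d v\bigg]\d\sigma(t)=\int_{\mathbb{R}}\bigg[\int_{s+t}^{\infty}\textnormal{Ai}(x+w)\textnormal{Ai}(y+w)\,\d w\bigg]\d\sigma(t).
\end{equation*}
Differentiating in $s$ (the inner $s$-derivative being minus the integrand evaluated at the lower limit, and $\d\sigma$ a probability measure, so that differentiation under the integral is legitimate) gives
\begin{equation*}
	\partial_s K_{\sigma,s}(x,y)=-\int_{\mathbb{R}}\textnormal{Ai}(x+s+t)\textnormal{Ai}(y+s+t)\,\d\sigma(t)=-\int_{\mathbb{R}}k_1(x|t)\,k_1(y|t)\,\d\sigma(t),
\end{equation*}
recalling $k_1(z|x)=\textnormal{Ai}(z+s+x)=m_2(z|x)$ from \eqref{opdef}. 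Thus, as an operator on $L^2(0,\infty)$, $\partial_s K_{\sigma,s}=-\int_{\mathbb{R}}\big(k_1(\cdot|t)\otimes k_1(\cdot|t)\big)\,\d\sigma(t)$ --- a $\d\sigma$-average of rank-one operators.

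Substituting this into the log-derivative formula and interchanging the $\d\sigma(t)$-integration with the trace, each rank-one contribution is
\begin{equation*}
	\tr_{L^2(0,\infty)}\Big((1-K_{\sigma,s})^{-1}\big(k_1(\cdot|t)\otimes k_1(\cdot|t)\big)\Big)=\int_0^{\infty}\big((1-K_{\sigma,s}\upharpoonright_{L^2(0,\infty)})^{-1}k_1(\cdot|t)\big)(w)\,k_1(w|t)\,\d w=\int_0^{\infty}n_2(w|t)k_1(w|t)\,\d w,
\end{equation*}
where the last equality uses $k_1(\cdot|t)=m_2(\cdot|t)$ together with the defining equation \eqref{SH:14}. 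The interchange just performed, and the subsequent use of Fubini's theorem, are licensed by the bound
\begin{equation*}
	\int_0^{\infty}\big|n_2(w|t)k_1(w|t)\big|\,\d w\le c\,\|k_1(\cdot|t)\|_{L^2(0,\infty)}^2\le c\,(1+|t|)^{3/2},
\end{equation*}
which follows from the resolvent estimate \eqref{SH:16}, Cauchy-Schwarz, and the Airy bounds of \cite[$\S 9.7$]{NIST}, and whose right-hand side is $\d\sigma$-integrable by the moment assumption. Collecting the pieces,
\begin{align*}
	\frac{\d}{\d s}\ln F_{\sigma}(s)&=\int_{\mathbb{R}}\int_0^{\infty}n_2(w|t)k_1(w|t)\,\d w\,\d\sigma(t)\\
	&=\int_0^{\infty}\tr_{\mathcal{H}_1}\big(N_2(w)\otimes K_1(w)\big)\,\d w=\tr_{\mathcal{H}_1}\int_0^{\infty}N_2(w)\otimes K_1(w)\,\d w,
\end{align*}
and the remaining equality with $-\tr_{\mathcal{H}_1}A_0^{22}$ is immediate from the definition $A_0^{22}(x,y)=-\int_0^{\infty}(N_2(w)\otimes K_1(w))(x,y)\,\d w$ recorded after \eqref{SH:35}. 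The only genuine obstacles are the analytic justifications --- trace-norm differentiability of $K_{\sigma,s}$ and pulling $\d\sigma$ through the trace --- both handled by the Airy decay and the finite moments of $\d\sigma$; once $\partial_s K_{\sigma,s}$ is in hand, the algebraic identification is routine.
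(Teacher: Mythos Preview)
Your proof is correct and follows essentially the same route as the paper's: both start from the Jacobi identity $\frac{\d}{\d s}\ln F_\sigma(s)=-\tr_{L^2(0,\infty)}\big((1-K_{\sigma,s})^{-1}\partial_sK_{\sigma,s}\big)$, compute $\partial_sK_{\sigma,s}(x,y)=-\int_{\mathbb{R}}m_2(x|t)k_1(y|t)\,\d\sigma(t)$, and then identify the resulting trace with $\tr_{\mathcal{H}_1}\int_0^\infty N_2(w)\otimes K_1(w)\,\d w$ via \eqref{SH:14} and Fubini. Your version simply spells out the analytic justifications (trace-norm differentiability, the resolvent bound \eqref{SH:16}, the moment hypothesis) more explicitly than the paper does.
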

\begin{proof} We begin with the standard Jacobi identity
\begin{equation*}
	\frac{\d}{\d s}\ln F_{\sigma}(s)=-\tr_{L^2(0,\infty)}\bigg(\big(1-K_{\sigma,s}\upharpoonright_{L^2(0,\infty)}\big)^{-1}\frac{\d K_{\sigma,s}}{\d s}\bigg),
\end{equation*}
and compute the kernel derivative
\begin{equation*}
	\frac{\d K_{\sigma,s}}{\d s}(x,y)\stackrel{\eqref{SH:4}}{=}-\int_{-\infty}^{\infty}\textnormal{Ai}(x+s+t)\textnormal{Ai}(y+s+t)\,\d\sigma(t)\stackrel{\eqref{opdef}}{=}-\int_{-\infty}^{\infty}m_2(x|t)k_1(y|t)\d\sigma(t).
\end{equation*}
Hence,
\begin{eqnarray*}
	\frac{\d}{\d s}\ln F_{\sigma}(s)&=&-\int_0^{\infty}\int_0^{\infty}\big(1-K_{\sigma,s}\upharpoonright_{L^2(0,\infty)}\big)^{-1}(x,y)\frac{\d K_{\sigma,s}}{\d s}(y,x)\,\d x\,\d y\\
	&\stackrel{\eqref{SH:14}}{=}&\int_{-\infty}^{\infty}\left[\int_0^{\infty}\big(N_2(x)\otimes K_1(x)\big)(t,t)\,\d x\right]\d\sigma(t)=\tr_{\mathcal{H}_1}\int_0^{\infty}N_2(x)\otimes K_1(x)\,\d x,
\end{eqnarray*}
and the second equality in \eqref{SH:40} follows from the explicit formul\ae\,for the entries of $A_0$.
\end{proof}
At this point we use the Lax system \eqref{SH:38}, \eqref{SH:39}, i.e.
\begin{equation*}
	\frac{\partial N}{\partial z}(z)=A(z)N(z),\ \ \ \ \ \frac{\partial N}{\partial s}(z)=B(z)N(z)
\end{equation*}
and write out its compatibility condition
\begin{equation*}
	AB-BA=\frac{\partial B}{\partial z}-\frac{\partial A}{\partial s},
\end{equation*}
keeping in mind that the entries of $A$ and $B$ are operators and thus do not commute in general. The result equals
\begin{equation*}
	(A_0)_s=A_1+A_1A_{-1}-A_{-1}A_1,\ \ \ \ \ \ \ \ \ (A_{-1})_s=A_0A_{-1}-A_{-1}A_0,
\end{equation*}
or more explicit, using partial information for the $A_0^{ij}$ operators ($A_1^{12}=A_0^{21}=\mathbb{I}_1,A_0^{22}=-A_0^{11}$),
\begin{equation}\label{SH:41}
	(A_0^{22})_s=-A_{-1}^{21},\ \ \ \ \ \ \ \ \ (A_0^{12})_s=\mathbb{I}_1+A_{-1}^{22}-A_{-1}^{11},
\end{equation}
which follows from the equation on $(A_0)_s$ and from the equation on $(A_{-1})_s$,
\begin{equation}\label{SH:42}
	\begin{cases}(A_{-1}^{11})_s=-A_0^{22}A_{-1}^{11}+A_{-1}^{11}A_0^{22}+A_0^{12}A_{-1}^{21}-A_{-1}^{12},&\smallskip\\
	 (A_{-1}^{21})_s=A_{-1}^{11}-A_{-1}^{22}+A_0^{22}A_{-1}^{21}+A_{-1}^{21}A_0^{22},&\smallskip\\
	(A_{-1}^{12})_s=-A_0^{22}A_{-1}^{12}-A_{-1}^{12}A_0^{22}+A_0^{12}A_{-1}^{22}-A_{-1}^{11}A_0^{12},&\smallskip\\
	(A_{-1}^{22})_s=A_0^{22}A_{-1}^{22}-A_{-1}^{22}A_0^{22}-A_{-1}^{21}A_0^{12}+A_{-1}^{12}.&\smallskip
	\end{cases}
\end{equation}
The coupled system \eqref{SH:41}, \eqref{SH:42} is more complicated than its analogue \cite[$(9.4.34)$]{Irev} because of the aforementioned non-commutativity, still both systems share certain similarities. For instance
\begin{prop} The integral operator
\begin{equation*}
	I:=A_0^{12}+A_{-1}^{21}+(A_0^{22})^2-A_2^{12}
\end{equation*}
on $\mathcal{H}_1$ is a conserved quantity for the system \eqref{SH:41}, \eqref{SH:42}. In fact, $I=0$ is the zero integral operator on $\mathcal{H}_1$

\end{prop}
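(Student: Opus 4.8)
The proof splits into the two assertions: that $\partial_s I=0$ modulo the relations \eqref{SH:41}, \eqref{SH:42}, and that the resulting $s$-independent operator actually vanishes.

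For the conservation law I would simply differentiate the defining expression in $s$ and substitute. Since $A_2^{12}$ is multiplication by $s+x$, we have $(A_2^{12})_s=\mathbb{I}_1$, and using $(A_0^{22})_s=-A_{-1}^{21}$ together with the relation for $(A_0^{12})_s$ in \eqref{SH:41} and that for $(A_{-1}^{21})_s$ in \eqref{SH:42},
\begin{align*}
	\partial_s I=&\,(A_0^{12})_s+(A_{-1}^{21})_s+(A_0^{22})_s A_0^{22}+A_0^{22}(A_0^{22})_s-(A_2^{12})_s\\
	=&\,\big(\mathbb{I}_1+A_{-1}^{22}-A_{-1}^{11}\big)+\big(A_{-1}^{11}-A_{-1}^{22}+A_0^{22}A_{-1}^{21}+A_{-1}^{21}A_0^{22}\big)\\
	&\,-A_{-1}^{21}A_0^{22}-A_0^{22}A_{-1}^{21}-\mathbb{I}_1=0.
\end{align*}
Every term cancels without ever interchanging operators, so $I$ does not depend on $s$.

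Given this, proving $I=0$ reduces to computing $\lim_{s\to+\infty}I$. First one shows $\tr_{L^2(0,\infty)}K_{\sigma,s}=\int_s^\infty K_\sigma(x,x)\,\d x\to 0$: by the Airy estimates used in the proof of Lemma \ref{L:1} the inner integral is super-exponentially small once its lower limit $s+t$ is large, and splitting the $\d\sigma$-integral at $t=-s/2$ the remaining piece is dominated by a moment of $\d\sigma$ divided by $s^k$. Hence $\|K_{\sigma,s}\|_{\textnormal{op}}\le\tr_{L^2(0,\infty)}K_{\sigma,s}\to 0$, so $(1-K_{\sigma,s})^{-1}\to 1$ on $L^2(0,\infty)$, and by \eqref{SH:14}, \eqref{SH:25} and the uniform bound \eqref{SH:16} the functions $n_i(\cdot|x),\ell_i(\cdot|x)$ converge to $m_i(\cdot|x),k_i(\cdot|x)$ with enough uniformity to pass to the limit under the $w$- and $\d\sigma$-integrals defining $A_0^{12},A_0^{22},A_{-1}^{21}$. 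In that limit each of the three pieces of $I$ beyond $A_2^{12}$ tends to zero as $s\to+\infty$: using $m_2(w|x)k_2(w|y)+m_1(w|x)k_1(w|y)=\partial_w\big[\textnormal{Ai}(w+s+x)\textnormal{Ai}(w+s+y)\big]$, the operator $A_0^{12}-A_2^{12}$ converges to the rank-one kernel $\textnormal{Ai}(s+x)\textnormal{Ai}(s+y)$ on $\mathcal{H}_1$, whose norm $\to 0$; by $A_{-1}^{ij}(x,y)=(-1)^j\big(N_i(0)\otimes L_j(0)\big)(x,y)$, \eqref{opdef} and $\ell_1=n_2$ (Remark \ref{useful}), $A_{-1}^{21}$ converges to $-\textnormal{Ai}(s+x)\textnormal{Ai}(s+y)$, again with norm $\to 0$; and $A_0^{22}$, with kernel $-\int_0^\infty n_2(w|x)k_1(w|y)\,\d w\to-\int_0^\infty\textnormal{Ai}(w+s+x)\textnormal{Ai}(w+s+y)\,\d w$, tends to $0$ in Hilbert--Schmidt norm, so $(A_0^{22})^2\to 0$. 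Hence $\lim_{s\to+\infty}I=0$, and since $I$ is $s$-independent, $I=0$, the zero integral operator on $\mathcal{H}_1$.

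The main obstacle is the convergence bookkeeping in the last paragraph: establishing $\|K_{\sigma,s}\|_{\textnormal{op}}\to 0$ (splitting the $\d\sigma$-integral at $t=-s/2$ exactly as in the proof of Lemma \ref{L:1}) and justifying the interchange of $s\to+\infty$ with the $w$- and $\d\sigma$-integrals by dominated convergence, using the standard Airy bounds of \cite[$\S 9.7$]{NIST} and \eqref{SH:16}; the algebra of the cancellations is immediate. (Alternatively one could avoid the limit and verify $I=0$ directly by inserting \eqref{SH:14}, \eqref{SH:25} and the resolvent identity \eqref{SH:29} into the integral representations of $A_0^{12},A_0^{22},A_{-1}^{21}$ and checking the cancellation after one integration by parts in $w$, but that route is noticeably more laborious.)
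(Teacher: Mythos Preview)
Your proof is correct and follows essentially the same two-step strategy as the paper: first verify $I_s=0$ by direct substitution of \eqref{SH:41}, \eqref{SH:42} and $(A_2^{12})_s=\mathbb{I}_1$, then conclude $I=0$ by letting $s\to+\infty$ and invoking a Neumann-series/dominated-convergence argument. The only cosmetic difference is that the paper works pointwise on the kernel $I(x,y)$ whereas you phrase the limit in operator-norm language and track each summand separately; both routes rest on the same Airy estimates and the bound \eqref{SH:16}.
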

\begin{proof} We have
\begin{equation*}
	I_s=(A_0^{12})_s+(A_{-1}^{21})_s+(A_0^{22})_sA_0^{22}+A_0^{22}(A_0^{22})_s-(A_2^{12})_s=0
\end{equation*}
by \eqref{SH:41}, \eqref{SH:42} and with $(A_2^{12})_s=\mathbb{I}_1$. Next, for any $(x,y)\in\mathbb{R}^2$,
\begin{equation*}
	I(x,y)=-n_2(0|x)\ell_1(0|y)+\int_{-\infty}^{\infty}\left[\int_0^{\infty}n_2(w|x)k_1(w|t)\,\d w\right]\left[\int_0^{\infty}n_2(\lambda|t)k_1(\lambda|y)\,\d\lambda\right]\d\sigma(t),
\end{equation*}
so letting $s\rightarrow+\infty$ with the help of \eqref{opdef} and \eqref{SH:14}, using a Neumann series argument for $n_i(\cdot|x)$ we find $\lim_{s\rightarrow+\infty}I(x,y)=0$ pointwise in $(x,y)\in\mathbb{R}^2$ by the dominated convergence theorem. But $I$ was already shown to be $s$-independent, thus it must be the zero operator as claimed.
\end{proof}
At this point we $s$-differentiate the equation for $A_{-1}^{21}$ in \eqref{SH:42} one more time and use \eqref{SH:41}, the equations for $(A_{-1}^{11})_s, (A_{-1}^{22})_s, (A_{-1}^{21})_s$ from \eqref{SH:42} as well as the above integral of motion in the ensuing simplification. The result equals
\begin{equation}\label{SH:43}
	(A_{-1}^{21})_{ss}=2\Big(A_{-1}^{11}A_0^{22}-A_0^{22}A_{-1}^{22}-A_{-1}^{12}-2\big(A_{-1}^{21}\big)^2+A_0^{22}A_{-1}^{21}A_0^{22}\Big)+A_2^{12}A_{-1}^{21}+A_{-1}^{21}A_2^{12},
\end{equation}
and we now introduce (recall our notational convention to suppress all $s$-dependence throughout)
\begin{equation}\label{SH:44}
	u(s|t):=n_2(0|t)\stackrel{\eqref{SH:14}}{=}\big((1-K_{\sigma,s}\upharpoonright_{L^2(0,\infty)})^{-1}m_2(\cdot|t)\big)(0),\ \ \ \ s,t\in\mathbb{R}
\end{equation}
which will ultimately solve the anticipated integro-differential Painlev\'e-II equation. In order to get there, evaluate all kernels in \eqref{SH:43} on their diagonal, i.e. on the left hand side of \eqref{SH:43} we obtain
\begin{eqnarray}\label{SH:45}
	(A_{-1}^{21})_{ss}(t,t)&=&-\big(N_2(0)\otimes L_1(0)\big)_{ss}(t,t)=-\big(N_2(0)\otimes N_2(0)\big)_{ss}(t,t)\nonumber\\
	&\stackrel{\eqref{SH:44}}{=}&-\big(u^2(s|t)\big)_{ss}=-2\big(u_s(s|t)\big)^2-2u(s|t)u_{ss}(s|t)
\end{eqnarray}
through the use of Remark \ref{useful}. On the right hand side of \eqref{SH:43} instead,
\begin{equation}\label{SH:46}
	\Big(A_2^{12}A_{-1}^{21}+A_{-1}^{21}A_2^{12}-4\big(A_{-1}^{21}\big)^2\Big)(t,t)=-2(s+t)u^2(s|t)-4u^2(s|t)\int_{-\infty}^{\infty}u^2(s|\lambda)\d\sigma(\lambda),
\end{equation}
and
\begin{align}\label{SH:47}
	\Big(A_{-1}^{11}A_0^{22}&\,-A_0^{22}A_{-1}^{22}-A_{-1}^{12}+A_0^{22}A_{-1}^{21}A_0^{22}\Big)(t,t)=\Big(2A_{-1}^{11}A_0^{22}-A_{-1}^{12}+A_0^{22}A_{-1}^{21}A_0^{22}\Big)(t,t)\nonumber\\
	=&\,-\left[n_1(0|t)+\int_{-\infty}^{\infty}n_2(0|u)A_0^{22}(u,t)\,\d\sigma(u)\right]^2
\end{align}
where we used the following fact in the first and second equality (as well as Remark \ref{useful}):
\begin{prop}\label{flip} We have for any $(x,y)\in\mathbb{R}^2$ that $A_0^{22}(x,y)=A_0^{22}(y,x)$.
\end{prop}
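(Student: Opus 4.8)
The claim is a symmetry statement for the kernel $A_0^{22}(x,y)$, and the natural strategy is to trace $A_0^{22}$ back to its integral representation in terms of the functions $n_i$ and $k_j$ (equivalently $n_i, m_j$), and then exhibit the symmetry as a consequence of the already-established resolvent formulae. Recall from the definition of $A_0$ in \eqref{SH:35} that
\begin{equation*}
	A_0^{22}(x,y)=-\int_0^{\infty}\big(N_1(w)\otimes K_2(w)\big)(x,y)\,\d w=-\int_0^{\infty}n_1(w|x)k_2(w|y)\,\d w,
\end{equation*}
where I have used the distributional-kernel convention so that $N_1(w)$ acts by multiplication. By \eqref{opdef} we have $k_2(w|y)=\textnormal{Ai}'(w+s+y)=m_1(w|y)$, so $A_0^{22}(x,y)=-\int_0^{\infty}n_1(w|x)m_1(w|y)\,\d w$. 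The first step is therefore to rewrite this using \eqref{SH:14}: $n_1=(1-K_{\sigma,s})^{-1}m_1$ on $L^2(0,\infty)$, hence $\int_0^{\infty}n_1(w|x)m_1(w|y)\,\d w=\big\langle (1-K_{\sigma,s})^{-1}m_1(\cdot|y), m_1(\cdot|x)\big\rangle_{L^2(0,\infty)}$ (the inner product is real since all functions are real).

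The key point is then self-adjointness of the resolvent: $K_{\sigma,s}$ is a symmetric kernel (Lemma \ref{L:1}) and $1-K_{\sigma,s}$ is invertible (Proposition \ref{L:2}), so $(1-K_{\sigma,s})^{-1}$ is self-adjoint on $L^2(0,\infty)$. Consequently
\begin{equation*}
	\big\langle (1-K_{\sigma,s})^{-1}m_1(\cdot|y), m_1(\cdot|x)\big\rangle_{L^2(0,\infty)}=\big\langle m_1(\cdot|y), (1-K_{\sigma,s})^{-1}m_1(\cdot|x)\big\rangle_{L^2(0,\infty)},
\end{equation*}
and the right-hand side is manifestly symmetric under $x\leftrightarrow y$ (it equals $\int_0^{\infty}m_1(w|y)n_1(w|x)\,\d w$, which is the same integral with the roles of $x$ and $y$ interchanged). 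This immediately gives $A_0^{22}(x,y)=A_0^{22}(y,x)$. Equivalently, and perhaps more transparently in the language already set up in Corollary \ref{coooo:1}: by \eqref{SH:25} one has $\ell_1=(1-K_{\sigma,s})^{-1}k_1$, and since $k_1(w|y)=\textnormal{Ai}(w+s+y)=m_2(w|y)$ while $m_1=k_2$, the symmetry relations $L_2=N_1$, $N_2=L_1$ recorded in Remark \ref{useful} encode exactly this transposition; one reads off $A_0^{22}(x,y)=-\int_0^{\infty}n_1(w|x)m_1(w|y)\,\d w=-\int_0^{\infty}\ell_2(w|x)k_2(w|y)\,\d w$ and the desired identity follows from the $x\leftrightarrow y$ symmetry of $\int_0^\infty \ell_i(w|x) n_j(w|y)\,\d w$ for the relevant index pairing.

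I do not anticipate a genuine obstacle here; the statement is essentially a repackaging of resolvent self-adjointness, which has already been established. The only care needed is bookkeeping: one must make sure that the off-diagonal $(2,2)$ entry of $A_0$ is the one built out of the pair $(N_1,K_2)$ whose constituent functions $n_1$ and $k_2$ are, after the identification \eqref{opdef}, resolvent images of each other under the self-adjoint operator $(1-K_{\sigma,s})^{-1}$ — which is precisely the content of $L_2=N_1$. Once the indices are matched, the symmetry is immediate and the proof is a two-line computation using the symmetry of the kernel $K_{\sigma,s}$ together with Proposition \ref{L:2}.
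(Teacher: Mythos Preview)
Your core idea---that the symmetry of $A_0^{22}$ is nothing but self-adjointness of the resolvent $(1-K_{\sigma,s})^{-1}$ on $L^2(0,\infty)$---is correct and is in fact a cleaner route than the paper's, which passes through Lemma~\ref{symlem} (itself a repackaging of the same self-adjointness). However, you have a bookkeeping slip exactly where you warned yourself to be careful: from the explicit formulae following \eqref{SH:35} one has $A_0^{11}=\int_0^\infty N_2(w)\otimes K_1(w)\,\d w=-A_0^{22}$, so
\[
A_0^{22}(x,y)=-\int_0^{\infty}n_2(w|x)k_1(w|y)\,\d w=-\int_0^{\infty}n_2(w|x)m_2(w|y)\,\d w,
\]
not $-\int_0^\infty n_1(w|x)k_2(w|y)\,\d w$ as you wrote. (Trace the computation of $A_0$ from $Y(z)=zA_1+A_2+A_1\int-\int A_1+\ldots$: multiplying by $A_1=\bigl[\begin{smallmatrix}0&1\\0&0\end{smallmatrix}\bigr]$ on the right picks out the \emph{first} column, placing $N_2\otimes K_1$ in the $(2,2)$ slot.)

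This error is harmless for your argument: with the correct indices the same self-adjointness step gives
\[
\int_0^\infty n_2(w|x)m_2(w|y)\,\d w=\big\langle (1-K_{\sigma,s})^{-1}m_2(\cdot|x),\,m_2(\cdot|y)\big\rangle_{L^2(0,\infty)}=\big\langle m_2(\cdot|x),\,(1-K_{\sigma,s})^{-1}m_2(\cdot|y)\big\rangle_{L^2(0,\infty)},
\]
which is symmetric in $x,y$. The paper instead rewrites $n_2(w|x)k_1(w|y)$ as $\ell_1(w|x)m_2(w|y)$ via $m_2=k_1$ and \eqref{SH:25}, and then invokes Lemma~\ref{symlem} to close the loop; your direct appeal to self-adjointness bypasses that detour. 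Fix the index and your proof is complete.
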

\begin{proof} Simply compute
\begin{align*}
	A_0^{22}(x,y)=&\,-\int_0^{\infty}n_2(w|x)k_1(w|y)\,\d w\stackrel{\eqref{SH:14}}{=}-\int_0^{\infty}\Big(\big(1-K_{\sigma,s}\upharpoonright_{L^2(0,\infty)}\big)^{-1}m_2(w)\Big)(x)k_1(w|y)\,\d w\\
	\stackrel{\eqref{opdef}}{=}&\,-\int_0^{\infty}\Big(\big(1-K_{\sigma,s}\upharpoonright_{L^2(0,\infty)}\big)^{-1}k_1(w)\Big)(x)m_2(w|y)\,\d w\stackrel{\eqref{SH:25}}{=}-\int_0^{\infty}\ell_1(w|x)m_2(w|y)\,\d w\\
	=&\,-\int_0^{\infty}\big(M_2(w)\otimes L_1(w)\big)(y,x)\,\d w=A_0^{22}(y,x),
\end{align*}
with Lemma \ref{symlem} in the last equality.
\end{proof}
However, returning to the equation for $(A_{-1}^{21})_s$ in \eqref{SH:42}, its kernel evaluated on the diagonal yields (we use again Proposition \ref{flip} and Remark \ref{useful}),
\begin{equation*}
	-2u(s|t)u_s(s|t)=(A_{-1}^{21})_s(t,t)=-2u(s|t)\left(n_1(0|t)+\int_{-\infty}^{\infty}n_2(0|u)A_0^{22}(u,t)\,\d\sigma(u)\right),
\end{equation*}
and since the zeros of the Airy function are discrete and the resolvent analytic in $s\in\mathbb{R}$, therefore
\begin{equation}\label{SH:48}
	u_s(s|t)=n_1(0|t)+\int_{-\infty}^{\infty}n_2(0|u)A_0^{22}(u,t)\,\d\sigma(u).
\end{equation}
At this point we are left to substitute \eqref{SH:45}, \eqref{SH:46}, \eqref{SH:47} and \eqref{SH:48} into \eqref{SH:43},
\begin{equation*}
	u_{ss}(s|t)=\left[s+t+2u(s|t)\int_{-\infty}^{\infty}u^2(s|\lambda)\,\d\sigma(\lambda)\right]u(s|t),\ \ \ s,t\in\mathbb{R}.
\end{equation*}
This is exactly the integro-differential Painlev\'e-II equation of Amir, Corwin, Quastel, cf. \cite[Proposition $1.2$]{ACQ}, subject to the boundary data
\begin{equation}\label{SH:49}
	u(s|t)\stackrel{\eqref{SH:44}}{=}\big((1-K_{\sigma,s}\upharpoonright_{L^2(0,\infty)})^{-1}m_2(\cdot|t)\big)(0)\sim m_2(0|t)\stackrel{\eqref{opdef}}{=}\textnormal{Ai}(s+t),\ \ s\rightarrow+\infty,
\end{equation}
since $K_{\sigma,s}\rightarrow 0$ in operator norm as $s\rightarrow\infty$. We are now left to derive the relevant Tracy-Widom formula for $F_{\sigma}(s)$. To this end $s$-differentiate \eqref{SH:40}
\begin{equation*}
	\frac{\d^2}{\d s^2}\ln F_{\sigma}(s)=-\tr_{\mathcal{H}_1}(A_0^{22})_s\stackrel{\eqref{SH:41}}{=}\tr_{\mathcal{H}_1}A_{-1}^{21}=-\int_{-\infty}^{\infty}\big(n_2(0|t)\big)^2\,\d\sigma(t)\stackrel{\eqref{SH:44}}{=}-\int_{-\infty}^{\infty}u^2(s|t)\,\d\sigma(t),
\end{equation*}
and now integrate twice while respecting the boundary condition \eqref{SH:49},
\begin{equation*}
	\frac{\d}{\d s}\ln F_{\sigma}(s)=\int_s^{\infty}\left[\int_{-\infty}^{\infty}u^2(x|t)\,\d\sigma(t)\right]\d x,\ \ \ \ln F_{\sigma}(s)=-\int_s^{\infty}\left(\int_y^{\infty}\left[\int_{-\infty}^{\infty}u^2(x|t)\,\d\sigma(t)\right]\d x\right)\d y.
\end{equation*}
Finally rearrange the outer two integrals and summarize,
\begin{theo}[{\cite[Proposition $5.2$]{ACQ}}]\label{tb:2} Let $\d\sigma$ be a positive Borel probability measure on $\mathbb{R}$ with finite moments to all orders and which is absolutely continuous with respect to the Lebesgue measure. Then the Fredholm determinant $F_{\sigma}(s)$ of the operator $K_{\sigma}$ with kernel \eqref{SH:4} on $L^2(s,\infty)$ equals
\begin{equation*}
	F_{\sigma}(s)=\exp\left[-\int_s^{\infty}(x-s)\left(\int_{-\infty}^{\infty}u^2(x|t)\d\sigma(t)\right)\d x\right],\ \ s\in\mathbb{R},
\end{equation*}
where $u=u(s|t):\mathbb{R}^2\rightarrow\mathbb{R}$ solves the integro-differential Painlev\'e-II equation
\begin{equation}\label{SH:50}
	u_{ss}(s|t)=\left[s+t+2u(s|t)\int_{-\infty}^{\infty}u^2(s|\lambda)\,\d\sigma(\lambda)\right]u(s|t),
\end{equation}
subject to the boundary condition $u(s|t)\sim\textnormal{Ai}(s+t)$ as $s\rightarrow+\infty$ and $t\in\mathbb{R}$ fixed.
\end{theo}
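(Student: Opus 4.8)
The plan is to obtain Theorem~\ref{tb:2} as the endpoint of the operator-valued Riemann--Hilbert machinery already assembled for Problem~\ref{HP8}: combine the variational (Jacobi) formula for $\frac{\d}{\d s}\ln F_{\sigma}(s)$ with the integro-differential Painlev\'e-II equation produced by the Lax pair, and then integrate twice in $s$. I would organize the argument in three stages. First, I would record that Problem~\ref{HP8} is uniquely solvable for every $s\in\mathbb{R}$ (Theorem~\ref{tb:1}), with the explicit Cauchy-type representation \eqref{SH:13} built from $n_i(\cdot|x)=(1-K_{\sigma,s})^{-1}m_i(\cdot|x)$ and the companion inverse formula \eqref{SH:24}; this rests on $K_{\sigma,s}$ being trace class on $L^2(0,\infty)$ (Lemma~\ref{L:1}) and $1-K_{\sigma,s}$ being invertible there (Proposition~\ref{L:2}), together with the compact composition identities $N(z)=X(z)M(z)$, $L(z)=K(z)(X(z))^{-1}$ of Corollary~\ref{coooo:1}.

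Second, I would build the operator-valued Lax pair. Differentiating $N(z)=X(z)M(z)$ in $z$ and in $s$ and using that $M(z)$ obeys the Airy system $M_z=(zA_1+A_2)M$, one shows that $A(z):=\frac{\partial N}{\partial z}(N(z))^{-1}$ and $B(z):=\frac{\partial N}{\partial s}(N(z))^{-1}$ extend to operators analytic in $z\in\mathbb{C}\setminus\{0\}$, resp.\ $z\in\mathbb{C}$ (Proposition~\ref{central}), so Liouville's theorem forces the rational normal forms $A(z)=zA_1+A_0+z^{-1}A_{-1}$ and $B(z)=zA_1+A_0$ (Propositions~\ref{Laxp:1},~\ref{Laxp:2}). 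Writing the compatibility condition $AB-BA=\partial_z B-\partial_s A$ entrywise --- carefully, since the entries are non-commuting integral operators --- yields the coupled system \eqref{SH:41}--\eqref{SH:42}; introducing $u(s|t):=n_2(0|t)$, evaluating all operator kernels on the diagonal, and invoking the nilpotency and parametrization of $A_{-1}$ from Remark~\ref{useful} together with the symmetry $A_0^{22}(x,y)=A_0^{22}(y,x)$ (Proposition~\ref{flip}) collapses \eqref{SH:42} to the integro-differential equation \eqref{SH:50}. The boundary behavior $u(s|t)\sim\textnormal{Ai}(s+t)$ as $s\to+\infty$ follows because $K_{\sigma,s}\to 0$ in operator norm, hence $(1-K_{\sigma,s})^{-1}m_2(\cdot|t)\to m_2(\cdot|t)$.

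Third, I would tie the determinant to Problem~\ref{HP8}. The Jacobi identity gives $\frac{\d}{\d s}\ln F_{\sigma}(s)=-\tr_{L^2(0,\infty)}\big((1-K_{\sigma,s})^{-1}\partial_s K_{\sigma,s}\big)$, and using $\partial_s K_{\sigma,s}(x,y)=-\int m_2(x|t)k_1(y|t)\,\d\sigma(t)$ this trace equals $\tr_{\mathcal{H}_1}\int_0^{\infty}N_2(w)\otimes K_1(w)\,\d w=-\tr_{\mathcal{H}_1}A_0^{22}$, i.e.\ \eqref{SH:40}. Differentiating once more and using $(A_0^{22})_s=-A_{-1}^{21}$ from \eqref{SH:41} together with $A_{-1}^{21}(t,t)=-(N_2(0)\otimes L_1(0))(t,t)=-u^2(s|t)$ (Remark~\ref{useful}) yields $\frac{\d^2}{\d s^2}\ln F_{\sigma}(s)=-\int_{-\infty}^{\infty}u^2(s|t)\,\d\sigma(t)$. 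Integrating twice, using $F_{\sigma}(s)\to 1$ (so that $\ln F_{\sigma}$ and its first $s$-derivative vanish) as $s\to+\infty$, and finally exchanging the order of the iterated integral produces $\ln F_{\sigma}(s)=-\int_s^{\infty}(x-s)\big(\int_{-\infty}^{\infty}u^2(x|t)\,\d\sigma(t)\big)\,\d x$, which is the assertion.

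\textbf{Main obstacle.} The genuinely hard part --- already handled in the excerpt, but worth flagging as where the work lies --- is the operator-valued analogue of Plemelj's solvability argument underlying Theorem~\ref{tb:1}: controlling the logarithmic singularity of $X(z)$ at $z=0$, showing that the regularized Fredholm determinant $e(z)=\det(\mathbb{I}_2+X_0(z))$ is identically $1$ so that $X(z)$ and its boundary values are invertible and the vanishing-lemma uniqueness argument goes through, and then carrying the non-commutative bookkeeping through the zero-curvature equations without accidentally commuting operators. A secondary technical burden is justifying the trace-class properties, the Fubini and dominated-convergence interchanges, and the diagonal evaluations of distributional kernels on which the collapse to \eqref{SH:50} and the formula \eqref{SH:40} depend.
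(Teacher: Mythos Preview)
Your proposal is correct and follows essentially the same route as the paper: unique solvability of Problem~\ref{HP8}, the operator-valued Lax pair of Propositions~\ref{Laxp:1}--\ref{Laxp:2}, the compatibility system \eqref{SH:41}--\eqref{SH:42}, diagonal evaluation with $u(s|t)=n_2(0|t)$, and two integrations of the trace identity \eqref{SH:40}. The only step you gloss over is that the paper also needs the conserved quantity $I:=A_0^{12}+A_{-1}^{21}+(A_0^{22})^2-A_2^{12}=0$ (established by showing $I_s=0$ and $I\to 0$ as $s\to+\infty$) in order to pass from the first-order system \eqref{SH:42} to the second-order identity \eqref{SH:43} that then collapses on the diagonal to \eqref{SH:50}; otherwise your outline matches the paper's argument.
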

We emphasize again that Theorem \ref{tb:2} was first proven in \cite{ACQ} through an application of the Tracy-Widom method \cite{TW}. Here we obtained the same result from the operator-valued Lax pair \eqref{SH:38}, \eqref{SH:39} and the underlying Hilbert boundary value problem \ref{HP8}. Note that our method, en route, established the following novel features of $u(s|t)$ - which should be compared with the ones for the Ablowitz-Segur transcendent $u(x)=u(x|(-\im,0,\im))$ in Subsection \ref{PIIconn}.
\begin{enumerate}
	\item[(1)] the above boundary value problem for \eqref{SH:50} is uniquely solvable, as $u(s|t)=n_2(0|t)$ is naturally constructed from \eqref{SH:23}, i.e. from the solution of the uniquely solvable Hilbert boundary value problem \ref{HP8}.
	\item[(2)] the map $\mathbb{R}\ni s\mapsto u(s|t)$ is smooth for any $t\in\mathbb{R}$ in light of \eqref{SH:44} and the analyticity properties of the resolvent. Equivalently because of \eqref{SH:23} and the analyticity properties of $X(z)=X(z;s)$, the solution of Problem \ref{HP8}.
\end{enumerate}

\begin{rem} A final remark is in order concerning the possibility to derive asymptotic expansions of $F_{\sigma}(s)$ as $s\rightarrow\pm\infty$ directly from Problem \ref{HP8}: while it is now fair to say that the operator-valued problem is well amenable to the rigorous derivation of dynamical systems, the development of a robust nonlinear steepest descent approach is work in progress by  the author. As of now, the only rigorous operator-valued asymptotic techniques available are in \cite{IKo} and \cite{IS}, for two particular integral kernels. In that sense, operator-valued Hilbert boundary value problems are not yet on the same level as their matrix-valued counterparts. However, if one accepts other techniques, then certain asymptotic information for $F_{\sigma}(s)$ has been derived recently, albeit for very specific measures $\d\sigma$, see \cite{CoG,Tsai,CC}.

\end{rem}

\section{Summary and further reading}\label{read}
This article is meant to provide a short survey of the theory of Riemann-Hilbert problems and their original appearances in mathematics. Particular emphasis is placed on Plemelj's 1908 work \cite{P0} on Hilbert's 21st problem and his use of Hilbert boundary value problems. Although Plemelj did not solve the problem for Fuchsian systems on $\mathbb{CP}^1$, his use of singular integral equations in the derivation of dynamical systems foreshadowed one of the key aspects of modern Riemann-Hilbert theory. Indeed, a Hilbert boundary value problem in modern terms is a Swiss army knife: once we know how a problem's observables can be characterized through a Hilbert boundary value problem then it presents us with a systematic tool to derive dynamical systems for the observables and with an efficient way to analyze them asymptotically - overall in complete analogy to classical contour integral methods. Still, the identification of a Riemann-Hilbert approach to a problem is non-systematic and we have displayed six different applications in the second half of the text. Much more than that has been done over the past decades and we would now like to give some guidance to the relevant Riemann-Hilbert literature - naturally based on the author's personal preferences: The monograph \cite{FIKN} is the standard reference concerning the Riemann-Hilbert approach to Painlev\'e equations and as a first taste to this subject we recommend the review articles \cite{Irev,Irev0} by Its. For a first introduction to orthogonal polynomial Riemann-Hilbert methods in random matrix theory we single out Deift's notes \cite{DRMT}, the notes by Bleher \cite{Ble}, by Its \cite{ICRM} and by Kuijlaars \cite{Kui}. For continued reading on orthogonal polynomial methods in more general ensembles than \eqref{e:36} we recommend Kuijlaars' survey \cite{Kui2}, the review article on universality \cite{Kui3} by the same author and the chapter on multi-matrix models by Bertola \cite{Ber}. The standard reference to integrable operators is \cite{D0}, the paper \cite{DIZ} and the somewhat non-rigorous monograph \cite{KBI}. Regarding Ulam's problem and other combinatorial problems amenable to Riemann-Hilbert methods, the recent monograph \cite{BDS} is likely going to become standard. Finally, and this is a topic which we did not touch upon, Hilbert boundary value problems have been identified as central tools in Gromov-Witten theories, i.e. in pure algebraic geometry. These theories aim to describe certain geometrical aspects of projective varieties in terms of invariants, suitably defined as intersection numbers on moduli spaces of curves. The invariants themselves are encoded in their generating functions whose convergence domains carry rich geometrical structures, nowadays known as Dubrovin-Frobenius manifolds. It was Dubrovin \cite{D1}, see also \cite{D2}, who first recognized the possibility to reconstruct such a manifold via a Hilbert boundary value problem. Further refinements of his procedure are given in the recent works of Cotti, Dubrovin, Guzzetti \cite{CDG1,CDG2}, sadly some of the last papers coauthored by Boris Dubrovin.

\begin{bibsection}
\begin{biblist}

\bib{AC}{book}{
AUTHOR = {Ablowitz, M. J.}
author={Clarkson, P. A.},
     TITLE = {Solitons, nonlinear evolution equations and inverse
              scattering},
    SERIES = {London Mathematical Society Lecture Note Series},
    VOLUME = {149},
 PUBLISHER = {Cambridge University Press, Cambridge},
      YEAR = {1991},
     PAGES = {xii+516},
      ISBN = {0-521-38730-2},
   MRCLASS = {35Qxx (35-02 35P25 58F07 81T13)},
  MRNUMBER = {1149378},
MRREVIEWER = {Walter Oevel},
       DOI = {10.1017/CBO9780511623998},
       URL = {https://doi.org/10.1017/CBO9780511623998},
}

\bib{AN}{article}{
AUTHOR = {Ablowitz, Mark J.}
author={Newell, Alan C.},
     TITLE = {The decay of the continuous spectrum for solutions of the
              {K}orteweg-de {V}ries equation},
   JOURNAL = {J. Mathematical Phys.},
  FJOURNAL = {Journal of Mathematical Physics},
    VOLUME = {14},
      YEAR = {1973},
     PAGES = {1277--1284},
      ISSN = {0022-2488},
   MRCLASS = {35Q99},
  MRNUMBER = {324237},
MRREVIEWER = {Stanly L. Steinberg},
       DOI = {10.1063/1.1666479},
       URL = {https://doi.org/10.1063/1.1666479},
}

\bib{AS1}{article}{
AUTHOR = {Ablowitz, M. J.}
author={Segur, H.},
     TITLE = {Asymptotic solutions of the {K}orteweg-de{V}ries equation},
   JOURNAL = {Studies in Appl. Math.},
  FJOURNAL = {Studies in Applied Mathematics},
    VOLUME = {57},
      YEAR = {1976/77},
    NUMBER = {1},
     PAGES = {13--44},
      ISSN = {0022-2526},
   MRCLASS = {35Q99 (35B40)},
  MRNUMBER = {481656},
MRREVIEWER = {Alan Jeffrey},
       DOI = {10.1002/sapm197757113},
       URL = {https://doi.org/10.1002/sapm197757113},
}

\bib{AS2}{article}{
AUTHOR = {Ablowitz, M. J.}
author={Segur, H.},
title={Asymptotic solutions of nonlinear evolution equations and a Painlevé transcendent},
journal={Physica D: Nonlinear Phenomena},
volume={3},
year={1981},
number={1-2},
pages={165-184},
ISSN={0167-2789},
DOI={10.1016/0167-2789(81)90124-X},
URL={http://www.sciencedirect.com/science/article/pii/016727898190124X},
}

\bib{AB}{book}{
AUTHOR = {Anosov, D. V.}
author={Bolibruch, A. A.},
     TITLE = {The Riemann-Hilbert problem},
    SERIES = {Aspects of Mathematics, E22},
 PUBLISHER = {Friedr. Vieweg \& Sohn, Braunschweig},
      YEAR = {1994},
     PAGES = {x+190},
      ISBN = {3-528-06496-X},
   MRCLASS = {32G34 (30E25 34A20 34A30)},
  MRNUMBER = {1276272},
MRREVIEWER = {V. V. Chueshev},
       DOI = {10.1007/978-3-322-92909-9},
       URL = {https://doi.org/10.1007/978-3-322-92909-9},
}

\bib{ACQ}{article}{
AUTHOR = {Amir, Gideon}
author={Corwin, Ivan}
author={Quastel, Jeremy},
     TITLE = {Probability distribution of the free energy of the continuum
              directed random polymer in {$1+1$} dimensions},
   JOURNAL = {Comm. Pure Appl. Math.},
  FJOURNAL = {Communications on Pure and Applied Mathematics},
    VOLUME = {64},
      YEAR = {2011},
    NUMBER = {4},
     PAGES = {466--537},
      ISSN = {0010-3640},
   MRCLASS = {60K35 (60B20 60F05 60H15 82C22 82C44)},
  MRNUMBER = {2796514},
MRREVIEWER = {Timo Sepp\"{a}l\"{a}inen},
       DOI = {10.1002/cpa.20347},
       URL = {https://doi.org/10.1002/cpa.20347},
}

\bib{AI}{book}{
AUTHOR = {Arnol\cprime d, V. I.}
author={Il\cprime yashenko, Yu. S.},
     TITLE = {Ordinary differential equations},
 BOOKTITLE = {Dynamical systems, {I}},
    SERIES = {Encyclopaedia Math. Sci.},
    VOLUME = {1},
     PAGES = {1--148},
      NOTE = {Translated from the Russian by E. R. Dawson and D. O'Shea},
 PUBLISHER = {Springer, Berlin},
      YEAR = {1988},
   MRCLASS = {34Cxx (58Fxx)},
  MRNUMBER = {970794},
}

\bib{BDJ}{article}{
AUTHOR = {Baik, Jinho}
author={Deift, Percy}
author={Johansson, Kurt},
     TITLE = {On the distribution of the length of the longest increasing
              subsequence of random permutations},
   JOURNAL = {J. Amer. Math. Soc.},
  FJOURNAL = {Journal of the American Mathematical Society},
    VOLUME = {12},
      YEAR = {1999},
    NUMBER = {4},
     PAGES = {1119--1178},
      ISSN = {0894-0347},
   MRCLASS = {05A05 (33D45 45E05 60C05)},
  MRNUMBER = {1682248},
MRREVIEWER = {David J. Aldous},
       DOI = {10.1090/S0894-0347-99-00307-0},
       URL = {https://doi.org/10.1090/S0894-0347-99-00307-0},
}

\bib{BDS}{book}{
AUTHOR = {Baik, Jinho}
author={Deift, Percy}
author={Suidan, Toufic},
     TITLE = {Combinatorics and random matrix theory},
    SERIES = {Graduate Studies in Mathematics},
    VOLUME = {172},
 PUBLISHER = {American Mathematical Society, Providence, RI},
      YEAR = {2016},
     PAGES = {xi+461},
      ISBN = {978-0-8218-4841-8},
   MRCLASS = {60B20 (30E25 33E17 41A60 47B35 82C23)},
  MRNUMBER = {3468920},
MRREVIEWER = {Terence Tao},
}

\bib{Bal}{book}{
AUTHOR = {Balser, Werner},
     TITLE = {Formal power series and linear systems of meromorphic ordinary
              differential equations},
    SERIES = {Universitext},
 PUBLISHER = {Springer-Verlag, New York},
      YEAR = {2000},
     PAGES = {xviii+299},
      ISBN = {0-387-98690-1},
   MRCLASS = {34Mxx (34-02)},
  MRNUMBER = {1722871},
MRREVIEWER = {B. L. J. Braaksma},
}

\bib{BC}{article}{
AUTHOR = {Beals, R.}
author={Coifman, R. R.},
     TITLE = {Scattering and inverse scattering for first order systems},
   JOURNAL = {Comm. Pure Appl. Math.},
  FJOURNAL = {Communications on Pure and Applied Mathematics},
    VOLUME = {37},
      YEAR = {1984},
    NUMBER = {1},
     PAGES = {39--90},
      ISSN = {0010-3640},
   MRCLASS = {34A55 (34B25)},
  MRNUMBER = {728266},
MRREVIEWER = {David J. Kaup},
       DOI = {10.1002/cpa.3160370105},
       URL = {https://doi.org/10.1002/cpa.3160370105},
}

\bib{BDT}{book}{
AUTHOR = {Beals, Richard}
author={Deift, Percy}
author={Tomei, Carlos},
     TITLE = {Direct and inverse scattering on the line},
    SERIES = {Mathematical Surveys and Monographs},
    VOLUME = {28},
 PUBLISHER = {American Mathematical Society, Providence, RI},
      YEAR = {1988},
     PAGES = {xiv+209},
      ISBN = {0-8218-1530-X},
   MRCLASS = {58F07 (34A55 34B25 35Q20 47E05 81C12)},
  MRNUMBER = {954382},
MRREVIEWER = {M. Victor Wickerhauser},
       DOI = {10.1090/surv/028},
       URL = {https://doi.org/10.1090/surv/028},
}

\bib{Ber}{article}{
AUTHOR = {Bertola, M.},
     TITLE = {Two-matrix models and biorthogonal polynomials},
 BOOKTITLE = {The {O}xford handbook of random matrix theory},
     PAGES = {310--328},
 PUBLISHER = {Oxford Univ. Press, Oxford},
      YEAR = {2011},
   MRCLASS = {60B20 (15B52)},
  MRNUMBER = {2932635},
MRREVIEWER = {Mar\'{\i}a-Jos\'{e} Cantero},
}

\bib{Bir}{article}{
author={Birkhoff, George D.}
title={The Generalized Riemann Problem for Linear Differential Equations and the Allied Problems for Linear Difference and Q-Difference Equations},
journal={Proceedings of the American Academy of Arts and Sciences},
volume={49}, 
number={9},
year={1913}
pages={521--568}
DOI={10.2307/20025482},
}

\bib{BI}{article}{
AUTHOR = {Bleher, P.}
author={Its, A. R.},
     TITLE = {Semiclassical asymptotics of orthogonal polynomials,
              {R}iemann-{H}ilbert problem, and universality in the matrix
              model},
   JOURNAL = {Ann. of Math. (2)},
  FJOURNAL = {Annals of Mathematics. Second Series},
    VOLUME = {150},
      YEAR = {1999},
    NUMBER = {1},
     PAGES = {185--266},
      ISSN = {0003-486X},
   MRCLASS = {42C05 (30E25 81Q20 82B44)},
  MRNUMBER = {1715324},
MRREVIEWER = {Thomas Kriecherbauer},
       DOI = {10.2307/121101},
       URL = {https://doi.org/10.2307/121101},
}

\bib{Ble}{book}{
AUTHOR = {Bleher, Pavel M.},
     TITLE = {Lectures on random matrix models: the {R}iemann-{H}ilbert
              approach},
 BOOKTITLE = {Random matrices, random processes and integrable systems},
    SERIES = {CRM Ser. Math. Phys.},
     PAGES = {251--349},
 PUBLISHER = {Springer, New York},
      YEAR = {2011},
   MRCLASS = {60B20 (15B52 33C45 34M50 42C05 60-02)},
  MRNUMBER = {2858438},
       DOI = {10.1007/978-1-4419-9514-8_4},
       URL = {https://doi.org/10.1007/978-1-4419-9514-8_4},
}

\bib{BO1}{article}{
AUTHOR = {Bolibrukh, A. A.},
     TITLE = {The {R}iemann-{H}ilbert problem on the complex projective
              line},
   JOURNAL = {Mat. Zametki},
  FJOURNAL = {Akademiya Nauk SSSR. Matematicheskie Zametki},
    VOLUME = {46},
      YEAR = {1989},
    NUMBER = {3},
     PAGES = {118--120},
      ISSN = {0025-567X},
   MRCLASS = {14D05},
  MRNUMBER = {1032917},
MRREVIEWER = {V. A. Golubeva},
}

\bib{BO2}{article}{
AUTHOR = {Bolibrukh, A. A.},
     TITLE = {The {R}iemann-{H}ilbert problem},
   JOURNAL = {Uspekhi Mat. Nauk},
  FJOURNAL = {Akademiya Nauk SSSR i Moskovskoe Matematicheskoe Obshchestvo.
              Uspekhi Matematicheskikh Nauk},
    VOLUME = {45},
      YEAR = {1990},
    NUMBER = {2(272)},
     PAGES = {3--47, 240},
      ISSN = {0042-1316},
   MRCLASS = {14D05 (14H30 32L05 34A20 35Q15)},
  MRNUMBER = {1069347},
MRREVIEWER = {V. V. Chueshev},
       DOI = {10.1070/RM1990v045n02ABEH002350},
       URL = {https://doi.org/10.1070/RM1990v045n02ABEH002350},
}

\bib{BO22}{article}{
AUTHOR = {Bolibrukh, A. A.},
     TITLE = {Sufficient conditions for the positive solvability of the
              {R}iemann-{H}ilbert problem},
   JOURNAL = {Mat. Zametki},
  FJOURNAL = {Matematicheskie Zametki},
    VOLUME = {51},
      YEAR = {1992},
    NUMBER = {2},
     PAGES = {9--19, 156},
      ISSN = {0025-567X},
   MRCLASS = {34A20 (30E25)},
  MRNUMBER = {1165460},
MRREVIEWER = {Serge\u{\i} Grudski\u{\i}},
       DOI = {10.1007/BF02102113},
       URL = {https://doi.org/10.1007/BF02102113},
}

\bib{BO3}{book}{
AUTHOR = {Bolibrukh, A. A.},
     TITLE = {Hilbert's twenty-first problem for {F}uchsian linear systems},
 BOOKTITLE = {Developments in mathematics: the {M}oscow school},
     PAGES = {54--99},
 PUBLISHER = {Chapman \& Hall, London},
      YEAR = {1993},
   MRCLASS = {32G34 (34A20)},
  MRNUMBER = {1264423},
MRREVIEWER = {V. A. Golubeva},
}

\bib{BO4}{article}{
AUTHOR = {Bolibrukh, A. A.},
     TITLE = {On sufficient conditions for the existence of a {F}uchsian
              equation with prescribed monodromy},
   JOURNAL = {J. Dynam. Control Systems},
  FJOURNAL = {Journal of Dynamical and Control Systems},
    VOLUME = {5},
      YEAR = {1999},
    NUMBER = {4},
     PAGES = {453--472},
      ISSN = {1079-2724},
   MRCLASS = {34M50 (14D05 14H30 34M35)},
  MRNUMBER = {1722011},
MRREVIEWER = {Masaaki Yoshida},
       DOI = {10.1023/A:1021735104618},
       URL = {https://doi.org/10.1023/A:1021735104618},
}

\bib{BIK}{article}{
AUTHOR = {Bolibrukh, A. A.}
author={Its, A. R.}
author={Kapaev, A. A.},
     TITLE = {On the {R}iemann-{H}ilbert-{B}irkhoff inverse monodromy
              problem and the {P}ainlev\'{e} equations},
   JOURNAL = {Algebra i Analiz},
  FJOURNAL = {Rossi\u{\i}skaya Akademiya Nauk. Algebra i Analiz},
    VOLUME = {16},
      YEAR = {2004},
    NUMBER = {1},
     PAGES = {121--162},
      ISSN = {0234-0852},
   MRCLASS = {34M50 (30D30 34M35)},
  MRNUMBER = {2069003},
MRREVIEWER = {Masaaki Yoshida},
       DOI = {10.1090/S1061-0022-04-00845-3},
       URL = {https://doi.org/10.1090/S1061-0022-04-00845-3},
}

\bib{BoI}{article}{
AUTHOR = {Bothner, Thomas}
author={Its, Alexander},
     TITLE = {The nonlinear steepest descent approach to the singular
              asymptotics of the second {P}ainlev\'{e} transcendent},
   JOURNAL = {Phys. D},
  FJOURNAL = {Physica D. Nonlinear Phenomena},
    VOLUME = {241},
      YEAR = {2012},
    NUMBER = {23-24},
     PAGES = {2204--2225},
      ISSN = {0167-2789},
   MRCLASS = {34M55 (34M30)},
  MRNUMBER = {2998123},
MRREVIEWER = {Changgui Zhang},
       DOI = {10.1016/j.physd.2012.02.014},
       URL = {https://doi.org/10.1016/j.physd.2012.02.014},
}

\bib{Bo}{article}{
AUTHOR = {Bothner, Thomas},
     TITLE = {Transition asymptotics for the {P}ainlev\'{e} {II} transcendent},
   JOURNAL = {Duke Math. J.},
  FJOURNAL = {Duke Mathematical Journal},
    VOLUME = {166},
      YEAR = {2017},
    NUMBER = {2},
     PAGES = {205--324},
      ISSN = {0012-7094},
   MRCLASS = {33E17 (33C10 34E05 34M50 34M55)},
  MRNUMBER = {3600752},
MRREVIEWER = {Shuaixia Xu},
       DOI = {10.1215/00127094-3714650},
       URL = {https://doi.org/10.1215/00127094-3714650},
}

\bib{CC}{article}{
author={Cafasso, M.},
author={Claeys, T.},
title={A Riemann-Hilbert approach to the lower tail of the KPZ equation},
eprint={https://arxiv.org/pdf/1910.02493.pdf},
}

\bib{CM}{article}{
AUTHOR = {Clarkson, Peter A.}
author={McLeod, J. Bryce},
     TITLE = {A connection formula for the second {P}ainlev\'{e} transcendent},
   JOURNAL = {Arch. Rational Mech. Anal.},
  FJOURNAL = {Archive for Rational Mechanics and Analysis},
    VOLUME = {103},
      YEAR = {1988},
    NUMBER = {2},
     PAGES = {97--138},
      ISSN = {0003-9527},
   MRCLASS = {34A20 (34E99 35Q20)},
  MRNUMBER = {946971},
MRREVIEWER = {Reinhard Sch\"{a}fke},
       DOI = {10.1007/BF00251504},
       URL = {https://doi.org/10.1007/BF00251504},
}

\bib{CL}{book}{
AUTHOR = {Coddington, Earl A.}
author={Levinson, Norman},
     TITLE = {Theory of ordinary differential equations},
 PUBLISHER = {McGraw-Hill Book Company, Inc., New York-Toronto-London},
      YEAR = {1955},
     PAGES = {xii+429},
   MRCLASS = {36.0X},
  MRNUMBER = {0069338},
MRREVIEWER = {M. Zl\'{a}mal},
}

\bib{Cor0}{article}{
AUTHOR = {Corwin, Ivan},
     TITLE = {The {K}ardar-{P}arisi-{Z}hang equation and universality class},
   JOURNAL = {Random Matrices Theory Appl.},
  FJOURNAL = {Random Matrices. Theory and Applications},
    VOLUME = {1},
      YEAR = {2012},
    NUMBER = {1},
     PAGES = {1130001, 76},
      ISSN = {2010-3263},
   MRCLASS = {82B31 (60B20 60K35 60K37)},
  MRNUMBER = {2930377},
       DOI = {10.1142/S2010326311300014},
       URL = {https://doi.org/10.1142/S2010326311300014},
}

\bib{Cor}{article}{
AUTHOR = {Corwin, Ivan},
     TITLE = {Commentary on ``{L}ongest increasing subsequences: from
              patience sorting to the {B}aik-{D}eift-{J}ohansson theorem''
              by {D}avid {A}ldous and {P}ersi {D}iaconis},
   JOURNAL = {Bull. Amer. Math. Soc. (N.S.)},
  FJOURNAL = {American Mathematical Society. Bulletin. New Series},
    VOLUME = {55},
      YEAR = {2018},
    NUMBER = {3},
     PAGES = {363--374},
      ISSN = {0273-0979},
   MRCLASS = {82C22 (60H15)},
  MRNUMBER = {3803162},
       DOI = {10.1090/bull/1623},
       URL = {https://doi.org/10.1090/bull/1623},
}
\bib{CoG}{article}{
author={Corwin, I.},
author={Ghosal, P.},
title={Lower tail of the KPZ equation},
eprint={https://arxiv.org/abs/1804.05721},
year={2018},
}

\bib{CDG1}{article}{
title={Isomonodromy deformations at an irregular singularity with coalescing eigenvalues},
  author={Cotti, Giordano},
  author={Dubrovin, Boris},
  author={Guzzetti, Davide},
  journal={Duke Mathematical Journal},
  volume={168},
  number={6},
  pages={967--1108},
  year={2019},
  publisher={Duke University Press},
  DOI={doi:10.1215/00127094-2018-0059},
  url={https://doi:10.1215/00127094-2018-0059},
  }
  
  \bib{CDG2}{article}{
  title={Local moduli of semisimple Frobenius coalescent structures},
  author={Cotti, Giordano},
  author={Dubrovin, Boris}
  author={Guzzetti, Davide},
  journal={SIGMA. Symmetry, Integrability and Geometry: Methods and Applications},
  volume={16},
  pages={040},
  year={2020},
  publisher={SIGMA. Symmetry, Integrability and Geometry: Methods and Applications}
  url={https://doi.org/10.3842/SIGMA.2020.040},
  DOI={doi.org/10.3842/SIGMA.2020.040},
}

\bib{Dav}{book}{
AUTHOR = {Davies, Brian},
     TITLE = {Integral transforms and their applications},
    SERIES = {Texts in Applied Mathematics},
    VOLUME = {41},
   EDITION = {Third},
 PUBLISHER = {Springer-Verlag, New York},
      YEAR = {2002},
     PAGES = {xvii+367},
      ISBN = {0-387-95314-0},
   MRCLASS = {44-01},
  MRNUMBER = {1867914},
       DOI = {10.1007/978-1-4684-9283-5},
       URL = {https://doi.org/10.1007/978-1-4684-9283-5},
}

\bib{D0}{article}{
AUTHOR = {Deift, P.},
     TITLE = {Integrable operators},
 BOOKTITLE = {Differential operators and spectral theory},
    SERIES = {Amer. Math. Soc. Transl. Ser. 2},
    VOLUME = {189},
     PAGES = {69--84},
 PUBLISHER = {Amer. Math. Soc., Providence, RI},
      YEAR = {1999},
   MRCLASS = {47G10 (35Q15 45P05 47B35 82-02)},
  MRNUMBER = {1730504},
MRREVIEWER = {Luen-Chau Li},
       DOI = {10.1090/trans2/189/06},
       URL = {https://doi.org/10.1090/trans2/189/06},
}

\bib{DRMT}{book}{
AUTHOR = {Deift, P. A.},
     TITLE = {Orthogonal polynomials and random matrices: a
              {R}iemann-{H}ilbert approach},
    SERIES = {Courant Lecture Notes in Mathematics},
    VOLUME = {3},
 PUBLISHER = {New York University, Courant Institute of Mathematical
              Sciences, New York; American Mathematical Society, Providence,
              RI},
      YEAR = {1999},
     PAGES = {viii+273},
      ISBN = {0-9658703-2-4; 0-8218-2695-6},
   MRCLASS = {47B80 (15A52 30E25 33D45 37K10 42C05 47B36 60F99)},
  MRNUMBER = {1677884},
MRREVIEWER = {Alexander Vladimirovich Kitaev},
}

\bib{DG}{article}{
AUTHOR = {Deift, P.}
author={Gioev, D.},
     TITLE = {Universality at the edge of the spectrum for unitary,
              orthogonal, and symplectic ensembles of random matrices},
   JOURNAL = {Comm. Pure Appl. Math.},
  FJOURNAL = {Communications on Pure and Applied Mathematics},
    VOLUME = {60},
      YEAR = {2007},
    NUMBER = {6},
     PAGES = {867--910},
      ISSN = {0010-3640},
   MRCLASS = {60F99 (15A52 82B41)},
  MRNUMBER = {2306224},
MRREVIEWER = {D\u{a}ng V\~{u} Giang},
       DOI = {10.1002/cpa.20164},
       URL = {https://doi.org/10.1002/cpa.20164},
}

\bib{DZ}{article}{
AUTHOR = {Deift, P.}
author={Zhou, X.},
     TITLE = {A steepest descent method for oscillatory {R}iemann-{H}ilbert
              problems. {A}symptotics for the {MK}d{V} equation},
   JOURNAL = {Ann. of Math. (2)},
  FJOURNAL = {Annals of Mathematics. Second Series},
    VOLUME = {137},
      YEAR = {1993},
    NUMBER = {2},
     PAGES = {295--368},
      ISSN = {0003-486X},
   MRCLASS = {35Q53 (34A55 34L25 35Q15 35Q55)},
  MRNUMBER = {1207209},
MRREVIEWER = {Alexey V. Samokhin},
       DOI = {10.2307/2946540},
       URL = {https://doi.org/10.2307/2946540},
}

\bib{DZ2}{article}{
AUTHOR = {Deift, P.}
author={Zhou, X.},
     TITLE = {Asymptotics for the {P}ainlev\'{e} {II} equation},
   JOURNAL = {Comm. Pure Appl. Math.},
  FJOURNAL = {Communications on Pure and Applied Mathematics},
    VOLUME = {48},
      YEAR = {1995},
    NUMBER = {3},
     PAGES = {277--337},
      ISSN = {0010-3640},
   MRCLASS = {34A20 (34A34 34A55 34E05 35Q99)},
  MRNUMBER = {1322812},
MRREVIEWER = {Alexander Vladimirovich Kitaev},
       DOI = {10.1002/cpa.3160480304},
       URL = {https://doi.org/10.1002/cpa.3160480304},
}

\bib{DIZ0}{article}{
AUTHOR = {Deift, P.}
author={Its, A. R.}
author={Zhou, X.},
     TITLE = {Long-time asymptotics for integrable nonlinear wave equations},
 BOOKTITLE = {Important developments in soliton theory},
    SERIES = {Springer Ser. Nonlinear Dynam.},
     PAGES = {181--204},
 PUBLISHER = {Springer, Berlin},
      YEAR = {1993},
   MRCLASS = {35B40 (35Q53 35Q55)},
  MRNUMBER = {1280475},
MRREVIEWER = {John Albert},
}

\bib{DIZ}{article}{
AUTHOR = {Deift, P.}
author={Its, Alexander R.}
author={Zhou, Xin},
     TITLE = {A {R}iemann-{H}ilbert approach to asymptotic problems arising
              in the theory of random matrix models, and also in the theory
              of integrable statistical mechanics},
   JOURNAL = {Ann. of Math. (2)},
  FJOURNAL = {Annals of Mathematics. Second Series},
    VOLUME = {146},
      YEAR = {1997},
    NUMBER = {1},
     PAGES = {149--235},
      ISSN = {0003-486X},
   MRCLASS = {47G10 (15A52 30E25 34A55 34E20 41A60 82B23 82B44)},
  MRNUMBER = {1469319},
MRREVIEWER = {John N. Palmer},
       DOI = {10.2307/2951834},
       URL = {https://doi.org/10.2307/2951834},
}

\bib{DIKZ}{article}{
AUTHOR = {Deift, P.}
author={Its, A.}
author={Kapaev, A.}
author={Zhou, X.},
     TITLE = {On the algebro-geometric integration of the {S}chlesinger
              equations},
   JOURNAL = {Comm. Math. Phys.},
  FJOURNAL = {Communications in Mathematical Physics},
    VOLUME = {203},
      YEAR = {1999},
    NUMBER = {3},
     PAGES = {613--633},
      ISSN = {0010-3616},
   MRCLASS = {34M55 (32G34)},
  MRNUMBER = {1700154},
MRREVIEWER = {Eugen Belokolos},
       DOI = {10.1007/s002200050037},
       URL = {https://doi.org/10.1007/s002200050037},
}

\bib{DZ0}{article}{
AUTHOR = {Deift, P.}
author={Zhou, X.},
     TITLE = {A priori {$L^p$}-estimates for solutions of
              {R}iemann-{H}ilbert problems},
   JOURNAL = {Int. Math. Res. Not.},
  FJOURNAL = {International Mathematics Research Notices},
      YEAR = {2002},
    NUMBER = {40},
     PAGES = {2121--2154},
      ISSN = {1073-7928},
   MRCLASS = {30E25 (30E20 34A55 34L40 47A68)},
  MRNUMBER = {1930757},
MRREVIEWER = {Vasily A. Chernecky},
       DOI = {10.1155/S1073792802205103},
       URL = {https://doi.org/10.1155/S1073792802205103},
}

\bib{DKMVZ}{article}{
AUTHOR = {Deift, P.}
author={Kriecherbauer, T.}
author={McLaughlin, K. T.-R.}
author={Venakides, S.}
author={Zhou, X.},
     TITLE = {Uniform asymptotics for polynomials orthogonal with respect to
              varying exponential weights and applications to universality
              questions in random matrix theory},
   JOURNAL = {Comm. Pure Appl. Math.},
  FJOURNAL = {Communications on Pure and Applied Mathematics},
    VOLUME = {52},
      YEAR = {1999},
    NUMBER = {11},
     PAGES = {1335--1425},
      ISSN = {0010-3640},
   MRCLASS = {42C05 (15A52 41A60 82B41)},
  MRNUMBER = {1702716},
MRREVIEWER = {D. S. Lubinsky},
       DOI =
              {10.1002/(SICI)1097-0312(199911)52:11<1335::AID-CPA1>3.0.CO;2-1},
       URL =
              {https://doi.org/10.1002/(SICI)1097-0312(199911)52:11<1335::AID-CPA1>3.0.CO;2-1},
}

\bib{DVZ}{article}{
author={Deift, P.},
author={Venakides, S.},
author={Zhou, X.},
title={New results in small dispersion KdV by an extension of the steepest descent method for Riemann-Hilbert problems},
journal={Int. Math. Res. Not.},
fjournal={International Mathematics Research Notices},
volume={1997},
number={6},
year={1997},
pages={285-299}
doi={doi.org/10.1155/S1073792897000214},
url={https://doi.org/10.1155/S1073792897000214},
}

\bib{Dek}{book}{
AUTHOR = {Dekkers, W.},
     TITLE = {The matrix of a connection having regular singularities on a
              vector bundle of rank {$2$} on {$P^{1}(C)$}},
 BOOKTITLE = {\'{E}quations diff\'{e}rentielles et syst\`emes de {P}faff dans le champ
              complexe ({S}em., {I}nst. {R}ech. {M}ath. {A}vanc\'{e}e,
              {S}trasbourg, 1975)},
    SERIES = {Lecture Notes in Math.},
    VOLUME = {712},
     PAGES = {33--43},
 PUBLISHER = {Springer, Berlin},
      YEAR = {1979},
   MRCLASS = {14D05 (32J99)},
  MRNUMBER = {548141},
MRREVIEWER = {V. A. Golubeva},
}

\bib{De}{book}{
AUTHOR = {Deligne, Pierre},
     TITLE = {\'{E}quations diff\'{e}rentielles \`a points singuliers r\'{e}guliers},
    SERIES = {Lecture Notes in Mathematics, Vol. 163},
 PUBLISHER = {Springer-Verlag, Berlin-New York},
      YEAR = {1970},
     PAGES = {iii+133},
   MRCLASS = {14D05 (14C30)},
  MRNUMBER = {0417174},
MRREVIEWER = {Helmut Hamm},
}

\bib{DF}{article}{
AUTHOR = {Diaconis, Persi}
author={Forrester, Peter J.},
     TITLE = {Hurwitz and the origins of random matrix theory in
              mathematics},
   JOURNAL = {Random Matrices Theory Appl.},
  FJOURNAL = {Random Matrices. Theory and Applications},
    VOLUME = {6},
      YEAR = {2017},
    NUMBER = {1},
     PAGES = {1730001, 26},
      ISSN = {2010-3263},
   MRCLASS = {11M50 (11E57 15B52 37L40 60B20)},
  MRNUMBER = {3612265},
MRREVIEWER = {Florent Benaych-Georges},
       DOI = {10.1142/S2010326317300017},
       URL = {https://doi.org/10.1142/S2010326317300017},
}

\bib{DMS}{article}{
AUTHOR = {Le Doussal, Pierre}
author={Majumdar, Satya N.}
author={Schehr, Gr\'{e}gory},
     TITLE = {Periodic {A}iry process and equilibrium dynamics of edge
              fermions in a trap},
   JOURNAL = {Ann. Physics},
  FJOURNAL = {Annals of Physics},
    VOLUME = {383},
      YEAR = {2017},
     PAGES = {312--345},
      ISSN = {0003-4916},
   MRCLASS = {81V35 (60B20 60G10 60G55)},
  MRNUMBER = {3682029},
       DOI = {10.1016/j.aop.2017.05.018},
       URL = {https://doi.org/10.1016/j.aop.2017.05.018},
}

\bib{D1}{article}{
author={Dubrovin, B.},
title={Geometric and analytic theory of Frobenius manifolds},
journal={Documenta Mathematica},
year={1998},
pages={315--326},
URL={http://eudml.org/doc/226289},
}

\bib{D2}{book}{
author={Dubrovin, B.},
title={Painlev\'e transcendents in two-dimensional topological field theory},
booktitle={The Painlev\'e property},
series={CRM Series in Mathematical Physics},
PUBLISHER = {Springer-Verlag, Berlin-New York},
pages={287--412},
year={1999},
url={https://doi.org/10.1007/978-1-4612-1532-5_6},
}

\bib{e}{webpage}{
author={Encyclopedia of Mathematics}
accessdate={2020-03-31},
title={Hilbert problems},
url={https://www.encyclopediaofmath.org/index.php/Hilbert_problems\#References},
}

\bib{EG}{article}{
author={Enolski, V.Z.},
author={Grava, T.},
title={Singular $Z_N$-curves and the Riemann-Hilbert problem},
JOURNAL={Int. Math. Res. Not.},
FJOURNAL={International Mathematics Research Notices},
VOLUME={2004},
number={32},
pages={1619--1683},
year={2004},
URL={https://doi.org/10.1155/S1073792804132625},
DOI={10.1155/S1073792804132625},
}

\bib{EY}{book}{
AUTHOR = {Erd\H{o}s, L\'{a}szl\'{o}}
author={Yau, Horng-Tzer},
     TITLE = {A dynamical approach to random matrix theory},
    SERIES = {Courant Lecture Notes in Mathematics},
    VOLUME = {28},
 PUBLISHER = {Courant Institute of Mathematical Sciences, New York; American
              Mathematical Society, Providence, RI},
      YEAR = {2017},
     PAGES = {ix+226},
      ISBN = {978-1-4704-3648-3},
   MRCLASS = {60B20 (15B52 60F10 82B44)},
  MRNUMBER = {3699468},
MRREVIEWER = {Sasha Sodin},
}

\bib{EFIK}{article}{
AUTHOR = {E\ss ler, Fabian H. L.}
author={Frahm, Holger}
author={Izergin, Anatoli G.}
author={Korepin, Vladimir E.},
     TITLE = {Determinant representation for correlation functions of
              spin-{$1/2$} {$XXX$} and {$XXZ$} {H}eisenberg magnets},
   JOURNAL = {Comm. Math. Phys.},
  FJOURNAL = {Communications in Mathematical Physics},
    VOLUME = {174},
      YEAR = {1995},
    NUMBER = {1},
     PAGES = {191--214},
      ISSN = {0010-3616},
   MRCLASS = {82B20 (82B23)},
  MRNUMBER = {1372806},
MRREVIEWER = {Yu Kui Zhou},
       URL = {http://projecteuclid.org/euclid.cmp/1104275100},
}

\bib{Eru}{book}{
AUTHOR = {Erugin, N. P.},
     TITLE = {Problema Rimana},
    SERIES = {Lecture Notes in Mathematics},
    VOLUME = {936},
 PUBLISHER = {``Nauka i Tekhnika'', Minsk},
      YEAR = {1982},
     PAGES = {336},
   MRCLASS = {34A20 (01A55 01A60 14D05 30E25 32G13 34C05)},
  MRNUMBER = {665587},
MRREVIEWER = {J. Chrastina},
}

\bib{FT}{book}{
AUTHOR = {Faddeev, Ludwig D.}
author={Takhtajan, Leon A.},
     TITLE = {Hamiltonian methods in the theory of solitons},
    SERIES = {Classics in Mathematics},
   EDITION = {English},
      NOTE = {Translated from the 1986 Russian original by Alexey G. Reyman},
 PUBLISHER = {Springer, Berlin},
      YEAR = {2007},
     PAGES = {x+592},
      ISBN = {978-3-540-69843-2},
   MRCLASS = {37K10 (35P25 35Q51 35Q55 35R30 37J35 37N20 81R12)},
  MRNUMBER = {2348643},
}

\bib{FN}{article}{
AUTHOR = {Flaschka, Hermann}
author={Newell, Alan C.},
     TITLE = {Monodromy- and spectrum-preserving deformations. {I}},
   JOURNAL = {Comm. Math. Phys.},
  FJOURNAL = {Communications in Mathematical Physics},
    VOLUME = {76},
      YEAR = {1980},
    NUMBER = {1},
     PAGES = {65--116},
      ISSN = {0010-3616},
   MRCLASS = {35Q20 (14D05 58F07 81C05)},
  MRNUMBER = {588248},
MRREVIEWER = {H\'{e}l\`ene Airault},
       URL = {http://projecteuclid.org/euclid.cmp/1103908189},
}

\bib{FIK1}{article}{
author={Fokas, A. S.},
AUTHOR = {Its, A. R.},
author={Kitaev, A. V.},
     TITLE = {An isomonodromy approach to the theory of two-dimensional
              quantum gravity},
   JOURNAL = {Uspekhi Mat. Nauk},
  FJOURNAL = {Akademiya Nauk SSSR i Moskovskoe Matematicheskoe Obshchestvo.
              Uspekhi Matematicheskikh Nauk},
    VOLUME = {45},
      YEAR = {1990},
    NUMBER = {6(276)},
     PAGES = {135--136},
      ISSN = {0042-1316},
   MRCLASS = {81T40 (58F07)},
  MRNUMBER = {1101341},
MRREVIEWER = {Alexander R. Kavalov},
       DOI = {10.1070/RM1990v045n06ABEH002699},
       URL = {https://doi.org/10.1070/RM1990v045n06ABEH002699},
}

\bib{FIK2}{article}{
AUTHOR = {Fokas, A. S.}
author={Its, A. R.}
author={Kitaev, A. V.},
     TITLE = {Discrete {P}ainlev\'{e} equations and their appearance in quantum
              gravity},
   JOURNAL = {Comm. Math. Phys.},
  FJOURNAL = {Communications in Mathematical Physics},
    VOLUME = {142},
      YEAR = {1991},
    NUMBER = {2},
     PAGES = {313--344},
      ISSN = {0010-3616},
   MRCLASS = {58F07 (81T40)},
  MRNUMBER = {1137067},
MRREVIEWER = {Nikolai A. Kostov},
       URL = {http://projecteuclid.org/euclid.cmp/1104248588},
}

\bib{FIKN}{book}{
AUTHOR = {Fokas, Athanassios S.}
author={Its, Alexander R.}
author={Kapaev, Andrei A.}
author={Novokshenov, Victor Yu.},
     TITLE = {Painlev\'{e} transcendents},
    SERIES = {Mathematical Surveys and Monographs},
    VOLUME = {128},
      NOTE = {The Riemann-Hilbert approach},
 PUBLISHER = {American Mathematical Society, Providence, RI},
      YEAR = {2006},
     PAGES = {xii+553},
      ISBN = {0-8218-3651-X},
   MRCLASS = {33E17 (30E25 34M50 34M55 37K15 37K20)},
  MRNUMBER = {2264522},
MRREVIEWER = {Galina V. Filipuk},
       DOI = {10.1090/surv/128},
       URL = {https://doi.org/10.1090/surv/128},
}
		
\bib{Fre}{article}{
AUTHOR = {Fredholm, Ivar},
     TITLE = {Sur une classe d'\'{e}quations fonctionnelles},
   JOURNAL = {Acta Math.},
  FJOURNAL = {Acta Mathematica},
    VOLUME = {27},
      YEAR = {1903},
    NUMBER = {1},
     PAGES = {365--390},
      ISSN = {0001-5962},
   MRCLASS = {DML},
  MRNUMBER = {1554993},
       DOI = {10.1007/BF02421317},
       URL = {https://doi.org/10.1007/BF02421317},
}

\bib{Ga}{book}{
AUTHOR = {Gakhov, F. D.},
     TITLE = {Boundary value problems},
      NOTE = {Translated from the Russian,
              Reprint of the 1966 translation},
 PUBLISHER = {Dover Publications, Inc., New York},
      YEAR = {1990},
     PAGES = {xxii+561},
      ISBN = {0-486-66275-6},
   MRCLASS = {45E05},
  MRNUMBER = {1106850},
}

\bib{Ga2}{article}{
AUTHOR = {Gakhov, F. D.},
     TITLE = {On {R}iemann's boundary problem for a system of {$n$} pairs of
              functions},
   JOURNAL = {Doklady Akad. Nauk SSSR (N.S.)},
    VOLUME = {67},
      YEAR = {1949},
     PAGES = {601--604},
   MRCLASS = {30.0X},
  MRNUMBER = {0031558},
MRREVIEWER = {W. J. Trjitzinsky},
}

\bib{GGKM}{article}{
  title = {Method for Solving the Korteweg-deVries Equation},
  author = {Gardner, Clifford S.}
  author={Greene, John M.}
  author={Kruskal, Martin D.}
  author={Miura, Robert M.},
  journal = {Phys. Rev. Lett.},
  volume = {19},
  issue = {19},
  pages = {1095--1097},
  numpages = {0},
  year = {1967},
  month = {Nov},
  publisher = {American Physical Society},
  doi = {10.1103/PhysRevLett.19.1095},
  url = {https://link.aps.org/doi/10.1103/PhysRevLett.19.1095}
}

\bib{Gi}{article}{
AUTHOR = {Giraud, Georges},
     TITLE = {Sur une classe d'\'{e}quations lin\'{e}aires ou figurent des valeurs
              principales d'int\'{e}grales simples},
   JOURNAL = {Ann. \'{E}cole Norm.},
    VOLUME = {56},
      YEAR = {1939},
     PAGES = {119--172},
   MRCLASS = {46.3X},
  MRNUMBER = {0000891},
MRREVIEWER = {C. B. Morrey, Jr.},
       URL = {http://www.numdam.org/item?id=ASENS_1939_3_56__119_0},
}

\bib{GGK}{book}{
AUTHOR = {Gohberg, Israel}
author={Goldberg, Seymour}
author={Krupnik, Nahum},
     TITLE = {Traces and determinants of linear operators},
    SERIES = {Operator Theory: Advances and Applications},
    VOLUME = {116},
 PUBLISHER = {Birkh\"{a}user Verlag, Basel},
      YEAR = {2000},
     PAGES = {x+258},
      ISBN = {3-7643-6177-8},
   MRCLASS = {47B10 (45B05 45P05 47A53 47G10 47L10)},
  MRNUMBER = {1744872},
MRREVIEWER = {Hermann K\"{o}nig},
       DOI = {10.1007/978-3-0348-8401-3},
       URL = {https://doi.org/10.1007/978-3-0348-8401-3},
}

\bib{GP}{article}{
author={R R Gontsov},
	author={V A Poberezhnyi},
	title = {Various versions of the Riemann-Hilbert problem for linear differential equations},
	journal = {Russian Mathematical Surveys},
	volume = {63},
	year = {2008},
		number = {4},
		pages = {603--639},
		doi = {10.1070/rm2008v063n04abeh004547},
}

\bib{GLS}{article}{
AUTHOR = {Gromak, Valerii I.}
author={Laine, Ilpo}
author={Shimomura, Shun},
     TITLE = {Painlev\'{e} differential equations in the complex plane},
    SERIES = {De Gruyter Studies in Mathematics},
    VOLUME = {28},
 PUBLISHER = {Walter de Gruyter \& Co., Berlin},
      YEAR = {2002},
     PAGES = {viii+303},
      ISBN = {3-11-017379-4},
   MRCLASS = {34M05 (30D05 30D35 34M10 34M55 37K20)},
  MRNUMBER = {1960811},
MRREVIEWER = {Andrei A. Kapaev},
       DOI = {10.1515/9783110198096},
       URL = {https://doi.org/10.1515/9783110198096},
}

\bib{HM}{article}{
AUTHOR = {Hastings, S. P.}
author={McLeod, J. B.},
     TITLE = {A boundary value problem associated with the second {P}ainlev\'{e}
              transcendent and the {K}orteweg-de\thinspace {V}ries equation},
   JOURNAL = {Arch. Rational Mech. Anal.},
  FJOURNAL = {Archive for Rational Mechanics and Analysis},
    VOLUME = {73},
      YEAR = {1980},
    NUMBER = {1},
     PAGES = {31--51},
      ISSN = {0003-9527},
   MRCLASS = {34B30 (35Q20)},
  MRNUMBER = {555581},
MRREVIEWER = {Richard Brown},
       DOI = {10.1007/BF00283254},
       URL = {https://doi.org/10.1007/BF00283254},
}


\bib{H0}{article}{
author={Hilbert, David},
title={Mathematische Probleme},
journal={G\"ottinger Nachrichten},
fjournal={Nachrichten von der K\"onigl. Gesellschaft der Wissenschaften zu G\"ottingen},
volume={3},
year={1900},
number={1},
pages={253--297},
url={http://www.deutschestextarchiv.de/book/show/hilbert_mathematische_1900}
}

\bib{H}{article}{
AUTHOR = {Hilbert, David},
     TITLE = {Mathematical problems},
   JOURNAL = {Bull. Amer. Math. Soc.},
  FJOURNAL = {Bulletin of the American Mathematical Society},
    VOLUME = {8},
      YEAR = {1902},
    NUMBER = {10},
     PAGES = {437--479},
      ISSN = {0002-9904},
   MRCLASS = {DML},
  MRNUMBER = {1557926},
       DOI = {10.1090/S0002-9904-1902-00923-3},
       URL = {https://doi.org/10.1090/S0002-9904-1902-00923-3},
}

\bib{H1}{article}{
author={Hilbert, David},
title={\"Uber eine Anwendung der Integralgleichungen auf ein Problem der Funktionentheorie},
journal={Verhandl. des III. Internal Mathematiker Kongresses},
year={1904},
}

\bib{H2}{book}{
author={Hilbert, David},
title={Grundz\"uge einer allgemeinen Theorie der linearen Integralgleichungen},
publisher={Teubner Verlag},
address={Leibzig, Berlin},
pages={8--171},
year={1912},
%
}

\bib{Hu}{article}{
author={Hurwitz, A.}
title={\"Uber die Erzeugung der Invarianten durch Integration},
journal={Nachr. Ges. Wiss. G\"ottingen}
year={1897}
pages={71--90},
}

\bib{Its}{article}{
AUTHOR = {Its, A. R.},
     TITLE = {Asymptotic behavior of the solutions to the nonlinear
              {S}chr\"{o}dinger equation, and isomonodromic deformations of
              systems of linear differential equations},
   JOURNAL = {Dokl. Akad. Nauk SSSR},
  FJOURNAL = {Doklady Akademii Nauk SSSR},
    VOLUME = {261},
      YEAR = {1981},
    NUMBER = {1},
     PAGES = {14--18},
      ISSN = {0002-3264},
   MRCLASS = {58G40 (35B40 35Q20 81C05)},
  MRNUMBER = {636848},
MRREVIEWER = {M. Antonets},
}

\bib{Irev}{article}{
AUTHOR = {Its, A. R.},
     TITLE = {Painlev\'{e} transcendents},
 BOOKTITLE = {The {O}xford handbook of random matrix theory},
     PAGES = {176--197},
 PUBLISHER = {Oxford Univ. Press, Oxford},
      YEAR = {2011},
   MRCLASS = {33-02 (15B52 33E17)},
  MRNUMBER = {2932629},
MRREVIEWER = {Diego E. Dominici},
}

\bib{Irev0}{article}{
AUTHOR={Its, A. R.}
TITLE={The Riemann-Hilbert problem and integrable systems},
JOURNAL={Noties of the American Mathematical Society},
VOLUME={50},
NUMBER={11},
YEAR={2003},
PAGES={1389-1400},
}


\bib{ICRM}{book}{
AUTHOR = {Its, A. R.},
     TITLE = {Large {$N$} asymptotics in random matrices: the
              {R}iemann-{H}ilbert approach},
 BOOKTITLE = {Random matrices, random processes and integrable systems},
    SERIES = {CRM Ser. Math. Phys.},
     PAGES = {351--413},
 PUBLISHER = {Springer, New York},
      YEAR = {2011},
   MRCLASS = {60B20 (15B52 30E25 33C45 34M50 34M55 42C05 60-02)},
  MRNUMBER = {2858439},
       DOI = {10.1007/978-1-4419-9514-8_5},
       URL = {https://doi.org/10.1007/978-1-4419-9514-8_5},
}

\bib{IIKS}{article}{
AUTHOR = {Its, A. R.}
author={Izergin, A. G.}
author={Korepin, V. E.}
author={Slavnov, N. A.},
     TITLE = {Differential equations for quantum correlation functions},
 BOOKTITLE = {Proceedings of the {C}onference on {Y}ang-{B}axter
              {E}quations, {C}onformal {I}nvariance and {I}ntegrability in
              {S}tatistical {M}echanics and {F}ield {T}heory},
   JOURNAL = {Internat. J. Modern Phys. B},
  FJOURNAL = {International Journal of Modern Physics B},
    VOLUME = {4},
      YEAR = {1990},
    NUMBER = {5},
     PAGES = {1003--1037},
      ISSN = {0217-9792},
   MRCLASS = {82B10 (35Q40 58G40 82C10)},
  MRNUMBER = {1064758},
MRREVIEWER = {Anatoliy K. Prykarpatsky},
       DOI = {10.1142/S0217979290000504},
       URL = {https://doi.org/10.1142/S0217979290000504},
}

\bib{IK}{article}{
AUTHOR = {Its, A. R.}
author={Kapaev, A. A.},
     TITLE = {The method of isomonodromic deformations and relation formulas
              for the second {P}ainlev\'{e} transcendent},
   JOURNAL = {Izv. Akad. Nauk SSSR Ser. Mat.},
  FJOURNAL = {Izvestiya Akademii Nauk SSSR. Seriya Matematicheskaya},
    VOLUME = {51},
      YEAR = {1987},
    NUMBER = {4},
     PAGES = {878--892, 912},
      ISSN = {0373-2436},
   MRCLASS = {34A20 (35Q20 58F07)},
  MRNUMBER = {914864},
MRREVIEWER = {Alexander A. Pankov},
       DOI = {10.1070/IM1988v031n01ABEH001056},
       URL = {https://doi.org/10.1070/IM1988v031n01ABEH001056},
}

\bib{IKo}{article}{
AUTHOR = {Its, A. R.}
author={Kozlowski, K. K.},
     TITLE = {Large-{$x$} analysis of an operator-valued {R}iemann-{H}ilbert
              problem},
   JOURNAL = {Int. Math. Res. Not. IMRN},
  FJOURNAL = {International Mathematics Research Notices. IMRN},
      YEAR = {2016},
    NUMBER = {6},
     PAGES = {1776--1806},
      ISSN = {1073-7928},
   MRCLASS = {47G10 (35Q15 45P05)},
  MRNUMBER = {3509940},
MRREVIEWER = {Lu\'{\i}s P. Castro},
       DOI = {10.1093/imrn/rnv188},
       URL = {https://doi.org/10.1093/imrn/rnv188},
}

\bib{IN}{book}{
AUTHOR = {Its, Alexander R.}
author={Novokshenov, Victor Yu.},
     TITLE = {The isomonodromic deformation method in the theory of
              {P}ainlev\'{e} equations},
    SERIES = {Lecture Notes in Mathematics},
    VOLUME = {1191},
 PUBLISHER = {Springer-Verlag, Berlin},
      YEAR = {1986},
     PAGES = {iv+313},
      ISBN = {3-540-16483-9},
   MRCLASS = {34A20 (35Q20 58F07)},
  MRNUMBER = {851569},
MRREVIEWER = {Alexander A. Pankov},
       DOI = {10.1007/BFb0076661},
       URL = {https://doi.org/10.1007/BFb0076661},
}

\bib{IS}{article}{
AUTHOR = {Its, A. R.}
author={Slavnov, N. A.},
     TITLE = {On the {R}iemann problem method for the asymptotic analysis of
              the correlation functions of the quantum nonlinear
              {S}chr\"{o}dinger equation. {T}he case of interacting fermions},
   JOURNAL = {Teoret. Mat. Fiz.},
  FJOURNAL = {Teoreticheskaya i Matematicheskaya Fizika},
    VOLUME = {119},
      YEAR = {1999},
    NUMBER = {2},
     PAGES = {179--248},
      ISSN = {0564-6162},
   MRCLASS = {81Q05 (30E25 35Q55 47N50)},
  MRNUMBER = {1718673},
MRREVIEWER = {Alexander V. Shapovalov},
       DOI = {10.1007/BF02557351},
       URL = {https://doi.org/10.1007/BF02557351},
}

\bib{IY}{book}{
AUTHOR = {Ilyashenko, Yulij}
author={Yakovenko, Sergei},
     TITLE = {Lectures on analytic differential equations},
    SERIES = {Graduate Studies in Mathematics},
    VOLUME = {86},
 PUBLISHER = {American Mathematical Society, Providence, RI},
      YEAR = {2008},
     PAGES = {xiv+625},
      ISBN = {978-0-8218-3667-5},
   MRCLASS = {34-02 (32S65 34-01 34C07 34M25 34M50 37C10 37F75)},
  MRNUMBER = {2363178},
MRREVIEWER = {Christiane Rousseau},
}

\bib{JM}{article}{
AUTHOR = {Jimbo, Michio}
author={Miwa, Tetsuji},
     TITLE = {Studies on holonomic quantum fields. {XVII}},
   JOURNAL = {Proc. Japan Acad. Ser. A Math. Sci.},
  FJOURNAL = {Japan Academy. Proceedings. Series A. Mathematical Sciences},
    VOLUME = {56},
      YEAR = {1980},
    NUMBER = {9},
     PAGES = {405--410},
      ISSN = {0386-2194},
   MRCLASS = {82A15 (32A20 81E08 82A68)},
  MRNUMBER = {603054},
       URL = {http://projecteuclid.org/euclid.pja/1195516681},
}

\bib{JMU}{article}{
AUTHOR = {Jimbo, Michio}
author={Miwa, Tetsuji}
author={Ueno, Kimio},
     TITLE = {Monodromy preserving deformation of linear ordinary
              differential equations with rational coefficients. {I}.
              {G}eneral theory and {$\tau $}-function},
   JOURNAL = {Phys. D},
  FJOURNAL = {Physica D. Nonlinear Phenomena},
    VOLUME = {2},
      YEAR = {1981},
    NUMBER = {2},
     PAGES = {306--352},
      ISSN = {0167-2789},
   MRCLASS = {34A20 (14K25 58A15 58F07 81C05)},
  MRNUMBER = {630674},
MRREVIEWER = {V. A. Golubeva},
       DOI = {10.1016/0167-2789(81)90013-0},
       URL = {https://doi.org/10.1016/0167-2789(81)90013-0},
}

\bib{JM1}{article}{
AUTHOR = {Jimbo, Michio}
author={Miwa, Tetsuji},
     TITLE = {Monodromy preserving deformation of linear ordinary
              differential equations with rational coefficients. {II}},
   JOURNAL = {Phys. D},
  FJOURNAL = {Physica D. Nonlinear Phenomena},
    VOLUME = {2},
      YEAR = {1981},
    NUMBER = {3},
     PAGES = {407--448},
      ISSN = {0167-2789},
   MRCLASS = {34A20 (14K25 58A15 58F07 81C05)},
  MRNUMBER = {625446},
MRREVIEWER = {V. A. Golubeva},
       DOI = {10.1016/0167-2789(81)90021-X},
       URL = {https://doi.org/10.1016/0167-2789(81)90021-X},
}

\bib{JM2}{article}{
AUTHOR = {Jimbo, Michio}
author={Miwa, Tetsuji},
     TITLE = {Monodromy preserving deformation of linear ordinary
              differential equations with rational coefficients. {III}},
   JOURNAL = {Phys. D},
  FJOURNAL = {Physica D. Nonlinear Phenomena},
    VOLUME = {4},
      YEAR = {1981/82},
    NUMBER = {1},
     PAGES = {26--46},
      ISSN = {0167-2789},
   MRCLASS = {34A20 (14K25 58A15 58F07 81C05)},
  MRNUMBER = {636469},
MRREVIEWER = {V. A. Golubeva},
       DOI = {10.1016/0167-2789(81)90003-8},
       URL = {https://doi.org/10.1016/0167-2789(81)90003-8},
}

\bib{J82}{article}{
author={Jimbo, Michio},
title={Monodromy problem and the boundary condition for some Painlev\'e equations},
journal={Publ. Res. Inst. Math. Sci.},
fjournal={Kyoto University. Research Institute for Mathematical Sciences. Publications},
volume={18},
year={1982},
number={3},
pages={1137--1161},
url={http://dx.doi.org/10.2977/prims/1195183300},
DOI={10.2977/prims/1195183300},
}

\bib{Joh}{article}{
AUTHOR = {Johansson, K.},
     TITLE = {From {G}umbel to {T}racy-{W}idom},
   JOURNAL = {Probab. Theory Related Fields},
  FJOURNAL = {Probability Theory and Related Fields},
    VOLUME = {138},
      YEAR = {2007},
    NUMBER = {1-2},
     PAGES = {75--112},
      ISSN = {0178-8051},
   MRCLASS = {60G70 (15A52 60G07 62G32 82B41)},
  MRNUMBER = {2288065},
MRREVIEWER = {Alexander Roitershtein},
       DOI = {10.1007/s00440-006-0012-7},
       URL = {https://doi.org/10.1007/s00440-006-0012-7},
}

\bib{Kap}{article}{
AUTHOR = {Kapaev, Andrei},
     TITLE = {Global asymptotics of the second {P}ainlev\'{e} transcendent},
   JOURNAL = {Phys. Lett. A},
  FJOURNAL = {Physics Letters. A},
    VOLUME = {167},
      YEAR = {1992},
    NUMBER = {4},
     PAGES = {356--362},
      ISSN = {0375-9601},
   MRCLASS = {34A20 (34A25 34A34 34E99)},
  MRNUMBER = {1174016},
MRREVIEWER = {Mehmet Can},
       DOI = {10.1016/0375-9601(92)90271-M},
       URL = {https://doi.org/10.1016/0375-9601(92)90271-M},
}

\bib{Ka1}{book}{
AUTHOR = {Kashiwara, M.},
     TITLE = {Faisceaux constructibles et syst\`emes holon\^{o}mes d'\'{e}quations aux
              d\'{e}riv\'{e}es partielles lin\'{e}aires \`a points singuliers r\'{e}guliers},
 BOOKTITLE = {S\'{e}minaire {G}oulaouic-{S}chwartz, 1979--1980 ({F}rench)},
     PAGES = {Exp. No. 19, 7},
 PUBLISHER = {\'{E}cole Polytech., Palaiseau},
      YEAR = {1980},
   MRCLASS = {58G07 (32A45)},
  MRNUMBER = {600704},
}

\bib{Ka2}{article}{
AUTHOR = {Kashiwara, M.},
     TITLE = {The {R}iemann-{H}ilbert problem for holonomic systems},
   JOURNAL = {Publ. Res. Inst. Math. Sci.},
  FJOURNAL = {Kyoto University. Research Institute for Mathematical
              Sciences. Publications},
    VOLUME = {20},
      YEAR = {1984},
    NUMBER = {2},
     PAGES = {319--365},
      ISSN = {0034-5318},
   MRCLASS = {58G07 (32C38)},
  MRNUMBER = {743382},
MRREVIEWER = {J. S. Joel},
       DOI = {10.2977/prims/1195181610},
       URL = {https://doi.org/10.2977/prims/1195181610},
}

\bib{Ki}{article}{
AUTHOR = {Kitaev, A. V.},
     TITLE = {The justification of asymptotic formulas that can be obtained
              by the method of isomonodromic deformations},
   JOURNAL = {Zap. Nauchn. Sem. Leningrad. Otdel. Mat. Inst. Steklov.
              (LOMI)},
  FJOURNAL = {Zapiski Nauchnykh Seminarov Leningradskogo Otdeleniya
              Matematicheskogo Instituta imeni V. A. Steklova Akademii Nauk
              SSSR (LOMI)},
    VOLUME = {179},
      YEAR = {1989},
    NUMBER = {Mat. Vopr. Teor. Rasprostr. Voln. 19},
     PAGES = {101--109, 189--190},
      ISSN = {0373-2703},
   MRCLASS = {34E05 (34A20 58F07)},
  MRNUMBER = {1039598},
MRREVIEWER = {Maria Clara Nucci},
       DOI = {10.1007/BF01098980},
       URL = {https://doi.org/10.1007/BF01098980},
}

\bib{K}{article}{
AUTHOR = {Treibich Kohn, Armando},
     TITLE = {Un r\'{e}sultat de {P}lemelj},
 BOOKTITLE = {Mathematics and physics ({P}aris, 1979/1982)},
    SERIES = {Progr. Math.},
    VOLUME = {37},
     PAGES = {307--312},
 PUBLISHER = {Birkh\"{a}user Boston, Boston, MA},
      YEAR = {1983},
   MRCLASS = {32C40 (14D05)},
  MRNUMBER = {728426},
MRREVIEWER = {Autorreferat},
}

\bib{KKS}{article}{
AUTHOR = {Kojima, T.}
author={Korepin, V. E.}
author={Slavnov, N. A.},
     TITLE = {Completely integrable equation for the quantum correlation
              function of nonlinear {S}chr\"{o}dinger equation},
   JOURNAL = {Comm. Math. Phys.},
  FJOURNAL = {Communications in Mathematical Physics},
    VOLUME = {189},
      YEAR = {1997},
    NUMBER = {3},
     PAGES = {709--728},
      ISSN = {0010-3616},
   MRCLASS = {81T10 (45K05 58F07 81U40)},
  MRNUMBER = {1482936},
MRREVIEWER = {Konstantin G. Selivanov},
       DOI = {10.1007/s002200050226},
       URL = {https://doi.org/10.1007/s002200050226},
}

\bib{KBI}{book}{
AUTHOR = {Korepin, V. E.}
author={Bogoliubov, N. M.}
author={Izergin, A. G.},
     TITLE = {Quantum inverse scattering method and correlation functions},
    SERIES = {Cambridge Monographs on Mathematical Physics},
 PUBLISHER = {Cambridge University Press, Cambridge},
      YEAR = {1993},
     PAGES = {xx+555},
      ISBN = {0-521-37320-4; 0-521-58646-1},
   MRCLASS = {81U40 (81-02 81T40 82-02 82B10)},
  MRNUMBER = {1245942},
MRREVIEWER = {Makoto Idzumi},
       DOI = {10.1017/CBO9780511628832},
       URL = {https://doi.org/10.1017/CBO9780511628832},
}

\bib{Ko}{article}{
Author={Korotkin, D.}
title={Solution of matrix Riemann-Hilbert problems with quasi-permutation monodromy matrices},
journal={Math. Ann.},
volume={329},
year={2004},
pages={335--364},
DOI={10.1007/s00208-004-0528-z},
URL={https://doi.org/10.1007/s00208-004-0528-z},
}

\bib{Kos}{article}{
AUTHOR = {Kostov, Vladimir Petrov},
     TITLE = {Fuchsian linear systems on {${\bf C}{\rm P}^1$} and the
              {R}iemann-{H}ilbert problem},
   JOURNAL = {C. R. Acad. Sci. Paris S\'{e}r. I Math.},
  FJOURNAL = {Comptes Rendus de l'Acad\'{e}mie des Sciences. S\'{e}rie I.
              Math\'{e}matique},
    VOLUME = {315},
      YEAR = {1992},
    NUMBER = {2},
     PAGES = {143--148},
      ISSN = {0764-4442},
   MRCLASS = {34A20 (20H10)},
  MRNUMBER = {1197226},
MRREVIEWER = {Andre\u{\i} Bolibrukh},
}

\bib{Kre1}{article}{
AUTHOR = {Kre\u{\i}n, M. G.},
     TITLE = {On integral equations generating differential equations of 2nd
              order},
   JOURNAL = {Doklady Akad. Nauk SSSR (N.S.)},
    VOLUME = {97},
      YEAR = {1954},
     PAGES = {21--24},
   MRCLASS = {45.0X},
  MRNUMBER = {0065016},
MRREVIEWER = {N. Levinson},
}

\bib{Kr}{article}{
author={Krylov, B.L.},
title={The solution in a finite form of the Riemann problem for a Gauss system},
journal={Trudy Kazan. Aviats. Inst.},
VOLUME = {31},
      YEAR = {1956},
      PAGES = {203--445},
}

\bib{Kui}{book}{
AUTHOR = {Kuijlaars, Arno B. J.},
     TITLE = {Riemann-{H}ilbert analysis for orthogonal polynomials},
 BOOKTITLE = {Orthogonal polynomials and special functions ({L}euven, 2002)},
    SERIES = {Lecture Notes in Math.},
    VOLUME = {1817},
     PAGES = {167--210},
 PUBLISHER = {Springer, Berlin},
      YEAR = {2003},
   MRCLASS = {30E25 (33C45 42C05)},
  MRNUMBER = {2022855},
       DOI = {10.1007/3-540-44945-0_5},
       URL = {https://doi.org/10.1007/3-540-44945-0_5},
}

\bib{Kui2}{article}{
AUTHOR = {Kuijlaars, Arno B. J.},
     TITLE = {Multiple orthogonal polynomials in random matrix theory},
 BOOKTITLE = {Proceedings of the {I}nternational {C}ongress of
              {M}athematicians. {V}olume {III}},
     PAGES = {1417--1432},
 PUBLISHER = {Hindustan Book Agency, New Delhi},
      YEAR = {2010},
   MRCLASS = {60B20 (15B52 42C05)},
  MRNUMBER = {2827849},
MRREVIEWER = {Steven B. Damelin},
}

\bib{Kui3}{article}{
AUTHOR = {Kuijlaars, A. B. J.},
     TITLE = {Universality},
 BOOKTITLE = {The {O}xford handbook of random matrix theory},
     PAGES = {103--134},
 PUBLISHER = {Oxford Univ. Press, Oxford},
      YEAR = {2011},
   MRCLASS = {60B20},
  MRNUMBER = {2932626},
MRREVIEWER = {Florent Benaych-Georges},
}

\bib{LD1}{book}{
AUTHOR = {Lappo-Danilevski\u{\i}, I. A.},
     TITLE = {Application of matrix functions to the theory of linear
              systems of ordinary differential equations},
      NOTE = {Edited by V. I. Smirnov with an introductory essay},
 PUBLISHER = {Gosudarstv. Izdat. Tehn.-Teor. Lit., Moscow},
      YEAR = {1957},
     PAGES = {456 pp. (1 plate)},
   MRCLASS = {01.0X},
  MRNUMBER = {0090529},
}

\bib{LD2}{book}{
AUTHOR = {Lappo-Danilevski\u{\i}, I. A.},
     TITLE = {M\'{e}moires sur la th\'{e}orie des syst\`emes des \'{e}quations
              diff\'{e}rentielles lin\'{e}aires},
 PUBLISHER = {Chelsea Publishing Co., New York, N. Y.},
      YEAR = {1953},
     PAGES = {xiv+253+208+204},
   MRCLASS = {36.0X},
  MRNUMBER = {0054111},
}

\bib{Lax}{article}{
AUTHOR = {Lax, Peter D.},
     TITLE = {Integrals of nonlinear equations of evolution and solitary
              waves},
   JOURNAL = {Comm. Pure Appl. Math.},
  FJOURNAL = {Communications on Pure and Applied Mathematics},
    VOLUME = {21},
      YEAR = {1968},
     PAGES = {467--490},
      ISSN = {0010-3640},
   MRCLASS = {35.79 (47.00)},
  MRNUMBER = {235310},
MRREVIEWER = {Gerald Rosen},
       DOI = {10.1002/cpa.3160210503},
       URL = {https://doi.org/10.1002/cpa.3160210503},
}

\bib{LSM}{article}{
author={Lieb, E.},
author={Schultz, T.},
author={Mattis, D.},
title={Two soluble models of an antiferromagnetic chain},
journal={Annals of Physics},
volume={16},
year={1961},
number={3},
pages={407--466},
DOI={10.1016/0003-4916(61)90115-4},
URL={https://web.microsoftstream.com/video/f99bac4a-bc68-46c2-b6be-c32b8509f870},
}

\bib{LW}{article}{
  author={Liechty, K.},
  author={Wang, D.},
  title={Asymptotics of free fermions in a quadratic well at finite temperature and the Moshe--Neuberger--Shapiro random matrix model},
  journal={Ann. Inst. H. Poincar\'e Probab. Statist.},
  fjournal={Annales de l'Institut Henri Poincar{\'e}, Probabilit{\'e}s et Statistiques},
  volume={56},
  number={2},
  pages={1072--1098},
  year={2020},
  organization={Institut Henri Poincar{\'e}}
  DOI={10.1214/19-AIHP994},
  URL={https://projecteuclid.org/euclid.aihp/1584345630},
}

\bib{Man}{article}{
AUTHOR = {Manakov, S.V.},
     TITLE = {Nonlinear Fraunhofer diffraction},
     language={Russian},
   JOURNAL = {Zh. Eksp. Teor. Fiz.},
    VOLUME = {65},
      YEAR = {1973},
     PAGES = {1392--1398},
     translation={
      journal={Soviet Physics JETP},
      volume={38},
      date={1974},
      number={4},
      pages={693-696},
   },
}

\bib{Meb1}{article}{
AUTHOR = {Mebkhout, Z.},
     TITLE = {Sur le probl\`eme de {H}ilbert-{R}iemann},
   JOURNAL = {C. R. Acad. Sci. Paris S\'{e}r. A-B},
  FJOURNAL = {Comptes Rendus Hebdomadaires des S\'{e}ances de l'Acad\'{e}mie des
              Sciences. S\'{e}ries A et B},
    VOLUME = {290},
      YEAR = {1980},
    NUMBER = {9},
     PAGES = {A415--A417},
      ISSN = {0151-0509},
   MRCLASS = {32C36 (58G07)},
  MRNUMBER = {568359},
}

\bib{Meb2}{article}{
AUTHOR = {Mebkhout, Z.},
     TITLE = {Une autre \'{e}quivalence de cat\'{e}gories},
   JOURNAL = {Compositio Math.},
  FJOURNAL = {Compositio Mathematica},
    VOLUME = {51},
      YEAR = {1984},
    NUMBER = {1},
     PAGES = {63--88},
      ISSN = {0010-437X},
   MRCLASS = {58G07 (32C38 32C42)},
  MRNUMBER = {734785},
MRREVIEWER = {J.-L. Lieutenant},
       URL = {http://www.numdam.org/item?id=CM_1984__51_1_63_0},
}

\bib{M}{book}{
AUTHOR = {Muskhelishvili, N. I.},
     TITLE = {Singular integral equations},
      NOTE = {Boundary problems of function theory and their application to
              mathematical physics,
              Translated from the second (1946) Russian edition and with a
              preface by J. R. M. Radok,
              Corrected reprint of the 1953 English translation},
 PUBLISHER = {Dover Publications, Inc., New York},
      YEAR = {1992},
     PAGES = {447},
      ISBN = {0-486-66893-2},
   MRCLASS = {45-02 (30E25 45Exx 47G10 47N20)},
  MRNUMBER = {1215485},
}

\bib{MV}{article}{
AUTHOR = {Muscheli\v{s}vili, N. I.},
author={Vekua, N. P.},
     TITLE = {Riemann's boundary value problem for several unknown functions
              and its application to systems of singular integral equations},
   JOURNAL = {Trav. Inst. Math. Tbilissi [Trudy Tbiliss. Mat. Inst.]},
    VOLUME = {12},
      YEAR = {1943},
     PAGES = {1--46},
   MRCLASS = {45.0X},
  MRNUMBER = {0012192},
MRREVIEWER = {A. Weinstein},
}

\bib{Nob}{book}{
AUTHOR = {Noble, B.},
     TITLE = {Methods based on the {W}iener-{H}opf technique for the
              solution of partial differential equations},
    SERIES = {International Series of Monographs on Pure and Applied
              Mathematics. Vol. 7},
 PUBLISHER = {Pergamon Press, New York-London-Paris-Los Angeles},
      YEAR = {1958},
     PAGES = {x+246},
   MRCLASS = {45.00},
  MRNUMBER = {0102719},
MRREVIEWER = {V. M. Papadopoulos},
}

\bib{N}{article}{
AUTHOR = {Noether, Fritz},
     TITLE = {\"{U}ber eine {K}lasse singul\"{a}rer {I}ntegralgleichungen},
   JOURNAL = {Math. Ann.},
  FJOURNAL = {Mathematische Annalen},
    VOLUME = {82},
      YEAR = {1920},
    NUMBER = {1-2},
     PAGES = {42--63},
      ISSN = {0025-5831},
   MRCLASS = {DML},
  MRNUMBER = {1511970},
       DOI = {10.1007/BF01457974},
       URL = {https://doi.org/10.1007/BF01457974},
}

\bib{NMPZ}{book}{
AUTHOR = {Novikov, S.}
author={Manakov, S. V.}
author={Pitaevski\u{\i}, L. P.}
author={Zakharov, V. E.},
     TITLE = {Theory of solitons},
    SERIES = {Contemporary Soviet Mathematics},
      NOTE = {The inverse scattering method,
              Translated from the Russian},
 PUBLISHER = {Consultants Bureau, New York},
      YEAR = {1984},
     PAGES = {xi+276},
      ISBN = {0-306-10977-8},
   MRCLASS = {35Q20 (58F07 76B25)},
  MRNUMBER = {779467},
}

\bib{NIST}{book}{
TITLE = {N{IST} handbook of mathematical functions},
    EDITOR = {Olver, Frank W. J.},
    EDITOR={Lozier, Daniel W.},
    EDITOR={Boisvert, Ronald F.},
    EDITOR={Clark, Charles W.},
      NOTE = {With 1 CD-ROM (Windows, Macintosh and UNIX)},
 PUBLISHER = {U.S. Department of Commerce, National Institute of Standards
              and Technology, Washington, DC; Cambridge University Press,
              Cambridge},
      YEAR = {2010},
     PAGES = {xvi+951},
      ISBN = {978-0-521-14063-8},
   MRCLASS = {33-00 (00A20 65-00)},
  MRNUMBER = {2723248},
  URL={https://dlmf.nist.gov},
}

\bib{PS}{article}{
AUTHOR = {Pastur, L.}
author={Shcherbina, M.},
     TITLE = {Universality of the local eigenvalue statistics for a class of
              unitary invariant random matrix ensembles},
   JOURNAL = {J. Statist. Phys.},
  FJOURNAL = {Journal of Statistical Physics},
    VOLUME = {86},
      YEAR = {1997},
    NUMBER = {1-2},
     PAGES = {109--147},
      ISSN = {0022-4715},
   MRCLASS = {82B41 (15A52 33C50)},
  MRNUMBER = {1435193},
MRREVIEWER = {Oleksiy Khorunzhiy},
       DOI = {10.1007/BF02180200},
       URL = {https://doi.org/10.1007/BF02180200},
}

\bib{P1}{article}{
AUTHOR = {Plemelj, J.},
     TITLE = {Zur {T}heorie der {F}redholmschen {F}unktionalgleichung},
   JOURNAL = {Monatsh. Math. Phys.},
  FJOURNAL = {Monatshefte f\"{u}r Mathematik und Physik},
    VOLUME = {15},
      YEAR = {1904},
    NUMBER = {1},
     PAGES = {93--128},
      ISSN = {1812-8076},
   MRCLASS = {DML},
  MRNUMBER = {1547272},
       DOI = {10.1007/BF01692293},
       URL = {https://doi.org/10.1007/BF01692293},
}

\bib{P2}{article}{
AUTHOR = {Plemelj, J.},
     TITLE = {Ein {E}rg\"{a}nzungssatz zur {C}auchyschen {I}ntegraldarstellung
              analytischer {F}unktionen, {R}andwerte betreffend},
   JOURNAL = {Monatsh. Math. Phys.},
  FJOURNAL = {Monatshefte f\"{u}r Mathematik und Physik},
    VOLUME = {19},
      YEAR = {1908},
    NUMBER = {1},
     PAGES = {205--210},
      ISSN = {1812-8076},
   MRCLASS = {DML},
  MRNUMBER = {1547763},
       DOI = {10.1007/BF01736696},
       URL = {https://doi.org/10.1007/BF01736696},
}

\bib{P0}{article}{
AUTHOR = {Plemelj, J.},
     TITLE = {Riemannsche {F}unktionenscharen mit gegebener
              {M}onodromiegruppe},
   JOURNAL = {Monatsh. Math. Phys.},
  FJOURNAL = {Monatshefte f\"{u}r Mathematik und Physik},
    VOLUME = {19},
      YEAR = {1908},
    NUMBER = {1},
     PAGES = {211--245},
      ISSN = {1812-8076},
   MRCLASS = {DML},
  MRNUMBER = {1547764},
       DOI = {10.1007/BF01736697},
       URL = {https://doi.org/10.1007/BF01736697},
}

\bib{Pl}{book}{
AUTHOR = {Plemelj, J.},
     TITLE = {Problems in the sense of {R}iemann and {K}lein},
    SERIES = {Edited and translated by J. R. M. Radok. Interscience Tracts
              in Pure and Applied Mathematics, No. 16},
 PUBLISHER = {Interscience Publishers John Wiley \& Sons Inc.\, New
              York-London-Sydney},
      YEAR = {1964},
     PAGES = {vii+175},
   MRCLASS = {34.06 (14.20)},
  MRNUMBER = {0174815},
MRREVIEWER = {Z. Nehari},
}

\bib{P}{article}{
AUTHOR = {Poincar\'{e}, H.},
     TITLE = {Sur les groupes des \'{e}quations lin\'{e}aires},
   JOURNAL = {Acta Math.},
  FJOURNAL = {Acta Mathematica},
    VOLUME = {4},
      YEAR = {1884},
    NUMBER = {1},
     PAGES = {201--312},
      ISSN = {0001-5962},
   MRCLASS = {DML},
  MRNUMBER = {1554639},
       DOI = {10.1007/BF02418420},
       URL = {https://doi.org/10.1007/BF02418420},
}

\bib{Q}{webpage}{
author={Wolchover, N.},
accessdate={2020-07-22},
title={At the Far Ends of a New Universal Law},
url={https://www.quantamagazine.org/beyond-the-bell-curve-a-new-universal-law-20141015/},
}

\bib{R}{article}{
author={Riemann, B.},
title={Grundlagen f\"ur eine allgemeine Theorie der Funktionen einer ver\"anderlichen komplexen Gr\"osse},
journal={Werke, Leipzig},
year={1876},
pages={3--43},
}

\bib{Ro}{article}{
AUTHOR = {R\"{o}hrl, Helmut},
     TITLE = {Das {R}iemann-{H}ilbertsche {P}roblem der {T}heorie der
              linearen {D}ifferentialgleichungen},
   JOURNAL = {Math. Ann.},
  FJOURNAL = {Mathematische Annalen},
    VOLUME = {133},
      YEAR = {1957},
     PAGES = {1--25},
      ISSN = {0025-5831},
   MRCLASS = {34.0X},
  MRNUMBER = {86958},
MRREVIEWER = {D. G. Bourgin},
       DOI = {10.1007/BF01343983},
       URL = {https://doi.org/10.1007/BF01343983},
}

\bib{Rom}{book}{
AUTHOR = {Romik, Dan},
     TITLE = {The surprising mathematics of longest increasing subsequences},
    SERIES = {Institute of Mathematical Statistics Textbooks},
    VOLUME = {4},
 PUBLISHER = {Cambridge University Press, New York},
      YEAR = {2015},
     PAGES = {xi+353},
      ISBN = {978-1-107-42882-9; 978-1-107-07583-2},
   MRCLASS = {05-01 (05A05 05D40 60B20 60C05 60K35 82B41 82C41)},
  MRNUMBER = {3468738},
MRREVIEWER = {Sergi Elizalde},
}

\bib{Sh}{article}{
AUTHOR = {\v{S}abat, A. B.},
     TITLE = {The {K}orteweg-de {V}ries equation},
   JOURNAL = {Dokl. Akad. Nauk SSSR},
  FJOURNAL = {Doklady Akademii Nauk SSSR},
    VOLUME = {211},
      YEAR = {1973},
     PAGES = {1310--1313},
      ISSN = {0002-3264},
   MRCLASS = {35Q99},
  MRNUMBER = {0330800},
MRREVIEWER = {E. Zeidler},
}

\bib{Sim}{book}{
AUTHOR = {Simon, Barry},
     TITLE = {Trace ideals and their applications},
    SERIES = {Mathematical Surveys and Monographs},
    VOLUME = {120},
   EDITION = {Second},
 PUBLISHER = {American Mathematical Society, Providence, RI},
      YEAR = {2005},
     PAGES = {viii+150},
      ISBN = {0-8218-3581-5},
   MRCLASS = {47L20 (47A40 47A55 47B10 47B36 47E05 81Q15 81U99)},
  MRNUMBER = {2154153},
MRREVIEWER = {Pavel B. Kurasov},
}

\bib{Sim1}{book}{
AUTHOR={Simon, Barry},
TITLE={Basic complex analysis},
SERIES={A Comprehensive Course in Analysis, Part 2A},
PUBLISHER = {American Mathematical Society, Providence, RI},
YEAR={2015},
PAGES={xviii+641},
ISBN={978-1-4704-1100-8},
MRNUMBER={3443339},
MRREVIEWER={Fritz Gesztesy},
}

\bib{Sosh}{article}{
AUTHOR = {Soshnikov, A.},
     TITLE = {Determinantal random point fields},
   JOURNAL = {Uspekhi Mat. Nauk},
  FJOURNAL = {Uspekhi Matematicheskikh Nauk},
    VOLUME = {55},
      YEAR = {2000},
    NUMBER = {5(335)},
     PAGES = {107--160},
      ISSN = {0042-1316},
   MRCLASS = {60G55 (60F05 60K05)},
  MRNUMBER = {1799012},
MRREVIEWER = {Boris A. Khoruzhenko},
       DOI = {10.1070/rm2000v055n05ABEH000321},
       URL = {https://doi.org/10.1070/rm2000v055n05ABEH000321},
}

\bib{St}{book}{
AUTHOR = {Stein, Elias M.},
     TITLE = {Singular integrals and differentiability properties of
              functions},
    SERIES = {Princeton Mathematical Series, No. 30},
 PUBLISHER = {Princeton University Press, Princeton, N.J.},
      YEAR = {1970},
     PAGES = {xiv+290},
   MRCLASS = {46.38 (26.00)},
  MRNUMBER = {0290095},
MRREVIEWER = {R. E. Edwards},
}

\bib{Su}{article}{
author={Suleimanov, B.I.},
title={The connection between the asymptotics at the different infinities for the solutions of the second Painleve equation},
journal={Dif. Urav.}
volume={23},
number={5},
pages={834--842},
year={1987},
language={Russian},
}

\bib{U}{article}{
AUTHOR = {Ulam, Stanislaw M.},
     TITLE = {Monte {C}arlo calculations in problems of mathematical
              physics},
 BOOKTITLE = {Modern mathematics for the engineer: {S}econd series},
     PAGES = {261--281},
 PUBLISHER = {McGraw-Hill, New York},
      YEAR = {1961},
   MRCLASS = {65.15},
  MRNUMBER = {0129165},
MRREVIEWER = {A. A. Mullin},
}

\bib{TW}{article}{
AUTHOR = {Tracy, Craig A.}
author={Widom, Harold},
     TITLE = {Level-spacing distributions and the {A}iry kernel},
   JOURNAL = {Comm. Math. Phys.},
  FJOURNAL = {Communications in Mathematical Physics},
    VOLUME = {159},
      YEAR = {1994},
    NUMBER = {1},
     PAGES = {151--174},
      ISSN = {0010-3616},
   MRCLASS = {82B05 (33C90 47A75 47G10 47N55 82B10)},
  MRNUMBER = {1257246},
MRREVIEWER = {Estelle L. Basor},
       URL = {http://projecteuclid.org/euclid.cmp/1104254495},
}

\bib{TO}{book}{
AUTHOR = {Trogdon, T.}
author={Olver, S.},
     TITLE = {Riemann-{H}ilbert problems, their numerical solution, and the
              computation of nonlinear special functions},
 PUBLISHER = {Society for Industrial and Applied Mathematics (SIAM),
              Philadelphia, PA},
      YEAR = {2016},
     PAGES = {xviii+373},
      ISBN = {978-1-611974-19-5},
   MRCLASS = {30-02 (30E25 35Q15 35Q53 35Q55)},
  MRNUMBER = {3450072},
MRREVIEWER = {Kuzman Adzievski},
}

\bib{Tsai}{article}{
author={Tsai, L.C.},
title={Exact lower tail large deviations of the KPZ equation},
year={2018},
eprint={https://arxiv.org/abs/1809.03410},
}

\bib{V}{book}{
AUTHOR = {Vekua, N. P.},
     TITLE = {Systems of singular integral equations},
      NOTE = {Translated from the Russian by A. G. Gibbs and G. M. Simmons.
              Edited by J. H. Ferziger},
 PUBLISHER = {P. Noordhoff, Ltd., Groningen},
      YEAR = {1967},
     PAGES = {216},
   MRCLASS = {45.15},
  MRNUMBER = {0211220},
}

\bib{W}{article}{
author={Walsh, J. L.},
title={The Cauchy-Goursat Theorem for Rectifiable Jordan Curves},
journal={Proceedings of the National Academy of Sciences},
year={1933},
volume={19},
number={5},
pages={540--541},
doi={10.1073/pnas.19.5.540},
url={https://www.pnas.org/content/19/5/540},
issn={0027-8424},
}

\bib{WH}{article}{
author={Wiener, N.},
author={Hopf, E.},
title={\"Uber eine Klasse singul\"arer Integralgleichungen},
journal={Sitzungber. Akad. Wiss. Berlin},
year={1931},
pages={696--706},
}

\bib{Wig}{article}{
author={Wigner, E.P.},
title={Gatlinburg conference on neutron physics},
journal={Oak Ridge National Laboratory Report ORNL 2309},
year={1957},
pages={59},
}

\bib{Wish}{article}{
author={Wishart, J.},
title={The generalized product moment distribution in samples from a normal multivariate population},
journal={Biometrika},
volume={20A},
pages={32--43},
year={1928},
}

\bib{ZF}{article}{
AUTHOR = {Zaharov, V. E.}
author={Faddeev, L. D.},
     TITLE = {The {K}orteweg-de {V}ries equation is a fully integrable
              {H}amiltonian system},
   JOURNAL = {Funkcional. Anal. i Prilo\v{z}en.},
  FJOURNAL = {Akademija Nauk SSSR. Funkcional\cprime nyi Analiz i ego Prilo\v{z}enija},
    VOLUME = {5},
      YEAR = {1971},
    NUMBER = {4},
     PAGES = {18--27},
      ISSN = {0374-1990},
   MRCLASS = {35Q99 (47F05)},
  MRNUMBER = {0303132},
MRREVIEWER = {J. W. Thomas},
}

\bib{ZS}{article}{
author={Zakharov, V. E.},
   author={Shabat, A. B.},
   title={Exact theory of two-dimensional self-focusing and one-dimensional
   self-modulation of waves in nonlinear media},
   language={Russian, with English summary},
   journal={\v{Z}. \`Eksper. Teoret. Fiz.},
   volume={61},
   date={1971},
   number={1},
   pages={118--134},
   translation={
      journal={Soviet Physics JETP},
      volume={34},
      date={1972},
      number={1},
      pages={62--69},
      issn={0038-5646},
   },
   review={\MR{0406174}},
}

\bib{Z}{article}{
AUTHOR = {Zhou, Xin},
     TITLE = {The {R}iemann-{H}ilbert problem and inverse scattering},
   JOURNAL = {SIAM J. Math. Anal.},
  FJOURNAL = {SIAM Journal on Mathematical Analysis},
    VOLUME = {20},
      YEAR = {1989},
    NUMBER = {4},
     PAGES = {966--986},
      ISSN = {0036-1410},
   MRCLASS = {34B25 (35G15 45F15 45P05)},
  MRNUMBER = {1000732},
MRREVIEWER = {David J. Kaup},
       DOI = {10.1137/0520065},
       URL = {https://doi.org/10.1137/0520065},
}

\end{biblist}
\end{bibsection}
\end{document}